\DeclareSymbolFontAlphabet{\mathbb}{AMSb}
\DeclareSymbolFontAlphabet{\mathbbl}{bbold}
\newcommand{\colim@}[2]{%
  \vtop{\m@th\ialign{##\cr
    \hfil$#1\operator@font lim$\hfil\cr
    \noalign{\nointerlineskip\kern1.5\ex@}#2\cr
    \noalign{\nointerlineskip\kern-\ex@}\cr}}%
}
\newcommand{\colim}{%
  \mathop{\mathpalette\colim@{\rightarrowfill@\textstyle}}\nmlimits@
}
\theoremstyle{plain}
\newtheorem{theorem}{Theorem}[section]
\newtheorem{corollary}[theorem]{Corollary}
\newtheorem{lemma}[theorem]{Lemma}
\newtheorem{proposition}[theorem]{Proposition}
\newtheorem{assumption}[theorem]{Assumption}
\theoremstyle{definition}
\newtheorem{definition}[theorem]{Definition}
\theoremstyle{remark}
\newtheorem{remark}[theorem]{Remark}
\newtheorem{example}[theorem]{Example}
\numberwithin{equation}{section}
\newcommand{\Z}{\mathbb{Z}}
\newcommand{\Q}{\mathbb{Q}}
\newcommand{\R}{\mathbb{R}}
\newcommand{\C}{\mathbb{C}}
\newcommand{\T}{\mathbb{T}}
\newcommand{\K}{\mathcal{K}}
\newcommand{\B}{\mathcal{B}}
\newcommand{\cA}{\mathcal{A}}
\newcommand{\cD}{\mathcal{D}}
\newcommand{\cP}{\mathcal{P}}
\newcommand{\cR}{\mathcal{R}}
\newcommand{\cN}{\mathcal{N}}
\newcommand{\Fred}{\mathrm{Fred}}
\newcommand{\I}{\mathcal{I}}
\newcommand{\cpt}{\mathrm{cpt}}
\newcommand{\pt}{\mathrm{pt}}
\newcommand{\id}{\mathrm{id}}
\newcommand{\sa}{\mathrm{sa}}
\newcommand{\ska}{\mathrm{sk}}
\newcommand{\diag}{\mathrm{diag}}
\newcommand{\Corner}{\mathrm{Gapless}}
\newcommand{\BE}{\mathrm{Gapped}}
\newcommand{\gapped}{\mathrm{Gapped}}
\newcommand{\BL}{\mathrm{BL}}
\newcommand{\CA}{$C^*$-algebra}
\newcommand{\TA}{$C^{*, \tau}$-algebra}
\newcommand{\Cl}{\mathit{Cl}}
\newcommand{\bt}{{\bm t}}
\DeclareMathOperator{\ind}{\mathrm{ind}}
\DeclareMathOperator{\Ker}{\mathrm{Ker}}
\DeclareMathOperator{\Coker}{\mathrm{Coker}}
\DeclareMathOperator{\Image}{\mathrm{Im}}
\DeclareMathOperator{\rank}{\mathrm{rank}}
\DeclareMathOperator{\sign}{\mathrm{sign}}
\DeclareMathOperator{\csf}{\mathrm{sf}}
\DeclareMathOperator{\qsf}{\mathrm{sf_2}}
\DeclareMathOperator{\Ad}{\mathrm{Ad}}
\newcommand{\A}{\mathrm{A}}
\newcommand{\AIII}{\mathrm{A\hspace{-.05em}I\hspace{-.05em}I\hspace{-.05em}I}}
\newcommand{\AI}{\mathrm{A\hspace{-.05em}I}}
\newcommand{\BDI}{\mathrm{BDI}}
\newcommand{\DD}{\mathrm{D}}
\newcommand{\DIII}{\mathrm{D\hspace{-.05em}I\hspace{-.05em}I\hspace{-.05em}I}}
\newcommand{\AII}{\mathrm{A\hspace{-.05em}I\hspace{-.05em}I}}
\newcommand{\CII}{\mathrm{C\hspace{-.05em}I\hspace{-.05em}I}}
\newcommand{\CC}{\mathrm{C}}
\newcommand{\CI}{\mathrm{CI}}
\newcommand{\HH}{\mathcal{H}}
\newcommand{\HHa}{\HH^{\alpha}}
\newcommand{\HHb}{\HH^{\beta}}
\newcommand{\HHab}{\hat{\HH}^{\alpha,\beta}}
\newcommand{\HHs}{\check{\HH}^{\alpha,\beta}}
\newcommand{\TT}{\mathcal{T}}
\newcommand{\TTa}{\TT^{\alpha}}
\newcommand{\TTb}{\TT^{\beta}}
\newcommand{\TTg}{\TT^{\gamma}}
\newcommand{\TTz}{\TT^{0}}
\newcommand{\TTab}{\hat{\TT}^{\alpha,\beta}}
\newcommand{\TTs}{\check{\TT}^{\alpha,\beta}}
\newcommand{\Sab}{\mathcal{S}^{\alpha, \beta}}
\newcommand{\Szg}{\mathcal{S}^{0, \gamma}}
\newcommand{\sigmaa}{\sigma^\alpha}
\newcommand{\sigmab}{\sigma^\beta}
\newcommand{\sigmaz}{\sigma^0}
\newcommand{\sigmag}{\sigma^\gamma}
\newcommand{\Pa}{P^{\alpha}}
\newcommand{\Pb}{P^{\beta}}
\newcommand{\Pz}{P^{0}}
\newcommand{\Pg}{P^{\gamma}}
\newcommand{\Pab}{\hat{P}^{\alpha,\beta}}
\newcommand{\Ps}{\check{P}^{\alpha,\beta}}
\newcommand{\rTTa}{\TTa_0}
\newcommand{\fa}{\mathfrak{a}}
\newcommand{\fC}{\mathcal{C}}
\newcommand{\fJ}{\mathcal{J}}
\newcommand{\fq}{\mathfrak{q}}
\newcommand{\fr}{\mathfrak{r}}
\newcommand{\taua}{\tau_\alpha}
\newcommand{\taub}{\tau_\beta}
\newcommand{\taug}{\tau_\gamma}
\newcommand{\tauz}{\tau_0}
\newcommand{\tauab}{\hat{\tau}_{\alpha,\beta}}
\newcommand{\taus}{\check{\tau}_{\alpha,\beta}}
\begin{document}

\title[Classification of Topological Invariants Related to Corner States]{Classification of Topological Invariants Related to Corner States}

\author[S. Hayashi]{Shin Hayashi}

\thanks{AIST-TohokuU Mathematics for Advanced Materials - Open Innovation Laboratory, National Institute of Advanced Industrial Science and Technology, 2-1-1 Katahira, Aoba, Sendai 980-8577, Japan}
\thanks{JST, PRESTO, 4-1-8 Honcho, Kawaguchi, Saitama, 332-0012, Japan}
\email{shin-hayashi@aist.go.jp}

\subjclass{Primary 19K56; Secondary 47B35, 81V99.}
\keywords{Topologically protected corner states, higher-order topological insulators, $K$-theory and index theory}

\begin{abstract}
We discuss some bulk-surfaces gapped Hamiltonians on a lattice with corners, and propose a periodic table for topological invariants related to corner states aimed at studies of higher-order topological insulators.
Our table is based on four things: (1) the definition of topological invariants, (2) a proof of their relation with corner states (3) computations of $K$-groups and (4) a construction of explicit examples.
\end{abstract}

\maketitle

\setcounter{tocdepth}{1}
\tableofcontents


\section{Introduction}
Recent developments in condensed matter physics have greatly generalized the bulk-boundary correspondence for topological insulators to include corner states.
Topological insulators have a gapped bulk, which incorporates some topology that do not change unless the spectral gap of the bulk Hamiltonian closes under deformations.
Examples include the TKNN number for quantum Hall systems \cite{TKNN82} and the Kane-Mele $\Z_2$ index for quantum spin Hall systems \cite{KM05b}.
It is known that, corresponding to these bulk invariants, gapless edge states appear, which is called the {\em bulk-boundary correspondence} \cite{Hat93b}.
After Schnyder--Ryu--Furusaki--Ludwig's classification of topological insulators \cite{SRFL08} for ten Altland--Zirnbauer classes \cite{AZ97}, Kitaev noted the role of $K$-theory and Bott periodicity in the classification problem and obtained the famous periodic table \cite{Kit09}.
Recently, some (at least bulk) gapped systems possessing in-gap or gapless states localized around a higher codimensional part of the boundary (corners or hinges) are studied \cite{HWK17,BBH17a,Khalaf18a}, which are called {\em higher-order topological insulators} (HOTIs) \cite{Schindler18}.
For example, for second-order topological insulators, not only is the bulk gapped but also the codimension-one boundaries (edges, surfaces), and an in-gap or a gapless state appears around codimension-two corners or hinges.
In this framework, conventional topological insulators are regarded as first-order topological insulators.
HOTIs are now actively studied and the classification of HOTIs has also been proposed \cite{Geier18,Khalaf18b,OSS19}.
Generalizing the bulk-boundary correspondence, relations between some gapped topology and corner states are much discussed \cite{TB19,AMH20,TTM20}.

Initiated by Bellissard, $K$-theory and index theory are known to provide a powerful tool to study topological insulators.
Bellissard--van~Elst--Schulz-Baldes studied quantum Hall effects by means of noncommutative geometry \cite{Be86,BvES94}, and Kellendonk--Richter--Schulz-Baldes went on to prove the bulk-boundary correspondence by using index theory for Toeplitz operators \cite{KRSB02}.
The study of topological insulators, especially regarding its classification and the bulk-boundary correspondence for each of the ten Altland--Zirnbauer classes by using $K$-theory and index theory has been  much developed \cite{KRSB02,FM13,Thi16,BCR1,GS16,MT16b,PSB16,Thi16,BKR17,Kel17,Ku15,AMZ20}.
In \cite{Hayashi2}, three-dimensional (3-D) class A bulk periodic systems are studied on one piece of a lattice cut by two specific hyperplanes, which is a model for systems with corners.
Based on the index theory for quarter-plane Toeplitz operators \cite{Sim67, DH71,Pa90}, a topological invariant is defined assuming the spectral gap both on the bulk Hamiltonian and two half-space compressions of it.
This gapped topological invariant is topological in the sense that it does not change under continuous deformation of the bulk Hamiltonians unless the spectral gap of one of the two surfaces closes.
It is proved that, corresponding to this topology gapless corner states appear.
A construction of nontrivial examples by using two first-order topological insulators (of 2-D class A and 1-D class AIII) is also proposed.
Class AIII codimension-two systems are also studied through this method in \cite{Hayashi3} and,
as an application to HOTIs, the appearance of topological corner states in Benalcazar--Bernevig--Hughes' 2-D model \cite{BBH17a} is explained based on the chiral symmetry.
The construction of examples in \cite{Hayashi3} leads to a proposal of second-order semimetallic phase protected by the chiral symmetry \cite{OHN19}.

The purpose of this paper is to expand the results in \cite{Hayashi2} to all Altland--Zirnbauer classes and systems with corners of arbitrary codimension.
Since class A and class AIII systems (with codimension-two corners) were already discussed in \cite{Hayashi2,Hayashi3} by using complex $K$-theory, we focus on the remaining eight cases, for which we use real $K$-theory.
For our expansion, a basic scheme has already been well developed in the above previous studies, which we mainly follow:
some gapped Hamiltonian defines an element of a $KO$-group of a real \CA, and its relation with corner states are clarified by using index theory \cite{KRSB02,FM13,BCR1,GS16,Ku15,Thi16,BKR17,Kel17}.
Although many techniques have already been developed in studies of topological insulators, in our higher-codimensional cases, we still lack some basic results at the level of $K$-theory and index theory; hence, the first half of this paper is devoted to these $K$-theoretic preliminaries, that is, the computation of $KO$-groups for real \CA s associated with the quarter-plane Toeplitz extension and the computation of boundary maps for the $24$-term exact sequence of $KO$-theory associated with it, which are carried out in Sect.~\ref{Sect.3}.
Since the quarter-plane Toeplitz extension \cite{Pa90} is a key tool in our study of codimension-two corners, such a variant for Toeplitz operators associated with higher-codimensional corners should be clarified, which are carried out in Sect.~\ref{Sect.4}.
These variants of Toeplitz operators were discussed in \cite{DH71,Do73}, and the contents in Sect.~\ref{Sect.4} will be well-known to experts.
Since the author could not find an appropriate reference, especially concerning Theorem~\ref{multiexact} which will play a key role in Sect.~\ref{Sect.5}, the results are included for completeness.
Note that the idea there to use tensor products of the ordinary Toeplitz extension for the study of these variants is based on the work of Douglas--Howe \cite{DH71}, where these higher-codimensional generalizations are briefly mentioned.
The study of some gapped phases for systems with corners in Altland--Zirnbauer's classification is carried out in Sect.~\ref{Sect.5}.
In the framework of the one-particle approximation, we consider $n$-D systems with a codimension $k$ corner and take compressions of the bulk Hamiltonian onto infinite lattices with codimension $k-1$ corners\footnote{In standard terminologies, they will be called {\em edges}, {\em surfaces}, {\em hinges} or {\em edge of edges} depending on $n$ and $k$. In this paper, we may simply call them {\em corners} but state its codimensions.}
whose intersection makes the codimension $k$ corner.
We assume that they are gapped.
Note that, under this assumption, bulk, surfaces and corners up to codimension $k-1$ which constitute the codimension $k$ corner are also gapped.
For such a system, we define two topological invariants as elements of some $KO$-groups: one is defined for these gapped Hamiltonians while the other one is related to in-gap or gapless codimension $k$ corner states.
We then show a relation between these two which states that topologically protected corner states appear reflecting some gapped topology of the system.
We first study codimension-two cases (Sect.~\ref{Sect.5.1} to Sect.~\ref{Sect.5.4}) and then discuss higher-codimensional cases (Sect.~\ref{Sect.5.5}).
This distinction is made because many detailed results have been obtained for codimension-two cases by virtue of previous studies of quarter-plane Toeplitz operators \cite{Pa90,Ji95}
(the shape of the corner we discuss is more general than in higher-codimensional cases, and a relation between convex and concave corners is also obtained in \cite{Hayashi3}).
Based on these results, we propose a classification table for topological invariants related to corner states (Table~\ref{ptHOTI}).
Note that the codimension-one case of Table~\ref{ptHOTI} is Kitaev's table \cite{Kit09} and Table~\ref{ptHOTI} is also periodic by the Bott periodicity.
In order further to clarify a relation between our invariants and corner states, in Sect.~\ref{Sect.5.6}, we introduce $\Z$ or $\Z_2$-valued numerical corner invariants when the dimension of the corner is zero or one.
They are defined by (roughly speaking) counting the number of corner states.
A construction of examples is discussed in Sect.~\ref{Sect.5.7}.
As in \cite{Hayashi2}, this is given by using pairs of Hamiltonians of two lower-order topological insulators.
In the real classes, there are $64$ pairs of them and the results are collected in Table~\ref{producttalbe}.
By using this method, we can construct nontrivial examples of each entry of Table~\ref{ptHOTI}, starting from Hamiltonians of first-order topological insulators.
The corner invariant for the constructed Hamiltonian is expressed by corner (or edge) invariants of constituent two Hamiltonians.
This is given by using an exterior product of some $KO$-groups in general, though, as in \cite{Hayashi2,Hayashi3}, the formula for numerical invariants introduced in Sect.~\ref{Sect.5.6} is also included.
\begin{table}
\caption{Classification of (strong) topological invariants related to corner state in Altland--Zirnbauer (AZ) classification.
In this table, $n$ is the dimension of the bulk, and $k$ is the codimension of the corner.}
\label{ptHOTI}
\begin{tabular}{|c|ccc||cccccccc|c|} \hline
  	\multicolumn{4}{|c||}{Symmetry} & \multicolumn{8}{|c|}{$n-k \mod 8$} \\
    AZ & $\Theta$ & $\Xi$ & $\Pi$ & $0$ & $1$ & $2$ & $3$ & $4$ & $5$ & $6$ & $7$ \\ \hline \hline
    $\A$  & $0$ & $0$ & $0$ & $0$ & $\Z$ & $0$ & $\Z$ & $0$ & $\Z$ & $0$ & $\Z$ \\
    $\AIII$ & $0$ & $0$ & $1$ &$\Z$ & $0$ & $\Z$ & $0$ & $\Z$ & $0$ & $\Z$ & $0$ \\ \hline
    $\AI$ & $1$ & $0$ & $0$ & $0$ & $0$ & $0$ & $2\Z$ & $0$ & $\Z_2$ & $\Z_2$ & $\Z$ \\
    $\BDI$ & $1$ & $1$ & $1$ & $\Z$ & $0$ & $0$ & $0$ & $2\Z$ & $0$ & $\Z_2$ & $\Z_2$ \\
    $\DD$ & $0$ & $1$ & $0$ & $\Z_2$ & $\Z$ & $0$ & $0$ & $0$ & $2\Z$ & $0$ & $\Z_2$ \\
    $\DIII$ & $-1$ & $1$ & $1$ & $\Z_2$ & $\Z_2$ & $\Z$ & $0$ & $0$ & $0$ & $2\Z$ & $0$ \\
    $\AII$ & $-1$ & $0$ & $0$ & $0$ & $\Z_2$ & $\Z_2$ & $\Z$ & $0$ & $0$ & $0$ & $2\Z$ \\
    $\CII$ & $-1$ & $-1$ & $1$ & $2\Z$ & $0$ & $\Z_2$ & $\Z_2$ & $\Z$ & $0$ & $0$ & $0$ \\
    $\CC$ & $0$ & $-1$ & $0$ & $0$ & $2\Z$ & $0$ & $\Z_2$ & $\Z_2$ & $\Z$ & $0$ & $0$ \\ 
    $\CI$ & $1$ & $-1$ & $1$ & $0$ & $0$ & $2\Z$ & $0$ & $\Z_2$ & $\Z_2$ & $\Z$ & $0$ \\ \hline
\end{tabular}
\end{table}
For computations of $KO$-groups and classification of such gapped systems, we employ Boersema--Loring's unitary picture for $KO$-theory \cite{BL16} whose definitions are collected in Sect.~$2$.
Basic results for some Toeplitz operators are also included there.
In Appendix~\ref{Sect.A}, we revisit Atiyah--Singer's study of spaces of skew-adjoint Fredholm operators \cite{AS69} and collect necessary results from the viewpoint of Boersema--Loring's $K$-theory.
Definitions of some $\Z_2$-spaces, maps between them, expression of boundary maps of $24$-term exact sequences used in this paper are collected there.

Finally, let us point out a relation with our results and the current rapidly developing studies on HOTIs.
In \cite{Geier18}, the HOTIs are divided into two classes: {\em intrinsic} HOTIs, which basically originate from the bulk topology protected by a point group symmetry, and others {\em extrinsic} HOTIs.
Our study will be for extrinsic HOTIs since no point group symmetry is assumed and our classification table (Table~\ref{ptHOTI}) is consistent with that of Table~$1$ in \cite{Geier18}.

\section{Preliminaries}\label{Sect.2}
In this section, we collect the necessary results and notations.

\subsection{Boersema--Loring's $KO$-Groups via Unitary Elements}\label{Sect.2.1}
In this subsection, we collect Boersema--Loring's definition of $KO$-groups by using unitaries satisfying some symmetries \cite{BL16}.
The basics of real \CA s and $KO$-theory can be found in \cite{Goo82,Sch93}, for example.

A {\em \TA} is a pair $(\mathcal{A}, \tau)$ consisting of (complex) \CA \ $\mathcal{A}$ and an anti-automorphism\footnote{i.e., a complex linear automorphism of $A$ that preserves $\ast$ and satisfies $\tau(ab) = \tau(b)\tau(a)$.} $\tau$ of $\mathcal{A}$ satisfying $\tau^2=1$.
We call $\tau$ the {\em transposition} and write $a^\tau$ for $\tau(a)$.
There is a category equivalence between the category of \TA s and the category of real \CA s:
for a \TA \ $(A,\tau)$, the corresponding real \CA \ is $\mathcal{A}^\tau = \{ a \in \mathcal{A} \ | \ a^\tau = a^* \}$, and its inverse is given by the complexification.
A {\em real structure} on a (complex) \CA \ $A$ is an antilinear $*$-automorphism $\fr$ satisfying $\fr^2 = 1$.
For a real structure $\fr$, there is an associated transposition $\tau$ given by $\tau(a) = \fr(a^*)$, which gives a one-to-one correspondence between transpositions and real structures on the \CA\footnote{Boersema--Loring called $\tau$ the real structure in \cite{BL16}. In this paper, we distinguish these two since the antilinear structure naturally appears in our application. We call $\tau$ the transposition following \cite{Kel17}.}.
We extend the transposition $\tau$ on $\mathcal{A}$ to the transposition (for which we simply write $\tau$) on the matrix algebra $M_n(\mathcal{A})$ by $(a_{ij})^\tau = (a_{ji}^\tau)$ where $a_{ij} \in \mathcal{A}$ and $1 \leq i,j \leq n$.
This induces a transposition $\tau_\K$ on $\K \otimes \mathcal{A}$ where $\K = \K(\mathcal{V})$ is the \CA \ of compact operators on a separable complex Hilbert space $\mathcal{V}$.
Let $\sharp \otimes \tau$ be a transposition on $M_2(\mathcal{A})$ defined by\footnote{
For notations of the transpositions introduced here, we follow \cite{BL16}.}
\begin{equation*}
\left(
    \begin{array}{cc}
           a_{11}&a_{12}\\
           a_{21}&a_{22}
    \end{array}
\right)^{\sharp \otimes \tau}
	=
\left(
    \begin{array}{cc}
           a_{22}^\tau & -a_{12}^\tau \\
           -a_{21}^\tau & a_{11}^\tau
    \end{array}
\right).
\end{equation*}
If we identify the quaternions $\mathbb{H}$ with $\C^2$ by $x+yj \mapsto (x,y)$, the left multiplication by $j$ corresponds to $j(x,y) = (-\bar{y},\bar{x})$.
Then, we have $\sharp \otimes \id = \Ad_j \circ \ast$ where $\ast$ denotes the operation of taking conjugation of matrices and the \TA \ $(M_2(\C), \sharp \otimes \id)$ corresponds to the real \CA \ $\mathbb{H}$ of quaternions.
We extend this transposition to $M_{2n}(\mathcal{A})$ by $(b_{ij})^{\sharp \otimes \tau} = (b_{ji}^{\sharp \otimes \tau})$ where $1 \leq i,j \leq n$ and $b_{ij} \in M_2(\mathcal{A})$.
On $M_{2n}(\mathcal{A})$, we also consider a transposition $\widetilde{\sharp} \otimes \tau$ defined by
\begin{equation*}
\left(
    \begin{array}{cc}
           c_{11}&c_{12}\\
           c_{21}&c_{22}
    \end{array}
\right)^{\widetilde{\sharp} \otimes \tau}
	=
\left(
    \begin{array}{cc}
           c_{22}^\tau & -c_{12}^\tau \\
           -c_{21}^\tau & c_{11}^\tau
    \end{array}
\right),
\end{equation*}
where $c_{ij} \in M_n(\mathcal{A})$.
For an $m \times m$ matrix $X$, we write $X_n$ for the $mn \times mn$ block diagonal matrix $\diag(X, \ldots, X)$.
For example, we write $1_n$ for the $n \times n$ diagonal matrix $\diag(1, \ldots, 1)$.

\begin{definition}[Boersema--Loring \cite{BL16}]
Let $(\mathcal{A},\tau)$ be a unital \TA.
For $i = -1, 0, \ldots, 6$, let $n_i$ be a positive integer, $\mathcal{R}_i$ be a relation and $I^{(i)}$ be a matrix, as indicated in Table~\ref{BL}.
Let $U^{(i)}_k(\mathcal{A},\tau)$ be the set of all unitaries in $M_{n_i \cdot k}(\mathcal{A})$ satisfying the relation $\mathcal{R}_i$.
On the set $U^{(i)}_\infty(\mathcal{A},\tau) = \cup_{k=1}^\infty U^{(i)}_k(\mathcal{A},\tau)$, we consider the equivalence relation $\sim_i$ generated by homotopy and stabilization given by $I^{(i)}$.
We define $KO_i(\mathcal{A}, \tau) = U^{(i)}_\infty(\mathcal{A},\tau) / \sim_i$ which is a group by the binary operation given by $[u] + [v] = [\diag(u,v)]$.
\end{definition}
For a nonunital \TA \ $(A,\tau)$, the $i$-th $KO$-group $KO_i(\mathcal{A}, \tau)$ is defined as the kernel of $\lambda_* \colon KO_i(\tilde{\mathcal{A}}, \tau) \to KO_i(\C, \id)$, where $\tilde{\mathcal{A}}$ is the unitization of $\mathcal{A}$ and $\lambda \colon \tilde{\mathcal{A}} \to \C$ is the natural projection.
In \cite{BL16}, they also describe the boundary maps of the $24$-term exact sequence for $KO$-theory associated with a short exact sequence of \TA s.
In Appendix~\ref{Sect.B}, we discuss an alternative description for some of them through exponentials.
\begin{table}
\caption{Boersema--Loring's unitary picture for $KO$-theory \cite{BL16}}
\label{BL}
\centering
\begin{tabular}{|c|c|c|c|c|}
\hline
$i$ & $KO$-group & $n_i$ & $\mathcal{R}_i$ & $I^{(i)}$ \\ \hline \hline
$-1$ & $KO_{-1}(\mathcal{A}, \tau)$ & $1$ & $u^\tau = u$ & $1$ \\ \hline
$0$ & $KO_0(\mathcal{A}, \tau)$ & $2$ & $u = u^*$, $u^{\tau} = u^*$ & $\diag(1, -1)$ \\ \hline
$1$ & $KO_1(\mathcal{A}, \tau)$ & $1$ & $u^{\tau} = u^*$ & $1$ \\ \hline
$2$ & $KO_2(\mathcal{A}, \tau)$ & $2$ & $u = u^*$, $u^\tau = -u$ & $\left( \begin{array}{cc} 0&i \cdot 1 \\ -i \cdot 1&0 \end{array} \right)$ \\ \hline
$3$ & $KO_3(\mathcal{A}, \tau)$ & $2$ & $u^{\sharp \otimes \tau} = u $ & $1_2$ \\ \hline
$4$ & $KO_4(\mathcal{A}, \tau)$ & $4$ & $u = u^*$, $u^{\sharp \otimes \tau} = u^*$ & $\diag(1_2, -1_2)$ \\ \hline
$5$ & $KO_5(\mathcal{A}, \tau)$ & $2$ & $u^{\sharp \otimes \tau} = u^*$ & $1_2$ \\ \hline
$6$ & $KO_6(\mathcal{A}, \tau)$ & $2$ & $u = u^*$, $u^{\sharp \otimes \tau} = -u$ & $\left( \begin{array}{cc} 0&i \cdot 1 \\ -i \cdot 1&0 \end{array} \right)$ \\ \hline
\end{tabular}
\end{table}

\subsection{Toeplitz Operators}\label{Sect.2.2}
In this subsection, we collect the definitions and basic results for some Toeplitz operators used in this paper \cite{Do73,Pa90}.

Let $\T$ be the unit circle in the complex plane $\C$, and let $c$ be the complex conjugation on $\C$, that is, $c(z) = \bar{z}$.
Let $n$ be a positive integer.
On the $n$-dimensional torus $\T^n$, we consider an involution $\zeta$ defined as the $n$-fold product of $c$.
This induces a transposition $\tau_{\T}$ on $C(\T^n)$ by $(\tau_{\T} f)(t) = f(\zeta(t))$.
Let $\Z_{\geq 0}$ be the set of non-negative integers and $P_n$ be the orthogonal projection of $l^2(\Z^n)$ onto $l^2((\Z_{\geq 0})^n)$.
For a continuous function $f \colon \T^n \to \C$, let $M_f$ be the multiplication operator on $l^2(\Z^n)$ generated by $f$.
We consider the operator $P_n M_f P_n$ on $l^2((\Z_{\geq 0})^n)$, which is the Toeplitz operator associated with the subsemigroup $(\Z_{\geq 0})^n$ of $\Z^n$ of symbol $f$.
We write $\TT^n$ for the $C^*$-subalgebra of $\B(l^2((\Z_{\geq 0})^n))$ generated by these Toeplitz operators.
The algebra $\TT^1$ is the ordinary Toeplitz algebra and we simply write $\TT$.
Note that the algebra $\TT^n$ is isomorphic to the $n$-fold tensor product of $\TT$.
The complex conjugation $c$ on $\C$ induces an antiunitary operator\footnote{An operator $A$ on a complex Hilbert space $\mathcal{V}$ is called the {\em antiunitary} operator if $A$ is an antilinear bijection on $\mathcal{V}$ satisfying $\langle A v, A w \rangle = \overline{\langle v,w \rangle}$ for any $v$ and $w$ in $\mathcal{V}$.} of order two on the Hilbert space $l^2(\Z^n)$ by the pointwise operation, for which we also write $c$.
This induces a real structure $\mathfrak{c}$ on $\B(l^2((\Z_{\geq 0})^n))$  by $\mathfrak{c}(a) = \Ad_c(a) = c a c^*$.
We write $\tau_\TT$ for the transposition on $\TT^n$ given by its restriction onto $\TT^n$.

We next focus on the case of $n=2$.
We consider the Hilbert space $l^2(\Z^2)$ and take its orthonormal basis $\{ \delta_{m,n} \ | \ (m,n) \in \Z^2 \}$, where $\delta_{m,n}$ is the characteristic function of the point $(m,n)$ on $\Z^2$.
When $f \in C(\T^2)$ is given by $f(z_1,z_2) = z_1^m z_2^n$, we write $M_{m,n}$ for the multiplication operator $M_f$.
Let $\alpha < \beta$ be real numbers, and let $\HHa$, $\HHb$, $\HHab$ and $\HHs$ be closed subspaces of $l^2(\Z^2)$ spanned by
$\{ \delta_{m,n} \ | -\alpha m + n \geq 0 \}$,
$\{ \delta_{m,n} \ | -\beta m + n \leq 0 \}$,
$\{ \delta_{m,n} \ | -\alpha m + n \geq 0 \ \text{and}\ -\beta m + n \leq 0 \}$, and
$\{ \delta_{m,n} \ | -\alpha m + n \geq 0 \ \text{or}\ -\beta m + n \leq 0 \}$, respectively.
In the following, we may take $\alpha = -\infty$ or $\beta = \infty$, but not both.
Let $\Pa$, $\Pb$, $\Pab$ and $\Ps$ be the orthogonal projection of $l^2(\Z^2)$ onto $\HHa$, $\HHb$, $\HHab$ and $\HHs$, respectively.
For $f \in C(\T^2)$, the operators $\Pa M_f \Pa$ on $\HHa$ and $\Pb M_f \Pb$ on $\HHb$ are called {\em half-plane Toeplitz operators}.
The operator $\Pab M_f \Pab$ on $\HHab$ is called the {\em quarter-plane Toeplitz operator}, and $\Ps M_f \Ps$ on $\HHs$ is its concave corner analogue.
We write $\TTa$ and $\TTb$ for $C^*$-algebras generated by these half-plane Toeplitz operators and $\TTab$ and $\TTs$ for $C^*$-algebras generated by the quarter-plane and concave corner Toeplitz operators, respectively.
There are $*$-homomorphisms $\sigmaa \colon \TTa \to C(\T^2)$ and $\sigmab \colon \TTb \to C(\T^2)$, which map $\Pa M_f \Pa$ and $\Pb M_f \Pb$ to the symbol $f$, respectively.
We define the \CA \ $\Sab$ as a pullback \CA \ of these two $*$-homomorphisms.
The real structure $c$ on $\HH = l^2(\Z^2)$ induces real structures $\mathfrak{c}$ on $\TTa$, $\TTb$, $\TTab$, $\TTs$, and $\Sab$
and thus induces transpositions $\taua$, $\taub$, $\tauab$, $\taus$ and $\tau_\mathcal{S}$ on $\TTa$, $\TTb$, $\TTab$, $\TTs$ and $\Sab$, respectively.
For transpositions, we may simply write $\tau$ when it is clear from the context.
The maps $\sigmaa$ and $\sigmab$ preserve the real structures and we have the following pull-back diagram:
\begin{equation}\label{diag2}
\vcenter{
\xymatrix{
(\Sab, \tau_\mathcal{S}) \ar[r]^{p^\beta} \ar[d]_{p^\alpha} & (\TTb, \taub) \ar[d]^{\sigmab} \\
(\TTa, \taua) \ar[r]^{\sigmaa} & (C(\T^2),\tau_{\T})}}
\end{equation}
We write $\sigma$ for the composition $\sigmaa \circ p^\alpha = \sigmab \circ p^\beta$.
Let $\hat{\gamma}$ be a $*$-homomorphism from $\TTab$ to $\Sab$ which maps $\Pab M_f \Pab$ to the pair $(\Pa M_f \Pa, \Pb M_f \Pb)$.
This map $\hat{\gamma}$ preserves the real structures, and there is the following short exact sequence of \TA s (Park \cite{Pa90}):
\begin{equation}\label{seq1}
0 \to(\K(\HHab), \tau_\K) \to (\TTab, \tauab) \overset{\hat{\gamma}}{\to} (\Sab, \tau_\mathcal{S}) \to 0,
\end{equation}
where the map from $(\K(\HHab), \tau_\K)$ to $(\TTab, \tauab)$ is the inclusion.
Its concave corner analogue is studied in \cite{Hayashi3} and the following exact sequence is obtained:
\begin{equation}\label{seq2}
0 \to (\K(\HHs),\tau_\K) \to (\TTs,\taus) \overset{\check{\gamma}}{\to} (\Sab, \tau_\mathcal{S}) \to 0,
\end{equation}
where $\check{\gamma}$ is a $*$-homomorphism mapping $\Ps \hspace{-0.15mm} M_f \hspace{-0.15mm} \Ps$ to $(\hspace{-0.1mm} \Pa \hspace{-0.1mm} M_f \hspace{-0.1mm} \Pa \hspace{-0.15mm}, \hspace{-0.15mm} \Pb \hspace{-0.15mm} M_f \hspace{-0.15mm} \Pb \hspace{-0.15mm})$.

\section{$KO$-Groups of \CA s Associated with Half-Plane and Quarter-Plane Toeplitz Operators}\label{Sect.3}
In this section, the $KO$-theory for half-plane and quarter-plane Toeplitz operators are discussed.
In Sect.~\ref{Sect.3.1}, $KO$-groups of the half-plane Toeplitz algebra is computed.
Quarter-plane Toeplitz operators are discussed in the following sections, and the $KO$-groups of the \TA \ $(\Sab, \tau_\mathcal{S})$ are computed in Sect.~\ref{Sect.3.2}.
In Sect~\ref{Sect.3.3}, the boundary maps of the $24$-term exact sequence for $KO$-theory associated with the sequence (\ref{seq1}) are discussed and the $KO$-groups of the quarter-plane Toeplitz algebra $(\TTab,\tauab)$ are computed.

\subsection{$KO$-Groups of $(\TTa, \taua)$}\label{Sect.3.1}
We compute the $KO$-groups of the \TA \ $(\TTa, \taua)$.
The discussion is divided into two cases whether $\alpha$ is rational (or $-\infty$) or irrational.

We first consider the case when $\alpha$ is a rational number or $- \infty$.
When $\alpha \in \Q$, we write $\alpha = \frac{p}{q}$ where $p$ and $q$ are relatively prime integers and $q$ is positive.
Let $m$ and $n$ be integers such that $-pm + qn = 1$ and let
\begin{equation}\label{SL}
\Gamma = \left(
    \begin{array}{cc}
           n&-m\\
           -p&q
    \end{array}
\right) \in SL(2,\Z).
\end{equation}
Then, the action of $\Gamma$ on $\Z^2$ induces the Hilbert space isomorphism $\HHa \cong \HH^{0}$ and an isomorphism of \TA s $(\TTa, \taua) \cong (\TT^0, \tau_0)$.
Thus, the \TA \ $(\TTa, \taua)$ is isomorphic to $(\TT, \tau_\TT) \otimes (C(\T), \tau_\T)$,
and its $KO$-groups are computed as
$KO_i(\TTa, \taua) \cong KO_i(C(\T), \tau_\T) \cong KO_i(\C, \id) \oplus KO_{i-1}(\C,\id)$.
For the first isomorphism, see Proposition~$1.5.1$ of \cite{Sch93}.
Generators of the group $KO_i(C(\T), \tau_\T)$ are obtained in Example~9.2 of \cite{BL16}, and the unital $*$-homomorphism $\iota \colon \C \to \TT$ induces an isomorphism $(\id \otimes \iota)_* \colon KO_i(C(\T), \tau_\T) \to KO_i((C(\T), \tau_\T) \otimes (\TT, \tau_\TT)) \cong KO_i(\TT^0, \tau_0)$.
Combined with them, $KO$-group $KO_i(\TTa, \taua)$ and its generators are given as follows.
\begin{itemize}
\item $KO_0(\TTa, \taua) \cong \Z$ and its generator is $[1_2]$.
\item $KO_1(\TTa, \taua) \cong \Z_2 \oplus \Z$. A generator of $\Z_2$ is $[-1]$ and that of $\Z$ is $[\Pa M_{q,p} \Pa]$.
\item $KO_2(\TTa, \taua) \cong (\Z_2)^2$. A generator of one $\Z_2$ is $[-I^{(2)}]$,
and that of another $\Z_2$ is
$\left[ \left(
    \begin{array}{cc}
           0&i \Pa M_{q,p}\Pa \\
           -i \Pa M_{-q,-p} \Pa&0
    \end{array}
\right)\right]$.
\item $KO_3(\TTa, \taua) \cong \Z_2$ and its generator is $[\diag(\Pa M_{q,p}\Pa,\Pa M_{-q,-p}\Pa)]$.
\item $KO_4(\TTa, \taua) \cong \Z$ and its generator is $[1_4]$.
\item $KO_5(\TTa, \taua) \cong \Z$ and its generator is
$[\diag(\Pa M_{q,p}\Pa,\Pa M_{q,p}\Pa)]$.
\item $KO_6(\TTa, \taua) = KO_{-1}(\TTa, \taua) = 0.$
\end{itemize}
The case of $\alpha = -\infty$ is computed similarly, and its generators are given by replacing $p$ and $q$ above with $-1$ and $0$, respectively.

We next consider the cases of irrational $\alpha$.
In this case, complex $K$-groups of $\TTa$ are computed by Ji--Kaminker and Xia in \cite{JK88, Xia88}.
\begin{lemma}\label{lemma3.1}
For irrational $\alpha$ and for each $i$, we have $KO_i(\TTa, \taua) \cong KO_i(\C, \id)$, where the isomorphism is given by $\lambda^\alpha_*$.
\end{lemma}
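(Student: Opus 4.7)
The plan is to lift the complex $K$-theory computation $K_*(\TTa) \cong K_*(\C)$ due to Ji--Kaminker \cite{JK88} and Xia \cite{Xia88} to the $KO$-theoretic setting, carefully tracking the transposition $\taua$ throughout.

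First I would identify $\lambda^\alpha \colon \TTa \to \C$ explicitly. The natural candidate is the character obtained by composing the symbol map $\sigmaa$ with evaluation at a point $\e \in \T^2$ fixed by the conjugation $\zeta$, for instance $\e = (1,1)$. This makes $\lambda^\alpha$ a $*$-homomorphism of \TA s $(\TTa, \taua) \to (\C, \id)$ that is split by the unit inclusion $\iota \colon \C \hookrightarrow \TTa$, so that $\lambda^\alpha_* \circ \iota_* = \id$. It therefore suffices to show that $\iota_*$ is surjective on every $KO$-group.

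Next I would revisit the Ji--Kaminker--Xia argument and transcribe it into a short exact sequence of \TA s
$$
0 \to (\fJ, \tau_\fJ) \to (\TTa, \taua) \to (\mathcal{Q}, \tau_\mathcal{Q}) \to 0,
$$
checking compatibility with the real structure $\mathfrak{c}$ on $\B(\HHa)$ induced by the complex conjugation on $l^2(\Z^2)$. The crucial role of irrationality is that the boundary line $\{-\alpha m + n = 0\}$ meets $\Z^2$ only at the origin, which drastically simplifies the ideal $\fJ$ compared to the rational case. Running the 24-term exact sequence of $KO$-theory for this extension, together with the $KO$-groups of $\fJ$ and $\mathcal{Q}$ determined by the same irrationality, would then yield the claimed isomorphism $\lambda^\alpha_* \colon KO_i(\TTa, \taua) \xrightarrow{\sim} KO_i(\C, \id)$.

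The main obstacle will be controlling the $24$ connecting homomorphisms in the real setting. Unlike the complex case, in which only two of the six positions of the six-term sequence can be nontrivial, the $KO$-theories of $\fJ$ and $\mathcal{Q}$ potentially carry $\Z_2$'s and $2\Z$'s at eight positions, and one must rule out every class outside the image of $\iota_*$. The existence of the splitting by $\lambda^\alpha$, combined with the transposition-equivariance of the Ji--Kaminker--Xia construction, should provide the leverage needed to force these cancellations and pin down each $KO$-group to $KO_i(\C, \id)$.
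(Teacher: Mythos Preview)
Your identification of $\lambda^\alpha$ as $\sigmaa$ followed by evaluation at a $\zeta$-fixed point is exactly what the paper does, and the observation that it is split by the unit inclusion is correct. However, the rest of your strategy diverges from the paper and leaves a genuine gap: you propose to rerun the Ji--Kaminker--Xia extension $0 \to \fJ \to \TTa \to \mathcal{Q} \to 0$ in $KO$-theory and then ``force cancellations'' among the twenty-four connecting maps. You yourself flag this as the main obstacle, and you give no mechanism for carrying it out. Merely having a splitting $\lambda^\alpha$ and transposition-equivariance of the construction does not by itself kill the potential $\Z_2$ and $2\Z$ contributions sitting outside the image of $\iota_*$; one would need either explicit generators or some additional structural input.

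The paper sidesteps this entirely with a single imported lemma. Rather than the Ji--Kaminker--Xia extension, it uses the obvious split extension
\[
0 \to (\TTa_0, \taua) \to (\TTa, \taua) \xrightarrow{\lambda^\alpha} (\C,\id) \to 0,
\qquad \TTa_0 := \Ker \lambda^\alpha.
\]
From the complex result $K_*(\TTa) \cong K_*(\C)$ and the six-term sequence one gets $K_*(\TTa_0) = 0$. The key point you are missing is then a theorem from Boersema's united $K$-theory (Theorem~1.12, Proposition~1.15, Theorem~1.18 of \cite{Boe02}): for any \TA\ $(\mathcal{A},\tau)$, $KO_*(\mathcal{A},\tau) = 0$ if and only if $K_*(\mathcal{A}) = 0$. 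This immediately gives $KO_*(\TTa_0,\taua) = 0$, and the $24$-term sequence for the split extension above finishes the proof with no connecting maps to analyse at all. Your approach could in principle be made to work, but the Boersema vanishing criterion is precisely the missing idea that turns a delicate computation into a two-line argument.
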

\begin{proof}
As for complex $K$-groups, we have $K_0(\TTa) = \Z$ and $K_1(\TTa) = 0$ by \cite{JK88,Xia88}.
We consider a split $*$-homomorphism of \TA s $\lambda^\alpha \colon (\TTa, \taua) \to (\C, \id)$ given by the composition of $\sigmaa \colon (\TTa,\taua) \to (C(\T^2), \tau_{\T})$ and the pull-back onto a fixed point of the involution $\zeta$ on $\T^2$.
Let $\rTTa = \Ker \lambda^\alpha$.
By the\vspace{-0.5mm}
six-term exact sequence associated with the extension $0 \to \rTTa \to \TTa \overset{\lambda^\alpha}{\to} \C \to 0$, complex $K$-groups of $\TTa_0$ are trivial.
For a \TA \ $(\mathcal{A}, \tau)$, it follows from Theorem~$1.12$, Proposition~$1.15$ and Theorem~$1.18$ of \cite{Boe02} that $KO_*(\mathcal{A},\tau) = 0$ if and only if $K_*(\mathcal{A})=0$.
Therefore, $KO_*(\TTa_0, \taua) = 0$.
The result follows from the $24$-term exact sequence \vspace{-0.5mm}
 of $KO$-theory for \TA s associated with the short exact sequence
$0 \to (\rTTa, \taua) \to (\TTa, \taua) \overset{\lambda^\alpha}{\to} (\C,\id) \to 0$.
\end{proof}

\subsection{$KO$-Groups of $(\Sab, \tau_\mathcal{S})$}\label{Sect.3.2}
In this subsection, we compute the $KO$-groups of the \TA \ $(\Sab, \tau_\mathcal{S})$.
The basic tool is the following Mayer--Vietoris exact sequence associated with the pull-back diagram (\ref{diag2}) (see Theorem~$1.4.15$ of \cite{Sch93}, for example):
\vspace{-2mm}
\begin{equation}
\label{MV}
{\small
\vcenter{
\xymatrix{
	& \cdots \ar[r] & KO_{i+1}(C(\T^2),\tau_{\T}) \ar[dll]_{\partial_{i+1}} \\
KO_i(\Sab,\tau_\mathcal{S}) \ar[r]_{(p^\alpha_*,p^\beta_*) \hspace{9mm}} & KO_i(\TTa,\taua) \oplus KO_i(\TTb,\taub) \ar[r]^{\hspace{7mm} \sigmab_* - \sigmaa_*} & KO_i(C(\T^2),\tau_{\T}) \ar[dll]_{\partial_{i}} \\
KO_{i-1}(\Sab,\tau_\mathcal{S}) & \cdots \ar[r] &
}}}
\end{equation}
As in \cite{Pa90}, the computation of the group $KO_*(\Sab, \tau_\mathcal{S})$ is divided into three cases corresponding to whether $\alpha$ and $\beta$ are rational (or $\pm \infty$) or irrational.
As in Sect.~\ref{Sect.3}, we have a unital $*$-homomorphism $\lambda^\alpha \circ p^\alpha \colon (\Sab,\tau_\mathcal{S}) \to (\C, \id)$ which splits.
Correspondingly, the $KO$-group $KO_*(\Sab,\tau_\mathcal{S})$ have a direct summand corresponding to $KO_*(\C,\id)$.
Noting this, these $KO$-groups are computed by Lemma~\ref{lemma3.1} and the sequence (\ref{MV}) when at least one of $\alpha$ and $\beta$ is irrational.
The results are collected in Tables~\ref{KOSab2} and \ref{KOSab3}.
In the rest of this subsection, we focus on the cases when both $\alpha$ and $\beta$ are rational (or $\pm \infty$).

When $\alpha, \beta \in \Q$, we write $\alpha = \frac{p}{q}$ and $\beta = \frac{r}{s}$ by using mutually prime integers where $q$ and $s$ are positive.
In the following discussion, the case of $\alpha = -\infty$ corresponds to the case where $p = -1$ and $q = 0$, and the case of $\beta = +\infty$ corresponds to the case where $r = 1$ and $s = 0$.
By using the action of $\Gamma \in SL(2, \Z)$ in (\ref{SL}) on $\Z^2$, there are isomorphisms $(\TTa, \taua) \cong (\TT^0, \tau_0)$ and $(\TTb, \taub) \cong (\TTg, \taug)$, where $\gamma = \frac{t}{u}$ for $u = ns - mr$ and $t = -ps + qr$.
Note that  $t$ is positive since $\alpha < \beta$.
We have the following commutative diagram:
\vspace{-1mm}
\[\xymatrix{
	KO_i(\TTa,\taua) \oplus KO_i(\TTb,\taub) \ar[r]^{\hspace{10mm} \sigmab_* - \sigmaa_*} \ar[d]_\cong & KO_i(C(\T^2),\tau_{\T}) \ar[d]^\cong\\
	KO_i(\TTz,\tauz) \oplus KO_i(\TTg,\taug) \ar[r]^{\hspace{10mm} \sigma^\gamma_* - \sigma^0_*} & KO_i(C(\T^2),\tau_{\T})
}\]
where the vertical isomorphisms are induced by the action of $\Gamma$.
In the following, we discuss the lower part of the diagram, which is enough for our purpose since the isomorphism $KO_i(\Sab, \tau_\mathcal{S}) \cong KO_i(\Szg, \tau_\mathcal{S})$ is also induced.
We write $\varphi_i$ for the above map $\sigma^\gamma_* - \sigma^0_*$.
By the exact sequence (\ref{MV}), we have the following short exact sequence.
\begin{equation}\label{varphi}
	0 \to \Coker (\varphi_{i+1}) \to KO_i(\Szg, \tau_\mathcal{S}) \to \Ker (\varphi_i) \to 0.
\end{equation}
We first compute kernels and cokernels of $\varphi_i$.
Cases for $i = -1, 0, 4, 6$ is easy, thus we consider the other cases.

When $i = 1$, groups $KO_1(\TTz,\tauz)$ and $KO_1(\TTg,\taug)$ are both isomorphic to $\Z_2 \oplus \Z$.
The $\Z_2$ direct summand is generated by $[-1]$, and the other $\Z$ direct summand is generated by $[\Pz M_{1,0} \Pz]$ and $[\Pg M_{u,t} \Pg]$, respectively. They map to $[M_{1,0}]$ and $[M_{u,t}]$ in $KO_1(C(\T^2),\tau_{\T})$ by $\sigmaz_*$ and $\sigmag_*$, respectively.
We have $KO_1(C(\T^2),\tau_{\T}) \cong \Z_2 \oplus \Z^2$, where the $\Z_2$ direct summand is generated by $[-1]$.
For $(m, n) \in \Z^2$, the element $[M_{m,n}] \in KO_1(C(\T^2),\tau_{\T})$ corresponds to $(0, m, n) \in \Z_2 \oplus \Z^2$.
Therefore, $\Ker(\varphi_1) \cong \Z_2$ which is generated by $([-1],[-1])$, and\footnote{When $\alpha = - \infty$ and $\beta \in \Q$, we have $\Coker (\varphi_1) \cong \Z_s$. This is the case when $p = -1$ and $q =0$ and $t = -ps + qr = s$ in this case. A similar remark also holds for $i = 2,3,5$.} $\Coker(\varphi_1) \cong \Z_t$.

We next consider the case of $i =2$.
We have $KO_2(C(\T^2),\tau_{\T}) \cong \Z_2 \oplus (\Z_2)^2 \oplus \Z$, where the first $\Z_2$ direct summand is generated by $[-I^{(2)}]$.
For $(m, n) \in \Z^2$, the element
$\left[ \left(
    \begin{array}{cc}
           0&i M_{m,n} \\
           -i M_{-m,-n}&0
    \end{array}
\right)\right]$ in $KO_2(C(\T^2),\tau_{\T})$
corresponds to $(0, m \bmod 2, n \bmod 2, 0) \in \Z_2 \oplus (\Z_2)^2 \oplus \Z$ (Example~9.2 of \cite{BL16}).
The groups $KO_2(\TTz,\tauz)$ and $KO_2(\TTg,\taug)$ and their generators are obtained in Sect.~\ref{Sect.3} and we have
\begin{equation*}
\Ker (\varphi_2) \cong
\left\{
\begin{aligned}
(\Z_2)^2 & \hspace{3mm} \text{when $t$ is even,}\\
\Z_2  \hspace{3mm} & \hspace{3mm} \text{when $t$ is odd,}
\end{aligned}
\right.
\ \ \
\Coker (\varphi_2) \cong
\left\{
\begin{aligned}
\Z_2 \oplus \Z & \hspace{3mm} \text{when $t$ is even,}\\
\Z  \hspace{3.5mm} & \hspace{3mm} \text{when $t$ is odd.}
\end{aligned}
\right.
\end{equation*}

When $i = 3$, we have $KO_3(C(\T^2),\tau_{\T}) \cong (\Z_2)^3$.
For $(m, n) \in \Z^2$, the element $[\diag(M_{m,n}, M_{-m,-n})] \in KO_3(C(\T^2),\tau_{\T})$ corresponds to $(m \bmod 2, n \bmod 2, 0) \in (\Z_2)^3$.
By Sect.~\ref{Sect.3}, we have
\begin{equation*}
\Ker (\varphi_3) \cong
\left\{
\begin{aligned}
\Z_2 & \hspace{3mm} \text{when $t$ is even,}\\
0  \hspace{1mm} & \hspace{3mm} \text{when $t$ is odd,}
\end{aligned}
\right.
\ \ \
\Coker (\varphi_3) \cong
\left\{
\begin{aligned}
(\Z_2)^2 & \hspace{3mm} \text{when $t$ is even,}\\
\Z_2  \hspace{2mm} & \hspace{3mm} \text{when $t$ is odd.}
\end{aligned}
\right.
\end{equation*}

When $i=5$, we have $KO_5(C(\T^2),\tau_{\T}) \cong \Z^2$.
For $(m, n) \in \Z^2$, the element $[\diag(M_{m,n}, M_{m,n})] \in KO_5(C(\T^2),\tau_{\T})$ corresponds to $(m,n) \in \Z^2$.
By Sect.~\ref{Sect.3}, we have $\Ker(\varphi_5) = 0$ and $\Coker(\varphi_5) = \Z_t$.

Combined with the above computation and the exact sequence (\ref{varphi}), $KO$-group $KO_i(\Sab, \tau_\mathcal{S})$ are computed, though some complication appears when $i=2,3$.
We discuss these two cases in the following subsections.

\subsubsection{The Group $KO_2(\Sab, \tau_\mathcal{S})$}\label{Sect.3.2.1}
We compute the group $KO_2(\Szg, \tau_\mathcal{S})$, which is isomorphic to $KO_2(\Sab, \tau_\mathcal{S})$.
The computation is divided into two cases depending on whether $t$ is even or odd.
Note that $u$ is odd when $t$ is even since $r$ and $s$ are mutually prime.

When $t$ is odd, 
$\Ker(\varphi_2) \cong \Z_2$ is generated by $([-I^{(2)}],[-I^{(2)}])$ and the sequence (\ref{varphi}) splits.
Therefore, $KO_2(\Sab, \tau_\mathcal{S}) \cong (\Z_2)^2$.

We next discuss the cases of even $t$.
In this case, both of the kernel and the cokernel of $\varphi_3$ are isomorphic to $(\Z_2)^2$.
Let $\widetilde{KO}_2(\Szg, \tau_\mathcal{S})$ be the kernel of the map $\lambda^\alpha_* \circ p^\alpha_* \colon KO_2(\Sab,\tau_\mathcal{S}) \to KO_2(\C, \id) \cong \Z_2$ which splits.
Then, the sequence (\ref{varphi}) reduces to the following extension:
\begin{equation}\label{KO2ext}
	0 \to (\Z_2)^2 \to \widetilde{KO}_2(\Szg, \tau_\mathcal{S}) \to \Z_2 \to 0.
\end{equation}
In the following, we show that this sequence (\ref{KO2ext}) splits.
We find a lift of the generator of $\Z_2$ in $\widetilde{KO}_2(\Szg, \tau_\mathcal{S})$ and show this lift has order two.
For $(m,n) \in \Z^2$ and $\kappa = 0$ and $\gamma$, we write $T_{m,n}^\kappa$ for $P^\kappa M_{m,n} P^\kappa$, and let $Q$ be the projection $T^\gamma_{u,0} T^\gamma_{-u,0}$.
Note that $1-Q$ is the projection onto the closed subspace spanned by $\{ \delta_{m,n} \ | \ 0 \leq \gamma m - n < t \ \}$.
For $j=1, \ldots, t$, let $P_j$ be a projection in $\TTg$, defined inductively as follows:
\begin{equation*}
	P_1 = (1-Q)M_{0,-t+1}(1-Q)M_{0,t-1}(1-Q),
\end{equation*}
\vspace{-5mm}
\begin{equation*}
	P_j = (1-Q)M_{0,-t+j}(1-Q)M_{0,t-j}(1-Q) - \sum^{j-1}_{k=1}P_k.
\end{equation*}
Specifically, $P_j$ is the orthogonal projection of $\HH^\gamma$ onto the closed subspace spanned by $\{ \delta_{n, tn-j+1} \}_{n \in \Z}$.
\begin{figure}
\centering
\includegraphics[width=7.0cm]{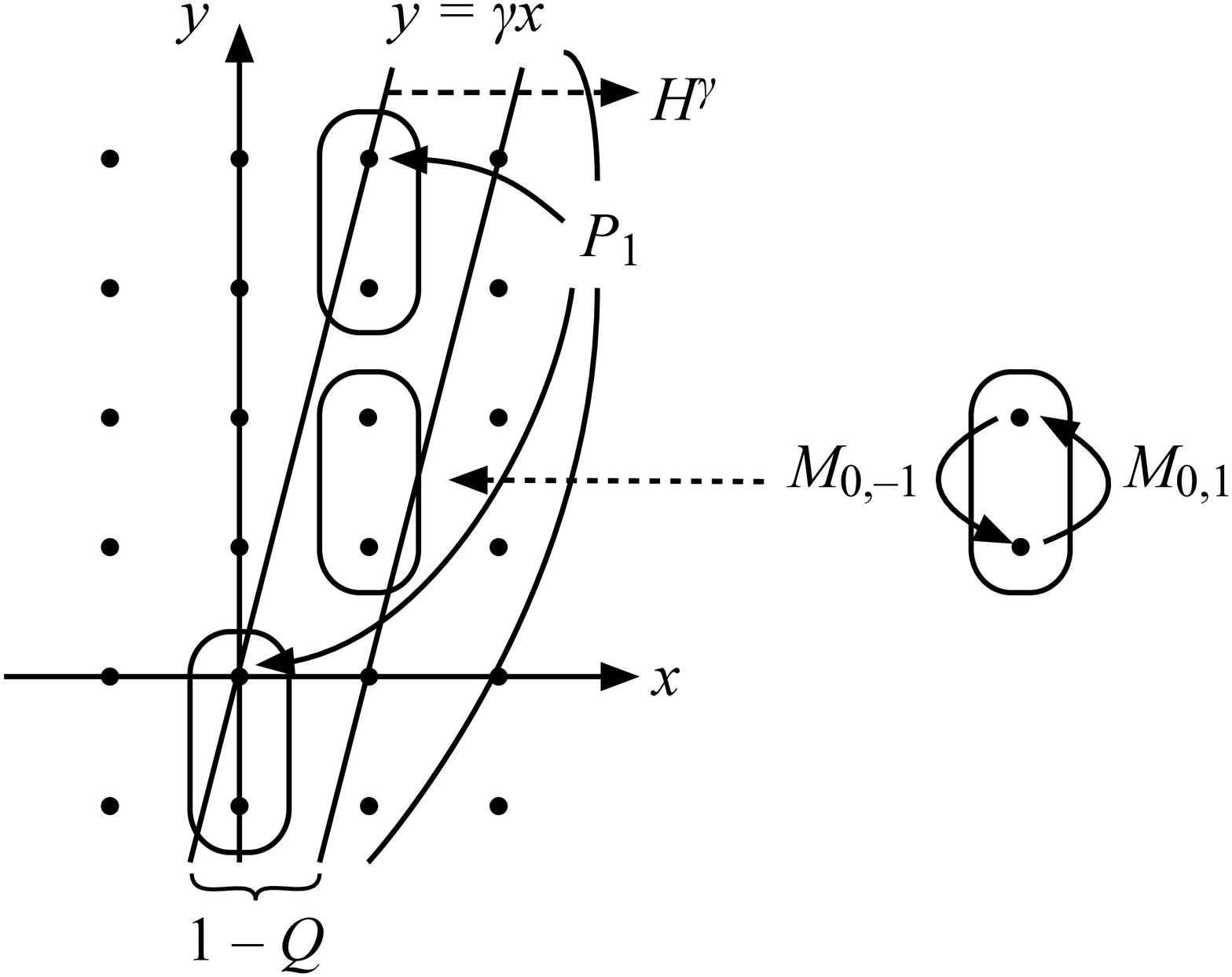}
\caption{The case of $u=1$ and $t = 4$. $1-Q$ is the projection onto the closed subspace corresponding to lattice points in between two lines (lattice points on the line $y = \gamma x$ are included, while that on $y = \gamma(x-1)$ are not). $P_j$ is the projection onto the closed subspace spanned by $\{\delta_{n, 4n-j+1} \ | \ n \in \Z \}$. $s_j$ interchanges two points in a pair up to the sign.}
\label{even}
\end{figure}
Note that $\sum^{t}_{j=1} P_j = 1- Q$.
For odd $j=1, 3, \ldots, t-1$, let $s_j = P_{j} M_{0,1} P_{j+1} - P_{j+1} M_{0,-1} P_{j}$ and $s = \sum_{j=1, \text{odd}}^{t-1} s_j$.
The element $s$ satisfies the relations
$\rm(\hspace{.18em}i\hspace{.18em})$ $s^* = -s$,
$\rm(\hspace{.08em}ii\hspace{.08em})$ $s^\tau = -s$,
$\rm(i\hspace{-.08em}i\hspace{-.08em}i)$ $s^2 = -1+Q$,
$\rm(i\hspace{-.08em}v\hspace{-.06em})$ $Q s = s Q = 0$ and
$\rm(\hspace{.06em}v\hspace{.06em})$ $s T^\gamma_{u,0} = T^\gamma_{-u,0} s = 0$.
Note that $\sigmag(s)=0$ since $\sigmag(1-Q)=0$.
We first consider the following elements:
\begin{equation*}
a =
\left(
    \begin{array}{cc}
           0 & i \cdot 1_{\TTz}  \\
           -i \cdot 1_{\TTz} & 0 \\
    \end{array}
\right) \in M_2(\TTz),
\ \ \
b_\pm =
\left(
    \begin{array}{cc}
           \pm i s &iQ  \\
           -iQ & i s \\
    \end{array}
\right) \in M_2(\TTg),
\end{equation*}
where the double-sign corresponds.
Elements $a$ and $b_\pm$ are self-adjoint unitaries satisfying $a^\tau = -a$ and $b_\pm^\tau = -b_\pm$, and pairs $(a, b_\pm)$ are elements of $M_2(\Szg)$; therefore, they define the elements of $KO_2(\Szg, \tau_\mathcal{S})$.

\begin{lemma}\label{lemtriv}
As elements of $KO_2(\Szg, \tau_\mathcal{S})$, we have
$[(a, b_+)]= [(a, b_-)] = 0$.
\end{lemma}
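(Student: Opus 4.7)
The plan is to show $[(a, b_\pm)] = 0$ by exhibiting explicit null-homotopies of $(a, b_\pm)$ within $M_2(\Szg)$ through self-adjoint unitaries respecting the $\tau_\mathcal{S}$-antisymmetry. First I verify $(a, b_\pm) \in M_2(\Szg)$: $\sigma^0(a) = I^{(2)}$ since $a = I^{(2)}$ is constant, and $\sigma^\gamma(b_\pm) = I^{(2)}$ using $\sigma^\gamma(s) = 0$ (because $s = (1-Q)s(1-Q)$ and $\sigma^\gamma(1-Q) = 0$) together with $\sigma^\gamma(Q) = 1$. Since $a$ already represents the trivial class in $KO_2(\TTz, \tauz)$, the approach is to keep the first component fixed at $a = I^{(2)}$ and construct a continuous deformation of $b_\pm$ to $I^{(2)}_{\TTg}$ through self-adjoint unitaries $u \in M_2(\TTg)$ satisfying $u^\tau = -u$ and $\sigma^\gamma(u) = I^{(2)}$; pairing with the constant $a$ yields the required path in $M_2(\Szg)$.

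The key observation is that the relations $sQ = Qs = 0$, $s^2 = Q-1$, $s^\tau = -s$, and $Q^\tau = Q$ make $b_\pm$ preserve the orthogonal decomposition $\HH^\gamma \oplus \HH^\gamma = (Q\HH^\gamma)^{\oplus 2} \oplus (1-Q)\HH^\gamma \oplus (1-Q)\HH^\gamma$, where the last two summands come from the two rows of the $M_2$-structure. On the $Q$-block, $b_\pm$ restricts to $\begin{pmatrix} 0 & iQ \\ -iQ & 0 \end{pmatrix}$, already matching the $Q$-block of $I^{(2)}_{\TTg}$; on the two remaining $(1-Q)$-summands it acts as $\pm is$ and $is$ respectively. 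Keeping the $Q$-block constant, the problem reduces to deforming $X_\pm := \diag(\pm is, is)$ into $Y := \begin{pmatrix} 0 & i(1-Q) \\ -i(1-Q) & 0 \end{pmatrix}$ within the anti-$\tau$ self-adjoint unitaries of $M_2((1-Q)\TTg(1-Q))$.

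The main technical obstacle is that $X_\pm$ and $Y$ commute, with product $X_\pm Y = \begin{pmatrix} 0 & \mp s \\ \pm s & 0 \end{pmatrix}$ non-vanishing, so a direct linear interpolation between the endpoints fails to produce unitaries. I would resolve this by introducing anti-$\tau$ self-adjoint unitary axes that anti-commute with the endpoints and performing a two-stage rotation. For the $b_-$ case, a direct calculation shows that $Z_- := \begin{pmatrix} 0 & is \\ is & 0 \end{pmatrix}$ is a self-adjoint unitary on $(1-Q)\oplus(1-Q)$ satisfying $Z_-^\tau = -Z_-$, $\{Z_-, X_-\} = 0$, and $\{Z_-, Y\} = 0$; the concatenation $\theta \mapsto \cos\theta\, X_- + \sin\theta\, Z_-$ on $[0,\pi/2]$ followed by $\phi \mapsto \cos\phi\, Z_- + \sin\phi\, Y$ on $[0,\pi/2]$ yields the desired null-homotopy through admissible elements. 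For the $b_+$ case, the analogous first axis is $Z_+ := \begin{pmatrix} 0 & i\sigma \\ -i\sigma & 0 \end{pmatrix}$ built from the alternating-sign sum $\sigma := \sum_{j=1}^{t}(-1)^{j+1}P_j \in (1-Q)\TTg(1-Q)$ (which satisfies $\sigma^2 = 1-Q$, $\sigma^\tau = \sigma$, and $\{\sigma, s\} = 0$, using that $t$ is even), and $Z_+$ anti-commutes with $X_+$; the second stage then requires a further anti-$\tau$ axis of the form $\begin{pmatrix} A & B \\ B & -A \end{pmatrix}$ with $A, B$ commuting with $\sigma$ and satisfying $A^2 + B^2 = 1-Q$, which can be constructed inside $(1-Q)\TTg(1-Q)$ using the column-shift structure of the strip $\bigoplus_{j=1}^{t}\mathrm{range}(P_j)$.
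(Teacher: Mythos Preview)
Your treatment of $b_-$ works, but the premise is wrong: $X_-$ and $Y$ actually \emph{anticommute}, not commute. Indeed
\[
X_- Y = \begin{pmatrix} 0 & s \\ s & 0 \end{pmatrix},\qquad
Y X_- = \begin{pmatrix} 0 & -s \\ -s & 0 \end{pmatrix},
\]
so the one-stage path $\cos\theta\,X_- + \sin\theta\,Y$ already does the job. This is exactly the homotopy $b'_\theta$ the paper writes down; your detour through $Z_-$ is correct but unnecessary.

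For $b_+$ there is a genuine gap. Your first rotation from $X_+$ to $Z_+ = \begin{pmatrix}0 & i\sigma\\ -i\sigma & 0\end{pmatrix}$ is fine, but it does not simplify the problem: since $\sigma$ commutes with $1-Q$, one computes $Z_+ Y = Y Z_+ = \diag(\sigma,\sigma)$, so $Z_+$ and $Y$ again commute. You are therefore back to needing an anticommuting axis, and you defer this to an element $W = \begin{pmatrix}A & B\\ B & -A\end{pmatrix}$ with $A,B$ self-adjoint, $A^\tau=-A$, $B^\tau=-B$, $[A,B]=0$, $[A,\sigma]=[B,\sigma]=0$, $A^2+B^2=1-Q$, asserting only that such $A,B$ ``can be constructed using the column-shift structure of the strip''. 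This is precisely the hard step, and you have not carried it out. Commuting with $\sigma$ forces $A$ to split along the $P_j$ with $j$ odd versus $j$ even; on the even block one is asking for a self-adjoint anti-real involution inside the corner $(1-Q)\TTg(1-Q)$, and it is not at all clear that the necessary ``half-index'' projections live in $\TTg$ (they do not arise from continuous symbols). Without an explicit construction this remains a sketch, not a proof.

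The paper resolves the $b_+$ case by a different mechanism. It introduces $r = \sum_{j\ \mathrm{odd}} (P_j M_{0,1}P_{j+1} + P_{j+1}M_{0,-1}P_j)$, the symmetric partner of $s$, which satisfies $r^*=r$, $r^\tau=r$, $r^2=1-Q$ and $\{r,s\}=0$. The homotopy
\[
b_\theta = \begin{pmatrix} is\cos\theta & iQ + ir\sin\theta\\ -iQ - ir\sin\theta & is\cos\theta\end{pmatrix}
\]
carries $b_+$ to $b_{\pi/2}$. The point is that $b_{\pi/2}$ is block-diagonal over the pairs of lattice points $\{(n,tn-j),(n,tn-j+1)\}$ tensored with the $M_2$ factor, acting on each $4$-dimensional block by a fixed real $4\times 4$ matrix; an explicit element of $SO(4)$ conjugates this block to the corresponding block of $I^{(2)}$, and path-connectedness of $SO(4)$ finishes the argument. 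The localisation into finite-dimensional blocks via $r$ is the idea your outline is missing.
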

\begin{proof}
We first show that $[(a, b_+)] = 0$.
For $j=1, 3, \ldots, t-1$, let $r_j = P_{j} M_{0,1} P_{j+1} + P_{j+1} M_{0,-1} P_{j}$ and $r = \sum_{j=1, \text{odd}}^{t-1} r_j$.
The element $r$ satisfies
$\rm(\hspace{.18em}i\hspace{.18em})$ $r^* = r$,
$\rm(\hspace{.08em}ii\hspace{.08em})$ $r^\tau = r$,
$\rm(i\hspace{-.08em}i\hspace{-.08em}i)$ $r^2 = 1 - Q$,
$\rm(i\hspace{-.08em}v\hspace{-.06em})$ $Q r = r Q = 0$,
$\rm(\hspace{.06em}v\hspace{.06em})$ $r T^\gamma_{u,0} = T^\gamma_{-u,0} r = 0$ and
$\rm(\hspace{-.06em}v\hspace{-.08em}i)$ $r$ anticommutes with $s$.
For $0 \leq \theta \leq \frac{\pi}{2}$, let
\begin{equation*}
b_{\theta} = \left(
    \begin{array}{cc}
           i s \cos \theta &iQ + i r \sin \theta\\
           -iQ -i r \sin \theta & i s \cos \theta\\
    \end{array}
\right), \ \
d =
\left(
    \begin{array}{cc}
           0 &i \cdot 1_{\TTg} \\
           -i \cdot 1_{\TTg} & 0 \\
    \end{array}
\right).
\end{equation*}
This $b_\theta$ is a self-adjoint unitary satisfying $b_\theta^\tau = - b_\theta$ and $b_0 = b_+$.
Therefore, $b_+$ and $b_{\frac{\pi}{2}}$ are homotopic in $U^{(2)}_1(\TTg,\taug)$.
We further discuss $b_{\frac{\pi}{2}}$.
Let consider lattice points $(m,n) \in \Z^2$ satisfying $0 \leq \gamma m - n < t$, as indicated in Figure~\ref{even} for the case where $u=1$ and $t=4$.
As in Figure~\ref{even}, we divide these points to $\frac{t}{2}$ pairs of lattice points: for $n \in \Z$ and odd $j = 1,3, \ldots, t-1$, a pair consists of $\{(n, tn-j), (n, tn-j+1)\}$.
The action of $b_{\frac{\pi}{2}}$ is closed on each pair of lattice points and is expressed by a $4 \times 4$ matrix (acting on $\C^2 \otimes \C^2$; one $\C^2$ corresponds to a pair of lattice points, and the other $\C^2$ corresponds to the $2 \times 2$ matrix we consider).
Let $V$ be the following matrix.
\begin{equation*}
V = \frac{1}{2}\left(
    \begin{array}{cccc}
           1 & 1 & 1 & -1 \\
           1 & 1 & -1 & 1 \\
           1 & -1 & 1 & 1 \\
           -1 & 1 & 1 & 1 \\
    \end{array}
\right).
\end{equation*}
Then $V \in SO(4)$ and satisfies
\begin{equation*}
	V \left(
    \begin{array}{cccc}
           0 & 0 & 0 & i \\
           0 & 0 & i & 0 \\
           0 & -i & 0 & 0 \\
           -i & 0 & 0 & 0 \\
    \end{array}
\right) V^*
	=
\left(
    \begin{array}{cccc}
           0 & 0 & i & 0 \\
           0 & 0 & 0 & i \\
           -i & 0 & 0 & 0 \\
           0 & -i & 0 & 0 \\
    \end{array}
\right),
\end{equation*}
where the left matrix inside the conjugation is the restriction of $b_{\frac{\pi}{2}}$ onto the closed subspace spanned by generating functions of these two lattice points tensor $\C^2$ and the right matrix is that of $d$
(note that $Q=0$ on these lattice points).
Let $W$ be the unitary on $\HH^\gamma \otimes \C^2$ defined by applying $V$ to these pair of lattice points satisfying $0 \leq \gamma m - n < t$ and the identity on the lattice points satisfying $t \leq \gamma m - n$, then we have $W b_{\frac{\pi}{2}} W^* = d$.
Since $SO(4)$ is path-connected, there is a path of self-adjoint unitaries from $b_{\frac{\pi}{2}}$ to $d$ preserving the relation of the $KO_2$-group.
Summarizing, we have a path in $U^{(2)}_1(\TTg,\taug)$ from $b_+$ to $d$.
By its construction, the pair of the constant path at $a \in M_2(\TT^0)$ and this path gives a path in $U^{(2)}_1(\Szg,\tau_\mathcal{S})$ from $(a,b_+)$ to $(a,d)$.
Therefore, we have $[(a, b_+)] = [(a, d)] = [I^{(2)}] = 0$ in $KO_2(\Szg, \tau_\mathcal{S})$.

We next discuss the class $[(a, b_-)]$.
For $0 \leq \theta \leq \frac{\pi}{2}$, let
\begin{equation*}
b'_\theta = \left(
    \begin{array}{cc}
           -i s \cos \theta &iQ + i(1-Q) \sin \theta \\
           -iQ -i(1-Q) \sin \theta & i s \cos \theta \\
    \end{array}
\right).
\end{equation*}
Then, $b'_\theta$ is a self-adjoint unitary satisfying $(b'_\theta)^\tau = -b'_\theta$.
We have $b'_0 = b_-$ and
$b'_{\frac{\pi}{2}} = I^{(2)}$.
Therefore, $[(a, b_-)] = [(a, b'_{\frac{\pi}{2}})] = [I^{(2)}] = 0$ in $KO_2(\Szg, \tau_\mathcal{S})$.
\end{proof}
Let consider the following elements:
\begin{equation*}
v_2 =
\left(
    \begin{array}{cc}
           0 &\hspace{-1mm} i T^0_{u,0}  \\
           -i T^0_{-u,0} & 0 \\
    \end{array}
\right) \in M_2(\TTz),
\ \ \
w_\pm =
\left(
    \begin{array}{cc}
           \pm i s &\hspace{-1mm} i T^\gamma_{u,0}  \\
           -i T^\gamma_{-u,0} & 0 \\
    \end{array}
\right) \in M_2(\TTg),
\end{equation*}
which are self-adjoint unitaries satisfying $(v_2)^\tau = -v_2$ and $w_\pm^\tau = -w_\pm$.
Since $\sigma^0(v_2) = \sigma^\gamma(w_\pm)$, pairs $(v_2, w_\pm)$ are elements of $M_2(\Szg)$ satisfying $(v_2, w_\pm)^\tau \hspace{-1mm} = -(v_2, w_\pm)$ and give elements $[(v_2, w_\pm)]$ of the group $KO_2(\Szg, \tau_\mathcal{S})$.

\begin{lemma}\label{equal2}
In $KO_2(\Szg, \tau_\mathcal{S})$, we have $[(v_2, w_+)] = [(v_2, w_-)]$.
\end{lemma}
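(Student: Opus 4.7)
The plan is to leverage Lemma~\ref{lemtriv} to reduce the problem to an explicit path construction, then exhibit the path. Since $[(a, b_\pm)] = 0$ in $KO_2(\Szg, \tau_\mathcal{S})$ by Lemma~\ref{lemtriv}, adding these trivial classes gives
\[
[(v_2, w_+)] = [(v_2 \oplus a, w_+ \oplus b_-)] \text{ and } [(v_2, w_-)] = [(v_2 \oplus a, w_- \oplus b_+)].
\]
It therefore suffices to connect $(v_2 \oplus a, w_+ \oplus b_-)$ and $(v_2 \oplus a, w_- \oplus b_+)$ by a path in $U^{(2)}_2(\Szg, \tau_\mathcal{S})$ that keeps the first component fixed at $v_2 \oplus a$ while varying only the second component.

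For the second component I would use the explicit path
\[
M_\theta =
\begin{pmatrix}
is\cos\theta & iT^\gamma_{u,0} & is\sin\theta & 0 \\
-iT^\gamma_{-u,0} & 0 & 0 & 0 \\
is\sin\theta & 0 & -is\cos\theta & iQ \\
0 & 0 & -iQ & is
\end{pmatrix},
\qquad \theta \in [0,\pi],
\]
which satisfies $M_0 = w_+ \oplus b_-$ and $M_\pi = w_- \oplus b_+$. The crucial point is the choice of the entry in the $(1,3)$ and $(3,1)$ slots: $is$ is chosen because it simultaneously satisfies $(is)(is) = -s^2 = 1-Q$, which is exactly what is needed for $M_\theta^2 = 1$, and $(is)^\tau = -is$, which is needed for the $\tau$-skew condition. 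The more obvious choice $1-Q$ fails the $\tau$-relation since $(1-Q)^\tau = 1-Q$.

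The verifications are mechanical given properties $\rm(\hspace{.18em}i\hspace{.18em})$--$\rm(\hspace{.06em}v\hspace{.06em})$ of $s$ together with $T^\gamma_{-u,0}T^\gamma_{u,0}=1$, $T^\gamma_{u,0}T^\gamma_{-u,0}=Q$, $Qs = sQ = 0$, and $(T^\gamma_{u,0})^\tau = T^\gamma_{-u,0}$: self-adjointness is visible by inspection, the $(1,1)$ entry of $M_\theta^2$ collapses to $(1-Q)(\cos^2\theta+\sin^2\theta) + Q = 1$ and the off-diagonal entries of $M_\theta^2$ vanish either by cancellation of the two $\cos\theta\sin\theta$ contributions or by the annihilation identities, and the relation $M_\theta^\tau = -M_\theta$ follows entry by entry. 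Finally, because $s \in (1-Q)\TTg(1-Q) \subset \ker\sigma^\gamma$, the symbol $\sigma^\gamma(M_\theta)$ is independent of $\theta$ and agrees with $\sigma^0(v_2\oplus a)$, so $(v_2 \oplus a, M_\theta)$ stays in $M_4(\Szg)$. The main obstacle is guessing the correct off-diagonal term; once $is$ is in place, the rest is bookkeeping.
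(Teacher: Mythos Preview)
Your proof is correct and follows the same overall strategy as the paper: stabilize by a trivial class from Lemma~\ref{lemtriv} and then write down an explicit $4\times 4$ homotopy in $U^{(2)}_2(\Szg,\tau_{\mathcal{S}})$ keeping the $\TT^0$-component constant. The only difference is in the choice of off-diagonal coupling. The paper places the element $r$ (introduced in the proof of Lemma~\ref{lemtriv}, satisfying $r^\tau=r$, $r^2=1-Q$, and $rs=-sr$) in the $(1,3)$ and $(3,1)$ slots via $\mp ir\sin\theta$, keeps the $(3,3)$ entry as $+is\cos\theta$, and thereby connects $w_+\oplus b_+$ to $w_-\oplus b_-$. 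You instead reuse $s$ itself in the coupling slots, flip the sign of the $(3,3)$ entry, and connect $w_+\oplus b_-$ to $w_-\oplus b_+$. Both choices work for the same algebraic reason (the cross terms in $M_\theta^2$ cancel, in your case by the sign change, in the paper's case by $rs+sr=0$), and both appeal to Lemma~\ref{lemtriv} in the same way. Your variant is marginally more self-contained since it does not require the auxiliary element $r$.
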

\begin{proof}
For $0 \leq \theta \leq \pi$, let consider the following element in $M_4(\TTg)$:
\begin{equation*}
R_\theta =
\left(
    \begin{array}{cccc}
           i s \cos \theta & iT^\gamma_{u,0} & -ir \sin \theta & 0 \\
           -iT^\gamma_{-u,0} & 0 & 0 & 0 \\
           ir \sin \theta & 0 & is \cos \theta & i Q \\
           0 & 0 & -i Q  & is
    \end{array}
\right),
\end{equation*}
Then, we have $R_0 = w_+ \oplus b_+$, $R_\pi = w_- \oplus b_-$ and $R_\theta$ is a self-adjoint unitary satisfying $R_\theta^\tau = - R_\theta$.
Since $\sigma^\gamma(R_\theta) = \sigma^0(v_2 \oplus a)$, the pair $(v_2 \oplus a, R_\theta)$ is contained in $U^{(2)}_2(\Szg,\tau_\mathcal{S})$ and gives a path from $(v_2 \oplus a, w_+ \oplus b_+)$ to $(v_2 \oplus a, w_- \oplus b_-)$.
By using Lemma~\ref{lemtriv}, we obtain the following equality in $KO_2(\Szg, \tau_\mathcal{S})$:
\begin{equation*}
	[(v_2,w_+)] = [(v_2 \oplus a, w_+ \oplus b_+)] = [(v_2 \oplus a, w_- \oplus b_-)] = [(v_2, w_-)].
\qedhere
\end{equation*}
\end{proof}

\begin{lemma}\label{order2}
In $KO_2(\Szg, \tau_\mathcal{S})$, the element $[(v_2,w_+)]$ has order two.
\end{lemma}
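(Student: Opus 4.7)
The plan is to show that $2[(v_2, w_+)] = 0$ in $KO_2(\Szg, \tau_\mathcal{S})$; combined with the short exact sequence (\ref{KO2ext}), this produces a splitting and yields $\widetilde{KO}_2(\Szg, \tau_\mathcal{S}) \cong (\Z_2)^3$. The first move is to apply Lemma~\ref{equal2} to one of the two copies, giving
\begin{equation*}
2[(v_2, w_+)] = [(v_2, w_+)] + [(v_2, w_-)] = [(v_2 \oplus v_2, w_+ \oplus w_-)].
\end{equation*}
This rewriting is the useful one, because the opposite signs of the $\pm is$ entries in $w_+$ and $w_-$ can be arranged to cancel against each other by a rotation in $SO(2)$ acting on the two direct-summand copies of $\HH^\gamma$, whereas $w_+ \oplus w_+$ has no such symmetry to exploit.

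Second, I would construct an explicit path of self-adjoint unitaries $(V_\theta, W_\theta) \in U^{(2)}_n(\Szg, \tau_\mathcal{S})$, after stabilizing by sufficiently many copies of the trivial elements $(a, b_\pm)$ from Lemma~\ref{lemtriv}, from $(v_2 \oplus v_2, w_+ \oplus w_-)$ to an element of the form $I^{(2)}_n$. On the $\TTg$ side, the path $W_\theta$ proceeds in two stages mimicking the constructions in Lemmas~\ref{lemtriv} and \ref{equal2}: first, interpolate the block $\mathrm{diag}(is, -is)$ along $\mathrm{diag}(is\cos\theta, -is\cos\theta)$ together with off-diagonal $\pm r\sin\theta$ contributions, rotating the $s$-piece out to the off-diagonal; then apply the local $SO(4)$ rotation $V$ from Lemma~\ref{lemtriv} on each pair of lattice points carrying $1-Q$ to absorb the remaining off-diagonal pieces into the shift blocks, after which the surviving shifts $T^\gamma_{\pm u, 0}$ can be rotated to trivial blocks. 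On the $\TTz$ side, the companion path $V_\theta$ uses that $T^0_{u,0}$ is a genuine unitary on the upper half-plane $\HH^0$, so analogous rotations are available and $v_2 \oplus v_2$ is homotopic to a trivial self-adjoint unitary with $u^{\tau_0} = -u$.

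The main obstacle is coherence: the two paths must satisfy $\sigma^0(V_\theta) = \sigma^\gamma(W_\theta)$ for every $\theta$, so that $(V_\theta, W_\theta)$ remains in $M_n(\Szg)$. The key enabling facts are $\sigma^\gamma(s) = \sigma^\gamma(r) = 0$ (both $s$ and $r$ are compact, being supported on the strip $1-Q$), so the first stage on the $\TTg$ side can be paired with the constant path on the $\TTz$ side; and $\sigma^\gamma(T^\gamma_{\pm u, 0}) = \sigma^0(T^0_{\pm u, 0})$, so the second stage on the $\TTg$ side is matched by an identical rotation of shifts on the $\TTz$ side. Writing out this two-stage homotopy in full detail, and verifying that the terminal element is genuinely an $I^{(2)}_n$ rather than a nontrivial representative of one of the $(\Z_2)^2$ classes already captured by $\Coker(\varphi_3)$, is the technical heart of the argument; the explicit basis-level description of $s$, $r$, and the projections $P_j$ gives enough local control to carry this out.
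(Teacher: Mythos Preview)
Your first move---applying Lemma~\ref{equal2} to rewrite $2[(v_2,w_+)]$ as $[(v_2\oplus v_2,\,w_+\oplus w_-)]$---is exactly what the paper does. The divergence is in how the homotopy to the trivial class is built.

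The paper does not use $r$, the $SO(4)$ rotation $V$, or any stabilization by copies of $(a,b_\pm)$. Instead it writes down a single explicit one-parameter family in $M_4$: for $0\le\theta\le\tfrac{\pi}{2}$,
\[
A^\gamma_\theta=\begin{pmatrix}
is\cos\theta & iT^\gamma_{u,0}\cos\theta & i\sin\theta & 0\\
-iT^\gamma_{-u,0}\cos\theta & 0 & 0 & -i\sin\theta\\
-i\sin\theta & 0 & -is\cos\theta & iT^\gamma_{u,0}\cos\theta\\
0 & i\sin\theta & -iT^\gamma_{-u,0}\cos\theta & 0
\end{pmatrix},
\]
with the analogous $A^0_\theta$ on the $\TTz$ side obtained by deleting the $\pm is\cos\theta$ entries. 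The point is that \emph{every} Toeplitz-specific entry carries the same $\cos\theta$ factor, while purely scalar $\pm i\sin\theta$ terms couple the two $2\times2$ blocks; the relations $s^2=-(1-Q)$, $T^\gamma_{u,0}T^\gamma_{-u,0}=Q$, and $sT^\gamma_{u,0}=T^\gamma_{-u,0}s=0$ make $A^\gamma_\theta$ a self-adjoint unitary with $(A^\gamma_\theta)^\tau=-A^\gamma_\theta$ for all $\theta$. At $\theta=\tfrac{\pi}{2}$ both $A^0_{\pi/2}$ and $A^\gamma_{\pi/2}$ collapse to the same constant scalar matrix, which represents $0$ in $KO_2$. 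Coherence $\sigma^0(A^0_\theta)=\sigma^\gamma(A^\gamma_\theta)$ is immediate since $\sigma^\gamma(s)=0$.

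Your proposed two-stage route could perhaps be made to work, but it is considerably more elaborate, and the step ``the surviving shifts $T^\gamma_{\pm u,0}$ can be rotated to trivial blocks'' is the delicate one: $T^\gamma_{u,0}$ is only a partial isometry on $\HH^\gamma$, so once you have rotated the $s$-correction away you no longer have unitary blocks to rotate independently. The paper sidesteps this entirely by never separating $s$ from the shifts---it just scales the whole $w_\pm$ content by $\cos\theta$ and fills in with scalars.
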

\begin{proof}
For $0 \leq \theta \leq \frac{\pi}{2}$, let
\begin{equation*}
A^0_\theta =
\left(
    \begin{array}{cccc}
           0 & i T^0_{u,0}\cos \theta & i \sin \theta & 0 \\
           -i T^0_{-u,0}\cos \theta & 0 & 0 & -i \sin \theta \\
          -i \sin \theta & 0 & 0 & i T^0_{u,0}\cos \theta \\
           0 & i \sin \theta & -i T^0_{-u,0}\cos \theta  & 0
    \end{array}
\right) \in M_4(\TT^0),
\end{equation*}
\begin{equation*}
A^\gamma_\theta =
\left(
    \begin{array}{cccc}
           i s \cos \theta & i T^\gamma_{u,0}\cos \theta & i \sin \theta & 0 \\
           -i T^\gamma_{-u,0}\cos \theta & 0 & 0 & -i \sin \theta \\
          -i \sin \theta & 0 & - i s\cos \theta & i T^\gamma_{u,0}\cos \theta \\
           0 & i \sin \theta & -i T^\gamma_{-u,0}\cos \theta  & 0
    \end{array}
\right) \in M_4(\TTg).
\end{equation*}
Then, $A^0_\theta$ and $A^\gamma_\theta$ are self-adjoint unitaries satisfying $(A^0_\theta)^\tau = -A^0_\theta$ and $(A^\gamma_\theta)^\tau = -A^\gamma_\theta$, and their pair $(A^0_\theta, A^\gamma_\theta)$ is contained in $M_4(\Szg)$.
Note that $(A^0_0, A^\gamma_0) = (v_2 \oplus v_2, w_+ \oplus w_-)$.
Therefore, by Lemma~\ref{equal2}, the following equality holds in $KO_2(\Szg, \tau_\mathcal{S})$:
\begin{equation*}
	2 \cdot [(v_2, w_+)] = [(v_2, w_+)] + [(v_2, w_-)] = [(A^0_0, A^\gamma_0)] = [(A^0_{\frac{\pi}{2}}, A^\gamma_{\frac{\pi}{2}})] = 0.
\qedhere
\end{equation*}
\end{proof}

\begin{proposition}\label{KO2}
When $\alpha$ and $\beta$ are rational numbers and $t=-ps + qr$ is even, we have
$KO_2(\Sab, \tau_\mathcal{S}) \cong (\Z_2)^4$.
\end{proposition}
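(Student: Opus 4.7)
The plan is to show that the short exact sequence
\[
0 \to \Coker(\varphi_3) \to KO_2(\Szg, \tau_\mathcal{S}) \to \Ker(\varphi_2) \to 0
\]
coming from \eqref{varphi}, whose outer terms are both isomorphic to $(\Z_2)^2$ when $t$ is even, splits as an extension of abelian $2$-groups; this gives $KO_2(\Szg,\tau_\mathcal{S}) \cong (\Z_2)^4$, and the isomorphism $KO_2(\Sab,\tau_\mathcal{S}) \cong KO_2(\Szg,\tau_\mathcal{S})$ induced by the $SL(2,\Z)$-rotation then yields the claim.

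First I would peel off the scalar summand: the split unital $*$-homomorphism $\lambda^\alpha \circ p^\alpha \colon (\Szg,\tau_\mathcal{S}) \to (\C,\id)$ yields a direct-sum decomposition $KO_2(\Szg,\tau_\mathcal{S}) \cong \widetilde{KO}_2(\Szg,\tau_\mathcal{S}) \oplus \Z_2$, where the scalar $\Z_2$ accounts for the class $([-I^{(2)}],[-I^{(2)}]) \in \Ker(\varphi_2)$. Passing to $\widetilde{KO}_2$ reduces the problem to the extension \eqref{KO2ext}, whose quotient $\Z_2$ is generated by the non-scalar class in $\Ker(\varphi_2)$. Using the explicit generators of $KO_2(\TTz,\tauz)$ and $KO_2(\TTg,\taug)$ enumerated in Section~\ref{Sect.3.1}, this generator is represented by the pair $(v_2, w_+) \in M_2(\Szg)$: the pair lies in the pullback because $\sigmag(s) = 0$ forces $\sigmag(w_+) = \sigmaz(v_2)$, and the $KO_2$-relations $(v_2)^\tau = -v_2$, $(w_+)^\tau = -w_+$ are immediate from the definitions.

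The decisive step is then to prove $2\cdot [(v_2, w_+)] = 0$ in $KO_2(\Szg,\tau_\mathcal{S})$; once established, $[(v_2, w_+)]$ defines a splitting homomorphism $\Z_2 \to \widetilde{KO}_2(\Szg,\tau_\mathcal{S})$, so $\widetilde{KO}_2(\Szg,\tau_\mathcal{S}) \cong (\Z_2)^3$ and hence $KO_2(\Sab,\tau_\mathcal{S}) \cong (\Z_2)^4$. This order-two assertion is exactly Lemma~\ref{order2}, which I would employ together with Lemmas~\ref{lemtriv} and \ref{equal2}: the vanishing $[(a, b_\pm)] = 0$ is obtained by a rotation between $Q$ and the self-adjoint square root $r$ of $1-Q$ that anticommutes with $s$; an analogous rotation in $M_4$ identifies $[(v_2, w_+)]$ with $[(v_2, w_-)]$ after absorbing trivial $(a, b_\pm)$-stabilizers; and a final $\theta$-interpolation collapses $[(v_2 \oplus v_2, w_+ \oplus w_-)]$ to a manifestly trivial pair.

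The main obstacle I expect is this order-two verification. Inside $\TTg$ alone the sign of the $s$-block in $w_\pm$ cannot be continuously rotated away, because $s$ is a genuinely nontrivial self-adjoint unitary supported on the finite strip $0 \le \gamma m - n < t$; the resolution requires both doubling the ambient size and exploiting the anticommuting partner $r$ built from the shifts $P_j M_{0,\pm 1} P_{j+1}$. This is precisely why the interpolating witnesses $A^0_\theta, A^\gamma_\theta$ must live in $M_4$, and why their construction hinges on the algebraic identities $rs = -sr$, $r^2 = 1-Q$, $rQ = 0$, and $rT^\gamma_{\pm u, 0} = 0$ that are specific to the even-$t$ geometry of the strip.
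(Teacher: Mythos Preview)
Your proposal is correct and follows the paper's proof essentially verbatim: you reduce via the split map $\lambda^\alpha\circ p^\alpha$ to the extension \eqref{KO2ext}, lift the non-scalar generator of $\Ker(\varphi_2)$ by $[(v_2,w_+)]$, and invoke Lemma~\ref{order2} (built on Lemmas~\ref{lemtriv} and \ref{equal2}) to show this lift has order two, giving the splitting and hence $(\Z_2)^4$. The only detail you leave implicit that the paper states explicitly is that $u$ is odd when $t$ is even, which is what guarantees $([v_2],[w_+])$ is the nontrivial non-scalar element of $\Ker(\varphi_2)$.
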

\begin{proof}
Since $u$ is odd when $t$ is even, the pair $([v_2], [w_+]) \in KO_2(\TTz,\tau_0) \oplus KO_2(\TTg,\taug)$ constitutes a nontrivial element of the right $\Z_2 \subset \Ker(\varphi_2)$ in the sequence (\ref{KO2ext}).
The element $[(v_2, w_+)] \in KO_2(\Szg, \tau_\mathcal{S})$ is a lift of it.
Therefore, $[(v_2, w_+)]$ is nontrivial and has order two by Lemma~\ref{order2}.
This element belongs to $\widetilde{KO}_2(\Szg, \tau_\mathcal{S})$ and, by mapping $1 \in \Z_2$ to $[(v_2, w_+)]$, we obtain a splitting of the the sequence (\ref{KO2ext}).
Therefore, $\widetilde{KO}_2(\Szg, \tau_\mathcal{S}) \cong (\Z_2)^3$ and the result follows.
\end{proof}

\subsubsection{The Group $KO_3(\Sab, \tau_\mathcal{S})$}\label{Sect.3.2.2}
We next compute $KO_3(\Sab, \tau_\mathcal{S})$.
Note that $\Ker (\varphi_3)$ depends on whether $t$ is even or odd.
When $t$ is odd, $\Ker (\varphi_3)$ is zero and, from the sequence (\ref{varphi}), we have $KO_3(\Sab, \tau_\mathcal{S}) \cong \Z_2$.

We next discuss the cases of even $t$.
In this case, the extension (\ref{varphi}) is of the following form:
\begin{equation}\label{KO3ext}
	0 \to \Z_2 \to KO_3(\Szg, \tau_\mathcal{S}) \to \Z_2 \to 0.
\end{equation}
As in Sect.~\ref{Sect.3.2.1}, we show that this sequence splits by finding a lift of the generator of the right $\Z_2$ in $KO_3(\Szg, \tau_\mathcal{S})$ of order two.
Let consider the following elements:
\begin{equation*}
v_3 =
\left(
    \begin{array}{cc}
           T^0_{u,0} & 0 \\
           0 & T^0_{-u,0} \\
    \end{array}
\right) \in M_2(\TTz), \ \ \
z_\pm
	=
\left(
    \begin{array}{cc}
           T^\gamma_{u,0} & \pm s \\
           0 & T^\gamma_{-u,0} \\
    \end{array}
\right) \in M_2(\TTg),
\end{equation*}
where the double-sign in the second equality corresponds.
Pairs $(v_3, z_\pm)$ are unitaries in $M_2(\Szg)$ satisfying $(v_3, z_\pm)^{\# \otimes \tau} = (v_3, z_\pm)$ and define elements $[(v_3, z_\pm)]$ of the $KO$-group $KO_3(\Szg, \tau_\mathcal{S})$.
\begin{lemma}\label{equal3}
In $KO_3(\Szg, \tau_\mathcal{S})$, we have $[(v_3, z_+)] = [(v_3, z_-)]$.
\end{lemma}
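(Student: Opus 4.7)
The plan is to parallel the proof of Lemma~\ref{equal2}: introduce auxiliary pairs $(a_3, c_\pm) \in U^{(3)}_k(\Szg, \tau_\mathcal{S})$ whose classes vanish in $KO_3(\Szg, \tau_\mathcal{S})$, and then construct a continuous path in $U^{(3)}_{2k}(\Szg, \tau_\mathcal{S})$ from $(v_3, z_+) \oplus (a_3, c_+)$ to $(v_3, z_-) \oplus (a_3, c_-)$. These two ingredients, together with the additivity of the class under direct sum, give
\[ [(v_3, z_+)] = [(v_3 \oplus a_3, z_+ \oplus c_+)] = [(v_3 \oplus a_3, z_- \oplus c_-)] = [(v_3, z_-)]. \]

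First, I would choose $c_\pm \in U^{(3)}_?(\TTg, \taug)$ together with a matching $a_3 \in U^{(3)}_?(\TT^0, \tau_0)$, so that each $c_\pm$ is a $(\sharp \otimes \tau)$-fixed unitary, the symbols match ($\sigma^\gamma(c_\pm) = \sigma^0(a_3)$), and each $[(a_3, c_\pm)]$ vanishes in $KO_3(\Szg, \tau_\mathcal{S})$. The building blocks would be $s$, $Q$, $1 - Q$, and the self-adjoint element $r$ from the proof of Lemma~\ref{lemtriv} (satisfying $r^* = r$, $r^\tau = r$, $r^2 = 1 - Q$, $rs = -sr$, $rQ = Qr = 0$, $r T^\gamma_{\pm u, 0} = 0$), assembled so that the defining relations of $KO_3$ hold. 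Triviality of $[(a_3, c_\pm)]$ would then be established in the spirit of Lemma~\ref{lemtriv}, via a block-wise $SO$-rotation that takes $c_\pm$ to the identity on each pair of lattice points in $\{0 \le \gamma m - n < t\}$, with the $\TT^0$ component held constant.

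Second, I would construct a family $R_\theta \in M_?(\TTg)$ of $(\sharp \otimes \tau)$-fixed unitaries for $\theta \in [0, \pi]$, with $R_0 = z_+ \oplus c_+$ and $R_\pi = z_- \oplus c_-$, using $r$ to rotate the off-diagonal $s$-entries through the sign flip. The structure is analogous to the four-block matrix $R_\theta$ in Lemma~\ref{equal2}: Toeplitz diagonal blocks $T^\gamma_{\pm u, 0}$, with entries of the form $\pm s \cos \theta$ and $\pm r \sin \theta$ placed in off-diagonal positions so that unitarity is preserved via the identity $s^2 + r^2 = 0$ (equivalently $r^2 = -s^2 = 1 - Q$) and so that $R_\theta^{\sharp \otimes \tau} = R_\theta$. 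The companion path on the $\TT^0$ side is the constant path at $v_3 \oplus a_3$, and the pullback condition $\sigma^\gamma(R_\theta) = \sigma^0(v_3 \oplus a_3)$ holds throughout because $\sigma^\gamma(s) = \sigma^\gamma(r) = 0$ and $\sigma^\gamma(Q) = 1$, so the endpoint symbols already agree.

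The main obstacle is the explicit form of $c_\pm$ and $R_\theta$: in contrast to Lemma~\ref{equal2}, the $KO_3$ picture imposes no self-adjointness condition, so unitarity of each block along the path must be secured purely from the algebraic relations among $s$, $r$, $Q$, and $T^\gamma_{\pm u, 0}$, while the $(\sharp \otimes \tau)$-fixedness imposes sign constraints on the off-diagonal entries that the rotation must respect. Once these algebraic constraints can be met by a concrete choice of $c_\pm$ and $R_\theta$, the conclusion follows directly.
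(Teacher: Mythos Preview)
Your plan would likely succeed, but it is far more elaborate than necessary. The paper proves this lemma by a \emph{direct} homotopy inside $U^{(3)}_1(\Szg,\tau_\mathcal{S})$: for $0\le\theta\le\pi$ set
\[
z_\theta=\begin{pmatrix} T^\gamma_{u,0} & e^{i\theta}s\\ 0 & T^\gamma_{-u,0}\end{pmatrix},
\]
so that $z_0=z_+$, $z_\pi=z_-$, and $(v_3,z_\theta)$ is the desired path. Unitarity of $z_\theta$ only uses $|e^{i\theta}|^2=1$ together with $s^2=-(1-Q)$ and $sT^\gamma_{u,0}=T^\gamma_{-u,0}s=0$; the relation $z_\theta^{\sharp\otimes\tau}=z_\theta$ holds because $\tau$ is $\C$-linear and $s^\tau=-s$ gives $(e^{i\theta}s)^\tau=-e^{i\theta}s$; and $\sigma^\gamma(z_\theta)=\sigma^0(v_3)$ since $\sigma^\gamma(s)=0$.

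You yourself observed that, unlike in $KO_2$, the $KO_3$ relation $u^{\sharp\otimes\tau}=u$ imposes no self-adjointness. That is exactly what makes the phase rotation $e^{i\theta}$ available here and unavailable in Lemma~\ref{equal2}: for $w_\pm$ the $\pm is$ sits on the diagonal of a self-adjoint matrix, so a complex phase would destroy self-adjointness, forcing the auxiliary-pair maneuver with $(a,b_\pm)$ and the element $r$. In the $KO_3$ situation the off-diagonal entry $e^{i\theta}s$ carries no such constraint, so the whole apparatus of auxiliary classes $(a_3,c_\pm)$, block rotations, and an $R_\theta$-path is unnecessary. Your strategy is not wrong in spirit, but it leaves the hardest part (explicit $c_\pm$ and $R_\theta$) unresolved, whereas the direct homotopy is a complete two-line proof.
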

\begin{proof}
For $0 \leq \theta \leq \pi$, let
$z_\theta =
\left(
    \begin{array}{cc}
           T^\gamma_{u,0} & e^{i\theta}s \\
           0 & T^\gamma_{-u,0} \\
    \end{array}
\right) \in M_2(\TTg)$
which gives a path $\{ z_\theta \}_{0 \leq \theta \leq \pi}$ of unitaries satisfying $(z_\theta)^{\# \otimes \tau} = z_\theta$.
Its endpoints are $z_0 = z_+$ and $z_{\pi} = z_-$.
The pair $(v_3, z_\theta)$ satisfies $(v_3, z_\theta)^{\# \otimes \tau} = (v_3, z_\theta)$ and gives a homotopy between $(v_3, z_+)$ and $(v_3, z_-)$ in $U^{(3)}_1(\Szg,\tau_{\mathcal{S}})$.
\end{proof}

\begin{lemma}\label{order3}
The element $[(v_3, z_+)]$ in $KO_3(\Szg, \tau_\mathcal{S})$ has order two.
\end{lemma}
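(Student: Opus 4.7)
The plan is to mirror the proof of Lemma~\ref{order2}. First, by Lemma~\ref{equal3}, $[(v_3, z_+)] = [(v_3, z_-)]$, hence
\begin{equation*}
2 \cdot [(v_3, z_+)] = [(v_3, z_+)] + [(v_3, z_-)] = [(v_3 \oplus v_3,\ z_+ \oplus z_-)]
\end{equation*}
in $KO_3(\Szg, \tau_\mathcal{S})$. It therefore suffices to exhibit a continuous path $\{(B^0_\theta, B^\gamma_\theta)\}_{0 \le \theta \le \pi/2}$ in $U^{(3)}_2(\Szg, \tau_\mathcal{S})$ from $(v_3 \oplus v_3,\ z_+ \oplus z_-)$ at $\theta = 0$ to a pair at $\theta = \pi/2$ whose $KO_3$-class is zero.

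The construction runs in direct parallel with the $A^\bullet_\theta$ of Lemma~\ref{order2}: take $B^0_\theta \in M_4(\TT^0)$ and $B^\gamma_\theta \in M_4(\TT^\gamma)$, viewed as $2 \times 2$ blocks of $2 \times 2$ blocks, with diagonal blocks carrying the Toeplitz (and on the $\gamma$ side also the $s$-)content scaled by $\cos\theta$, and off-diagonal blocks given by constants in $M_2(\C)$ scaled by $\sin\theta$. The $\sharp \otimes \tau$-invariance will be secured by $v_3^{\sharp \otimes \tau} = v_3$ and $z_\pm^{\sharp \otimes \tau} = z_\pm$ on the diagonal together with anti-$\sharp \otimes \tau$-symmetric constants off the diagonal. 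Symbol compatibility $\sigma^0(B^0_\theta) = \sigma^\gamma(B^\gamma_\theta)$ is automatic because $\sigma^\gamma(s) = 0$, so the $s$-corrections inserted on the $\gamma$ side to absorb the non-unitarity of $T^\gamma_{\pm u, 0}$ are invisible to $\sigma^\gamma$; the identities $s^2 = -1 + Q$, $sT^\gamma_{u,0} = T^\gamma_{-u,0} s = 0$, and $Qs = sQ = 0$ then guarantee unitarity of $B^\gamma_\theta$, exactly as in the unitarity check for $z_\pm$ itself. At $\theta = \pi/2$ the $\cos\theta$-terms drop out, leaving a constant pair $(C,C)$ contained in the image of the unital diagonal inclusion $(\C, \id) \hookrightarrow (\Szg, \tau_\mathcal{S})$; since $KO_3(\C, \id) = 0$, this endpoint is trivial.

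The main obstacle, absent in the $KO_2$ setting, is that $v_3 = \diag(T^0_{u,0}, T^0_{-u,0})$ is diagonal and therefore fails to anti-commute with any nonzero constant $2 \times 2$ matrix, in contrast with the identity $\{v_2, \sigma_3\} = 0$ used in Lemma~\ref{order2}. As a result the symmetric rotation template of that lemma is not unitary here. A remedy is to employ the asymmetric template
\begin{equation*}
\begin{pmatrix} v_3 \cos\theta & X \sin\theta \\ Y \sin\theta & v_3 \cos\theta \end{pmatrix},
\end{equation*}
with unitaries $X, Y \in M_2(\C)$ coupled by $v_3 Y^* + X v_3^* = 0$ (so that the $(1,2)$ block of $B^0_\theta (B^0_\theta)^*$ vanishes) and the anti-$\sharp \otimes \tau$-symmetry condition on each; for instance $X = \sigma_3$ forces $Y = -v_3 \sigma_3 v_3$, a residual Toeplitz $T^{\pm 2u}$-term that is then rotated away in a second $\theta$-stage or by passing to a larger stabilization. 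The $\gamma$-side $B^\gamma_\theta$ is obtained by replacing $T^0_{\pm u,0}$ with $T^\gamma_{\pm u, 0}$ on the diagonal and inserting $s$-terms precisely as in the definition of $z_\pm$. Once the path and the triviality of its endpoint are verified, the conclusion $2 \cdot [(v_3, z_+)] = 0$ follows.
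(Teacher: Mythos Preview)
Your overall strategy matches the paper's: use Lemma~\ref{equal3} to rewrite $2[(v_3,z_+)]$ as $[(v_3\oplus v_3,\ z_+\oplus z_-)]$, then exhibit an explicit path in $U^{(3)}_2(\Szg,\tau_{\mathcal S})$ to a constant pair, which is trivial because $KO_3(\C,\id)=0$ (equivalently, $U(2,\mathbb H)$ is connected). But the ``main obstacle'' you flag is a misdiagnosis, and because of it you never actually produce the path.

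The unitarity of the template
\[
\begin{pmatrix} v_3\cos\theta & X\sin\theta \\ Y\sin\theta & v_3\cos\theta \end{pmatrix}
\]
does not require anticommutation of $v_3$ with a constant; your own condition $v_3 Y^*+Xv_3^*=0$ is the right one. The point you missed is that $T^0_{\pm u,0}$ are translations \emph{parallel} to the boundary of $\HH^0$, hence genuine unitaries with $(T^0_{u,0})^{-1}=T^0_{-u,0}$. Consequently $\sigma_1 v_3\sigma_1=v_3^*$, so the choice $X=\sigma_1$, $Y=-\sigma_1$ (both constant, both anti-$\sharp\otimes\tau$-symmetric) solves your equation outright: $v_3(-\sigma_1)^*+\sigma_1 v_3^*=-v_3\sigma_1+\sigma_1 v_3^*=-\sigma_1 v_3^*+\sigma_1 v_3^*=0$. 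Your trial $X=\sigma_3$ fails only because $\sigma_3$ commutes with $v_3$ rather than conjugating it to $v_3^*$; there is no need for a second $\theta$-stage or further stabilization. On the $\gamma$ side the analogous identity $z_+^*\sigma_1=\sigma_1 z_-$ holds (using $s^*=-s$), so the same off-diagonal blocks work there as well. This is exactly the paper's $B^0_\theta$, $B^\gamma_\theta$; at $\theta=\pi/2$ both equal the constant matrix $\left(\begin{smallmatrix}0&\sigma_1\\-\sigma_1&0\end{smallmatrix}\right)\in U(2,\mathbb H)$, and connectedness finishes the argument. As written, your proposal stops short of a proof because the homotopy is never exhibited and the workaround you sketch is neither needed nor carried out.
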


\begin{proof}
For $0 \leq \theta \leq \frac{\pi}{2}$, let
\begin{equation*}
B^0_\theta =
\left(
    \begin{array}{cccc}
           T^0_{u,0}\cos \theta & 0 & 0 & \sin \theta \\
           0 & T^0_{-u,0}\cos \theta & \sin \theta & 0 \\
           0 & -\sin \theta & T^0_{u,0}\cos \theta & 0 \\
           -\sin \theta & 0 & 0 & T^0_{-u,0}\cos \theta
    \end{array}
\right),
\end{equation*}
\begin{equation*}
B^\gamma_\theta =
\left(
    \begin{array}{cccc}
           T^\gamma_{u,0}\cos \theta & s \cos \theta & 0 & \sin \theta \\
           0 & T^\gamma_{-u,0}\cos \theta & \sin \theta & 0 \\
           0 & -\sin \theta & T^\gamma_{u,0}\cos \theta & -s\cos \theta \\
           -\sin \theta & 0 & 0 & T^\gamma_{-u,0}\cos \theta
    \end{array}
\right).
\end{equation*}
For each $\theta$, matrices $B^0_\theta$ and $B^\gamma_\theta$ are unitaries satisfying $(B^0_\theta)^{\# \otimes \tau} = B^0_\theta$ and $(B^\gamma_\theta)^{\# \otimes \tau} = B^\gamma_\theta$.
We have $B^\gamma_0 = v_3 \oplus v_3$ and $B^\gamma_0 = z_+ \oplus z_-$.
Note that matrices $B^0_{\frac{\pi}{2}}$ and $B^\gamma_{\frac{\pi}{2}}$ are contained in $M_4(\C)$, where they coincide.
Since this unitary satisfies the symmetry of the $KO_3$-group, this is an element of the quaternionic unitary group $U(2, \mathbb{H})$.
Since $U(2, \mathbb{H})$ is path-connected, there is a path of unitaries in $U_2^{(3)}(\Szg, \tau_{\mathcal{S}})$ connecting $(B^0_{\frac{\pi}{2}}, B^\gamma_{\frac{\pi}{2}})$ to $(1_{\mathcal{S}})_4$.
By using Lemma~\ref{equal3}, we obtain the following equality in  $KO_3(\Szg, \tau_\mathcal{S})$:
\begin{equation*}
	2 \cdot [(v_3, z_+)] = [(v_3 \oplus v_3, z_+ \oplus z_-)] = [(B^0_{\frac{\pi}{2}}, B^\gamma_{\frac{\pi}{2}})] = [(1_{\mathcal{S}})_4] = 0.
	\qedhere
\end{equation*}
\end{proof}

\begin{proposition}\label{KO3}
When $\alpha$ and $\beta$ are rational numbers and $t=-ps + qr$ is even, we have $KO_3(\Sab, \tau_\mathcal{S}) \cong (\Z_2)^2$.
\end{proposition}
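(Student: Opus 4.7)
The plan is to mimic the proof of Proposition~\ref{KO2} from the previous subsection: construct an explicit lift in $KO_3(\Szg,\tau_\mathcal{S})$ of the generator of the quotient $\Ker(\varphi_3)\cong\Z_2$ in the extension (\ref{KO3ext}), and then use the order-two estimate already furnished by Lemma~\ref{order3} to split the sequence. Since $KO_3(\Sab,\tau_\mathcal{S})\cong KO_3(\Szg,\tau_\mathcal{S})$ via the $SL(2,\Z)$-action, it suffices to work with the latter group.

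First I would verify that the pair $(v_3,z_+)$ really defines a class in $KO_3(\Szg,\tau_\mathcal{S})$. By construction $v_3\in M_2(\TTz)$ and $z_+\in M_2(\TTg)$ are unitaries satisfying the $KO_3$ symmetry $(\cdot)^{\sharp\otimes\tau}=(\cdot)$: for $v_3$ this is immediate from $\tau(T^0_{u,0})=T^0_{-u,0}$, and for $z_+$ one checks using the relations (i)--(v) satisfied by $s$, in particular $s\cdot T^\gamma_{u,0}=T^\gamma_{-u,0}\cdot s=0$ and $s^2=-1+Q$, that $z_+z_+^*=z_+^*z_+=1$ and $z_+^{\sharp\otimes\tau}=z_+$. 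Since $\sigma^0(v_3)=\diag(M_{u,0},M_{-u,0})=\sigma^\gamma(z_+)$ (recall $\sigma^\gamma(s)=0$), the pair lies in $M_2(\Szg)$ and defines $[(v_3,z_+)]\in KO_3(\Szg,\tau_\mathcal{S})$.

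Next I would identify the image of $[(v_3,z_+)]$ under the quotient map to $\Ker(\varphi_3)\subset KO_3(\TTz,\tau_0)\oplus KO_3(\TTg,\tau_\gamma)$. The image is $([v_3],[z_+])$. Deforming $s$ to $0$ via the valid homotopy $z_\theta=\bigl(\begin{smallmatrix}T^\gamma_{u,0}&(\cos\theta)s\\0&T^\gamma_{-u,0}\end{smallmatrix}\bigr)$ — which is unitary for all $\theta$ since $s^2=-(1-Q)$ and the $(1,1)$ entry of $z_\theta z_\theta^*$ equals $Q+\cos^2\theta(1-Q)+\sin^2\theta(1-Q)=1$ — shows $[z_+]=[\diag(T^\gamma_{u,0},T^\gamma_{-u,0})]$. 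Since $u$ is odd whenever $t$ is even (as $r,s$ are coprime), the element $[\diag(M_{u,0},M_{-u,0})]\in KO_3(C(\T^2),\tau_\T)$ corresponds to $(1,0,0)\in(\Z_2)^3$, so $[v_3]$ and $[z_+]$ are individually nontrivial generators of the $\Z_2$ summands in $KO_3(\TTz,\tau_0)\cong KO_3(\TTg,\tau_\gamma)\cong\Z_2$ whose difference lies in $\Ker(\varphi_3)$; in particular, $([v_3],[z_+])$ is precisely the nonzero element of $\Ker(\varphi_3)\cong\Z_2$.

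Finally, Lemma~\ref{order3} guarantees $2\cdot[(v_3,z_+)]=0$, so the assignment $1\mapsto[(v_3,z_+)]$ produces a homomorphism $\Z_2\to KO_3(\Szg,\tau_\mathcal{S})$ splitting (\ref{KO3ext}). Consequently $KO_3(\Szg,\tau_\mathcal{S})\cong\Z_2\oplus\Z_2=(\Z_2)^2$, and translating back through the $SL(2,\Z)$-identification yields the statement. The only delicate step is the identification of $([v_3],[z_+])$ with the generator of $\Ker(\varphi_3)$ — in particular checking that the off-diagonal $s$ in $z_+$ does not alter the $KO_3$-class — but the homotopy $z_\theta$ given above handles this cleanly; the rest is bookkeeping, since the hard work (the order-two calculation) has already been done in Lemma~\ref{order3}.
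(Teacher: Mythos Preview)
Your overall strategy matches the paper's exactly: lift the generator of $\Ker(\varphi_3)$ to $[(v_3,z_+)]$, invoke Lemma~\ref{order3}, and split (\ref{KO3ext}). However, the ``delicate step'' you flag contains a genuine error.

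Your proposed homotopy $z_\theta=\bigl(\begin{smallmatrix}T^\gamma_{u,0}&(\cos\theta)s\\0&T^\gamma_{-u,0}\end{smallmatrix}\bigr)$ is \emph{not} a path of unitaries. The $(1,1)$ entry of $z_\theta z_\theta^*$ is $T^\gamma_{u,0}T^\gamma_{-u,0}+\cos^2\theta\,s(-s)=Q+\cos^2\theta(1-Q)$; there is no $\sin^2\theta(1-Q)$ term. Worse, the intended endpoint $\diag(T^\gamma_{u,0},T^\gamma_{-u,0})$ is not unitary in $M_2(\TTg)$ at all, since $T^\gamma_{u,0}T^\gamma_{-u,0}=Q\neq 1$ (the shift $M_{u,0}$ is transverse to the boundary of $\HH^\gamma$, not parallel to it; the generator of $KO_3(\TTg,\taug)$ listed in Sect.~\ref{Sect.3.1} uses $T^\gamma_{u,t}$, not $T^\gamma_{u,0}$). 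So this identification of $[z_+]$ cannot work as written.

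Fortunately the step is unnecessary. You already have $([v_3],[z_+])\in\Ker(\varphi_3)$ because $\sigma^0(v_3)=\sigma^\gamma(z_+)$. Since $KO_3(\TTz,\tauz)\oplus KO_3(\TTg,\taug)\cong\Z_2\oplus\Z_2$ and both generators map to the same nonzero class in $KO_3(C(\T^2),\tau_\T)$, the kernel $\Ker(\varphi_3)$ is the diagonal $\{(0,0),(1,1)\}$. Thus it suffices to check that $[v_3]\in KO_3(\TTz,\tauz)$ is nontrivial, which is clear: $\sigma^0_*[v_3]=[\diag(M_{u,0},M_{-u,0})]$ corresponds to $(u\bmod 2,0,0)=(1,0,0)\neq 0$, and $\sigma^0_*$ is injective on $KO_3$. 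This forces $([v_3],[z_+])=(1,1)$, the nontrivial element, and the rest of your argument goes through. This is precisely how the paper proceeds --- it simply asserts nontriviality of the pair without attempting any homotopy inside $\TTg$.
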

\begin{proof}
The pair $([v_3], [z_+]) \in KO_3(\TTz, \tau_0) \oplus KO_3(\TTg, \taug)$ is contained in $\Ker (\varphi_3) \cong \Z_2$ and is nontrivial.
The element $[(v_3, z_+)] \in KO_3(\Szg, \tau_\mathcal{S})$ is its lift.
Therefore, the class $[(v_3, z_+)]$ is nontrivial and has order two by Lemma~\ref{order3}.
We thus obtain a splitting of the sequence (\ref{KO3ext}) and the group $KO_3(\Szg, \tau_\mathcal{S})$ is isomorphic to $(\Z_2)^2$.
\end{proof}
The results in this subsection are summarized in Tables~\ref{KOSab}, \ref{KOSab2} and \ref{KOSab3}.
\begin{table}
\caption{$KO_*(\Sab,\tau_\mathcal{S})$ when both $\alpha$ and $\beta$ are rational (or $\pm \infty$).}
\label{KOSab}
\centering
\vspace{-2mm}
\begin{tabular}{|c||c|c|c|c|c|c} \hline
  \multicolumn{1}{|c||}{$i$} & \multicolumn{1}{c|}{$0$} & \multicolumn{2}{c|}{$1$} & \multicolumn{2}{c|}{$2$} & \\ \hline
   $t=-ps + qr$ & --- & even & odd & even & odd & \\ \hline
  $KO_i(\Sab,\tau_\mathcal{S})$ & $\Z \oplus \Z_t$  & $(\Z_2)^2 \oplus \Z$ & $\Z_2 \oplus \Z$ & $(\Z_2)^4$ & $(\Z_2)^2$ & \\ \hline
  \end{tabular}
  \begin{tabular}{c|c|c|c|c|c|} \hline
  \multicolumn{2}{c|}{$3$} & \multicolumn{1}{c|}{$4$} & \multicolumn{1}{c|}{$5$} & \multicolumn{1}{c|}{$6$} & \multicolumn{1}{c|}{$7$} \\ \hline
   even & odd & --- & --- & --- & --- \\ \hline
   $(\Z_2)^2$ & $\Z_2$ &  $\Z \oplus \Z_t$ & $\Z$ & $0$ & $0$ \\ \hline
\end{tabular}

\caption{$KO_*(\Sab,\tau_\mathcal{S})$ when one of $\alpha$ and $\beta$ is rational (or $\pm \infty$) and the other is irrational.}
  \label{KOSab2}
  \vspace{-2mm}
  \begin{tabular}{|c||c|c|c|c|c|c|c|c|} \hline
  $i$ & $0$ & $1$ & $2$ & $3$ & $4$ & $5$ & $6$ & $7$ \\ \hline
  $KO_i(\Sab,\tau_\mathcal{S})$ & $\Z^2$ & $(\Z_2)^2 \oplus \Z$ & $(\Z_2)^3$ & $\Z_2$ & $ \Z^2$ & $\Z$ & $0$ & $0$ \\ \hline
  \end{tabular}

\caption{$KO_*(\Sab,\tau_\mathcal{S})$ when both $\alpha$ and $\beta$ are irrational.}
  \label{KOSab3}
  \vspace{-2mm}
  \begin{tabular}{|c||c|c|c|c|c|c|c|c|} \hline
  $i$ & $0$ & $1$ & $2$ & $3$ & $4$ & $5$ & $6$ & $7$ \\ \hline
  $KO_i(\Sab,\tau_\mathcal{S})$ & $\Z^3$ & $(\Z_2)^3 \oplus \Z$ & $(\Z_2)^4$ & $\Z_2$ & $\Z^3$ & $\Z$ & $0$ & $0$ \\ \hline
  \end{tabular}
\end{table}

\subsection{Boundary Maps Associated with Quarter-Plane Toeplitz Extensions and $KO$-Groups of $(\TTab,\tauab)$}\label{Sect.3.3}
We next consider the boundary maps\footnote{We write $\hat{\partial}_i$ for boundary maps associated with (\ref{seq1}) and write $\check{\partial}_i$ for that with (\ref{seq2}).}
of the $24$-term exact sequence for $KO$-theory associated with the sequence (\ref{seq1}):
\begin{equation}\label{b}
	\hat{\partial}_i \colon KO_i(\Sab, \tau_\mathcal{S}) \to KO_{i-1}(\K(\HHab), \tau_\K).
\end{equation}

\begin{proposition}\label{surjectivity1}
For each $i$, the boundary map $\hat{\partial}_i$ in (\ref{b}) is surjective.
\end{proposition}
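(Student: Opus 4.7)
The target group $KO_{i-1}(\K(\HHab), \tau_\K)$ is isomorphic to $KO_{i-1}(\R)$, hence nontrivial only for $i \equiv 1, 2, 3, 5 \pmod 8$, where it takes the values $\Z, \Z_2, \Z_2, \Z$. By the $8$-periodicity of $KO$-theory, it suffices to establish surjectivity in these four cases.

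The strategy is to exhibit, for each such $i$, a specific class in $KO_i(\Sab, \tau_\mathcal{S})$ whose image under $\hat{\partial}_i$ generates $KO_{i-1}(\R)$. In the Boersema--Loring unitary picture, a class represented by a unitary $u \in \Sab$ of the appropriate symmetry type lifts via $\hat{\gamma}$ to an essentially unitary operator $T \in \TTab$, and $\hat{\partial}_i[u]$ is identified, through the exponential description of the boundary maps collected in Appendix~\ref{Sect.B}, with the Fredholm (or mod-$2$) index of $T$ regarded as a Fredholm operator of the symmetry class required by $KO_{i-1}$. The task therefore reduces to choosing $u = (\Pa M_f \Pa, \Pb M_f \Pb)$ from half-plane Toeplitz operators with matrix symbol $f$ of the right symmetry so that the quarter-plane lift $\Pab M_f \Pab$ has the prescribed nonzero index.

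For $i = 1$, scalar symbols built from the generator $[\Pa M_{q,p} \Pa]$ of $KO_1(\TTa,\taua)$ obtained in Sect.~\ref{Sect.3.1} are natural candidates whose quarter-plane Fredholm index is given by Park's quarter-plane index theorem \cite{Pa90}; choosing one with index $\pm 1$ supplies a generator of $\Z = KO_0(\R)$. For $i = 5$, a quaternionic refinement yields a generator of $\Z = KO_4(\R)$ by an analogous index computation. For $i = 2, 3$, the candidate classes are the explicit elements $[(v_2, w_+)]$ and $[(v_3, z_+)]$ constructed in Sects.~\ref{Sect.3.2.1}--\ref{Sect.3.2.2}; their lifts to $\TTab$ can be assembled from the partial isometry $s$, the corner projection $Q$, and the generators $T^\gamma_{u,0}$, and the mod-$2$ Atiyah--Singer index of the lift is obtained by inspection of its kernel.

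The principal obstacle is the mod-$2$ index verification for $i = 2, 3$, where no clean generalization of Park's closed-form formula is available. I would circumvent this by invoking the $SL(2,\Z)$-reduction used throughout Sect.~\ref{Sect.3.2} to pass to the benchmark case $\alpha = 0$, $\beta = +\infty$, in which $\TTab \cong \TT \otimes \TT$ and the sequence (\ref{seq1}) factors as an external tensor product of two copies of the ordinary Toeplitz extension $0 \to \K \to \TT \to C(\T) \to 0$, following the Douglas--Howe viewpoint recalled in the introduction. The boundary $\hat{\partial}_i$ then decomposes as an external product in $KO$-theory, and since the ordinary Toeplitz boundary is surjective in every degree of $KO$-theory, the external product furnishes a generator of $KO_{i-1}(\R)$ for every nonzero target. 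The irrational cases are handled by Lemma~\ref{lemma3.1} together with the Mayer--Vietoris sequence (\ref{MV}), which reduce the surjectivity question to the rational benchmark.
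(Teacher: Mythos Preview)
There are two genuine gaps. First, the $SL(2,\Z)$ action does \emph{not} reduce an arbitrary pair $(\alpha,\beta)$ to the single benchmark $(0,+\infty)$: the integer $t=qr-ps$ is an $SL(2,\Z)$-invariant of the pair (it is preserved under the simultaneous M\"obius action on both slopes), and $(0,+\infty)$ has $t=1$, so your tensor-product argument via $\TTab\cong\TT\otimes\TT$ addresses only the orbit $t=1$. The normalization actually available is to a range such as $0<\alpha\le\tfrac12$, $1\le\beta<+\infty$, still a two-parameter family. Your appeal to Lemma~\ref{lemma3.1} and the Mayer--Vietoris sequence for the irrational cases computes $KO_*(\Sab,\tau_\mathcal{S})$ but says nothing about the boundary map of the separate extension~(\ref{seq1}). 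Second, the classes $[(v_2,w_+)]$ and $[(v_3,z_+)]$ are unsuitable candidates for $i=2,3$: they are constructed only in the rational even-$t$ situation, and even there you have not verified that their $\hat\partial$-images are nonzero. In fact Remark~\ref{Remark4.11} records that these classes map injectively under $\sigma_*$ to $KO_i(C(\T^2),\tau_\T)$, and Remark~\ref{relweak1} indicates that such bulk-detected components lie in the kernel of $\hat\partial_i$, not on a generator.

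The paper's argument is instead uniform in $(\alpha,\beta)$. After the normalization above it quotes an explicit operator $\hat A\in\TTab$ from Jiang~\cite{Ji95}, defined for every such $(\alpha,\beta)$, which is Fredholm with trivial kernel and one-dimensional cokernel and satisfies $\hat A^\tau=\hat A^*$. Then $u_1=\hat\gamma(\hat A)$ represents a class in $KO_1(\Sab,\tau_\mathcal{S})$ whose boundary generates $KO_0(\K)\cong\Z$, and the elements $u_2,u_3,u_5$ needed in degrees $2,3,5$ are built from $u_1$ by the standard $2\times2$ matrix recipes paralleling Example~9.4 of~\cite{BL16}. No separate mod-$2$ index computation is required: once $\hat A$ is a real Fredholm operator of index $\pm1$, the derived $\Z_2$-invariants in $KO_1$ and $KO_2$ are automatically nontrivial.
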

\begin{proof}
When $i=-1,0,4,6$, the group $KO_{i-1}(\K(\HHab), \tau_\K)$ is trivial and the statement is obvious.
We discuss the other cases.
The proof is given by constructing explicit elements of the group $KO_i(\Sab, \tau_\mathcal{S})$, which maps to a generator of the group $KO_{i-1}(\K(\HHab), \tau_\K)$.
As in \cite{Ji95}, by using the action of $SL(2,\Z)$ on $\Z^2$, we assume $0 < \alpha \leq \frac{1}{2}$ and $1 \leq \beta < + \infty$ without loss of generality.
Let $\hat{\cP}_{m,n} = \Pab M_{m,n} \Pab M_{-m,-n} \Pab$.
As in \cite{Ji95}, we consider the following element in $\TTab$:
\begin{equation}\label{index1}
	\hat{A} = \hat{\cP}_{0,1} + M_{1,1}(1 - \hat{\cP}_{-1,0}) + M_{1,0}(\hat{\cP}_{-1,0} - \hat{\cP}_{0,1}).
\end{equation}
The operator $\hat{A}$ is Fredholm whose kernel is trivial and has one dimensional cokernel \cite{Ji95}.
We also have the following.
\begin{itemize}
\item $\hat{\gamma}(\hat{A})$ is a unitary in $\Sab$.
\item $\hat{A}$ is a real operator, that is $\fr(\hat{A}) = \hat {A}$, and $\hat{A}^\tau = \fr(\hat{A}^*) = \hat{A}^*$ holds.
\end{itemize}
From these preliminaries, the proof of Proposition~\ref{surjectivity1} is parallel to the computation in Example 9.4 of \cite{BL16}.
We summarize the results here.
\begin{itemize}
\item Let $u_1 = \hat{\gamma}(\hat{A})$.
$u_1$ is a unitary satisfying $u_1^\tau = u_1^*$ and gives an element $[u_1] \in KO_1(\Sab, \tau_\mathcal{S})$.
$\hat{\partial}_1([u_1])$ is a generator of $KO_0(\K(\HHab),\tau_\K) \cong \Z$.

\item Let $u_2 = \left(
    \begin{array}{cc}
           0&i\hat{\gamma}(\hat{A})\\
           -i\hat{\gamma}(\hat{A})^*&0
    \end{array}
\right)$.
$u_2$ is a self-adjoint unitary satisfying $u_2^\tau = -u_2$ and gives $[u_2] \in KO_2(\Sab, \tau_\mathcal{S})$.
Its image $\hat{\partial}_2 [u_2]$ is the generator of $KO_1(\K(\HHab),\tau_\K) \cong \Z_2$.

\item Let $u_3 = \diag(\hat{\gamma}(\hat{A}), \hat{\gamma}(\hat{A})^*).$
$u_3$ is a unitary satisfying $u_3^{\sharp \otimes \tau} = u_3$ and gives $[u_3] \in KO_3(\Sab, \tau_\mathcal{S})$.
Its image $\hat{\partial}_3([u_3])$ is the generator of the group $KO_2(\K(\HHab),\tau_\K) \cong \Z_2$.

\item Let $u_5 = \diag(\hat{\gamma}(\hat{A}), \hat{\gamma}(\hat{A}))$.
$u_5$ is a unitary satisfying $u_5^{\sharp \otimes \tau} = u_5^*$ and gives $[u_5] \in KO_5(\Sab, \tau_\mathcal{S})$.
Its image $\hat{\partial}_5([u_5)]$ is a generator of the group $KO_4(\K(\HHab),\tau_\K) \cong \Z$.
\qedhere
\end{itemize}
\end{proof}

\begin{remark}\label{Remark4.11}
In the case when $\alpha$, $\beta$ are both rational (or $\pm \infty$) and $t = -ps +qr$ is even, the group $KO_2(\Szg, \tau_\mathcal{S}) \cong (\Z_2)^4$ is generated by
$[-I^{(2)}]$, $[(v_2,w_+)]$, $[u_2]$ and $[(w', I_2^{(2)})]$, where\footnote{The matrix $Y^{(3)}_4$ is introduced in Appendix~\ref{Sect.B}.}
\begin{equation*}
w' = Y^{(3)}_4 \diag(1, 1-2T^0_{0,1}T^0_{0,-1}, 2T^0_{0,1}T^0_{0,-1}-1, -1){Y^{(3)*}_4}.
\end{equation*}
By the map $\sigma_* \colon KO_i(\Sab, \tau_\mathcal{S}) \to KO_i(C(\T^2), \tau_{\T})$, components generated by $[(v_2,w_+)]$ (when $i=2$) and $[(v_3,z_+)]$ (when $i=3$) maps injectively.
\end{remark}

The $KO$-groups of $(\TTab,\tauab)$ are computed by the $24$-term exact sequence of $KO$-theory associated with (\ref{seq1}) and Proposition~\ref{surjectivity1}.
The results are collected in Tables~\ref{13}, \ref{14} and \ref{15}.

\begin{table}
\caption{$KO_*(\TTab,\tauab)$ when both $\alpha$, $\beta$ are rational (or $\pm \infty$).}
  \label{13}
\centering
\vspace{-2mm}
  \begin{tabular}{|c||c|c|c|c|c|c|c|c} \hline
  \multicolumn{1}{|c||}{$i$} & \multicolumn{1}{c|}{$0$} & \multicolumn{2}{c|}{$1$} & \multicolumn{2}{c|}{$2$} & \multicolumn{2}{c|}{$3$} & \\ \hline
   $t=-ps + qr$ & --- & even & odd & even & odd & even & odd &  \\ \hline
  $KO_*(\TTab,\tauab)$ & $\Z \oplus \Z_t$  & $(\Z_2)^2$ & $\Z_2$ & $(\Z_2)^3$ & $\Z_2$ & $\Z_2$ & $0$ & \\ \hline
  \end{tabular}
  \begin{tabular}{c|c|c|c|} \hline
	\multicolumn{1}{c|}{$4$} & \multicolumn{1}{c|}{$5$} & \multicolumn{1}{c|}{$6$} & \multicolumn{1}{c|}{$7$} \\ \hline
	--- & --- & --- & --- \\ \hline
	$\Z \oplus \Z_t$ & $0$ & $0$ & $0$ \\ \hline
  \end{tabular}

\caption{$KO_*(\TTab,\tauab)$ when one of $\alpha$ and $\beta$ is rational (or $\pm \infty$) and the other is irrational.}
\label{14}
\centering
\vspace{-2mm}
\begin{tabular}{|c||c|c|c|c|c|c|c|c|} \hline
  $i$ & $0$ & $1$ & $2$ & $3$ & $4$ & $5$ & $6$ & $7$ \\ \hline
  $KO_i(\TTab,\tauab)$ & $\Z^2$ & $(\Z_2)^2$ & $(\Z_2)^2$ & $0$ & $\Z^2$ & $0$ & $0$ & $0$ \\ \hline
\end{tabular}

\caption{$KO_*(\TTab,\tauab)$ when both $\alpha$ and $\beta$ are irrational.}
\label{15}
\centering
\vspace{-2mm}
\begin{tabular}{|c||c|c|c|c|c|c|c|c|} \hline
  $i$ & $0$ & $1$ & $2$ & $3$ & $4$ & $5$ & $6$ & $7$ \\ \hline
  $KO_i(\TTab,\tauab)$ & $\Z^3$ & $(\Z_2)^3$ & $(\Z_2)^3$ & $0$ & $\Z^3$ & $0$ & $0$ & $0$ \\ \hline
\end{tabular}
\end{table}

\begin{remark}\label{remconcave}
Similar results in this section also hold for convex corners.
Let $\check{A} \in \TTs$ be an operator defined by replacing $\Pab$ in the definition of $\hat{A}$ by $\Ps$.
This $\check{A}$ is a Fredholm operator satisfying $\check{A}^\tau = \check{A}^*$ which have one dimensional kernel and trivial cokernel \cite{Hayashi3}.
As in Proposition~\ref{surjectivity1}, by using this example, we see that the boundary maps $\check{\partial}_i$ of $KO$-theory associated with the sequence (\ref{seq2}) is surjective.
The $KO$-groups of $(\TTs,\taus)$ is computed by the $24$-term exact sequence associated with (\ref{seq2}), and the results are the same as in Tables~\ref{13}, \ref{14} and \ref{15}.
Through the stabilization isomorphism, we have two boundary maps $\hat{\partial}_i$ and $\check{\partial}_i$ from $KO_i(\Sab, \tau_\mathcal{S})$ to $KO_{i-1}(\C, \id)$ associated with (\ref{seq1}) and (\ref{seq2}).
Since $\hat{\gamma}(\hat{A}) = \check{\gamma}(\check{A})$, the relation $\hat{\partial}_i = -\check{\partial}_i$ holds, as in Corollary~$1$ of \cite{Hayashi3}.
\end{remark}

\section{Toeplitz Operators Associated with Subsemigroup $(\Z_{\geq 0})^n$ of $\Z^n$}\label{Sect.4}
In this section, Toeplitz operators associated with the subsemigroup $(\Z_{\geq 0})^n$ of $\Z^n$ for $n \geq 3$ are discussed.
They are an $n$-variable generalization of the ordinary Toeplitz and quarter-plane Toeplitz operators and are briefly discussed in \cite{DH71,Do73}, where a necessary and sufficient condition for these Toeplitz operators to be Fredholm is obtained.
We revisit these operators since, in our application to condensed matter physics, models of higher-codimensional corners are given by using these $n$-variable generalizations.
Since the Toeplitz extension (\ref{Toeplitz}) and the quarter-plane Toeplitz extension (\ref{seq2}) provide a framework for these applications, we seek this extension for our $n$-variable cases (Theorem~\ref{multiexact}).
Note that we consider corners of arbitrary codimension, though of a specific shape compared to the codimension-two case \cite{Pa90}.
In this section, let $n$ be a positive integer bigger than or equals to three.

To study such Toeplitz operators, we follow Douglas--Howe's idea \cite{DH71} to use the tensor product of the Toeplitz extension,
\begin{equation}\label{Toeplitz}
	0 \rightarrow \K \overset{\iota}{\rightarrow} \TT \overset{\gamma}{\rightarrow} C(\T) \rightarrow 0,
\end{equation}
where $\K = \K(l^2(\Z_{\geq 0}))$.
There is a linear splitting of the $*$-homomorphism $\gamma$ given by the compression onto $l^2(\Z_{\geq 0})$.
For a subset $\mathcal{A} \subset \{1,\ldots,n\}$, let $\TT^n_{\mathcal{A}} = A_1 \otimes \cdots \otimes A_n$, where $A_i$ is $C(\T)$ when $i \in \mathcal{A}$ and is $\TT$ when $i \notin \mathcal{A}$.
Note that $\TT^n_\emptyset$ is isomorphic to $\TT^n$ introduced in Sect.~\ref{Sect.2.2}.
For subsets $\cD \subset \cR \subset \{1,\ldots,n\}$, let $\pi^{\cD}_{\cR} \colon \TT^n_\cD \to \TT^n_\cR$ be the $*$-homomorphism induced by $\gamma$.
Specifically, $\pi^{\cD}_{\cR} = a_1 \otimes \cdots \otimes a_n$, where $a_i$ is $\id_{C(\T)}$ when $i \in \cD$, is $\gamma$ when $i \in \cR \setminus \cD$ and is $\id_{\TT}$ otherwise.
Note that $\pi^{\cD}_{\cR}$ is a surjection and $\pi^{\emptyset}_{\emptyset} = \id$.
In the following, we use a subset $\mathcal{A}$ of $\{1,\ldots,n\}$ as a label to distinguish \CA s and the morphisms between them, which we may abbreviate brackets $\{ \cdot \}$ in our notation.
For example, we write $\TT^n_{1,2}$ for $\TT^n_{\{1,2\}}$, $\pi_{i}$ for $\pi^{\emptyset}_{\{i\}}$ and $\pi^{1}_{1,2}$ for $\pi^{\{1\}}_{\{1,2\}}$.
For each $\mathcal{A} \subset \{ 1, \ldots, n \}$, the map $\pi_\mathcal{A}$ has a linear splitting $\rho_\mathcal{A} \colon \TT^n_\mathcal{A} \to \TT^n$ given by the compression onto $l^2((\Z_{\geq 0})^n)$.
By these preliminaries, we consider the following $C^*$-subalgebra of $\TT^n_1 \oplus \cdots \oplus \TT^n_n$.
\begin{equation*}
\mathcal{S}^n =
	\left\{ (T_1, \cdots, T_n) \ \Biggl\vert
\begin{array}{ll}
   \text{For} \ 1 \leq i \leq n, \ T_i \in \TT^n_i, \\
   \text{For} \ 1 \leq i < j \leq n, \ \pi^i_{ij}(T_i) = \pi^j_{ij}(T_j)
 \end{array}
\right\}.
\end{equation*}
Let $(T_1, \ldots, T_n) \in \mathcal{S}^n$.
For a nonempty subset $\mathcal{A} \subset \{ 1, \ldots, n \}$, we take $i \in \mathcal{A}$ and consider the element $\pi^i_\mathcal{A}(T_i) \in \TT^n_\mathcal{A}$.
This element does not depend on the choice of $i \in \mathcal{A}$, and we write $T_\mathcal{A} = \pi^i_\mathcal{A}(T_i)$.
Let $\rho' \colon \mathcal{S}^n \to \TT^n$ be a linear map defined by
\vspace{-1mm}
\begin{equation*}
	\rho'(T_1, \ldots, T_n) = \sum_{k=1}^n \sum_{|\mathcal{A}|=k} (-1)^{k + 1}\rho_\mathcal{A}(T_\mathcal{A}) \vspace{-1mm}
\end{equation*}
for $(T_1, \ldots, T_n) \in \mathcal{S}^n$, where the second summation is taken over all subsets $\mathcal{A} \subset \{1, \ldots, n\}$ consisting of $k$ elements.
Let $\K^n = \K(l^2((\Z_{\geq 0})^n))$, and let $\gamma_n \colon \TT^n \to \mathcal{S}^n$ be an $*$-homomorphism given by $\gamma_n (T) = (\pi_1(T), \ldots, \pi_n(T))$.
Let $\iota_n$ be the $n$-fold tensor product of $\iota$.
\begin{theorem}\label{multiexact}
There is the following short exact sequence of \CA s:
\begin{equation}\label{multimain}
	0 \to \K^n \overset{\iota_n}{\to} \TT^n \overset{\gamma_n}{\to} \mathcal{S}^n \to 0,
\end{equation}
where the map $\gamma_n$ has a linear splitting given by $\rho'$.
\end{theorem}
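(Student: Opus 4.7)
The plan is to verify the four separate claims packed into the theorem: $\iota_n$ is injective, $\gamma_n$ is well-defined as a map into $\mathcal{S}^n$, $\ker\gamma_n = \iota_n(\K^n)$, and $\rho'$ is a linear right-inverse of $\gamma_n$ (which immediately gives surjectivity).

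Injectivity of $\iota_n$ is automatic since it is an iterated tensor product of the inclusion $\K\hookrightarrow\TT$. Well-definedness of $\gamma_n$ into $\mathcal{S}^n$ reduces to the compatibility $\pi^i_{ij}\pi_i(T) = \pi_{ij}(T) = \pi^j_{ij}\pi_j(T)$ for $T\in\TT^n$, which is transparent from the factor-by-factor action of $\gamma$.

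The heart of the argument is $\gamma_n\circ\rho' = \id_{\mathcal{S}^n}$. The key observation, verified on elementary tensors using only $\gamma\circ\rho = \id_{C(\T)}$ on a single factor, is a commutation identity for $\pi_i\circ\rho_\mathcal{A}$: when $i\in\mathcal{A}$ the $\rho$ on the $i$-th factor is undone by the subsequent $\gamma$, so $\pi_i\rho_\mathcal{A}(T_\mathcal{A})$ is obtained from $T_\mathcal{A}$ by applying $\rho$ only to the factors in $\mathcal{A}\setminus\{i\}$; when $i\notin\mathcal{A}$, the $\gamma$ commutes past the $\rho$'s to give $\pi_i\rho_\mathcal{A}(T_\mathcal{A}) = \rho_\mathcal{A}(T_{\mathcal{A}\cup\{i\}})$, with $i$-th factor now in $C(\T)$. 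Substituting into
\begin{equation*}
\pi_i\rho'(T_1,\ldots,T_n) = \sum_{\mathcal{A}\neq\emptyset}(-1)^{|\mathcal{A}|+1}\pi_i\rho_\mathcal{A}(T_\mathcal{A})
\end{equation*}
and re-indexing the portion $\mathcal{A}\ni i$ by $\mathcal{B}=\mathcal{A}\setminus\{i\}$, the $\mathcal{B}=\emptyset$ term yields $\rho_\emptyset(T_{\{i\}}) = T_i$, while for each non-empty $\mathcal{B}\not\ni i$ the resulting contribution $(-1)^{|\mathcal{B}|}\rho_\mathcal{B}(T_{\mathcal{B}\cup\{i\}})$ exactly cancels the term $(-1)^{|\mathcal{B}|+1}\rho_\mathcal{B}(T_{\mathcal{B}\cup\{i\}})$ coming from the original sum with $\mathcal{A}=\mathcal{B}\not\ni i$. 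Only $T_i$ survives, so $\pi_i\rho' = \mathrm{pr}_i$ and thus $\gamma_n\rho' = \id_{\mathcal{S}^n}$.

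For exactness, surjectivity of $\gamma_n$ follows from the existence of the splitting $\rho'$. To identify $\ker\gamma_n$, note that $\pi_i = \id_\TT\otimes\cdots\otimes\gamma\otimes\cdots\otimes\id_\TT$ has kernel $\TT\otimes\cdots\otimes\K\otimes\cdots\otimes\TT$ (with $\K$ in the $i$-th slot) by nuclearity of $\TT$ and exactness of the minimal tensor product. An induction on $n$, using again that the algebras involved are type I and hence nuclear so intersections commute past tensor factors, gives $\bigcap_{i=1}^n\ker\pi_i = \K^{\otimes n} = \K^n$, which is precisely $\iota_n(\K^n)$. I expect the main obstacle to be the combinatorial bookkeeping of the inclusion-exclusion calculation in the splitting step; the remaining verifications are essentially formal once the factor-wise structure and nuclearity are invoked.
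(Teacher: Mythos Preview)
Your proposal is correct and follows essentially the same approach as the paper. The paper proves the splitting identity $\pi_1\circ\rho' = \mathrm{pr}_1$ via the key relation $\pi_1\circ\rho_{\mathcal{A}}(T_{\mathcal{A}}) = \pi_1\circ\rho_{\{1\}\cup\mathcal{A}}(T_{\{1\}\cup\mathcal{A}})$ for $1\notin\mathcal{A}$ and then organizes the cancellation as a telescoping sum over $|\mathcal{A}|$, whereas you phrase the same cancellation as a direct pairing $\mathcal{A}\leftrightarrow\mathcal{A}\setminus\{i\}$ after re-indexing; for the kernel, the paper does the sequential peeling argument (use $\pi_1(T)=0$ to land in $\K\otimes\TT^{n-1}$, then $\pi_2(T)=0$ and injectivity of $\iota\otimes 1$ to land in $\K\otimes\K\otimes\TT^{n-2}$, etc.), which is exactly what your appeal to nuclearity and ``intersections commute past tensor factors'' unwinds to.
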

\begin{proof}
The map $\iota_n$ is injective since $\iota$ is injective.
We first show the exactness at $\TT^n$.
Since $\gamma \circ \iota = 0$, we have $\gamma_n \circ \iota_n = 0$, and thus, $\Image(\iota_n) \subset \Ker(\gamma_n)$.
Let $T \in \Ker(\gamma_n)$.
Since $\pi_1(T) = (\gamma \otimes 1 \otimes \cdots \otimes 1)(T) = 0$, there exists some $S_1 \in \K \otimes \TT \otimes \cdots \otimes \TT$ such that $(\iota \otimes 1 \otimes \cdots \otimes 1)(S_1) = T$.
Since
\begin{align*}
	0 &= (1 \otimes \gamma \otimes 1 \otimes \cdots \otimes 1)(T)
	= (1 \otimes \gamma \otimes 1 \otimes \cdots \otimes 1)(\iota \otimes 1 \otimes \cdots \otimes 1)(S_1)\\
	&= (\iota \otimes 1 \otimes \cdots \otimes 1)(1 \otimes \gamma \otimes 1 \otimes \cdots \otimes 1)(S_1)
\end{align*}
and $\iota \otimes 1 \otimes \cdots \otimes 1$ is injective, $(1 \otimes \gamma \otimes 1 \otimes \cdots \otimes 1)(S_1) = 0$.
Therefore, there exists some $S_2 \in \K \otimes \K \otimes \TT \otimes \cdots \otimes \TT$ such that $S_1 = (1 \otimes \iota \otimes 1 \otimes \cdots \otimes 1)(S_2)$.
By continuing this argument, we see that there exists some $S_n \in \K \otimes \cdots \otimes \K \cong \K^n$ such that $(\iota \otimes \cdots \otimes \iota)(S_n) = T$.
Thus, we have $\Ker(\gamma_n) \subset \Image(\iota_n)$.

For the surjectivity of $\gamma_n$, we see that $\rho$ is a linear splitting of $\gamma_n$, that is,
for $(T_1,\ldots,T_n) \in \mathcal{S}^n$ and $1 \leq i \leq n$, the relation $\pi_i \circ \rho' (T_1, \ldots, T_n) = T_i$ holds.
In the following, we show $\pi_1 \circ \rho' (T_1, \ldots, T_n) = T_1$ and the other case is proved similarly.
Note that
\vspace{-1mm}
\begin{equation}\label{sum}
	\pi_1 \circ \rho'(T_1, \ldots, T_n) = \sum_{k=1}^n \sum_{|\mathcal{A}|=k} (-1)^{k+1} \pi_1 \circ \rho_\mathcal{A}(T_\mathcal{A})
\vspace{-1mm}
\end{equation}
and that $\pi_1 \circ \rho_1(T_1) = T_1$.
Thus, it is sufficient to show that the sum over $\mathcal{A}$ $(\neq \{ 1 \})$ is zero.
Note that for $2 \leq i_1 < \cdots < i_{k-1} \leq n$, we have
\begin{equation*}
	\pi_1 \circ \rho_{i_1, \ldots, i_{k-1}}(T_{i_1, \ldots, i_{k-1}})
		=
	\pi_1 \circ \rho_{1, i_1, \ldots, i_{k-1}}(T_{1, i_1, \ldots, i_{k-1}}).
\end{equation*}
By using this relation, we compute the sum on the right-hand side of (\ref{sum}).
For $k=1$ and $k=2$ of the sum, we have the following:
\begin{align*}
	&\sum_{|\mathcal{A}|=1, \ \mathcal{A} \neq \{1\}}\pi_1 \circ \rho_\mathcal{A}(T_\mathcal{A}) - \sum_{|\mathcal{A}|=2}\pi_1 \circ \rho_\mathcal{A}(T_\mathcal{A})\\ \vspace{-4mm}
	&= \sum_{i=2}^n \pi_1 \circ \rho_i (T_i) - \sum_{1 \leq i < j \leq n} \pi_1 \circ \rho_{i,j}(T_{ij})
	= - \sum_{2 \leq i < j \leq n} \pi_1 \circ \rho_{i,j}(T_{ij}).
\end{align*}
For $2 < k < n$, we have
\begin{align*}
	&(-1)^k \sum_{2 \leq i_1 < \cdots < i_{k-1} \leq n}\pi_1 \circ \rho_{i_1, \ldots, i_{k-1}}(T_{i_1, \ldots, i_{k-1}})\\
		& \ \ + (-1)^{k+1} \sum_{1 \leq i_1 < \cdots < i_k \leq n}\pi_1 \circ \rho_{i_1, \ldots, i_k}(T_{i_1, \ldots, i_k})\\
	&= (-1)^{k+1} \sum_{2 \leq i_1 < \cdots < i_k \leq n}\pi_1 \circ \rho_{i_1, \ldots, i_k}(T_{i_1, \ldots, i_k}),
\end{align*}
and since
\begin{equation*}
	(-1)^n\pi_1 \circ \rho_{2,\ldots, n}(T_{2,\ldots,n}) + (-1)^{n+1}\pi_1 \circ \rho_{1,\ldots, n}(T_{1,\ldots,n}) = 0,
\end{equation*}
we have $\pi_1 \circ \rho' (T_1, \ldots, T_n) = T_1$.
\end{proof}
Theorem~\ref{sum} leads to the necessary and sufficient condition for Toeplitz operators associated with these codimension-$n$ corners to be Fredholm.
\begin{corollary}[Theorem~18 of \cite{Do73}]\label{nFredholm}
Let $k$ be a positive integer.
An operator $T \in M_k(\TT^n)$ is a Fredholm operator if and only if $\gamma_n(T)$ is invertible in $M_k(\mathcal{S}^n)$ or, equivalently, if and only if $\pi_i(T)$ is invertible for any $1 \leq i \leq n$.
\end{corollary}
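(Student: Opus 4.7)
The plan is to derive the corollary from Theorem~\ref{multiexact} by an application of Atkinson's theorem to the short exact sequence~(\ref{multimain}), tensored with $M_k$. Since $\K^n = \K(l^2((\Z_{\geq 0})^n))$ is exactly the ideal of compact operators on the ambient Hilbert space, the composition
\[
M_k(\TT^n) \xrightarrow{\gamma_n \otimes \id} M_k(\mathcal{S}^n) \hookrightarrow M_k(\TT^n)/M_k(\K^n) \hookrightarrow \mathcal{Q}(l^2((\Z_{\geq 0})^n) \otimes \C^k)
\]
into the Calkin algebra has kernel exactly $M_k(\K^n)$. By Atkinson's theorem, $T \in M_k(\TT^n)$ is Fredholm iff its image in the Calkin algebra is invertible; by spectral permanence for inclusions of unital $C^*$-subalgebras (applied twice, first to $M_k(\mathcal{S}^n) \cong M_k(\TT^n)/M_k(\K^n)$ and then to the inclusion into the Calkin algebra), this is equivalent to the invertibility of $\gamma_n(T)$ in $M_k(\mathcal{S}^n)$. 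This gives the first equivalence.

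For the second equivalence, I would argue componentwise. An element $(T_1, \ldots, T_n) \in M_k(\mathcal{S}^n)$ sits in the product $\bigoplus_{i=1}^n M_k(\TT^n_i)$, and each component projection $\pi_i$ is a unital $\ast$-homomorphism, so invertibility of $(T_1, \ldots, T_n)$ in $M_k(\mathcal{S}^n)$ obviously implies invertibility of every $T_i = \pi_i(\gamma_n(T))$ in $M_k(\TT^n_i)$. Conversely, if each $T_i$ has an inverse $S_i \in M_k(\TT^n_i)$, I claim that $(S_1, \ldots, S_n)$ already lies in $M_k(\mathcal{S}^n)$. Indeed, for $i < j$ the elements $\pi^i_{ij}(S_i)$ and $\pi^j_{ij}(S_j)$ are both two-sided inverses of
\[
\pi^i_{ij}(T_i) = \pi^j_{ij}(T_j) \in M_k(\TT^n_{ij}),
\]
and since inverses in a unital algebra are unique, they coincide. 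The resulting tuple $(S_1, \ldots, S_n)$ is then a two-sided inverse of $(T_1, \ldots, T_n) = \gamma_n(T)$ in $M_k(\mathcal{S}^n)$.

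I do not expect any serious obstacle; the only slightly delicate points are keeping track of the passage to $M_k$ and the invocation of spectral permanence so that invertibility in $M_k(\mathcal{S}^n)$ (rather than just in the Calkin algebra) really is the correct criterion. Once those are stated precisely, the two equivalences follow directly from Theorem~\ref{multiexact} and the elementary uniqueness-of-inverse argument for the compatibility relations defining $\mathcal{S}^n$.
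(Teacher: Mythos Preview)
Your proposal is correct and is exactly the intended argument: the paper states this corollary without proof, deriving it directly from Theorem~\ref{multiexact}, and the standard route is precisely Atkinson's theorem plus spectral permanence applied to the inclusion $M_k(\TT^n)/M_k(\K^n) \hookrightarrow \mathcal{Q}(l^2((\Z_{\geq 0})^n)^k)$, together with the uniqueness-of-inverses check that the componentwise inverses satisfy the compatibility constraints defining $\mathcal{S}^n$. One minor wording point: only one application of spectral permanence is needed (for the inclusion of $M_k(\TT^n)/M_k(\K^n)$ into the Calkin algebra), since $M_k(\mathcal{S}^n) \cong M_k(\TT^n)/M_k(\K^n)$ is an isomorphism, not a proper inclusion.
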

As in Sect.~\ref{Sect.2.2}, the real structure $c$ on $l^2(\Z^n)$ induces real structures on $\TT^n_i$ and $\mathcal{S}^n$.
We write $\tau_\mathcal{S}$ for the transposition on $\mathcal{S}^n$ associated with this real structure.
The map $\gamma_n$ preserves the real structure, and we obtain the following exact sequence of \TA s:
\begin{equation}\label{multimainreal}
	0 \to (\K^n, \tau_\K) \overset{\iota_n}{\to} (\TT^n, \tau_\TT) \overset{\gamma_n}{\to} (\mathcal{S}^n, \tau_\mathcal{S}) \to 0.
\end{equation}

We next compute the $K$-groups of the \CA \ $\mathcal{S}^n$ and $KO$-groups of the \TA \ $(\mathcal{S}^n, \tau_\mathcal{S})$.
\begin{proposition}
$K_i(\mathcal{S}^n) \cong \Z$ for $i = 0,1$.
\end{proposition}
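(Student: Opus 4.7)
The plan is to apply the six-term $K$-theory exact sequence to the short exact sequence (\ref{multimain}) of Theorem~\ref{multiexact}.

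First I would record the $K$-theory of the two outer terms. By stability, $K_0(\K^n) \cong \Z$, generated by any rank-one projection $p^{\otimes n}$, and $K_1(\K^n) = 0$. For $\TT^n \cong \TT^{\otimes n}$, the unital inclusion $\C \hookrightarrow \TT$ is a $K$-equivalence: this is immediate from the six-term sequence of the classical Toeplitz extension (\ref{Toeplitz}) once one knows that the index map sends the Bott generator $[z] \in K_1(C(\T)) \cong \Z$ isomorphically onto $K_0(\K) \cong \Z$. Iterating through the tensor-product structure (using nuclearity of $\TT$ and Künneth) then gives $K_0(\TT^n) \cong \Z$, generated by $[1]$, and $K_1(\TT^n) = 0$.

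The key step will be to show that the induced map $\iota_{n*}\colon K_0(\K^n) \to K_0(\TT^n)$ vanishes. For $n=1$ this is the familiar identity: the minimal projection $p = 1 - SS^* \in \TT$ satisfies $[p] = [1] - [SS^*] = [1] - [S^*S] = 0$ in $K_0(\TT)$, since the unilateral shift $S$ is an isometry in $\TT$ implementing Murray--von Neumann equivalence between $1$ and $SS^*$. For general $n$, the class of the rank-one projection $p^{\otimes n}$ factors as the $n$-fold exterior product $[p]^{\times n}$ under the Künneth isomorphism, and hence vanishes in $K_0(\TT^n)$. Alternatively, one can argue directly by induction, using the partial isometries $W_k = p^{\otimes (k-1)} \otimes S \otimes 1^{\otimes (n-k)} \in \TT^n$ to show inductively that $[p^{\otimes k} \otimes 1^{\otimes (n-k)}] = 0$ for each $1 \le k \le n$.

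Feeding these inputs into the six-term exact sequence produces the two short exact pieces
\[
0 \longrightarrow K_0(\TT^n) \longrightarrow K_0(\mathcal{S}^n) \longrightarrow 0 \qquad \text{and} \qquad 0 \longrightarrow K_1(\mathcal{S}^n) \longrightarrow K_0(\K^n) \overset{0}{\longrightarrow} K_0(\TT^n),
\]
from which $K_0(\mathcal{S}^n) \cong \Z$ and $K_1(\mathcal{S}^n) \cong \Z$ read off immediately; in particular the index-type boundary map $K_1(\mathcal{S}^n) \to K_0(\K^n)$ is an isomorphism, in direct analogy with the one-variable Toeplitz case. The only genuinely nontrivial ingredient is the vanishing of $\iota_{n*}$, i.e.\ the multidimensional extension of the observation $[1 - SS^*] = 0 \in K_0(\TT)$; everything else in the argument is formal bookkeeping with the six-term sequence.
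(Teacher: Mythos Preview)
Your proof is correct and follows essentially the same route as the paper: both apply the six-term exact sequence to the extension~(\ref{multimain}), after noting that $K_i(\TT^n) \cong K_i(\C)$. The paper's proof is terse and leaves the determination of the map $\iota_{n*}\colon K_0(\K^n)\to K_0(\TT^n)$ implicit, whereas you spell out explicitly why it vanishes; this is the only real content needed beyond the formal six-term bookkeeping, and your argument for it is sound.
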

\begin{proof}
Note that $K_i(\TT^n) \cong K_i(\C)$. The result follows from the six-term exact sequence of $K$-theory associated with the sequence (\ref{multimain}) in Theorem~\ref{multiexact}.
\end{proof}

\begin{proposition}\label{prop2}
For each $i$, we have
\begin{equation*}
	KO_i(\mathcal{S}^n, \tau_\mathcal{S}) \cong KO_i(\C,\id) \oplus KO_{i-1}(\C, \id).
\end{equation*}
The results are collected in Table~\ref{multiKO}.
\end{proposition}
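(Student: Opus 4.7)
My plan is to apply the $24$-term exact sequence of $KO$-theory to (\ref{multimainreal}) and split it by means of a unital retraction $\mathcal{S}^n \to \C$. First I would compute the $KO$-theory of the endpoints. For $\K^n$, stabilization gives $KO_i(\K^n, \tau_\K) \cong KO_i(\C, \id)$. For $\TT^n$, I would show that the unital inclusion $\iota_\TT \colon \C \hookrightarrow \TT^n$ induces an isomorphism on $KO$-theory. The evaluation $\ev_{\TT^n} \colon \TT^n \to C(\T^n) \to \C$ (total symbol composed with evaluation at a fixed point of $\zeta$) splits $\iota_\TT$, yielding a split extension $0 \to \TT^n_0 \to \TT^n \to \C \to 0$ where $\TT^n_0 = \Ker \ev_{\TT^n}$. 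The complex Künneth theorem gives $K_*(\TT^n) \cong K_*(\C)$, hence $K_*(\TT^n_0) = 0$. Invoking the Boersema criterion (Theorem~$1.12$, Proposition~$1.15$ and Theorem~$1.18$ of \cite{Boe02}, as used in Lemma~\ref{lemma3.1}), vanishing of complex $K$-theory transfers to vanishing of $KO$-theory for \TA s, giving $KO_*(\TT^n_0, \tau_\TT) = 0$ and therefore $KO_*(\TT^n, \tau_\TT) \cong KO_*(\C, \id)$.

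Second, I would show that $\iota_{n*} \colon KO_i(\K^n, \tau_\K) \to KO_i(\TT^n, \tau_\TT)$ vanishes: since $\K^n \subset \Ker \gamma_n \subset \Ker \ev_{\TT^n}$, we have $\ev_{\TT^n *} \circ \iota_{n*} = 0$, and $\ev_{\TT^n *}$ is an isomorphism by the previous step. Plugging this into the $24$-term exact sequence associated with (\ref{multimainreal}) collapses it into short exact sequences
\[
0 \to KO_i(\C, \id) \xrightarrow{\gamma_{n*}} KO_i(\mathcal{S}^n, \tau_\mathcal{S}) \xrightarrow{\partial_i} KO_{i-1}(\C, \id) \to 0.
\]

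Third, the evaluation descends to a $*$-homomorphism $\ev_\mathcal{S} \colon \mathcal{S}^n \to \C$: for $(T_1, \ldots, T_n) \in \mathcal{S}^n$, the compatibility condition ensures that $\pi^i_{\{1, \ldots, n\}}(T_i) \in C(\T^n)$ is independent of $i$, so we may unambiguously evaluate this common total symbol at a fixed point of $\zeta$. By construction $\ev_\mathcal{S} \circ \gamma_n = \ev_{\TT^n}$, so $\ev_{\mathcal{S}*} \circ \gamma_{n*}$ is an isomorphism, and $\iota_{\TT*} \circ \ev_{\mathcal{S}*}$ retracts $\gamma_{n*}$. The short exact sequence therefore splits, yielding the desired decomposition.

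The main obstacle is Step 1: upgrading the elementary complex computation $K_*(\TT^n) \cong K_*(\C)$ to the real setting. The Boersema criterion is what sidesteps an otherwise tedious inductive degree-by-degree analysis. With Step 1 established, Steps 2 and 3 are formal consequences of naturality together with the universal split surjection $\mathcal{S}^n \to \C$; in particular, no explicit construction of order-two lifts is needed to rule out a $\Z_4$-extension in degree $i \equiv 2 \pmod 8$, in contrast to the intricate $n = 2$ argument of Section~\ref{Sect.3.2.1}.
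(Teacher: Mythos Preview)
Your proposal is correct and follows essentially the same approach as the paper: both apply the $24$-term exact sequence of $KO$-theory to the extension~(\ref{multimainreal}), using $KO_i(\TT^n,\tau_\TT)\cong KO_i(\C,\id)$ and $KO_i(\K^n,\tau_\K)\cong KO_i(\C,\id)$. The paper's proof is a two-line sketch that leaves implicit both the vanishing of $\iota_{n*}$ and the resolution of the extension problem; you make these explicit via the unital evaluation retraction $\ev_\mathcal{S}\colon \mathcal{S}^n\to\C$, which is the natural way to fill in the paper's argument (and parallels the use of $\lambda^\alpha$ in Lemma~\ref{lemma3.1}). Your derivation of $KO_*(\TT^n,\tau_\TT)\cong KO_*(\C,\id)$ via the Boersema criterion is also a legitimate alternative to iterating the Toeplitz extension.
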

\begin{proof}
Note that $KO_i(\TT^n, \tau_\TT) \cong KO_i(\C, \id)$. The result follows from the $24$-term exact sequence of $KO$-theory associated with the sequence (\ref{multimainreal}).
\end{proof}

\begin{table}
\caption{$KO$-groups of $(\mathcal{S}^n ,\tau_\mathcal{S})$.}
\label{multiKO}
\centering
  \begin{tabular}{|c||c|c|c|c|c|c|c|c|} \hline
  $i$ & $0$ & $1$ & $2$ & $3$ & $4$ & $5$ & $6$ & $7$ \\ \hline
  $KO_i(\mathcal{S}^n ,\tau_\mathcal{S})$ & $\Z$ & $\Z \oplus \Z_2$ & $(\Z_2)^2$ & $\Z_2$ & $\Z$ & $\Z$ & $0$ & $0$ \\ \hline
\end{tabular}
\end{table}

A Fredholm Toeplitz operator associated with a codimension-$n$ corner whose Fredholm index is one is constructed as follows.
\begin{example}[A Fredholm Operator of Index One]\label{multiexam}
Let $T_z$ be the Toeplitz operator whose symbol $z \colon \T \to \C$ is the inclusion.
Its adjoint $T_z^*$ is a Fredholm operator on $l^2(\Z_{\geq 0})$ of index one.
Let $p = T_z T_z^*$ and $q = 1_{\TT} - p$, then $p, q \in \TT$ and are projections onto $l^2(\Z_{\geq 1})$ and $\C \delta_{0}$, respectively, where $\delta_0$ is the characteristic function of the point $0 \in \Z$.
For a subset $\mathcal{A} \subset \{ 1, \ldots, n\}$, let $P^n_\mathcal{A} = r_1 \otimes \cdots \otimes r_n$,
where $r_i$ is $p$ when $i \in \mathcal{A}$ and is $q$ otherwise.
The operator $P^n_\mathcal{A}$ is a projection which satisfies $\sum_{\mathcal{A}} P^n_\mathcal{A} = 1_{\TT^n}$.
Let $\tilde{T} = T_z^* \otimes q \otimes \cdots \otimes q$ and consider the following element in $\TT^n$:
\begin{equation*}
	G = \tilde{T} + \sum_{\mathcal{A} \neq \{1\}} P^n_\mathcal{A},
\end{equation*}
where the sum is taken over all subsets of $\{ 1, \ldots, n\}$ except $\{ 1 \}$.
Then, we can see that $\Ker (G) \cong \C$ and $\Coker (G) = 0$, that is, $G$ is a Fredholm Toeplitz operator associated with codimension-$n$ corners whose Fredholm index is one.
\end{example}
This example leads to the following result.
\begin{proposition}\label{surjectivity2}
The boundary maps of the six-term exact sequence for $K$-theory associated with $(\ref{multimain})$ are surjective.
Moreover, the boundary maps of the 24-term exact sequence for $KO$-theory associated with $(\ref{multimainreal})$ are surjective.
\end{proposition}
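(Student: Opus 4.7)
The plan is to parallel the argument of Proposition~\ref{surjectivity1}, substituting the Fredholm Toeplitz operator $G \in \TT^n$ of Example~\ref{multiexam} for $\hat{A}$. First I would verify two structural properties of $G$: (a) $G$ is Fredholm with one-dimensional kernel and trivial cokernel, so $\ind(G)=1$, as stated in Example~\ref{multiexam}; and (b) $G$ is a real operator, that is $\mathfrak{c}(G)=G$, whence $G^\tau = G^*$, because the shift $T_z^*$ and the projections $p$ and $q$ all have real matrix entries in the standard basis of $l^2(\Z_{\geq 0})$. Property (a) together with Corollary~\ref{nFredholm} implies that $\gamma_n(G)$ is a unitary in $\mathcal{S}^n$, and then (b) gives $\gamma_n(G)^\tau = \gamma_n(G)^*$.

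For the complex six-term sequence the surjectivity of $\partial_1 \colon K_1(\mathcal{S}^n) \to K_0(\K^n) \cong \Z$ is then immediate, since $[\gamma_n(G)]$ lifts to $G$ and thus has boundary equal to $\ind(G)=1$. The other boundary $\partial_0 \colon K_0(\mathcal{S}^n) \to K_1(\K^n)=0$ is trivially surjective.

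For the real $24$-term sequence the target groups $KO_{i-1}(\K^n, \tau_\K) \cong KO_{i-1}(\R)$ vanish for $i \in \{0,4,6,7\}$, so the corresponding boundary maps are automatically surjective. For the four remaining indices $i=1,2,3,5$ I would take the very same unitaries $u_1, u_2, u_3, u_5$ written down in the proof of Proposition~\ref{surjectivity1}, but with every appearance of $\hat{\gamma}(\hat{A})$ replaced by $\gamma_n(G)$. The symmetry relations in the corresponding rows of Table~\ref{BL} are satisfied thanks to $\gamma_n(G)^\tau = \gamma_n(G)^*$, so $[u_i] \in KO_i(\mathcal{S}^n, \tau_\mathcal{S})$ is well-defined; computing the boundary from the evident Fredholm lifts built out of $G$ by the same index-theoretic recipe as in Example~9.4 of \cite{BL16} yields generators of $KO_0(\K^n, \tau_\K) \cong \Z$, $KO_1(\K^n, \tau_\K) \cong \Z_2$, $KO_2(\K^n, \tau_\K) \cong \Z_2$ and $KO_4(\K^n, \tau_\K) \cong \Z$ respectively.

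The main obstacle is not the $KO$-theoretic bookkeeping, which amounts to a direct transcription from \cite{BL16} and Proposition~\ref{surjectivity1}, but rather confirming the Fredholm data asserted for $G$. The cleanest route is to notice that the subspace $P^n_\emptyset + P^n_{\{1\}} \subset l^2((\Z_{\geq 0})^n)$, naturally identified with $l^2(\Z_{\geq 0})$ via the first tensor factor, is $G$-invariant, and on it $G$ acts by $\delta_0 \mapsto \delta_0$ and $\delta_k \mapsto \delta_{k-1}$ for $k \geq 1$; so its kernel is spanned by $\delta_0 - \delta_1$ and its cokernel is zero. On every other $P^n_\mathcal{A}$ with $\mathcal{A} \neq \{1\}$, $G$ reduces to the identity and contributes nothing to the index, giving $\ind(G)=1$.
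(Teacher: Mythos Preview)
Your proposal is correct and follows essentially the same route as the paper: the paper's proof simply invokes Example~\ref{multiexam} for the complex case and then observes that $G^\tau = G^*$, so the $KO$-theoretic argument goes through exactly as in Proposition~\ref{surjectivity1}. You have additionally supplied the explicit verification of the Fredholm data for $G$ (which the paper asserts without proof), and your computation on the invariant subspace $\mathrm{Im}(P^n_\emptyset + P^n_{\{1\}})$ is correct; the only cosmetic slip is that in the final sentence you should write $\mathcal{A} \notin \{\emptyset,\{1\}\}$ rather than $\mathcal{A} \neq \{1\}$, since $P^n_\emptyset$ was already absorbed into the first piece.
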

\begin{proof}
The result for complex $K$-theory is immediate from Example~\ref{multiexam}.
For $KO$-theory, since the operator $G$ in Example~\ref{multiexam} satisfies $G^{\tau} = G^*$, the result follows as in Proposition~\ref{surjectivity1}.
\end{proof}
Note that, we have $\gamma_n(G) = (\pi_1(G), 1, \cdots, 1) \in \mathcal{S}^n$ by using
\begin{equation*}
	\pi_1(G) = M_{z}^* \otimes q \otimes \cdots \otimes q + \sum_{\emptyset \neq \mathcal{A} \subset \{1, \ldots, n-1 \}} 1_{C(\T)} \otimes P^{n-1}_\mathcal{A},
\end{equation*}
The element $\gamma_n(G)$ is a unitary and defines an element $[\gamma_n(G)]$ of the group $K_1(\mathcal{S}^n)$.
Since the Fredholm index of $G$ is one, this gives a generator of $K_1(\mathcal{S}^n) \cong \Z$.
As in the proof of Proposition~\ref{surjectivity1}, generators of the $KO$-groups $KO_i(\mathcal{S}^n, \tau_\mathcal{S})$ are also obtained by using $G$.

\begin{remark}\label{relationhigher}
Let $1 \leq j \leq n$. We have the following $*$-homomorphisms:
\begin{equation*}
	\mathcal{S}^n \longrightarrow \TT^{n-1} \otimes C(\T) \overset{\gamma_{n-1} \otimes 1}{\longrightarrow}\mathcal{S}^{n-1} \otimes C(\T),
\end{equation*}
where the first map maps $(T_1, \ldots, T_n)$ to $T_j$.
We write $\sigma^{n,n-1}$ for the composite of the above maps which induces the map
$\sigma^{n,n-1}_* \colon K_i(\mathcal{S}^n) \to K_i(\mathcal{S}^{n-1} \otimes C(\T))$.
When $i=0$, $K_0(\mathcal{S}^n) \cong \Z$ is generated by $[1_2]$ and $\sigma^{n,n-1}_*[1_2] = [1_2]$.
When $i=1$, the map $\sigma^{n,n-1}_*$ is zero since, by Example~\ref{multiexam}, the element $[\gamma_n(G)]$ is a generator of $K_1(\mathcal{S}^n) \cong \Z$ and $\sigma^{n,n-1}_*[\gamma_n(G)] = [1] = 0$.
A similar observation also holds in real cases.
The map $\sigma^{n,n-1}_*$ from $KO_i(\mathcal{S}^n, \tau_\mathcal{S})$ to $KO_i(\mathcal{S}^{n-1} \otimes C(\T), \tau)$ maps direct summands corresponding to $KO$-groups of a point injectively and the other components to zero.
\end{remark}

\section{Topological invariants and corner states in Altland--Zirnbauer classification}\label{Sect.5}
In this section, some gapped Hamiltonians on a lattice with corners are discussed in each of the Altland--Zirnbauer classes.
Since two of them (class $\A$ and $\AIII$) are already studied in \cite{Hayashi2,Hayashi3}, we consider the remaining cases here.
The codimension of the corner will be arbitrary, though we mainly discuss codimension-two cases, with many detailed results being obtained by \cite{Pa90,Ji95,Hayashi3} and the results in Sect.~\ref{Sect.3}.
Higher-codimensional cases are discussed in a similar way, whose results are collected in Sect.~\ref{Sect.5.5}.

\subsection{Setup}\label{Sect.5.1}
Let $V$ be a finite rank Hermitian vector space of complex rank $N$.
Let $n$ be a positive integer.
Let $\Theta$ and $\Xi$ be antiunitary operators on $V$ whose squares are $+1$ or $-1$.
Let $\Pi$ be a unitary operator on $V$ whose square is one.
These operators $\Theta$, $\Xi$ and $\Pi$ are naturally extended to the operator on $l^2(\Z^n; V)$ by the fiberwise operation; we also denote them as $\Theta$, $\Xi$ and $\Pi$, respectively.
Let $\mathrm{Herm}(V)$ be the space of Hermitian operators on $V$.
We consider a continuous map $\T^n \to \mathrm{Herm}(V)$, $t \mapsto H(t)$, where $t = (t_1, t_2, \ldots, t_n)$ in $\T^n$.
Through the Fourier transform $L^2(\T^n; V) \cong l^2(\Z^n; V)$, the multiplication operator generated by this continuous map defines a bounded linear self-adjoint operator $H$ on the Hilbert space $l^2(\Z^n; V)$.
We consider the lattice $\Z^n$ as a model of the bulk and call $H$ the {\em bulk Hamiltonian}.
The Hamiltonian is said to preserve
{\em time-reversal symmetry} (TRS) if it commutes with $\Theta$ (i.e., $\Theta H \Theta^* = H$),
{\em particle-hole symmetry} (PHS) if it anticommutes with $\Xi$ (i.e., $\Xi H \Xi^* = -H$) and
{\em chiral symmetry} if it anticommutes with $\Pi$ (i.e., $\Pi H \Pi^* = -H$).
Furthermore, TRS or PHS is called {\em even} (resp. {\em odd}) if $\Theta^2 = 1$ or $\Xi^2 = 1$ (resp. $\Theta^2 = -1$ or $\Xi^2 = -1$).
Hamiltonians may preserve both TRS and PHS.
In that case, $\Theta$ and $\Xi$ are assumed to commute, and $\Pi$ is identified with $\Theta \Xi$ or $i\Theta \Xi$ such that $\Pi^2 = 1$ is satisfied.

By taking the partial Fourier transform in the variables $t_1$ and $t_2$, we obtain a continuous family of bounded linear self-adjoint operators $\{ H(\bt)\}_{\bt \in \T^{n-2}}$ on $\HH \otimes V$.
By taking a compression onto $\HHa \otimes V$, $\HHb \otimes V$ and $\HHab \otimes V$, we obtain a family of operators $H^\alpha(\bt)$, $H^\beta(\bt)$ and $\hat{H}^{\alpha,\beta}(\bt)$ parametrized by $\bt = (t_3, \ldots, t_n) \in \T^{n-2}$.
$H^\alpha(\bt)$ and $H^\beta(\bt)$ are our models for two surfaces (codimension-one boundaries), and $\hat{H}^{\alpha,\beta}(\bt)$ is our model of the corner (codimension-two corner).
We assume the following spectral gap condition.
\begin{assumption}[Spectral Gap Condition]\label{sgc1}
We assume that both $H^\alpha$ and $H^\beta$ are invertible.
\end{assumption}
Under this assumption, the bulk Hamiltonian $H$ is also invertible since, when we take a basis of $V$ and identify $V$ with $\C^N$, there is a unital $*$-homomorphism $M_N(\Sab \otimes C(\T^{n-2})) \to M_N(C(\T^{n}))$ that maps $(H^\alpha, H^\beta)$ to $H$.
In classes $\AI$ and $\AII$, we further assume that the spectrum of $H$ is not contained in either $\R_{> 0}$ or $\R_{< 0}$.
Note that in other classes where Hamiltonians preserve PHS or chiral symmetry, this condition follows from Assumption~\ref{sgc1}.
Let $h$ be the pair $(H^\alpha, H^\beta)$.
Under Assumption~\ref{sgc1}, we set
\begin{equation}\label{sign}
	\sign(h) = h | h |^{-1}.
\end{equation}
When the bulk Hamiltonian $H$ satisfies TRS, PHS or chiral symmetry, the operators $H^\alpha$, $H^\beta$, $\hat{H}^{\alpha,\beta}$ and $\sign(h)$ also satisfy the symmetry, that is, commutes with $\Theta$ or anticommutes with $\Xi$ or $\Pi$.

\subsection{Gapped Topological Invariants}\label{Sect.5.2}
In the following, starting from a Hamiltonian satisfying Assumption~\ref{sgc1} in each class $\AI$, $\BDI$, $\DD$, $\DIII$, $\AII$, $\CII$, $\CC$ and $\CI$, we construct a unitary and see that this unitary satisfies the relation $\mathcal{R}_i$ in Table~\ref{BL}.
By using this unitary, we define a topological invariant as an element of some $KO$-group.

In class $\AI$, the Hamiltonian has even TRS.
We take an orthonormal basis of $V$ to identify $V$ with $\C^N$ and express $\Theta$ as $\fC = \diag(c, \ldots, c)$.
Under our spectral gap condition, let
\begin{equation}\label{diagonal}
	u= \left(
    \begin{array}{cc}
           \sign(h)&0\\
           0&1_N
    \end{array}
\right).
\end{equation}
This $u$ is a self-adjoint unitary satisfying
$u^\tau = \Ad_{\fC \oplus \fC}(u^*) = u^*$
by the TRS.

In class BDI, the Hamiltonian has both even TRS and even PHS.
Note that the chiral symmetry is given by $\Pi = \Theta \Xi$ and commutes with $\Theta$ and $\Xi$.
For a Hamiltonian satisfying chiral symmetry and Assumption~\ref{sgc1} to exist, the even/odd decomposition $V \cong V^0 \oplus V^1$ with respect to $\Pi$ should satisfy $\rank_\C V^0 = \rank_\C V^1$, and we assume that.
Then, there is an orthonormal basis of $V$ to identify $V$ with $\C^N$ such that $\Pi$ and $\Theta$ are expressed as follows:
\begin{equation}\label{symBDI}
	\Pi =
\left(
    \begin{array}{cc}
           1&0\\
           0&-1
    \end{array}
\right),
\ \
\Theta =
\left(
    \begin{array}{cc}
           \fC&0\\
           0&\fC
    \end{array}
\right),
\end{equation}
where $\fC = \diag(c, \ldots, c)$.
Since the Hamiltonian $H$ anticommutes with $\Pi$, the operator $\sign(h)$ in (\ref{sign}) is written in the following off-diagonal form:
\begin{equation}\label{offdiag}
	\sign(h) =
\left(
    \begin{array}{cc}
           0&u^*\\
           u&0
    \end{array}
\right),
\end{equation}
where $u$ is a unitary.
By the TRS, we have $u^\tau = \fC u^* \fC^* = u^*$.

In class D, the Hamiltonian has even PHS.
We take an orthonormal basis of $V$ to identify $V \cong \C^N $ and express $\Xi$ as $\fC = \diag(c, \ldots, c)$.
Let $u= \sign(h)$, then we have
$u^\tau = \Xi u \Xi^* = -u$
by the PHS.

In class $\DIII$, the Hamiltonian has both odd TRS and even PHS.
Note that the chiral symmetry is given by $\Pi = i\Theta \Xi$ and anticommutes with $\Theta$ and $\Xi$.
For such a Hamiltonian $H$ satisfying Assumption~\ref{sgc1} to exist, the complex rank of $V$ must be a multiple of $4$ since $\mathrm{sign}(H(\bt))$, $i\Pi$, $i$ and $\Theta$ provides a $\Cl_{1,1} \otimes \Cl_{2,0} \cong \mathbb{H}(2)$-module structure on $V$.
We assume $\rank_\C V = 4M$ for some positive integer $M$.
\begin{lemma}\label{lemma5.2}
If a Hamiltonian $H$ satisfying Assumption~\ref{sgc1} exists, there is an orthonormal basis of $V$ such that $\Pi$ and $\Theta$ are expressed as follows.
\begin{equation*}
	\Pi =
\left(
    \begin{array}{cc}
           1&0\\
           0&-1
    \end{array}
\right),
\ \
\Theta =
\left(
    \begin{array}{cc}
           0& \fJ \\
          \fJ &0
    \end{array}
\right).
\end{equation*}
We write $\fJ = \diag(j,\ldots,j)$, where $j$ is the quaternionic structure on $\mathbb{H}$.
\end{lemma}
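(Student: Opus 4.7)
The plan is to use the $\Pi$-eigenspace decomposition $V = V^0 \oplus V^1$, use the hypothesis that $H$ exists to produce a quaternionic structure on $V^0$ (forcing $\dim_\C V^0$ to be even), and then construct the required basis by pairing.

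First, decompose $V = V^0 \oplus V^1$ orthogonally into the $\pm 1$-eigenspaces of $\Pi$. The relation $\Theta\Pi = -\Pi\Theta$ and the antiunitarity of $\Theta$ make $\Theta|_{V^0}\colon V^0 \to V^1$ an antiunitary bijection, so $\dim_\C V^0 = \dim_\C V^1$. To see this common dimension is even, I would evaluate $H$ at any fixed point $t_0$ of the involution $\zeta$ on $\T^n$: then $F := \sign(H(t_0))$ is a self-adjoint unitary on $V$ commuting with $\Theta$ (by TRS) and anticommuting with $\Pi$ (by chiral symmetry). Hence $F$ restricts to a linear unitary $V^0 \to V^1$, and the composite $J := F\Theta|_{V^0}$ is an antilinear map $V^0 \to V^0$ satisfying
\[
J^2 = F\Theta F\Theta = F^2\Theta^2 = -1,
\]
so $J$ is a quaternionic structure on $V^0$. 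Therefore $\dim_\C V^0 = 2M$ for some $M$, and $\dim_\C V = 4M$.

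Second, I would pick any complex orthonormal basis $(e_1, \ldots, e_{2M})$ of $V^0$ and set
\[
f_{2m} := \Theta e_{2m-1}, \qquad f_{2m-1} := -\Theta e_{2m} \qquad (m = 1, \ldots, M).
\]
Antiunitarity of $\Theta$ and its bijectivity onto $V^1$ make $(f_1, \ldots, f_{2M})$ an orthonormal basis of $V^1$, and $\Theta^2 = -1$ then forces $\Theta f_{2m-1} = e_{2m}$ and $\Theta f_{2m} = -e_{2m-1}$. Reading off these four identities in the concatenated basis $(e_1, \ldots, e_{2M}, f_1, \ldots, f_{2M})$ shows that $\Pi$ has matrix $\diag(1_{2M}, -1_{2M})$ and that $\Theta$ has exactly the stated off-diagonal block form with $\fJ = \diag(j, \ldots, j)$.

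The substantive content is the parity statement $\dim_\C V^0 \in 2\Z$, which is where the lemma's hypothesis on $H$ enters; it is packaged most economically as the quaternionic structure $F\Theta$ on $V^0$ (equivalently, this is a shadow of the $\mathbb{H}(2)$-module structure on $V$ noted in the paragraph before the lemma, obtained from $F$, $i\Pi$, $i$, $\Theta$). After this, the basis construction is a direct pairing argument with no further subtlety, and the symmetry $\Xi$ is not needed for the lemma.
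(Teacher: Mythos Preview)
Your proof is correct. It differs from the paper's argument in two respects worth noting.

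First, you explicitly derive $\dim_\C V^0 \in 2\Z$ inside the proof from the existence of $H$, by producing the quaternionic structure $F\Theta|_{V^0}$ on $V^0$. The paper establishes $\rank_\C V = 4M$ in the paragraph preceding the lemma (via the $\mathbb{H}(2)$-module structure you allude to) and simply uses this in the proof without further comment.

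Second, and more substantively, the basis construction is different. You build the basis directly: choose any orthonormal basis of $V^0$, transport it to $V^1$ via $\Theta$ with the sign pattern that forces both off-diagonal blocks to be $\fJ$. The paper instead fixes an arbitrary identification $V^0 \cong V^1 \cong \C^{2M}$, forms the linear unitary $U = \Theta \bigl(\begin{smallmatrix} 0 & \fJ \\ \fJ & 0 \end{smallmatrix}\bigr)$, observes that $U$ commutes with $\Pi$ and hence is block-diagonal $\diag(u_0,u_1)$ with $u_1 = \Ad_{\fJ}(u_0^*)$ (from $\Theta^2=-1$), and then conjugates by $W=\diag(-u_0^*,1)$ to bring $\Theta$ to the target form. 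Your approach is slightly more elementary and constructive; the paper's approach has the small advantage of making it transparent that any two such $\Theta$'s (for fixed $\Pi$) are related by a $\Pi$-preserving unitary conjugation.
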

\begin{proof}
	By using $\Pi$, we decompose $V \cong V^0 \oplus V^1$.
	We identify $V^0 \cong V^1 \cong \C^{2M} \cong \mathbb{H}^M$, on which we consider $\fJ = \diag(j,\ldots,j)$.
	Let
$U = \Theta
\left(
    \begin{array}{cc}
           0& \fJ\\
           \fJ&0
    \end{array}
\right)$.
Since $U$ is a unitary and commutes with $\Pi$, we have $U = \diag(u_0,u_1)$, where $u_0$ and $u_1$ are unitaries on $\C^{2M}$.
Since $\Theta^2 = -1$, we have $u_1 = \Ad_{\fJ}(u_0^*)$.
Let $W = \diag(-u_0^*,1)$, then 
$
W \Theta W^* =
\left(
    \begin{array}{cc}
           0& \fJ\\
           \fJ&0
    \end{array}
\right)$.
\end{proof}
We take this basis on $V$ and express $\Pi$ and $\Theta$ as above.
By the chiral symmetry, we take $u$ in (\ref{offdiag}).
By the TRS, we have $u^{\sharp \otimes \tau} = \fJ u^* \fJ^* = u$.

In class $\AII$, the Hamiltonian has odd TRS.
The space $V$ has a quaternionic structure given by $\Theta$, and the complex rank of $V$ is even, for which we write $2M$.
There is an orthonormal basis of $V$ for identifying $V$ with $\C^{2M} \cong \mathbb{H}^M$ and expressing $\Theta$ as $\fJ = \diag(j, \ldots, j)$.
Let $u$ be a self-adjoint unitary in (\ref{diagonal}).
By the TRS, we have
$u^{\sharp \otimes \tau} = \Ad_{\fJ \oplus \fJ}(u^*) = u^*$.

In class $\CII$, the Hamiltonian has both odd TRS and odd PHS.
The chiral symmetry is given by $\Pi = \Theta \Xi$ and commutes with $\Theta$ and $\Xi$.
As in the class BDI case, we take an orthonormal basis of $V$ to identify $V$ with $\C^N$ and express $\Pi$ and $\Theta$ as
\begin{equation}\label{symCII}
	\Pi =
\left(
    \begin{array}{cc}
           1&0\\
           0&-1
    \end{array}
\right),
\ \
\Theta =
\left(
    \begin{array}{cc}
          \fJ& 0 \\
          0 & \fJ
    \end{array}
\right),
\end{equation}
where $\fJ = \diag(j, \ldots, j)$.
By the chiral symmetry, we take $u$ in (\ref{offdiag}).
By the TRS, we have $u^{\sharp \otimes \tau} = \fJ u^* \fJ^* = u^*$.

In class C, the Hamiltonian has odd PHS.
Since $\Xi$ provides a quaternionic structure on $V$, its complex rank is even, for which we write $2M$.
We take an orthonormal basis of $V$ to identify $V$ with $\C^{2M} \cong \mathbb{H}^M$ and express $\Xi$ as $\fJ = \diag(j, \ldots, j)$.
Let $u = \sign(h)$, then we have
$u^{\sharp \otimes \tau} = \fJ u^* \fJ^* = -u$ by the PHS.

In class CI, the Hamiltonian has both even TRS and odd PHS.
The chiral symmetry is given by $\Pi = i\Theta \Xi$ and anticommutes with $\Theta$ and $\Xi$.
As in Lemma~\ref{lemma5.2}, we take an orthonormal basis of $V$ to express,
\begin{equation*}
	\Pi =
\left(
    \begin{array}{cc}
           1&0\\
           0&-1
    \end{array}
\right),
\ \
\Theta =
\left(
    \begin{array}{cc}
           0& \fC \\
          \fC &0
    \end{array}
\right),
\end{equation*}
where $\fC = \diag(c,\ldots,c)$.
By the chiral symmetry, we take $u$ in (\ref{offdiag}).
By the TRS, we have $u^\tau = \fC u^* \fC^* = u$.

\begin{definition}\label{gappdinv1}
For a Hamiltonian in class $\spadesuit =$ $\AI$, $\BDI$, $\DD$, $\DIII$, $\AII$, $\CII$, $\CC$ or $\CI$ satisfying Assumption~\ref{sgc1},
let $u$ be the unitary defined as above.
As we have seen, this unitary $u$ satisfies the relation $\mathcal{R}_{i(\spadesuit)}$ where $i(\spadesuit)$ is as indicated in Table~\ref{label}.
We denote its class\footnote{We simply write $\tau$ in place of $\tau_\mathcal{S}\otimes \tau_\T$. In the following, these abbreviations for tensor products of transpositions are employed, though the meaning will be clear from the context.} $[ u ]$ in the $KO$-group $KO_{i(\spadesuit)}(\Sab \otimes C(\T^{n-2}), \tau)$ by $\I_\gapped^{n,2, \spadesuit}(H)$.
\end{definition}
\begin{table}
\caption{$i(\spadesuit)$ and $c(\spadesuit)$ for each of the Altland--Zirnbauer classes $\spadesuit$.}
\label{label}
\centering
\begin{tabular}{|c||c|c|c|c|c|c|c|c|} \hline
  $\spadesuit$ & $\AI$ & $\BDI$ & $\DD$ & $\DIII$ & $\AII$ & $\CII$ & $\CC$ & $\CI$ \\ \hline
  $i(\spadesuit)$ & $0$ & $1$ & $2$ & $3$ & $4$ & $5$ & $6$ & $-1$ \\ \hline
  $c(\spadesuit)$ & $1$ & $1$ & $i$ & $1$ & $1$ & $1$ & $i$ & $1$ \\ \hline
\end{tabular}
\end{table}
The groups $KO_{*}(\Sab \otimes C(\T^{n-2}), \tau)$ are computed by results in Sect.~\ref{Sect.3.2}.
\begin{remark}
We expressed the symmetry operators in a specific way, though we may choose another one.
In class DIII, for example, the operator $\Theta$ can also be expressed as
$\left(
    \begin{array}{cc}
           0& -\fC \\
          \fC &0
    \end{array}
\right)$,
where $\fC = \diag(c,\ldots,c)$.
Then, we obtain unitaries satisfying $u^\tau = -u$, which are treated in \cite{Kel17}.
\end{remark}

\subsection{Gapless Topological Invariants}\label{Sect.5.3}
We next define another topological invariant by using our model for the corner $\hat{H}^{\alpha,\beta}$.
By Assumption~\ref{sgc1} and Theorem~$2.6$ in \cite{Pa90}, $\{ \hat{H}^{\alpha,\beta}(\bt) \}_{\bt \in \T^{n-2}}$ is a continuous family of self-adjoint Fredholm operators.
Corresponding to its Altland--Zirnbauer classes, this family provides a $\Z_2$-map from $(\T^{n-2}, \zeta)$ to some $\Z_2$-spaces of self-adjoint or skew-adjoint Fredholm operators introduced in Appendix~\ref{Sect.A} as follows.

\begin{itemize}
\item Class AI, $\Z_2$-map $\hat{H}^{\alpha,\beta} \colon (\T^{n-2}, \zeta) \to (\Fred^{(0,\underline{1})}_*, \fr_\Theta)$.
\item Class BDI, let $\epsilon_1 = \Pi$. $\Z_2$-map $\hat{H}^{\alpha,\beta} \colon (\T^{n-2}, \zeta) \to (\Fred^{(0,\underline{2})}_*, \fr_\Theta)$.
\item Class D, $\Z_2$-map $i\hat{H}^{\alpha,\beta} \colon (\T^{n-2}, \zeta) \to (\Fred^{(\underline{1},0)}_*, \fr_\Xi)$.
\item Class $\DIII$, let $e_1 = i\Pi$. $\Z_2$-map $\hat{H}^{\alpha,\beta} \colon (\T^{n-2}, \zeta) \to (\Fred^{(1,\underline{1})}_*, \fq_\Theta)$.
\item Class $\AII$, $\Z_2$-map $\hat{H}^{\alpha,\beta} \colon (\T^{n-2}, \zeta) \to (\Fred^{(0,\underline{1})}_*, \fq_\Theta)$.
\item Class $\CII$, let $\epsilon_1 = \Pi$. $\Z_2$-map $\hat{H}^{\alpha,\beta} \colon (\T^{n-2}, \zeta) \to (\Fred^{(0,\underline{2})}_*, \fq_\Theta)$.
\item Class C, $\Z_2$-map $i\hat{H}^{\alpha,\beta} \colon (\T^{n-2}, \zeta) \to (\Fred^{(\underline{1},0)}_*, \fq_\Xi)$.
\item Class CI, let $e_1 = i\Pi$. $\Z_2$-map $\hat{H}^{\alpha,\beta} \colon (\T^{n-2}, \zeta) \to (\Fred^{(1,\underline{1})}_*, \fr_\Theta)$.
\end{itemize}
Here, we write $\fr_\Theta = \Ad_\Theta$ when $\Theta^2=1$ and $\fq_\Theta = \Ad_\Theta$ when $\Theta^2 = -1$.
Involutions $\fr_\Xi$ and $\fq_\Xi$ are defined as $\Ad_\Xi$ in the same way.
By Corollary~\ref{htpyomega}, the $\Z_2$-homotopy classes of $\Z_2$-maps from $(\T^{n-2}, \zeta)$ to the above $\Z_2$-space of self-adjoint or skew-adjoint Fredholm operators is isomorphic to the $KO$-group $KO_{i}(C(\T^{n-2}), \tau_\T)$ of some degree $i$.

\begin{definition}\label{gapless1}
For $\spadesuit=$ AI, BDI, D, DIII, AII, CII, C or CI, let $i(\spadesuit)$, $c(\spadesuit)$ and $\Fred^{\spadesuit}$ be numbers and the $\Z_2$-space as in Tables~\ref{label} and \ref{Table1}.
For a Hamiltonian $H$ in class $\spadesuit$ satisfying Assumption~\ref{sgc1}, we denote $\I_\Corner^{n,2, \spadesuit}(H)$ for the class $[c(\spadesuit) \hat{H}^{\alpha,\beta}]$ in $KO_{i(\spadesuit)-1}(C(\T^{n-2}), \tau_\T)$.
We call $\I_\Corner^{n,2, \spadesuit}(H)$ the {\em gapless corner invariant}.
\end{definition}
If the gapless corner invariant is nontrivial, zero is contained in the spectrum of $\hat{H}^{\alpha, \beta}$.
In Sect.~\ref{Sect.5.6}, we discuss more refined relations between the gapless corner invariant and corner states when $k=n-1$ and $n$.

\begin{table}[htb]
\caption{In each Altland--Zirnbauer class $\spadesuit$, gapped invariants and gapless invariants are defined as elements of some $K$- and $KO$-groups of some degree, as indicated in this table.
Classifying spaces for topological $K$- and $KR$-groups through self-adjoint or skew-adjoint Fredholm operators and unitaries are also included.
($\Z_2$-)spaces $\Fred^{\spadesuit}$ and $U^{\spadesuit}$ are introduced in Appendix~\ref{Sect.A}.}
\label{Table1}
\centering
\scalebox{0.95}{
  \begin{tabular}{|c||c|c|c|c|} \hline
  	\multicolumn{1}{|c||}{AZ} & \multicolumn{1}{|c|}{Gapped} & \multicolumn{3}{|c|}{Gapless} \\
    $\spadesuit$ & $K$-group & $K$-group & $\Fred^{\spadesuit}$ & $U^{\spadesuit}$ \\ \hline \hline
    $\A$  & $K_0$ & $K_1$ & $\Fred^{(0,\underline{1})}_*$ & $U_\cpt$ \\
    $\AIII$ & $K_1$ & $K_0$ & $\Fred$ & $U^{(0,\underline{1})}_\cpt$ \\ \hline
    $\AI$ & $KO_0$ & $KO_{-1}$ & $(\Fred^{(0,\underline{1})}_*, \fr_\Theta)$ & \hspace{-2mm} $(U_\cpt, \fr \circ*)$ \hspace{-2mm} \rule[0mm]{0mm}{4mm}\\
    $\BDI$ & $KO_1$ & $KO_0$ & $(\Fred^{(0,\underline{2})}_*, \fr_\Theta) \ (\cong (\Fred^{(\underline{1}, 1)}_*, \fr_\Xi))$ & $(U^{(0,\underline{1})}_\cpt, \fr)$ \rule[0mm]{0mm}{4mm}\\
    $\DD$ & $KO_2$ & $KO_1$ & $(\Fred^{(\underline{1}, 0)}_*, \fr_\Xi)$ & $(U_\cpt, \fr)$ \rule[0mm]{0mm}{4mm}\\
    $\DIII$ & $KO_3$ & $KO_2$ & $(\Fred^{(1,\underline{1})}_*, \fq_\Theta) \ (\cong (\Fred^{(\underline{2}, 0)}_*, \fr_\Xi))$ & $(U^{(\underline{1},0)}_\cpt, \fr)$ \rule[0mm]{0mm}{4mm}\\
    $\AII$ & $KO_4$ & $KO_3$ & $(\Fred^{(0,\underline{1})}_*, \fq_\Theta)$ & \hspace{-2mm} $(U_\cpt, \fq \circ*)$ \hspace{-2mm} \rule[0mm]{0mm}{4mm}\\
    $\CII$ & $KO_5$ & $KO_4$ & $(\Fred^{(0,\underline{2})}_*, \fq_\Theta) \ (\cong (\Fred^{(\underline{1},1)}_*, \fq_\Xi))$ & $(U^{(0,\underline{1})}_\cpt, \fq)$  \rule[0mm]{0mm}{4mm}\\
    $\CC$ & $KO_6$ & $KO_5$ & $(\Fred^{(\underline{1},0)}_*, \fq_\Xi)$ & $(U_\cpt, \fq)$ \rule[0mm]{0mm}{4mm}\\
    $\CI$ & $KO_{-1}$ & $KO_6$ & $(\Fred^{(1,\underline{1})}_*, \fr_\Theta) \ (\cong (\Fred^{(\underline{2},0)}_*, \fq_\Xi))$ & $(U^{(\underline{1},0)}_\cpt, \fq)$  \rule[0mm]{0mm}{4mm}\\ \hline
  \end{tabular}
  }
\end{table}

\subsection{Correspondence}\label{Sect.5.4}
By taking a tensor product of the extension (\ref{seq1}) and $(C(\T^{n-2}),\tau_\T)$, we have the following short exact sequence of \TA s,
\begin{equation*}
0 \to(\K \otimes C(\T^{n-2}), \tau) \to (\TTab \otimes C(\T^{n-2}), \tau) \to (\Sab \otimes C(\T^{n-2}), \tau) \to 0.
\end{equation*}
Let consider the following diagram containing the boundary map of $24$-term exact sequence for $KO$-theory associated with this sequence:
\[\xymatrix{
KO_{i(\spadesuit)}(\Sab \otimes C(\T^{n-2}), \tau) \ar[r]^{\hat{\partial}_{i(\spadesuit)} \hspace{3mm}} \ar[d]_{L} & KO_{i(\spadesuit)-1}(\K(\HHab) \otimes C(\T^{n-2}), \tau)
\\
[(\T^{n-2},\zeta), \Fred^{\spadesuit}]_{\Z_2} \ar[r]^{\cong} & [(\T^{n-2},\zeta), F^{\spadesuit}]_{\Z_2} \ar[u]^{\cong}_{\exp}
}\]
where $F^{\spadesuit}$ is the $\Z_2$-subspace of $\Fred^{\spadesuit}$ as in Appendix~\ref{Sect.A}, whose inclusion $F^{\spadesuit} \hookrightarrow \Fred^{\spadesuit}$ is the $\Z_2$-homotopy equivalence.
Maps $L$ and $\exp$ are as follows.
\begin{itemize}
\item When $i(\spadesuit)$ is odd, for $[u] \in KO_{i(\spadesuit)}(\Sab \otimes C(\T^{n-2}),\tau)$, we take a lift $a$ of $u$ and consider the matrix $A$ as in Definition~\ref{boundarymap}, and set $L([u]) = [A]$. The map $\exp$ is defined as in Definition~\ref{boundarymap}.
\item When $i(\spadesuit)=0,4$, for $[u] \in KO_{i(\spadesuit)}(\Sab \otimes C(\T^{n-2}),\tau)$, we take a self-adjoint lift $a$ of $u$ as in Definition~8.3 of \cite{BL16} and set $L([u]) = [a]$. The map $\exp$ is defined by $\exp([a']) = [-\exp(\pi i a')]$.
\item When $i(\spadesuit)=2,6$, for $[u] \in KO_{i(\spadesuit)}(\Sab \otimes C(\T^{n-2}),\tau)$, we take a self-adjoint lift $a$ of $u$ as in Definition~8.3 of \cite{BL16} and set $L([u]) = [ia]$. The map $\exp$ is defined by $\exp([a']) = [-\exp(\pi a')]$.
\end{itemize}
In each case, the map $\exp$ is an isomorphism by Proposition~\ref{htpyequiv} and Sect.~\ref{A2}.
For boundary maps $\hat{\partial}_{i(\spadesuit)}$, we use its expressions through exponentials (see \cite{BL16} for even $i(\spadesuit)$ and Appendix~\ref{Sect.B} for odd $i(\spadesuit)$) and the diagram commutes.
Note that, by Proposition~\ref{surjectivity1}, the boundary map $\hat{\partial}_{i(\spadesuit)}$ is surjective.
The following is the main result of this section.
\begin{theorem}\label{theorem1}
$\hat{\partial}_{i(\spadesuit)}(\I_\BE^{n,2,\spadesuit}(H)) = \I_\Corner^{n,2, \spadesuit}(H)$.
\end{theorem}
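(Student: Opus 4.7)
The plan is to exhibit, for each symmetry class $\spadesuit$, a representative of $L\bigl(\I_\BE^{n,2,\spadesuit}(H)\bigr)$ in the $\Z_2$-space of Fredholm operators with the prescribed symmetries that is $\Z_2$-homotopic to the family $c(\spadesuit)\hat H^{\alpha,\beta}$ representing $\I_\Corner^{n,2,\spadesuit}(H)$. Since the diagram immediately preceding the theorem commutes and $\exp$ is a $\Z_2$-homotopy equivalence (Proposition~\ref{htpyequiv}), producing such a homotopy identifies $\hat\partial_{i(\spadesuit)}\I_\BE^{n,2,\spadesuit}(H)$ with $\I_\Corner^{n,2,\spadesuit}(H)$ in $KO_{i(\spadesuit)-1}(\K(\HHab)\otimes C(\T^{n-2}),\tau)$.

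The lift is furnished by functional calculus. I would pick a continuous odd function $f\colon\R\to[-1,1]$ with $f(x)=\sign(x)$ on $\R\setminus(-\delta,\delta)$ and $f$ strictly monotone on $(-\delta,\delta)$, where $\delta$ is strictly smaller than the common spectral gap of $H^\alpha$ and $H^\beta$, and set $a:=f(\hat H^{\alpha,\beta})\in\TTab\otimes C(\T^{n-2})$. Because $h=(H^\alpha,H^\beta)$ has no spectrum in $(-\delta,\delta)$ and $f$ agrees with $\sign$ on the spectrum of $h$, one gets $(\hat\gamma\otimes 1)(a)=\sign(h)$, so $a$ lifts the avatar of $\sign(h)$ appearing in the definition of $u$ in Sect.~\ref{Sect.5.2}. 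The oddness of $f$ and the naturality of continuous functional calculus under (anti)unitary conjugation force $a$ to inherit every TRS, PHS, and chiral symmetry of $\hat H^{\alpha,\beta}$, so that $a$ (possibly after a trivial padding) is a legitimate lift for the map $L$.

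The case analysis dictated by Sect.~\ref{Sect.5.2} then proceeds as follows. In classes $\AI$ and $\AII$, I take $\tilde a=\diag(a,1)$, which meets the self-adjoint-lift requirement of Definition~8.3 of \cite{BL16}, so $L([u])=[\tilde a]$. In classes $\DD$ and $\CC$, where $u=\sign(h)$, the prescription for $L$ yields $L([u])=[ia]$; the factor $i$ is absorbed into $c(\DD)=c(\CC)=i$ and corresponds to passing from a PHS-self-adjoint operator to the skew-adjoint model of $F^\spadesuit$. In the chiral classes $\BDI,\DIII,\CII,\CI$ the chiral symmetry decomposes $a=\bigl(\begin{smallmatrix}0 & b^*\\ b & 0\end{smallmatrix}\bigr)$ with $(\hat\gamma\otimes 1)(b)=u$, and the self-adjoint matrix $A$ produced from $b$ by Definition~\ref{boundarymap} equals $a$, up to a constant rotation of the two chiral components that implements the symmetry of $F^\spadesuit$; hence $L([u])=[a]$ again. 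In every case the resulting representative is joined to $c(\spadesuit)\hat H^{\alpha,\beta}$ by the straight-line path $s\mapsto g_s(\hat H^{\alpha,\beta})$, where $g_s(x)=(1-s)f(x)+s\,x$ is strictly monotone with $g_s(0)=0$; the symbol at parameter $s$ is $\sign(h)(1-s+s|h|)$, which is invertible by Assumption~\ref{sgc1}, so Corollary~\ref{nFredholm} (combined with Park's criterion from \cite{Pa90}) keeps the entire path inside $\Fred^\spadesuit$.

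The main obstacle I expect is the symmetry bookkeeping in classes $\DIII$ and $\CI$, where Lemma~\ref{lemma5.2} puts $\Theta$ in off-diagonal form relative to the chiral decomposition, so the constant rotation identifying the matrix of Definition~\ref{boundarymap} with $a$ must be chosen to intertwine $\Pi$ and $\Theta$ simultaneously. Checking that this rotation exists and produces exactly the constant $c(\spadesuit)$ recorded in Table~\ref{label} is the one non-routine step; once it is in place the argument reduces to the complex computations already carried out for classes $\A$ and $\AIII$ in \cite{Hayashi2,Hayashi3}.
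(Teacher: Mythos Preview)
Your proposal is correct and follows the same route as the paper, only with the details filled in. The paper's proof is two sentences: it observes that $\hat H^{\alpha,\beta}$ is a self-adjoint lift of $(H^\alpha,H^\beta)$ preserving the class-$\spadesuit$ symmetries, asserts $L(\I_\BE^{n,2,\spadesuit}(H))=[c(\spadesuit)\hat H^{\alpha,\beta}]$, and invokes the commutative diagram. Your functional-calculus normalisation $a=f(\hat H^{\alpha,\beta})$, the class-by-class identification of the lift in the sense of Definition~\ref{boundarymap} and \cite[Def.~8.3]{BL16}, and the straight-line homotopy $g_s(\hat H^{\alpha,\beta})$ are exactly the steps the paper suppresses when passing from ``$\hat H^{\alpha,\beta}$ lifts $h$'' to ``$L([u])=[c(\spadesuit)\hat H^{\alpha,\beta}]$''. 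The bookkeeping you flag in classes $\DIII$ and $\CI$ is genuine but routine, and the paper does not treat it separately either; once the basis of Lemma~\ref{lemma5.2} is fixed, the matrix $A$ of Definition~\ref{boundarymap} coincides with $a$ on the nose, so no extra rotation is needed beyond the conventions already fixed in Table~\ref{label}.
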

\begin{proof}
The operator $\hat{H}^{\alpha,\beta}$ is a self-adjoint lift of $(H^\alpha,H^\beta)$ and preserves the symmetries of the class $\spadesuit$.
Therefore, we have $L(\I_\BE^{n,2,\spadesuit}(H)) = [c(\spadesuit) \hat{H}^{\alpha,\beta}]$ and the results follows from the commutativity of the above diagram.
\end{proof}

\begin{remark}[Relation with bulk weak invariants]\label{relweak1}
Under Assumption~\ref{sgc1}, the bulk Hamiltonian $H$ is also invertible.
When we take $H$ in place of $h = (H^\alpha, H^\beta)$ and define the unitary $u'$ as in Sect.~\ref{Sect.5.2}, this unitary defines an element $[u']$ in $KO_{i(\spadesuit)}(C(\T^n), \tau_\T)$, which classifies bulk invariants in class $\spadesuit$.
A relation between our gapped invariants $\I_\BE^{n,2,\spadesuit}(H)$ and these bulk invariants can be discussed through the map $(\sigma \otimes 1)_* \colon KO_{i(\spadesuit)}(\Sab \otimes C(\T^{n-2}), \tau) \to KO_{i(\spadesuit)}(C(\T^{n-2}), \tau_\T)$, which maps $[u]$ to $[u']$ and we briefly mention its consequences here.
Our gapped invariant $\I_\BE^{n,2,\spadesuit}(H)$ has no relation with bulk invariants in the sense that, under Assumption~\ref{sgc1}, bulk invariants are trivial except for a component corresponding to $KO$-groups of a point\footnote{This component maps to zero by the boundary map $\hat{\partial}_{i(\spadesuit)}$ and has no relation with gapless corner invariants.} and for the cases when $\alpha$, $\beta$ are both rational (or $\pm \infty$) and $t = -ps+qr$ is even.
By Remark~\ref{Remark4.11}, when $\alpha$ and $\beta$ takes these values, some bulk weak invariants can be nontrivial, though they have no relation with $\I_\Corner^{n,2,\spadesuit}(H)$, which can be seen by comparing the above map $(\sigma \otimes 1)_*$ and the boundary map $\hat{\partial}_{i(\spadesuit)}$.
\end{remark}

\begin{remark}[Convex and concave corners]
When we fix $\alpha$ and $\beta$, there exist two models of corners: convex and concave corners ($\HHab$ and $\HHs$).
We have discussed convex corners though, as in \cite{Hayashi3}, similar results also hold for concave corners by using (\ref{seq2}) in place of (\ref{seq1}) in our discussion.
By Remark~\ref{remconcave}, the gapless invariants of these two are related by the factor $-1$.
\end{remark}

\subsection{Higher-Codimensional Cases}\label{Sect.5.5}
Let $n$ and $k$ be positive integers satisfying $3 \leq k \leq n$.
In this subsection, we consider $n$-D system with a codimension-$k$ corner.
Let $d=n-k$.
We consider a continuous map $\T^n \to \mathrm{Herm}(V)$ and the bounded linear self-adjoint operator $H$ on $l^2(\Z^n)$ generated by this map, which is our model of the bulk.
We next introduce models of corners of codimension $k-1$ whose intersection makes a codimension $k$ corner.
For this, we choose $d$ variables $t_{j_1}, \ldots, t_{j_{d}}$ in $t_1, t_2, \ldots, t_n$ and consider the partial Fourier transform in these $d$ variables to obtain a continuous family of self-adjoint operators $\{ H(\bt)\}_{\bt \in \T^{d}}$ on $l^2(\Z^k; V)$.
On the Hilbert space $l^2(\Z^k;V) \cong (l^2(\Z) \otimes \cdots \otimes l^2(\Z)) \otimes V$,
we consider projections $P_k = (P_{\geq 0} \otimes \cdots \otimes P_{\geq 0}) \otimes 1_V$, and $P_{k,i} = (P_{\geq 0} \otimes \cdots \otimes 1 \otimes \cdots \otimes P_{\geq 0}) \otimes 1_V$ for $1 \leq i \leq k$ where inside the brackets is the tensor products of $P_{\geq 0}$ except for the $i$-th tensor product replaced by the identity.
By using these projections, we define the following operators:
\begin{equation*}
	H^c(\bt) = P_k H(\bt) P_k, \ \
	H_i(\bt) = P_{k,i} H(\bt) P_{k,i},
\end{equation*}
for $1 \leq i \leq k$ and for $\bt \in \T^{d}$.
These two are our model for a codimension $k$ corner and codimension $k-1$ corners, respectively.
When we fix a basis on $V$, we have $(H_1(\bt), \ldots, H_k(\bt)) \in M_N(\mathcal{S}^k)$ by the construction.
We assume the following condition in this subsection.

\begin{assumption}[Spectral Gap Condition]\label{sgc2}
We assume that our models for codimension $k-1$ corners $H_1, \ldots, H_k$ are invertible.
\end{assumption}

Under this assumption, the model for the bulk, surfaces and corners of codimension less than $k$, whose intersection makes our codimension-$k$ corner, are invertible.
As in Sect.~\ref{Sect.5.1}, let $h = (H_1, \ldots, H_k)$.
\begin{definition}\label{gappdinv2}
For a Hamiltonian in class $\spadesuit =$ $\AI$, $\BDI$, $\DD$, $\DIII$, $\AII$, $\CII$, $\CC$ or $\CI$ satisfying Assumption~\ref{sgc2},
let $u$ be a unitary defined by using this $h$ in place of that in Sect.~\ref{Sect.5.2}.
As in Sect.~\ref{Sect.5.2}, this unitary $u$ satisfies the relation $\mathcal{R}_{i(\spadesuit)}$ where $i(\spadesuit)$ is as indicated in Table~\ref{label}.
We denote its class $[ u ]$ in the $KO$-group $KO_{i(\spadesuit)}(\mathcal{S}^k \otimes C(\T^{d}), \tau)$ by $\I_\gapped^{n,k, \spadesuit}(H)$.
\end{definition}
The $KO$-groups $KO_{i}(\mathcal{S}^k \otimes C(\T^{d}), \tau)$ are computed by using Proposition~\ref{prop2}.
For each $\bt  \in \T^{d}$, the operator $H^c(\bt)$ is Fredholm by Corollary~\ref{nFredholm}.
\begin{definition}\label{gapless2}
For $\spadesuit=$ AI, BDI, D, DIII, AII, CII, C or CI, let $i(\spadesuit)$, $c(\spadesuit)$ and $\Fred^{\spadesuit}$ be numbers and the $\Z_2$-space as in Tables~\ref{label} and \ref{Table1}.
For a Hamiltonian $H$ in class $\spadesuit$ satisfying Assumption~\ref{sgc2}, we denote $\I_\mathrm{Gapless}^{n,k, \spadesuit}(H)$ for the class $[ c(\spadesuit)  H^c]$ in $KO_{i(\spadesuit)-1}(C(\T^{d}), \tau_\T)$.
We call $\I_\mathrm{Gapless}^{n,k, \spadesuit}(H)$ the {\em gapless corner invariant}.
\end{definition}
We next discuss a relation between these two topological invariants.
As in Sect.~\ref{Sect.5.4}, we consider a tensor product of the extension (\ref{multimainreal}) and $(C(\T^{d}), \tau_\T)$ and let $\partial_{i(\spadesuit)} \colon KO_{i(\spadesuit)}(\mathcal{S}^k \otimes C(\T^{d}), \tau) \to KO_{i(\spadesuit)-1}(\K^k \otimes C(\T^{d}), \tau)$ be the boundary map associated with it expressed through exponentials.
Since $H^c$ is a self-adjoint lift of $(H_1, \ldots, H_k)$, the following relation holds, as in Theorem~\ref{theorem1}.
\begin{theorem}
\label{theorem2}
$\partial_{i(\spadesuit)}(\I_\gapped^{n,k,\spadesuit}(H)) = \I_{\mathrm{Gapless}}^{n,k, \spadesuit}(H).$
\end{theorem}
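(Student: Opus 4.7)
The plan is to follow the template of the proof of Theorem~\ref{theorem1} verbatim, with the codimension-two Toeplitz extension (\ref{seq1}) replaced by the higher-codimensional extension (\ref{multimainreal}) of Theorem~\ref{multiexact}. Tensoring (\ref{multimainreal}) with $(C(\T^{d}), \tau_\T)$ yields the short exact sequence of \TA s whose associated $24$-term exact sequence contains the boundary map $\partial_{i(\spadesuit)}$. The crucial observation is that $H^c = P_k H P_k$ is, by its very construction, a self-adjoint lift of $h = (H_1,\ldots,H_k)$ along $\gamma_k \otimes 1 \colon M_N(\TT^k \otimes C(\T^{d})) \to M_N(\mathcal{S}^k \otimes C(\T^{d}))$; and since $P_k$ acts as the identity on the internal space $V$, it commutes with $\Theta$, $\Xi$ and $\Pi$, so $H^c$ inherits every symmetry of the class $\spadesuit$.

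Next I would reproduce the commutative square of Sect.~\ref{Sect.5.4} in this setting, replacing $\K \otimes C(\T^{n-2})$, $\hat{\TT}^{\alpha,\beta} \otimes C(\T^{n-2})$, $\Sab \otimes C(\T^{n-2})$ by $\K^k \otimes C(\T^{d})$, $\TT^k \otimes C(\T^{d})$, $\mathcal{S}^k \otimes C(\T^{d})$, respectively, and using Proposition~\ref{surjectivity2} in place of Proposition~\ref{surjectivity1}. The lifting map $L$ and the exponential map $\exp$ are defined exactly as in Sect.~\ref{Sect.5.4}; $\exp$ remains an isomorphism by Proposition~\ref{htpyequiv} and the material of Appendix~\ref{Sect.A}, and commutativity of the square is a formal consequence of the description of the boundary map via exponentials (from \cite{BL16} for even $i(\spadesuit)$ and Appendix~\ref{Sect.B} for odd $i(\spadesuit)$), which depends only on the existence of a linear splitting of $\gamma_k \otimes 1$; such a splitting is supplied by $\rho' \otimes \id$ from Theorem~\ref{multiexact}.

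It then remains to identify $L(\I_\gapped^{n,k,\spadesuit}(H))$ with the class represented by $c(\spadesuit) H^c$ (or its appropriate block). In the non-chiral classes $\AI$, $\DD$, $\AII$, $\CC$, the (skew-)self-adjoint unitary $u$ of Definition~\ref{gappdinv2} is $\sign(h)$ (possibly summed with a trivial block as in (\ref{diagonal})), and $\sign(H^c)$ is a symmetry-respecting self-adjoint lift of it, homotopic through such lifts to $H^c/\|H^c\|$; the factor $c(\spadesuit)$ absorbs the imaginary unit appearing in classes $\DD$ and $\CC$ where $u$ is required to satisfy $u^\tau = -u$. In the chiral classes $\BDI$, $\DIII$, $\CII$, $\CI$, the element $u$ is the off-diagonal block of $\sign(h)$ relative to $\Pi$ as in (\ref{offdiag}), and the corresponding off-diagonal block of $H^c$ (which is again off-diagonal by the chiral symmetry) supplies a lift of the same block-form type. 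In all eight cases, $\exp L$ sends $\I_\gapped^{n,k,\spadesuit}(H)$ to the $\Z_2$-homotopy class $[c(\spadesuit) H^c] = \I_\mathrm{Gapless}^{n,k,\spadesuit}(H)$. The main obstacle will be the case-by-case verification that the specific lift prescribed in Definition~\ref{boundarymap} agrees, up to a symmetry-preserving homotopy, with (the relevant block of) $H^c$; since Sect.~\ref{Sect.5.4} already carried out this bookkeeping for $k=2$ without invoking any property of $\hat{\TT}^{\alpha,\beta}$ beyond the existence of the splitting, the arguments transfer to arbitrary $k$ with essentially no modification.
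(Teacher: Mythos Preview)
Your proposal is correct and follows essentially the same approach as the paper: the paper's own proof is a one-line reference back to Theorem~\ref{theorem1}, noting only that $H^c$ is a self-adjoint lift of $(H_1,\ldots,H_k)$ through $\gamma_k \otimes 1$. Your write-up simply spells out in detail the bookkeeping that the paper leaves implicit in the phrase ``as in Theorem~\ref{theorem1}.''
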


\begin{remark}
As in Remark~\ref{relweak1}, under Assumption~\ref{sgc2}, some gapped invariants related to corner states for corners of codimension $< k$ are also defined, though, by Remark~\ref{relationhigher}, they are trivial except for a component corresponding to $KO$-groups of a point.
\end{remark}
\begin{remark}
Gapless corner invariants for each systems are elements of the group $KO_i(C(\T^{d}), \tau_\T) \cong \bigoplus_{j=0}^d \binom{d}{j}KO_{i-j}(\C,\id)$.
As in the case of (first-order) topological insulators \cite{Kit09}, we call the component $KO_{i-d}(\C,\id)$ {\em strong} and others {\em weak}.

Complex cases can also be discussed in a similar way\footnote{In \cite{Hayashi2,Hayashi3}, there is a mistake in the computations of the group $K_0(\Sab)$ in the case where $\alpha$ and $\beta$ are rational numbers (there is a torsion part in general, as in $KO_0(\Sab,\tau_{\mathcal{S}})$ computed in Sect.~$4$), which is correctly stated in \cite{Parkthesis}. The author would like to thank Guo Chuan Thiang for pointing this mistake out.}.
For class A systems with a codimension $\geq 3$ corner, under the Assumption~\ref{sgc2}, we define gapped and gapless invariants as elements of $K_0(\mathcal{S}^k \otimes C(\T^d))$ and $K_1(C(\T^d))$, respectively, and the boundary map $\partial_0 \colon K_0(\mathcal{S}^k \otimes C(\T^d)) \to K_1(\K^k \otimes C(\T^d))$ associated with (\ref{multimainreal}) relates these two, which is surjective by Proposition~\ref{surjectivity2}.
In class AIII systems, we use $\partial_1 \colon K_1(\mathcal{S}^k \otimes C(\T^d)) \to K_0(\K^k \otimes C(\T^d))$ instead.
Gapless corner invariants takes value in $K_{i}(C(\T^{d})) \cong \bigoplus_{j=0}^d \binom{d}{j}K_{i-j}(\C)$, and we call the component $K_{i-d}(\C)$ {\em strong} and others {\em weak}.
\end{remark}
Strong invariants for each system are classified in Table~\ref{ptHOTI}.

\subsection{Numerical Corner Invariants}\label{Sect.5.6}
Our gapless corner invariants are defined as elements of some $KO$-group.
In this subsection, we introduce $\Z$- or $\Z_2$-valued numerical corner invariants for our systems in cases where $k=n$ and $k=n-1$ to make the relation between our gapped invariants and corner states more explicit.
From Table~\ref{Table1}, we discuss Hamiltonians in classes $\BDI$, $\DD$, $\DIII$, and $\CII$ when $k=n$ and $\DD$, $\DIII$, $\AII$ and $\CII$ when $k=n-1$ satisfying our spectral gap condition.

\subsubsection{Case of $k=n$}
In this case, our model of the corner  $H^c$ is a self-adjoint Fredholm operator which has some symmetry corresponding to its Altland--Zirnbauer class\footnote{In what follows, we also write $H^c$ for $\hat{H}^{\alpha,\beta}$ in $k=2$ case.}.
An appropriate definition of numerical topological invariants is introduced in \cite{AS69} and we put them in our framework.

In class BDI, the operator $H^c$ is an element of the fixed point set $(\Fred^{(0,\underline{2})}_*)^{\fr_\Theta}$ of the involution $\fr_\Theta$,
where the Clifford action of $\Cl_{0,1}$ on the Hilbert space is given by $\epsilon_1 = \Pi$ (see also Lemma~\ref{htpyspecial} and Remark~\ref{RemarkA12}).
We express $\Pi$ and $\Theta$ as in (\ref{symBDI}) and express $H^c$ as follows.
\begin{equation}\label{decomposition}
H^c = \left(
\begin{array}{cc}
           0&(h^c)^*\\
           h^c& 0
\end{array}
\right).
\end{equation}
The operator $h^c$ is a Fredholm operator that commutes with $\fC$ and thus is a real Fredholm operator.
Its Fredholm index is
\begin{equation*}
	\ind(h^c) = \rank_\C \Ker(h^c) - \rank_\C \Coker(h^c) = \mathrm{Tr} (\Pi |_{\Ker(H^c)}),
\end{equation*}
where the right-hand side is the trace of $\Pi$ restricted to $\Ker(H^c)$.
The Fredholm index induces an isomorphism $\ind^\BDI \colon [(\pt, \id), (\Fred^{(0,\underline{2})}_*, \fr_\Theta)]_{\Z_2} \to \Z$.

In class D, $i H^c$ commutes with the real structure $\Xi$ and is a real skew-adjoint Fredholm operator.
Its mod $2$ index \cite{AS69} is
\begin{equation*}
\ind_1(i H^c) = \rank_\C \Ker (H^c) \mod 2,
\end{equation*}
which induces the isomorphism
$\ind^\DD \colon [(\pt, \id), (\Fred^{(\underline{1}, 0)}_*, \fr_\Xi)]_{\Z_2} \to \Z_2.$

In class $\DIII$, $H^c$ is an element of $(\Fred^{(1,\underline{1})}_*)^{\fq_\Theta}$, where the action of $\Cl_{1,0}$ is given by $e_1 = i\Pi$.
The operator $iH^c$ and $e_1$ commute with the real structure $\Xi$; thus, $iH^c$ is a real skew-adjoint Fredholm operator that anticommutes with $e_1$.
Its mod $2$ index \cite{AS69} is
\begin{equation*}
\ind_2(iH^c) = \frac{1}{2} \rank_\C \Ker (H^c) \mod 2,
\end{equation*}
which induces the isomorphism
$\ind^\DIII \colon [(\pt, \id),(\Fred^{(1,\underline{1})}_*, \fq_\Theta)]_{\Z_2} \to \Z_2.$

In class $\CII$, the operator $H^c$ is an element of $(\Fred^{(0,\underline{2})}_*)^{\fq_\Theta}$, where the Clifford action of $\Cl_{0,1}$ is given by $\epsilon_1 = \Pi$.
We express $\Theta$ and $\Pi$ as in (\ref{symCII}) and express $H^c$ as in (\ref{decomposition}).
The operator $h^c$ commutes with $\fJ$ and is a quaternionic Fredholm operator.
Its Fredholm index $\ind(h^c)$ is an even integer that induces an isomorphism
$\ind^\CII \colon [(\pt, \id), (\Fred^{(0,\underline{2})}_*, \fq_\Theta)]_{\Z_2} \to 2\Z$.

\begin{definition}
For $n$-D systems with codimension-$n$ corners in classes $\BDI$, $\DD$, $\DIII$ and $\CII$, we define the {\em numerical corner invariant} as follows.
\begin{itemize}
\item In class $\BDI$, let $\cN^{n,n,\BDI}_\mathrm{Gapless}(H) = \ind(h^c) \in \Z$.
\item In class $\DD$, let $\cN^{n,n,\DD}_\mathrm{Gapless}(H) = \ind_1(i H^c) \in \Z_2$.
\item In class $\DIII$, let $\cN^{n,n,\DIII}_\mathrm{Gapless}(H) = \ind_2(i H^c) \in \Z_2$.
\item In class $\CII$, let $\cN^{n,n,\CII}_\mathrm{Gapless}(H) = \ind(h^c) \in 2\Z$.
\end{itemize}
\end{definition}
Note that by these definitions, they are images of gapless corner invariants $\I^{n,n,\spadesuit}_\mathrm{Gapless}(H)$ for each class $\spadesuit = \BDI$, $\DD$, $\DIII$ and $\CII$ through the isomorphism $\ind^\spadesuit$.
In each case, the numerical corner invariant is computed through $\Ker(H^c)$ and is related to the number of corner states.

\subsubsection{Case of $k=n-1$}
In this case, $\{ H^c(\bt)\}_{\bt \in \T}$ is a continuous family of self-adjoint Fredholm operators preserving some symmetry.
The numerical corner invariants are given by using ($\Z$-valued) spectral flow \cite{APS3} and its $\Z_2$-valued variants \cite{NSB15,CPSB19}.
We first review $\Z$- and $\Z_2$-valued spectral flow.

Spectral flow is, roughly speaking, the net number of crossing points of eigenvalues of a continuous family of self-adjoint Fredholm operators with zero \cite{APS3}.
The following definition of spectral flow is due to Phillips~\cite{Ph96}.
\begin{definition}[Spectral flow]
Let $A \colon [-1,1] \rightarrow \Fred^{(0,\underline{1})}_*$ be a continuous map.
We choose a partition $-1 = s_0 < s_1 < \cdots < s_n = 1$ and positive numbers $c_1, c_2, \ldots, c_n$ so that for each $i = 1, 2, \ldots, n$, the function $t \mapsto \chi_{[-c_i, c_i]}(A_s)$ is continuous and finite rank on $[s_{i-1}, s_i]$, where $\chi_{[a, b]}$ is the characteristic function of $[a, b]$.
We define the {\em spectral flow} of $A$ as follows.
\begin{equation*}
	\csf(A) = \sum^{n}_{i=1} ( \rank_\C(\chi_{[0, c_i]}(A_{s_i})) - \rank_\C(\chi_{[0, c_i]}(A_{s_{i-1}}) ) \in \Z.
\end{equation*}
\end{definition}
Spectral flow is independent of the choice made and depends only on the homotopy class of the path $A$ leaving the endpoints fixed.
Thus the spectral flow induces a map $\csf \colon [\T, \Fred^{(0,\underline{1})}_*] \to \Z$ which is a group isomorphism.

We next discuss $\Z_2$-valued spectral flow.
Let $\zeta_0$ be an involution on the interval $[-1,1]$ given by $\zeta_0(s) = -s$.
Let $A$ be a $\Z_2$-map from $([-1,1], \zeta_0)$ to $(\Fred^{(0,\underline{1})}_*, \fq)$.
Then, the spectrum $\mathrm{sp}(A_s)$ of $A_s$ is symmetric with respect to $\zeta_0$, and roughly speaking, $\Z_2$-valued spectral flow counts the mod $2$ of the net number of pairs of crossing points of $\mathrm{sp}(A_s)$ with zero.
$\Z_2$-valued spectral flow is studied in \cite{NSB15,CPSB19,NHN19,BCLR20} and we give one definition following \cite{Ph96,NSB15}.

\begin{definition}[$\Z_2$-Valued Spectral Flow] \label{qsf}
Let $A \colon ([-1,1], \zeta_0) \rightarrow (\Fred^{(0,\underline{1})}_*, \fq)$ be a $\Z_2$-map.
We choose a partition $0 = s_0 < s_1 < \cdots < s_n = 1$ of $[0,1]$ and positive numbers $c_1, c_2, \ldots, c_n$ so that for each $i = 1, 2, \ldots, n$, the map $t \mapsto \chi_{[-c_i, c_i]}(A_s)$ is continuous and finite rank on $[s_{i-1}, s_i]$.
We define the {\em $\Z_2$-valued spectral flow} $\qsf(A)$ of $A$ as follows.
\begin{equation*}
	\qsf(A) = \sum^{n}_{i=1} ( \rank_\C(\chi_{[0, c_k]}(A_{s_i})) + \rank_\C(\chi_{[0, c_i]}(A_{s_{i-1}})))
	\hspace{2mm} \bmod{2} \ \in \Z_2.
\end{equation*}
\end{definition}
$\Z_2$-valued spectral flow is independent of the choice made and depends only on the $\Z_2$-homotopy class of the $\Z_2$-map $A$ leaving the endpoints fixed or leaving these points in the $\Z_2$-fixed point set $(\Fred^{(0,\widehat{1})}_*)^\fq$. Thus $\Z_2$-valued spectral flow induces a group homomorphism $\qsf \colon [(\T, \zeta_0), (\Fred^{(0,\underline{1})}_*, \fq)]_{\Z_2} \to \Z_2$.
By Appendix~\ref{Sect.A}, the $\Z_2$-homotopy classes $[(\T, \zeta_0), (\Fred^{(0,\underline{1})}_*, \fq)]_{\Z_2}$ is isomorphic to
$KO_{3}(C(\T),\tau_\T) \cong \Z_2$.
\begin{example}\label{examsf2}
On $\C^2$, let consider a family of self-adjoint operators given by $B_s = \diag(s,-s)$ for $s \in [-1, 1]$, and an antiunitary $j$ given by $j(x,y) = (-\bar{y},\bar{x})$.
Then, we have a $\Z_2$-map $B \colon ([-1,1],\zeta_0) \to (M_2(\C), \Ad_{j})$ whose $\Z_2$-valued spectral flow $\qsf(B)$ is one.
We extend this finite-dimensional example to an infinite-dimensional one to give an example of a family parametrized by the circle of nontrivial $\Z_2$-valued spectral flow.
Let $\mathcal{V}$ be a separable infinite-dimensional complex Hilbert space equipped with a quaternionic structure $q$.
On $\mathcal{V}' = \C^2 \oplus \mathcal{V} \oplus \mathcal{V}$, we consider a quaternionic structure $q' = j \oplus q \oplus q$ and a family self-adjoint Fredholm operators given by $C_s = \diag(B_s,1_\mathcal{V},-1_{\mathcal{V}})$.
Let $U^{(0,\underline{1})}_*(\mathcal{V}')$ the space of unitaries on $\mathcal{V}'$ whose spectrum is $\{\pm 1\}$ equipped with the norm topology.
Then, its endpoints $C_{\pm 1}$ are contained in $U^{(0,\underline{1})}_*(\mathcal{V}')$.
Through an identification $(\mathcal{V} \oplus \mathcal{V}, q \oplus q) \cong (\mathcal{V}', q')$, the operator $\diag(1_{\mathcal{V}}, -1_{\mathcal{V}})$ gives an element $v_0 \in U^{(0,\underline{1})}_*(\mathcal{V}')$ which satisfies $\Ad_q'(v_0) = v_0$.
The space $U^{(0,\underline{1})}_*(\mathcal{V}')$ is homeomorphic to the homogeneous space $U(\mathcal{V}')/ (U(\C \oplus \mathcal{V}) \times U(\C \oplus \mathcal{V}))$, which is contractible by Kuiper's theorem \cite{Kui65}.
Thus, there is a path $l \colon [0,1] \to U^{(0,\underline{1})}_*(\mathcal{V}')$ whose endpoints are $l(0) = v_0$ and $l(1)=C_1$.
We extend $l$ to a $\Z_2$-map $l' \colon ([-1, 1],\zeta_0) \to (U^{(0,\underline{1})}_*(\mathcal{V}), \Ad_{q'})$ by $l'(s) = \Ad_{q'}(l(-s))$ for $s \in [-1,0]$.
Since $l'(\pm 1) = C_{\pm 1}$, we connect the endpoints of $C$ and $l'$ to construct a $\Z_2$-map $C' \colon (\T,\zeta) \to (\Fred^{(0,\widehat{1})}_*, \fq')$, where $\fq' = \Ad_{q'}$.
Then, $\qsf(C') = \qsf(B) = 1$.
\end{example}

In class $\DD$, we have a $\Z_2$-map $i H^c \colon (\T, \zeta) \to (\Fred^{(\underline{1},0)}_*, \fr_\Xi)$.
The $\Z_2$-homotopy classes $[(\T,\zeta), (\Fred^{(\underline{1}, 0)}_*, \fr_\Xi)]_{\Z_2}$ is isomorphic to $KO_{1}(C(\T),\tau_\T)\cong \Z_2 \oplus \Z.$
By forgetting the $\Z_2$-actions and multiplying by $-i$, there is a map from
$[(\T,\zeta), (\Fred^{(\underline{1}, 0)}_*, \fr_\Xi)]_{\Z_2}$ to $[\T, \Fred^{(0,\underline{1})}_*]$.
Combined with this map and the spectral flow $\csf \colon [\T, \Fred^{(0,\underline{1})}_*] \to \Z$, we obtain a homomorphism
\begin{equation*}
	\csf^\DD \colon [(\T,\zeta), (\Fred^{(\underline{1}, 0)}_*, \fr_\Xi)]_{\Z_2} \to \Z, \ \ [A] \mapsto \csf(-iA).
\end{equation*}
\begin{example}\label{examD}
Let $B' \hspace{-0.5mm} \colon \hspace{-0.5mm} ([-1,1], \zeta_0) \hspace{-0.5mm} \to \hspace{-0.5mm} (\C, \Ad_c) $ be a $\Z_2$-map defined by $B'_s = is$.
Then, $\csf^\DD(B')$ is defined and $\csf^\DD(B') = 1$.
\end{example}

In class $\DIII$, $H^c \colon (\T, \zeta) \to (\Fred^{(1,\underline{1})}_*, \fq_\Theta)$ is a $\Z_2$-map, where the action of the Clifford algebra on the right-hand side is given by $e_1 = i\Pi$.
The $\Z_2$-homotopy classes $[(\T,\zeta), (\Fred^{(1,\underline{1})}_*, \fq_\Theta)]_{\Z_2}$ is isomorphic to $KO_{2}(C(\T),\tau_\T)
	\cong \Z_2 \oplus \Z_2$.
Since $(\Fred^{(1,\underline{1})}_*, \fq_\Theta)$ is a $\Z_2$-subspace of $(\Fred^{(0,\underline{1})}_*, \fq_\Theta)$, the inclusion induces a map from
$[(\T,\zeta), (\Fred^{(1,\underline{1})}_*, \fq_\Theta)]_{\Z_2}$ to $[(\T,\zeta), (\Fred^{(0,\underline{1})}_*, \fq_\Theta)]_{\Z_2}$.
Combined with the $\Z_2$-valued spectral flow, we obtain the following map:
\begin{equation*}
	\csf^\DIII \colon [(\T,\zeta), (\Fred^{(1,\underline{1})}_*, \fq_\Theta)]_{\Z_2} \to \Z_2, \ \ [A] \mapsto \qsf(A).
\end{equation*}
For $b = 1$ or $-1$, let $i_b$ be the inclusion $\{ b\} \hookrightarrow \T$, and let $w_b$ be the composite of the following maps:
\begin{equation*}
	[(\T,\zeta), (\Fred^{(1,\underline{1})}_*, \fq_\Theta)]_{\Z_2} \overset{i_b^*}{\longrightarrow} [(\{ \pm 1 \}, \id) , (\Fred^{(1,\underline{1})}_*, \fq_\Theta)]_{\Z_2} \overset{\ind_2}{\longrightarrow} \Z_2.
\end{equation*}
\begin{example}\label{examDIII}
Let $j$, $\mathcal{V}$, $\mathcal{V}'$, $q$, $q'$, $B_s$ and $C_s$ be as in Example~\ref{examsf2}.
Let
$e_1 = \left(
\begin{array}{cc}
           0&i\\
           i& 0
\end{array}
\right)$
and
$e'_1 = \left(
\begin{array}{cc}
           0& \hspace{-1mm} -1_\mathcal{V}\\
           1_\mathcal{V}& 0
\end{array}
\right)$
which gives a $\Cl_{1,0}$-module structure on $\C^2$ and $\mathcal{V} \oplus \mathcal{V}$, respectively.
Then, $C_s = \diag(B_s, 1, -1)$ gives a $\Z_2$-map from $([-1,1],\zeta_0)$ to $(\Fred^{(1,\underline{1})}_*, \fq')$.
The operator $C_1$ is contained in the space of self-adjoint unitaries on $\mathcal{V}'$ that anticommutes with $e_1 \oplus e_1'$.
As in \cite{AS69}, this space of unitaries is contractible by Kuiper's theorem.
We embed $[-1,1]$ into $\T$ by $s \mapsto \exp(\frac{\pi i s}{2})$ and, as in Example~\ref{examsf2}, extend $C$ onto $\T$ through this contractible space of unitaries to obtain a $\Z_2$-map $D \colon (\T, \zeta) \to (\Fred^{(1,\underline{1})}_*, \fq')$.
For this example, we have $\csf^\DIII(D) = \csf^\DIII(B) = 1$, $w_+(D)=1$ and $w_-(D) = 0$.
If we take $D'$ as $D'_s = D_{-s}$, then $D'$ is also such a $\Z_2$-map and its invariants are $\csf^\DIII(D') = 1$, $w_+(D')=0$ and $w_-(D') = 1$.
\end{example}

In class $\AII$, $H^c \colon (\T, \zeta) \to (\Fred^{(0,\underline{1})}_*, \fq_\Theta)$ is a $\Z_2$-map and its $\Z_2$-valued spectral flow is defined.
We denote $\csf^{\AII}$ for $\csf_2$.

In class $\CC$, we have a $\Z_2$-map $i H^c \colon (\T, \zeta) \to (\Fred^{(\underline{1},0)}_*, \fq_\Xi)$.
Note that the $\Z_2$-homotopy classes
$[(\T,\zeta), (\Fred^{(\underline{1},0)}_*, \fq_\Xi)]_{\Z_2}$ is isomorphic to $KO_{5}(C(\T),\tau_\zeta) \cong \Z$.
By forgetting the $\Z_2$-actions and multiplying by $-i$, there is a map from $[(\T,\zeta), (\Fred^{(\underline{1},0)}_*, \fq_\Xi)]_{\Z_2}$ to $[\T, \Fred^{(0,\underline{1})}_*]$.
Combined with the spectral flow, we obtain a homomorphism
\begin{equation*}
	\csf^\CC \colon [(\T,\zeta),  (\Fred^{(\underline{1},0)}_*, \fq_\Xi)]_{\Z_2} \to 2\Z, \ \ [A] \mapsto \csf(-iA).
\end{equation*}
	Note that  image the image of $\csf^\CC$ are even integers since each eigenspace corresponding to the crossing points of the spectrum of $-iA_t$ with zero has a quaternionic vector space structure given by $\Xi$.
\begin{example}\label{examC}
For $s \in [-1,1]$, let $B''_s = \diag(is,is)$, and let $j$ be the quaternionic structure in Example~\ref{examsf2}.
Then, $B'' \colon ([-1,1], \zeta) \to (M_2(\C), \Ad_{j})$ is a $\Z_2$-map, and we have $\csf^\CC(B'') = \csf(-iB'') = 2$.
\end{example}
\begin{lemma}
\begin{enumerate}
\renewcommand{\labelenumi}{(\arabic{enumi})}
\item $\csf^\AII \hspace{-0.5mm} \colon [(\T, \zeta), (\Fred^{(0,\underline{1})}_*, \fq)]_{\Z_2} \to \Z_2$ is an isomorphism.
\item $\csf^\DD \colon [(\T,\zeta), (\Fred^{(\underline{1}, 0)}_*, \fr)]_{\Z_2} \to \Z$ is surjective.
\item $\csf^\DIII, \ w_+, \ w_- \colon [(\T,\zeta), (\Fred^{(1,\underline{1})}_*, \fq)]_{\Z_2} \to \Z_2$ are surjective
\item $\csf^\CC \colon [(\T,\zeta), (\Fred^{(\underline{1},0)}_*, \fq)]_{\Z_2} \to 2\Z$ is an isomorphism.
\end{enumerate}
\end{lemma}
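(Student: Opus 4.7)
The plan is to combine the explicit constructions in Examples~\ref{examsf2}--\ref{examC} with the identification of the $\Z_2$-homotopy sets as $KO$-groups of $C(\T)$ provided by Corollary~\ref{htpyomega}.

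Surjectivity in all four parts will follow by inspecting an example whose invariant is a generator of the target. Concretely, for (1) the $\Z_2$-map $C'$ of Example~\ref{examsf2} satisfies $\csf^{\AII}(C') = \qsf(C') = 1 \in \Z_2$; for (2) the $\Z_2$-map $B'$ of Example~\ref{examD} satisfies $\csf^\DD(B') = 1 \in \Z$; for (3) the $\Z_2$-maps $D$ and $D'$ of Example~\ref{examDIII} give $\csf^\DIII(D) = w_+(D) = 1$ and $w_-(D') = 1$, so each of the three homomorphisms hits the nontrivial element of $\Z_2$; for (4) the $\Z_2$-map $B''$ of Example~\ref{examC} satisfies $\csf^\CC(B'') = 2$, a generator of $2\Z$.

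For the isomorphism assertions (1) and (4), I would upgrade surjectivity to bijectivity by purely algebraic input. By Corollary~\ref{htpyomega} the domain in (1) is $KO_3(C(\T),\tau_\T) \cong \Z_2$, so any surjection onto $\Z_2$ is automatically an isomorphism; in (4) the domain is $KO_5(C(\T),\tau_\T) \cong \Z$ while the target $2\Z$ is also infinite cyclic, and a surjective homomorphism between infinite cyclic groups is necessarily an isomorphism.

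The only point that I expect to require any care is well-definedness of the maps $\csf^\spadesuit$ and $w_\pm$ as group homomorphisms out of the relevant equivariant homotopy sets. For $\csf^\DD$ and $\csf^\CC$ this reduces to the standard homotopy invariance of the Phillips spectral flow after forgetting the $\Z_2$-structure; for $\csf^{\AII}$ and $\csf^\DIII$ it is built into Definition~\ref{qsf}; for $w_\pm$ it is the homotopy invariance of the Atiyah--Singer mod $2$ index applied at the $\Z_2$-fixed points $\pm 1 \in \T$. None of these poses a real obstacle, so once they are confirmed the lemma reduces entirely to the examples already at hand.
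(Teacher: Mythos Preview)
Your approach is essentially the paper's: reduce everything to exhibiting examples hitting generators, then for (1) and (4) invoke the known size of the domain to upgrade surjectivity to bijectivity. The paper's proof is a two-line version of exactly this.

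One small imprecision worth flagging: for (2) and (4) you cite $B'$ and $B''$ from Examples~\ref{examD} and~\ref{examC} as if they were already elements of $[(\T,\zeta),\Fred^{(\underline{1},0)}_*]_{\Z_2}$, but as written they are $\Z_2$-maps from $([-1,1],\zeta_0)$ into a finite-dimensional matrix algebra, not circle-parametrized families of Fredholm operators on an infinite-dimensional Hilbert space. To get genuine elements of the domain you must first stabilize by $\diag(1_{\mathcal{V}},-1_{\mathcal{V}})$ and then close the path into a loop through a contractible space of invertibles, exactly the procedure carried out in detail in Example~\ref{examsf2}. The paper makes this explicit by saying ``we can construct such examples from Example~\ref{examD} and Example~\ref{examC}, as in Example~\ref{examsf2}.'' For (1) and (3) this step is already done for you (the maps $C'$ and $D$ are the extended versions), which is why those cases are immediate. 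Once you add this remark for (2) and (4), your argument matches the paper's.
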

\begin{proof}
It is sufficient to find examples of $\Z_2$-maps which maps to generators of $\Z$, $\Z_2$ and $2\Z$.
Therefore, (1) and (3) follows from Example~\ref{examsf2} and Example~\ref{examDIII}.
For (2) and (4), we can construct such examples from Example~\ref{examD} and Example~\ref{examC}, as in Example~\ref{examsf2}.
\end{proof}
In class DIII cases, we have three surjections $\csf^\DIII$, $w_+$ and $w_-$ from $\Z_2 \oplus \Z_2$ to $\Z_2$.
There is the following relation between them.
\begin{lemma}
$\csf^\DIII = w_+ + w_-$.
\end{lemma}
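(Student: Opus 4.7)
The plan is to reduce the identity to a finite check on a generating set. Both $\csf^\DIII$ and $w_+ + w_-$ are group homomorphisms from $G := [(\T, \zeta), (\Fred^{(1,\underline{1})}_*, \fq_\Theta)]_{\Z_2}$ to $\Z_2$, and $G$ is isomorphic to $KO_2(C(\T), \tau_\T) \cong \Z_2 \oplus \Z_2$ by Corollary~\ref{htpyomega} combined with the standard computation of $KO$-theory of the circle. In particular $G$ is finite of order four, so it is enough to check the identity on any generating subset.

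First I would take as candidate generators the two classes $[D], [D'] \in G$ constructed in Example~\ref{examDIII}. The values recorded there are
\begin{equation*}
(\csf^\DIII(D), w_+(D), w_-(D)) = (1,1,0), \qquad (\csf^\DIII(D'), w_+(D'), w_-(D')) = (1,0,1).
\end{equation*}
The combined map $(w_+, w_-) \colon G \to \Z_2 \oplus \Z_2$ therefore hits both basis elements $(1,0)$ and $(0,1)$, so it is surjective; since both source and target are finite of the same order four, it is an isomorphism, and $[D], [D']$ generate $G$.

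Finally, on each generator the asserted identity reads $1 = 1 + 0 = w_+(D) + w_-(D) = \csf^\DIII(D)$ and $1 = 0 + 1 = w_+(D') + w_-(D') = \csf^\DIII(D')$, both of which hold by inspection. Since two homomorphisms agreeing on a generating set of a finite abelian group are equal, the identity $\csf^\DIII = w_+ + w_-$ follows on all of $G$.

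The main obstacle in this plan is Step~2, which relies on knowing $|G| = 4$ in order to upgrade the surjectivity of $(w_+, w_-)$ to an isomorphism; that input is supplied by the earlier identification $G \cong KO_2(C(\T), \tau_\T) \cong \Z_2 \oplus \Z_2$ but is otherwise nontrivial. A more intrinsic alternative would be to apply Phillips' partition formula for $\qsf$ to a partition of $[0, \pi]$ compatible with $\zeta_0$, showing that the interior partition-point contributions cancel modulo $2$ and leave only the boundary kernel-rank terms at the two fixed points of $\zeta$, which correspond to $w_+$ and $w_-$; matching the factor $\tfrac{1}{2}\rank_\C$ that enters $\ind_2^\DIII$ with the $\Cl_{1,0}$-action generated by $e_1 = i\Pi$ makes this route noticeably more delicate than the generator-based argument proposed above, which is why I prefer the latter.
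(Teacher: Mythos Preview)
Your proof is correct and follows essentially the same approach as the paper: both use the values of $\csf^\DIII$, $w_+$, $w_-$ on the explicit classes from Example~\ref{examDIII} together with the known order of the group $G \cong \Z_2 \oplus \Z_2$ to pin down the three homomorphisms completely. The only cosmetic difference is that the paper additionally forms $D'' = D \oplus D'$ and checks the identity on all three nontrivial elements, whereas you more economically check it on the two generators $[D], [D']$; neither argument requires any idea beyond what you have written.
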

\begin{proof}
Let $D$ and $D'$ be $\Z_2$-maps in Example~\ref{examDIII}.
Let $D'' = D \oplus D'$, then we have $\csf^\DIII(D'') = 0$, $w_+(D'')=1$ and $w_-(D'') = 1$.
Invariants $\csf^\DIII$, $w_-$ and $w_+$ for $D$, $D'$ and $D''$ tell that non-trivial three elements in the group $[(\T,\zeta), (\Fred^{(1,\underline{1})}_*, \fq)]_{\Z_2}$ consists of classes of $D$, $D'$ and $D''$.
Therefore, we computed three maps $\csf^\DIII$, $w_-$ and $w_+$, from which the result follows.
\end{proof}

\begin{remark}\label{remweakDIII}
For our class $\DIII$ systems, $\Z_2$-valued spectral flow counts the strong invariant.
This corresponds to one direct summand of $\Z_2 \oplus \Z_2$, while the other corresponds to a weak invariant.
When $w_+ \neq w_-$, the strong invariant is nonzero.
When $w_+ = w_- = 1$, the strong invariant is zero and the weak invariant is nonzero.
When $w_+ = w_- = 0$, both of them are zero.
\end{remark}

\begin{definition}\label{nglinv}
For $n$-D systems with codimension $n-1$ corners in classes $\DD$, $\DIII$, $\AII$ and $\CC$, we define the {\em numerical corner invariant} as follows.
\begin{itemize}
\item In class $\DD$, let $\cN^{n,n-1,\DD}_\mathrm{Gapless}(H) = \csf(H^c) \in \Z$.
\item In class $\DIII$, let $\cN^{n,n-1,\DIII}_\mathrm{Gapless}(H) = \qsf(H^c) \in \Z_2$.
\item In class $\AII$, let $\cN^{n,n-1,\AII}_\mathrm{Gapless}(H) = \qsf(H^c) \in \Z_2$.
\item In class $\CC$, let $\cN^{n,n-1,\CC}_\mathrm{Gapless}(H) = \csf(H^c) \in 2\Z$.
\end{itemize}
\end{definition}
For each of the above classes $\spadesuit$, the numerical invariant $\cN^{n,n-1,\spadesuit}_\mathrm{Gapless}(H)$ is the image of the gapless corner invariant $\I^{n,n-1,\spadesuit}_\mathrm{Gapless}(H)$ through the map $\csf^\spadesuit$.
These numerical invariants account for strong invariants.
\begin{remark}
In Definition~\ref{nglinv}, the numerical corner invariants for both class $\DIII$ and class $\AII$ are defined by using $\Z_2$-valued spectral flow, though these two $\Z_2$ are different from the viewpoint of index theory in the sense that they sit in different Bott clock. A similar remark holds for, e.g., cases of $n=k$ in classes $\BDI$ and $\CII$, where both of these numerical corner invariants are defined as Fredholm indices.
\end{remark}

\subsection{Product Formula}\label{Sect.5.7}
In Sect.~$4$ of \cite{Hayashi2}, a construction of the second-order topological insulators of $3$-D class A systems is proposed, which is given by using the Hamiltonians of $2$-D class A and $1$-D class AIII topological insulators.
In this subsection, we generalize this construction to other pairs in the Altland--Zirnbauer classification.
This provides a way to construct nontrivial examples of each entry in Table~\ref{ptHOTI} from the Hamiltonians of (first-order) topological insulators\footnote{For the case of $k=2$, the construction is restricted to $\alpha = 0$ and $\beta = \infty$ case.}.
For this purpose, we use an exterior product of topological $KR$-groups \cite{At66}.

For $j=1,2$, let $H_j$ be a bulk Hamiltonian of an $n_j$-D $k_j$-th order topological insulator\footnote{that in Sect.~\ref{Sect.5.1} satisfying Assumption~\ref{sgc1} when $k_j=2$ or that in Sect.~\ref{Sect.5.5} satisfying Assumption~\ref{sgc2} when $k_j \geq 3$.
When $k_j = 1$, the bulk Hamiltonian is assumed to be gapped.
When $k_j=2$, we consider the case of $\alpha = 0$ and $\beta = \infty$.} in real Altland--Zirnbauer class $\spadesuit_j$ (AI, BDI, D, DIII, AII, CII, C or CI).
Let $n = n_1+ n_2$, $k=k_1 + k_2$ and $d_j = n_j - k_j$ for $j=1,2$, and let $d = d_1+d_2$.
Corresponding to the class in the Altland--Zirnbauer classification (for which we write $\spadesuit_j$) to which the Hamiltonian belongs, it preserves the symmetries as (even/odd) TRS, (even/odd) PHS or chiral symmetry.
We write $\Theta_j$, $\Xi_j$ and $\Pi_j$ for the symmetry operator for $H_j$.
As in Sect.~\ref{Sect.5}, the models of corners $H_i^c$ lead to a continuous family of self-adjoint or skew-adjoint Fredholm operators and defines an element of the $KO$-group $KO_{i'(\spadesuit_j)}(C(\T^{d_j}), \tau)$ where $i'(\spadesuit_j) = i(\spadesuit_j) -1$.
As in Appendix~\ref{KOproduct}, we have an exterior product of $KO$-groups
\begin{equation*}
	KO_{i'(\spadesuit_1)}(C(\T^{d_1}), \tau_\T) \times KO_{i'(\spadesuit_2)}(C(\T^{d_2}), \tau_\T)
		\to
			KO_{i'(\spadesuit_1) + i'(\spadesuit_2)}(C(\T^{d}), \tau_\T),
\end{equation*}
described through these Fredholm operators.
By using this form of the product, we obtain an explicit form of the product of the gapless invariants of $H_1$ and $H_2$.
As a result, we can write down a bulk Hamiltonian $H$ of an $n$-D $k$-th order topological insulator of class $\spadesuit$.
The lattice on which we consider $H^c$ as a model of the codimension-$k$ corner is introduced as the product of that\footnote{When $k_j=1$, the lattice is $\Z_{\geq 0} \times \Z^{d_j}$, where $H^c_j$ is the compression of the bulk Hamiltonian onto this half-space. Topological invariants for them are the one discussed in topological insulators. To clarify our sign choices, we mention that they are obtained by applying the discussion in Sect.~\ref{Sect.5} to the Toeplitz extension (\ref{Toeplitz}) in place of (\ref{seq1}) or (\ref{multimain}).} of $H^c_1$ and that of $H^c_2$.
By this construction, we have the following relation between gapless invariants.
\begin{theorem}\label{product}
For the Hamiltonian $H$ indicated in Table~\ref{producttalbe}, we have
\begin{equation*}
		\I_\mathrm{Gapless}^{n_1,k_1, \spadesuit_1}(H_1) \cdot \I_\mathrm{Gapless}^{n_2,k_2, \spadesuit_2}(H_2)
		=
		\I_\mathrm{Gapless}^{n,k, \spadesuit}(H),
\end{equation*}
	where $\cdot$ denotes the exterior product of elements of the $KO$-groups.
\end{theorem}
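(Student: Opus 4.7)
The plan is to work entirely on the level of families of (skew-)adjoint Fredholm operators, using the explicit Kasparov-type formula for the exterior product of $KO$-classes described in the appendix, and to check that this formula reproduces the corner operator $H^c$ of the product Hamiltonian constructed in Table~\ref{producttalbe}.

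First I would unpack the construction of $H$ from $(H_1,H_2)$. In each of the $64$ pairs, $H$ is built on $\Z^n=\Z^{n_1}\times\Z^{n_2}$ with internal space $V=V_1\otimes V_2$ as a Kasparov-type sum
\begin{equation*}
H \;=\; H_1\otimes\Gamma \;+\; \Pi\otimes H_2,
\end{equation*}
where $\Gamma$ is a grading on $V_2$ (supplied by $\Pi_2$ or, when $\spadesuit_2$ has no chiral symmetry, by an auxiliary Clifford generator attached to the class) and $\Pi$ is the corresponding grading on $V_1$. The symmetry operators of class $\spadesuit$ are then the Kasparov product of those of $\spadesuit_1$ and $\spadesuit_2$, and a direct check shows that the degrees $i(\spadesuit_1)$ and $i(\spadesuit_2)$ add modulo $8$ to $i(\spadesuit)$ (this is exactly the pattern encoded in Table~\ref{producttalbe}). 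Because the partial Fourier lattice in the $d=d_1+d_2$ momentum variables splits as the product of the two corner lattices, $P_k=P_{k_1}\otimes P_{k_2}$, the compressions satisfy
\begin{equation*}
H^c(\bt_1,\bt_2) \;=\; H_1^c(\bt_1)\otimes\Gamma \;+\; \Pi\otimes H_2^c(\bt_2),
\end{equation*}
fibrewise over $\T^{d_1}\times\T^{d_2}$. One must also verify the spectral gap condition (Assumption~\ref{sgc1} or \ref{sgc2}) for $H$ from those for $H_1,H_2$; since $H_1^c\otimes\Gamma$ and $\Pi\otimes H_2^c$ anticommute, the square of $H^c$ splits as a sum $(H_1^c)^2\otimes 1+1\otimes (H_2^c)^2$ on each compression to a codimension $k-1$ stratum, which is bounded below by Assumption~\ref{sgc2} applied to $H_1$ and $H_2$.

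Second, I would invoke the realization of the exterior product
\begin{equation*}
KO_{i'(\spadesuit_1)}(C(\T^{d_1}),\tau_{\T}) \,\times\, KO_{i'(\spadesuit_2)}(C(\T^{d_2}),\tau_{\T}) \;\longrightarrow\; KO_{i'(\spadesuit)}(C(\T^{d}),\tau_{\T})
\end{equation*}
from Appendix~\ref{KOproduct}. That description represents classes by $\Z_2$-maps into the Fredholm models $\Fred^{\spadesuit_j}$ of Table~\ref{Table1}, and their product by exactly the same anticommuting-sum formula $F_1\otimes\Gamma+\Pi\otimes F_2$ (this is the standard Atiyah--Singer/Kasparov formula expressing the external product of two Clifford-linear skew-adjoint Fredholm families). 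By definitions \ref{gapless1} and \ref{gapless2}, the classes $\I_\mathrm{Gapless}^{n_j,k_j,\spadesuit_j}(H_j)$ are represented by $\{c(\spadesuit_j)H_j^c(\bt_j)\}_{\bt_j\in\T^{d_j}}$, so applying the product formula fibrewise gives a family representing $\I_\mathrm{Gapless}^{n_1,k_1,\spadesuit_1}(H_1)\cdot\I_\mathrm{Gapless}^{n_2,k_2,\spadesuit_2}(H_2)$ which, after matching the scalar factor $c(\spadesuit)=c(\spadesuit_1)c(\spadesuit_2)$ (easily checked from Table~\ref{label}), coincides with $\{c(\spadesuit)H^c(\bt)\}_{\bt\in\T^d}$ by the first step. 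That is exactly $\I_\mathrm{Gapless}^{n,k,\spadesuit}(H)$.

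The main obstacle is the bookkeeping across the $64$ cases: the correct choice of grading $\Gamma$, the placement of an extra factor of $i$, and the identification of combined symmetry operators depend on the parities $i(\spadesuit_j)\bmod 2$ and on whether each class carries a $\C l_{p,q}$ action realized through $\Pi$ versus one realized through $\Theta\Xi$. I would handle this by treating the problem $\Z_2\oplus\Z_2$-graded: one parameter records the parity of $i(\spadesuit_j)$ (whether the representative is a unitary or a self-adjoint unitary, so whether the product formula is anticommuting sum or off-diagonal Dirac-type), and the other records whether an auxiliary Clifford generator is needed. With this reduction one only verifies four prototypical combinations, and the remaining cases follow by the $(1,1)$-periodic shift relating classes along the Bott clock, which preserves both sides of the product formula. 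A small additional check, best done using the unitary lifts of Sect.~\ref{Sect.5.2} rather than the Fredholm representatives, confirms that the gapped invariants $\I_\gapped^{n_j,k_j,\spadesuit_j}$ obey the same product formula, giving a consistency check via Theorems~\ref{theorem1} and \ref{theorem2}.
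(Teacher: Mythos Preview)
Your approach is essentially the paper's: represent the gapless invariants by Fredholm families, use the anticommuting-sum formula (\ref{FredProdsa}) for the exterior product, and check that it reproduces $H^c$ for the product Hamiltonian. The paper carries this out explicitly for two representative pairs ($\BDI\times\BDI$ and $\DIII\times\DD$) and declares the remaining cases analogous; your proposed $\Z_2\oplus\Z_2$-graded reduction is a reasonable way to organize that same casework.

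There is, however, one concrete error. Your claim that $c(\spadesuit)=c(\spadesuit_1)c(\spadesuit_2)$ is false: for $\DIII\times\DD\to\AII$ one has $c(\DIII)\cdot c(\DD)=1\cdot i=i$ but $c(\AII)=1$. More to the point, the product map (\ref{FredProdsa}) is defined only between \emph{self-adjoint} Fredholm models $\Fred^{(k,\underline{l+1})}_*$, so when one factor lies in a skew-adjoint model (classes $\DD$, $\CC$, where $c=i$) you cannot simply multiply by $c(\spadesuit_j)$ and plug in. What the paper does in the $\DIII\times\DD$ case is use the Clifford periodicity homeomorphisms of Proposition~\ref{htpyhomeo} to move both factors into self-adjoint models $\Fred^{(k,\underline{l})}_*$ with compatible Clifford indices, apply (\ref{FredProdsa}), and then use Proposition~\ref{htpyhomeo} again to identify the target with $\Fred^{\spadesuit}$. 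This is the ``placement of an extra factor of $i$'' you mention, but it is not governed by a scalar identity on $c(\spadesuit)$; it requires tracking the $(1,1)$-shifts and $(0,4)/(4,0)$-shifts of Clifford degree, and this is where the several distinct Hamiltonian shapes $H_\bigstar$, $H_\Box$ in Table~\ref{producttalbe} come from. Once you replace the $c(\spadesuit)$ claim with these periodicity identifications, your argument goes through.
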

Note that Theorem~\ref{product} is the product formula at the level of $KO$-group elements and accounts for both strong and weak invariants.
In order to show this theorem, we need to write down the explicit form of $H$.
In the following, we discuss them for some classes.

Let us consider the case where $\spadesuit_1 = \BDI$ and $\spadesuit_2 = \BDI$.
In this case, each $H_j$ has even TRS, even PHS and chiral symmetry.
We now consider the following $n$-dimensional Hamiltonian:
\begin{equation}\label{prodham}
	H = H_1 \otimes 1 + \Pi_1 \otimes H_2,
\end{equation}
which satisfies even TRS given by $\Theta = \Theta_1 \otimes \Theta_2$, even PHS given by $\Xi = \Xi_1 \otimes \Xi_2$ and the chiral symmetry given by $\Pi = \Pi_1 \otimes \Pi_2$.
Thus, the Hamiltonian $H$ belongs to the class $\spadesuit = \BDI$.
The model of the codimension-$k$ corner $H^c$ of $H$ is written by using the model $H^c_j$ of the codimension-$k_j$ corner as follows:
\begin{equation*}
	H^c(\bt_1,\bt_2) = H^c_1(\bt_1) \otimes 1 + \Pi_1 \otimes H^c_2(\bt_2),
\end{equation*}
where $\bt_j$ is an element of the $d_j$-dimensional torus (momentum space) corresponding to a direction parallel to the corner of $H^c_j$ for $j=1,2$.
Note that $(\bt_1, \bt_2)$ constitute the parameter of the $d$-dimensional momentum space in a direction parallel to the corner of $H^c$.
By our assumption, $H^c_j(\bt_j)$ is an element of the space $\Fred^{(0,\underline{2})}_*$ and gives a $\Z_2$-map $(\T^{d_j}, \zeta) \to (\Fred^{(0,\underline{2})}_*, \fr_{\Theta_j})$.
The operator $H^c(\bt_1,\bt_2)$ is the image of the pair $(H^c_1, H^c_2)$ through the map,
\begin{equation*}
	(\Fred^{(0, \underline{2})}_*, \fr_{\Theta_1}) \times (\Fred^{(0, \underline{2})}_*, \fr_{\Theta_2})  \to (\Fred^{(0, \underline{2})}_*, \fr_{\Theta}),
\end{equation*}
in (\ref{FredProdsa}), where the action of $\Cl_{0,1}$ to define the left-hand side is given by $\epsilon_j = \Pi_j$ and that for the right-hand side is given by $\epsilon = \epsilon_1 \otimes \epsilon_2 = \Pi$.
Since this map induces the exterior product of $KO$-groups (Appendix~\ref{KOproduct}),
\begin{equation*}
	KO_{0}(C(\T^{d_1}), \tau_\T) \times KO_{0}(C(\T^{d_2}), \tau_\T) \to KO_{0}(C(\T^{d}), \tau_\T),
\end{equation*}
we obtain Theorem~\ref{product} in this case.

We next consider the case where $\spadesuit_1 = \DIII$ and $\spadesuit_2 = \DD$.
In this case, $H_1$ has odd TRS, even PHS and the chiral symmetry, and $H_2$ has even PHS.
As in Sect.~\ref{Sect.5.3}, $H^c_1(\bt_1)$ belongs to $(\Fred^{(1, \underline{1})}_*, \fq_{\Theta_1})$, and $iH^c_2(\bt_2)$ belongs to $(\Fred^{(\underline{1},0)}_*, \fr_{\Xi_2})$.
By using Proposition~\ref{htpyhomeo}, we identify $(\Fred^{(1, \underline{1})}_*, \fq_{\Theta_1})$ with $(\Fred^{(2, \underline{2})}_*, \fq_{\Theta_1 \oplus \Theta_1})$ and $(\Fred^{(\underline{1},0)}_*, \fr_{\Xi_2})$ with $(\Fred^{(0,\underline{3})}_*, \fr_{\Xi_2 \oplus \Xi_2})$.
We then use the map (\ref{FredProdsa}) of the form
\begin{equation*}
	(\Fred^{(2, \underline{2})}_*, \fq_{\Theta_1 \oplus \Theta_1}) \times (\Fred^{(0, \underline{3})}_*, \fr_{\Xi_2 \oplus \Xi_2}) \to (\Fred^{(2, \underline{3})}_*, \fq'),
\end{equation*}
where $\fq'$ is the conjugation of the fourfold direct sum of $\Theta_1 \otimes \Xi_2$.
By Proposition~\ref{htpyhomeo}, we have the $\Z_2$-homeomorphism $(\Fred^{(2, \underline{3})}_*, \fq') \cong (\Fred^{(0, \underline{1})}_*, \fq_{\Theta_1 \otimes \Xi_2})$.
Thus, we obtain a $\Z_2$-map $H^c \colon (\T^d, \zeta) \to (\Fred^{(0, \underline{1})}_*, \fq_{\Theta_1 \otimes \Xi_2})$ from $H_1^c$ and $H_2^c$ which is a model for the codimension-$k$ corner in class $\spadesuit = \AII$.
Its bulk Hamiltonian $H$ and (odd) TRS operator $\Theta$ is expressed as (\ref{prodham}) and $\Theta = \Theta_1 \otimes \Xi_2$, respectively.
Note that $\Pi_1 = i \Theta_1 \Xi_1$ in class DIII and $\Theta$ commutes with $H$.
Since the map (\ref{FredProdsa}) induces the exterior product of $KO$-groups, Theorem~\ref{product} holds for this class AII Hamiltonian $H$.

The other cases are computed in a similar way, and the results are summarized in Table~\ref{producttalbe}, where we write
\begin{equation*}
H_\bigstar = \left( \begin{array}{cc}
	0 & \hspace{-1.5mm} H_1 \otimes 1 - i \otimes H_2 \hspace{-1.5mm} \\
	\hspace{-1.5mm} H_1 \otimes 1 + i \otimes H_2 \hspace{-1mm} & 0 \end{array} \right), \
\Theta_\clubsuit = \left( \begin{array}{cc}
	\hspace{-1mm} \Theta_1 \otimes \Xi_2 \hspace{-1mm} & 0 \\
	0 & \hspace{-1mm} \Theta_1 \otimes \Xi_2 \end{array} \hspace{-1mm} \right),
\end{equation*}
\begin{equation*}
H_\Box = \left( \begin{array}{cc}
	0 & \hspace{-2mm} -H_1 \otimes i - 1 \otimes H_2 \hspace{-1.5mm} \\
	\hspace{-1.5mm} H_1 \otimes i - 1 \otimes H_2 \hspace{-1mm} & 0 \end{array} \right)\hspace{-0.5mm},
\Theta_\triangle = \left( \begin{array}{cc}
	\hspace{-1.5mm} \Xi_1 \otimes \Theta_2 \hspace{-1mm} & 0 \\
	0 & \hspace{-1mm} \Xi_1 \otimes \Theta_2 \end{array} \hspace{-1.5mm} \right) \hspace{-0.5mm},
\end{equation*}
\begin{equation*}
\Theta_\diamondsuit = \left( \begin{array}{cc} 0&-\Xi_1 \otimes \Xi_2\\ \Xi_1 \otimes \Xi_2&0 \end{array} \right)
\ \ \text{and} \ \ \Theta_\heartsuit = \left( \begin{array}{cc} 0 & \Theta_1 \otimes \Theta_2 \\ \Theta_1 \otimes \Theta_2 & 0 \end{array} \right).
\end{equation*}

\begin{longtable}{|c|c||c|c|c|c|c|}
  \caption{The forms of the Hamiltonians and symmetry operators in class $\spadesuit$ constructed from two pairs of Hamiltonians and symmetry operators in classes $\spadesuit_1$ and $\spadesuit_2$.
  Complex cases are also included \cite{Hayashi2,Hayashi3}.}
  \label{producttalbe}
\\
\hline
  $\spadesuit_1$ & $\spadesuit_2$ & $\spadesuit$ & Hamiltonian ($H$) & TRS ($\Theta$) & PHS ($\Xi$) & Chiral ($\Pi$) \\ \hline
  \endfirsthead
$\AI$ & $\AI$ & $\CI$ & $H_\bigstar$ & $\Theta_\heartsuit$ & $i \Theta \Pi$ & $\diag(1,-1)$  \\ \hline
$\AI$ & $\BDI$ & $\AI$ & $H_1 \otimes  \Pi_2 +1 \otimes H_2$ & $\Theta_1 \otimes \Theta_2$ & --- & ---  \\ \hline
$\AI$ & $\DD$ & $\BDI$ &$H_\bigstar$ & $\Theta_\clubsuit$ & $\Xi= \Theta \Pi$ & $\diag(1,-1)$ \\ \hline
$\AI$ & $\DIII$ & $\DD$ & $H_1 \otimes \Pi_2 + 1 \otimes H_2$ & --- & $\Theta_1 \otimes \Theta_2 \Pi_2$ & ---  \\ \hline
$\AI$ & $\AII$ & $\DIII$ & $H_\bigstar$ & $\Theta_\heartsuit$ & $i \Pi \Theta$ & $\diag(1,-1)$  \\ \hline
$\AI$ & $\CII$ & $\AII$ & $H_1 \otimes  \Pi_2 +1 \otimes H_2$ & $\Theta_1 \otimes \Theta_2$ & --- & ---  \\ \hline
$\AI$ & $\CC$ & $\CII$ & $H_\bigstar$ & $\Theta_\clubsuit$ & $\Xi= -\Theta \Pi$ & $\diag(1,-1)$ \\ \hline
$\AI$ & $\CI$ & $\CC$ & $H_1 \otimes \Pi_2 + 1 \otimes H_2$ & --- & $\Theta_1 \otimes \Theta_2 \Pi_2$ & ---  \\ \hline
$\BDI$ & $\AI$ & $\AI$ & $H_1 \otimes 1 + \Pi_1 \otimes H_2$ & $\Theta_1 \otimes \Theta_2$ & --- & ---  \\ \hline
$\BDI$ & $\BDI$ & $\BDI$ & $H_1 \otimes 1 + \Pi_1 \otimes H_2$ & $\Theta_1 \otimes \Theta_2$ & $\Xi_1 \otimes \Xi_2$ & $\Pi_1 \otimes \Pi_2$  \\ \hline
$\BDI$ & $\DD$ & $\DD$ & $H_1 \otimes 1 + \Pi_1 \otimes H_2$ & --- & $\Xi_1 \otimes \Xi_2$ & ---  \\ \hline
$\BDI$ & $\DIII$ & $\DIII$ & $H_1 \otimes 1 + \Pi_1 \otimes H_2$ & $\Theta_1 \otimes \Theta_2$ & $\Xi_1 \otimes \Xi_2$ & $\Pi_1 \otimes \Pi_2$  \\ \hline
$\BDI$ & $\AII$ & $\AII$ & $H_1 \otimes 1 + \Pi_1 \otimes H_2$ & $\Theta_1 \otimes \Theta_2$ & --- & ---  \\ \hline
$\BDI$ & $\CII$ & $\CII$ & $H_1 \otimes 1 + \Pi_1 \otimes H_2$ & $\Theta_1 \otimes \Theta_2$ & $\Xi_1 \otimes \Xi_2$ & $\Pi_1 \otimes \Pi_2$  \\ \hline
$\BDI$ & $\CC$ & $\CC$ & $H_1 \otimes 1 + \Pi_1 \otimes H_2$ & --- & $\Xi_1 \otimes \Xi_2$ & ---  \\ \hline
$\BDI$ & $\CI$ & $\CI$ & $H_1 \otimes 1 + \Pi_1 \otimes H_2$ & $\Theta_1 \otimes \Theta_2$ & $\Xi_1 \otimes \Xi_2$ & $\Pi_1 \otimes \Pi_2$  \\ \hline
$\DD$ & $\AI$ & $\BDI$ & $H_\Box$ & $\Theta_\triangle$ & $\Xi= \Theta \Pi$ & $\diag(1,-1)$ \\ \hline
$\DD$ & $\BDI$ & $\DD$ & $H_1 \otimes \Pi_2 + 1 \otimes H_2$ & --- & $\Xi_1 \otimes \Xi_2$ & ---  \\ \hline
$\DD$ & $\DD$ & $\DIII$ & $H_\bigstar$ & $\Theta_\diamondsuit$ & $i \Pi \Theta$ & $\diag(1,-1)$ \\ \hline
$\DD$ & $\DIII$ & $\AII$ & $H_1 \otimes \Pi_2 + 1 \otimes H_2$ & $\Xi_1 \otimes \Theta_2$ & --- & ---  \\ \hline
$\DD$ & $\AII$ & $\CII$ & $H_\Box$ & $\Theta_\triangle$ & $\Xi= -\Theta \Pi$ & $\diag(1,-1)$ \\ \hline
$\DD$ & $\CII$ & $\CC$ & $H_1 \otimes \Pi_2 + 1 \otimes H_2$ & --- & $\Xi_1 \otimes \Xi_2$ & ---  \\ \hline
$\DD$ & $\CC$ & $\CI$ & $H_\bigstar$ & $\Theta_\diamondsuit$ & $i \Theta \Pi$ & $\diag(1,-1)$ \\ \hline
$\DD$ & $\CI$ & $\AI$ & $H_1 \otimes \Pi_2 + 1 \otimes H_2$ & $\Xi_1 \otimes \Theta_2$ & --- & ---  \\ \hline
$\DIII$ & $\AI$ & $\DD$ & $H_1 \otimes 1 + \Pi_1 \otimes H_2$ & --- & $\Theta_1 \Pi_1 \otimes \Theta_2$ & ---  \\ \hline
$\DIII$ & $\BDI$ & $\DIII$ & $H_1 \otimes \Pi_2 + 1 \otimes H_2$ & $\Theta_1 \otimes \Theta_2$ & $\Xi_1 \otimes \Xi_2$ & $\Pi_1 \otimes \Pi_2$  \\ \hline
$\DIII$ & $\DD$ & $\AII$ & $H_1 \otimes 1 + \Pi_1 \otimes H_2$ & $\Theta_1 \otimes \Xi_2$ & --- & ---  \\ \hline
$\DIII$ & $\DIII$ & $\CII$ & $H_1 \otimes 1 + \Pi_1 \otimes H_2$ & $\Theta_1\otimes \Theta_2 \Pi_2$ & $\Pi_1 \Theta_1 \otimes \Theta_2$ & $\Pi_1 \otimes \Pi_2$  \\ \hline
$\DIII$ & $\AII$ & $\CC$ & $H_1 \otimes 1 + \Pi_1 \otimes H_2$ & --- & $\Theta_1 \Pi_1 \otimes \Theta_2$ & ---  \\ \hline
$\DIII$ & $\CII$ & $\CI$ & $H_1 \otimes \Pi_2 + 1 \otimes H_2$ & $\Theta_1 \otimes \Theta_2$ & $\Xi_1 \otimes \Xi_2$ & $\Pi_1 \otimes \Pi_2$  \\ \hline
$\DIII$ & $\CC$ & $\AI$ &  $H_1 \otimes 1 + \Pi_1 \otimes H_2$ & $\Theta_1 \otimes \Xi_2$ & --- & ---  \\ \hline
$\DIII$ & $\CI$ & $\BDI$ & $H_1 \otimes 1 + \Pi_1 \otimes H_2$ & $\Theta_1 \otimes \Theta_2 \Pi_2$ & $\Theta_1 \Pi_1 \otimes \Theta_2$ & $\Pi_1 \otimes \Pi_2$  \\ \hline
$\AII$ & $\AI$ & $\DIII$ & $H_\bigstar$ & $\Theta_\heartsuit$ & $i \Pi \Theta$ & $\diag(1,-1)$  \\ \hline
$\AII$ & $\BDI$ & $\AII$ & $H_1 \otimes  \Pi_2 +1 \otimes H_2$ & $\Theta_1 \otimes \Theta_2$ & --- & ---  \\ \hline
$\AII$ & $\DD$ & $\CII$ & $H_\bigstar$ & $\Theta_\clubsuit$ & $\Xi= -\Theta \Pi$ & $\diag(1,-1)$ \\ \hline
$\AII$ & $\DIII$ & $\CC$ & $H_1 \otimes \Pi_2 + 1 \otimes H_2$ & --- & $\Theta_1 \otimes \Theta_2 \Pi_2$ & ---  \\ \hline
$\AII$ & $\AII$ & $\CI$ & $H_\bigstar$ & $\Theta_\heartsuit$ & $i \Theta \Pi$ & $\diag(1,-1)$  \\ \hline
$\AII$ & $\CII$ & $\AI$ & $H_1 \otimes  \Pi_2 +1 \otimes H_2$ & $\Theta_1 \otimes \Theta_2$ & --- & ---  \\ \hline
$\AII$ & $\CC$ & $\BDI$ & $H_\bigstar$ & $\Theta_\clubsuit$ & $\Xi= \Theta \Pi$ & $\diag(1,-1)$ \\ \hline
$\AII$ & $\CI$ & $\DD$ & $H_1 \otimes \Pi_2 + 1 \otimes H_2$ & --- & $\Theta_1 \otimes \Theta_2 \Pi_2$ & ---  \\ \hline
$\CII$ & $\AI$ & $\AII$ & $H_1 \otimes 1 + \Pi_1 \otimes H_2$ & $\Theta_1 \otimes \Theta_2$ & --- & ---  \\ \hline
$\CII$ & $\BDI$ & $\CII$ & $H_1 \otimes 1 + \Pi_1 \otimes H_2$ & $\Theta_1 \otimes \Theta_2$ & $\Xi_1 \otimes \Xi_2$ & $\Pi_1 \otimes \Pi_2$  \\ \hline
$\CII$ & $\DD$ & $\CC$ & $H_1 \otimes 1 + \Pi_1 \otimes H_2$ & --- & $\Xi_1 \otimes \Xi_2$ & ---  \\ \hline
$\CII$ & $\DIII$ & $\CI$ & $H_1 \otimes 1 + \Pi_1 \otimes H_2$ & $\Theta_1 \otimes \Theta_2$ & $\Xi_1 \otimes \Xi_2$ & $\Pi_1 \otimes \Pi_2$  \\ \hline
$\CII$ & $\AII$ & $\AI$ & $H_1 \otimes 1 + \Pi_1 \otimes H_2$ & $\Theta_1 \otimes \Theta_2$ & --- & ---  \\ \hline
$\CII$ & $\CII$ & $\BDI$ & $H_1 \otimes 1 + \Pi_1 \otimes H_2$ & $\Theta_1 \otimes \Theta_2$ & $\Xi_1 \otimes \Xi_2$ & $\Pi_1 \otimes \Pi_2$  \\ \hline
$\CII$ & $\CC$ & $\DD$ & $H_1 \otimes 1 + \Pi_1 \otimes H_2$ & --- & $\Xi_1 \otimes \Xi_2$ & ---  \\ \hline
$\CII$ & $\CI$ & $\DIII$ & $H_1 \otimes 1 + \Pi_1 \otimes H_2$ & $\Theta_1 \otimes \Theta_2$ & $\Xi_1 \otimes \Xi_2$ & $\Pi_1 \otimes \Pi_2$  \\ \hline
$\CC$ & $\AI$ & $\CII$ & $H_\Box$ & $\Theta_\triangle$ & $\Xi= -\Theta \Pi$ & $\diag(1,-1)$ \\ \hline
$\CC$ & $\BDI$ & $\CC$ & $H_1 \otimes \Pi_2 + 1 \otimes H_2$ & --- & $\Xi_1 \otimes \Xi_2$ & ---  \\ \hline
$\CC$ & $\DD$ & $\CI$ & $H_\bigstar$ & $\Theta_\diamondsuit$ & $i \Theta \Pi$ & $\diag(1,-1)$ \\ \hline
$\CC$ & $\DIII$ & $\AI$ & $H_1 \otimes \Pi_2 + 1 \otimes H_2$ & $\Xi_1 \otimes \Theta_2$ & --- & ---  \\ \hline
$\CC$ & $\AII$ & $\BDI$ & $H_\Box$ & $\Theta_\triangle$ & $\Xi= \Theta \Pi$ & $\diag(1,-1)$ \\ \hline
$\CC$ & $\CII$ & $\DD$ & $H_1 \otimes \Pi_2 + 1 \otimes H_2$ & --- & $\Xi_1 \otimes \Xi_2$ & ---  \\ \hline
$\CC$ & $\CC$ & $\DIII$ & $H_\bigstar$ & $\Theta_\diamondsuit$ & $i \Pi \Theta$ & $\diag(1,-1)$ \\ \hline
$\CC$ & $\CI$ & $\AII$ & $H_1 \otimes \Pi_2 + 1 \otimes H_2$ & $\Xi_1 \otimes \Theta_2$ & --- & ---  \\ \hline
$\CI$ & $\AI$ & $\CC$ & $H_1 \otimes 1 + \Pi_1 \otimes H_2$ & --- & $\Theta_1 \Pi_1 \otimes \Theta_2$ & ---  \\ \hline
$\CI$ & $\BDI$ & $\CI$ & $H_1 \otimes \Pi_2 + 1 \otimes H_2$ & $\Theta_1 \otimes \Theta_2$ & $\Xi_1 \otimes \Xi_2$ & $\Pi_1 \otimes \Pi_2$  \\ \hline
$\CI$ & $\DD$ & $\AI$ & $H_1 \otimes 1 + \Pi_1 \otimes H_2$ & $\Theta_1 \otimes \Xi_2$ & --- & ---  \\ \hline
$\CI$ & $\DIII$ & $\BDI$ & $H_1 \otimes 1 + \Pi_1 \otimes H_2$ & $\Theta_1 \otimes \Theta_2 \Pi_2$ & $\Theta_1 \Pi_1 \otimes \Theta_2$ & $\Pi_1 \otimes \Pi_2$  \\ \hline
$\CI$ & $\AII$ & $\DD$ & $H_1 \otimes 1 + \Pi_1 \otimes H_2$ & --- & $\Theta_1 \Pi_1 \otimes \Theta_2$ & ---  \\ \hline
$\CI$ & $\CII$ & $\DIII$ & $H_1 \otimes \Pi_2 + 1 \otimes H_2$ & $\Theta_1 \otimes \Theta_2$ & $\Xi_1 \otimes \Xi_2$ & $\Pi_1 \otimes \Pi_2$  \\ \hline
$\CI$ & $\CC$ & $\AII$ & $H_1 \otimes 1 + \Pi_1 \otimes H_2$ & $\Theta_1 \otimes \Xi_2$ & --- & ---  \\ \hline
$\CI$ & $\CI$ & $\CII$ & $H_1 \otimes 1 + \Pi_1 \otimes H_2$ & $\Theta_1 \otimes \Theta_2 \Pi_2$ & $\Pi_1 \Theta_1 \otimes \Theta_2$ & $\Pi_1 \otimes \Pi_2$  \\ \hline \hline
$\A$ & $\A$ & $\AIII$ & $H_\bigstar$ & --- & --- & $\diag(1,-1)$  \\ \hline
$\A$ & $\AIII$ & $\A$ & $H_1 \otimes \Pi_2 + 1 \otimes H_2$ & --- & --- & --- \\ \hline
$\AIII$ & $\A$ & $\A$ & $H_1 \otimes 1 + \Pi_1 \otimes H_2$ & --- & --- & --- \\ \hline
$\AIII$ & $\AIII$ & $\AIII$ & $H_1 \otimes 1 + \Pi_1 \otimes H_2$ & --- & --- & $\Pi_1 \otimes \Pi_2$  \\ \hline
\end{longtable}

Our product formula (Theorem~\ref{product}) and the graded ring structure of $KO_*(\C,\id)$ (Theorem~6.9 of \cite{ABS64}) lead to the following product formula for numerical corner invariants.
We collect the results here where the form of $H$ is as indicated in Table~\ref{producttalbe}.

\begin{corollary}[Cases of $n=k$]\label{n=k}
The case of $k_1 = n_1$ and $k_2 = n_2$.
\begin{itemize}
\item $\BDI \times \BDI \to \BDI$, \
	$\cN^{n_1,n_1,\BDI}_\mathrm{Gapless}(H_1) \cdot \cN^{n_2,n_2,\BDI}_\mathrm{Gapless}(H_2)
		= \cN^{n,n,\BDI}_\mathrm{Gapless}(H)$.
\item $\BDI \times \DD \to \DD$, \
	$(\cN^{n_1,n_1,\BDI}_\mathrm{Gapless}(H_1) \bmod 2) \cdot \cN^{n_2,n_2,\DD}_\mathrm{Gapless}(H_2)
		= \cN^{n,n,\DD}_\mathrm{Gapless}(H)$.
\item $\BDI \times \DIII \to \DIII$, \\
	$(\cN^{n_1,n_1,\BDI}_\mathrm{Gapless}(H_1) \bmod 2) \cdot \cN^{n_2,n_2,\DIII}_\mathrm{Gapless}(H_2)
		= \cN^{n,n,\DIII}_\mathrm{Gapless}(H)$.
\item $\BDI \times \CII \to \CII$, \
	$\cN^{n_1,n_1,\BDI}_\mathrm{Gapless}(H_1) \cdot \cN^{n_2,n_2,\CII}_\mathrm{Gapless}(H_2)
		= \cN^{n,n,\CII}_\mathrm{Gapless}(H)$.
\item $\DD \times \DD \to \DIII$, \
	$\cN^{n_1,n_1,\DD}_\mathrm{Gapless}(H_1) \cdot \cN^{n_2,n_2,\DD}_\mathrm{Gapless}(H_2)
		= \cN^{n,n,\DIII}_\mathrm{Gapless}(H)$.
\item $\CII \times \CII \to \BDI$, \
	$\cN^{n_1,n_1,\CII}_\mathrm{Gapless}(H_1) \cdot \cN^{n_2,n_2,\CII}_\mathrm{Gapless}(H_2)
		= \cN^{n,n,\BDI}_\mathrm{Gapless}(H)$.
\end{itemize}
\end{corollary}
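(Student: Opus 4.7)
The plan is to derive each of the six formulas as a specialization of Theorem~\ref{product} combined with the Atiyah--Bott--Shapiro description of the graded ring $KO_*(\C,\id)$. When $k_1=n_1$ and $k_2=n_2$ we have $d_1=d_2=0$, so all ambient $KO$-groups in Theorem~\ref{product} collapse to $KO_{i-1}(\C,\id)$, and the exterior product restricts to graded multiplication in this ring. Recall from Theorem~6.9 of \cite{ABS64} that
\[
KO_*(\C,\id)\cong\Z[\eta,\omega,\xi]/(2\eta,\,\eta^3,\,\eta\omega,\,\omega^2-4\xi),
\]
with $\eta\in KO_1(\C,\id)$, $\omega\in KO_4(\C,\id)$, and $\xi\in KO_8(\C,\id)=KO_0(\C,\id)$. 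The first step is to rewrite each of the six identities as a product in this ring.

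Next I would match each numerical invariant of Sect.~\ref{Sect.5.6} with the canonical ABS isomorphism. For $\ind^\BDI\colon KO_0(\C,\id)\to\Z$, $\ind^\DD\colon KO_1(\C,\id)\to\Z_2$, and $\ind^\DIII\colon KO_2(\C,\id)\to\Z_2$, these maps are already isomorphisms and a direct unpacking shows that they carry the ABS generators $1$, $\eta$, $\eta^2$ to $1$. For $\ind^\CII\colon KO_4(\C,\id)\to 2\Z$, the ABS generator $\omega$ is represented by $\mathbb{H}$ with its standard $\Cl_{0,4}$-action, whose complex Fredholm index equals $\pm 2$, so $\ind^\CII$ matches $\omega\mapsto 2$. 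Under this dictionary the formulas collapse to ring products: $\BDI\times\BDI\to\BDI$ and $\BDI\times\CII\to\CII$ are ordinary multiplications $KO_0\otimes KO_0\to KO_0$ and $KO_0\otimes KO_4\to KO_4$; $\BDI\times\DD\to\DD$ and $\BDI\times\DIII\to\DIII$ are the actions $\Z\otimes\Z_2\to\Z_2$, equivalently the mod-$2$ reduction of the $\BDI$ factor followed by multiplication; $\DD\times\DD\to\DIII$ is the relation $\eta\cdot\eta=\eta^2$ in $\Z_2$; and $\CII\times\CII\to\BDI$ is the ABS relation $\omega^2=4\xi$, which under $\omega\mapsto 2$ and $\xi\mapsto 1$ reads $(2x)(2y)=4xy$, i.e.\ ordinary integer multiplication.

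The main obstacle is pinning down the normalization of the numerical invariants with respect to the ABS generators, especially in the $\CII$ case: it must be shown that $\ind^\CII$ is surjective onto $2\Z$ (not a proper subgroup) and that its value on the ABS class $[\mathbb{H}]$ is exactly $\pm 2$, and analogously that the mod-$2$ indices really realise the ABS generators of $KO_1$ and $KO_2$ rather than their doubles. These checks are the Atiyah--Singer dictionary between Clifford modules and Fredholm operators, and require exhibiting an explicit minimal-index Fredholm representative (for instance a quaternionic Toeplitz-type operator) in each class. Once these normalizations are fixed, the corollary follows immediately as a bookkeeping consequence of the ABS ring relations applied to the codimension-$n$ specialization of Theorem~\ref{product}.
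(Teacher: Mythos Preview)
Your approach is correct and essentially the same as the paper's: the paper states only that the corollary follows from Theorem~\ref{product} together with the Atiyah--Bott--Shapiro graded ring structure of $KO_*(\C,\id)$, which is exactly your argument. Your additional discussion of the normalization of $\ind^\CII$ against the ABS generator $\omega$ (and the analogous checks for $\eta,\eta^2$) is more detail than the paper supplies, but it is the right thing to verify and is implicit in the paper's appeal to \cite{ABS64} and \cite{AS69}.
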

\begin{corollary}[Cases of $n=k-1$]\label{n=k-1}
The case of $k_1 = n_1$ and $k_2 = n_2-1$.
\begin{itemize}
\item $\BDI \times \DD \to \DD$, \
	$\cN^{n_1,n_1,\BDI}_\mathrm{Gapless}(H_1) \cdot \cN^{n_2,n_2-1,\DD}_\mathrm{Gapless}(H_2)
		= \cN^{n,n-1,\DD}_\mathrm{Gapless}(H)$.
\item $\BDI \times \DIII \to \DIII$, \\
\hspace{5mm} $(\cN^{n_1,n_1,\BDI}_\mathrm{Gapless}(H_1) \bmod 2) \cdot \cN^{n_2,n_2-1,\DIII}_\mathrm{Gapless}(H_2)
		= \cN^{n,n-1,\DIII}_\mathrm{Gapless}(H)$.
\item $\BDI \times \AII \to \AII$, \\
\hspace{5mm} 	$(\cN^{n_1,n_1,\BDI}_\mathrm{Gapless}(H_1) \bmod 2) \cdot \cN^{n_2,n_2-1,\AII}_\mathrm{Gapless}(H_2)
		= \cN^{n,n-1,\AII}_\mathrm{Gapless}(H)$.
\item $\BDI \times \CC \to \CC$, \
	$\cN^{n_1,n_1,\BDI}_\mathrm{Gapless}(H_1) \cdot \cN^{n_2,n_2-1,\CC}_\mathrm{Gapless}(H_2)
		= \cN^{n,n-1,\CC}_\mathrm{Gapless}(H)$.
\item $\DD \times \DD \to \DIII$, \\
	$\cN^{n_1,n_1,\DD}_\mathrm{Gapless}(H_1) \cdot (\cN^{n_2,n_2-1,\DD}_\mathrm{Gapless}(H_2) \bmod 2)
		= \cN^{n,n-1,\DIII}_\mathrm{Gapless}(H)$.
\item $\DD \times \DIII \to \AII$, \
	$\cN^{n_1,n_1,\DD}_\mathrm{Gapless}(H_1) \cdot \cN^{n_2,n_2-1,\DIII}_\mathrm{Gapless}(H_2)
		= \cN^{n,n-1,\AII}_\mathrm{Gapless}(H)$.
\item $\DIII \times \DD \to \AII$, \\
\hspace{5mm} 	$\cN^{n_1,n_1,\DIII}_\mathrm{Gapless}(H_1) \cdot (\cN^{n_2,n_2-1,\DD}_\mathrm{Gapless}(H_2) \bmod 2)
		= \cN^{n,n-1,\AII}_\mathrm{Gapless}(H)$.
\item $\CII \times \DD \to \CC$, \
	$\cN^{n_1,n_1,\CII}_\mathrm{Gapless}(H_1) \cdot \cN^{n_2,n_2-1,\DD}_\mathrm{Gapless}(H_2)
		= \cN^{n,n-1,\CC}_\mathrm{Gapless}(H)$.
\item $\CII \times \CC \to \DD$, \
	$\cN^{n_1,n_1,\CII}_\mathrm{Gapless}(H_1) \cdot \cN^{n_2,n_2-1,\CC}_\mathrm{Gapless}(H_2)
		= \cN^{n,n-1,\DD}_\mathrm{Gapless}(H)$.
\end{itemize}
\end{corollary}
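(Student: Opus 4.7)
The plan is to deduce Corollary~\ref{n=k-1} from Theorem~\ref{product} by post-composing with the numerical evaluation maps and exploiting the graded ring structure of $KO_*(\C,\id)$ (Theorem~6.9 of \cite{ABS64}). First, Theorem~\ref{product} gives the identity
\[
\I_\mathrm{Gapless}^{n_1,n_1,\spadesuit_1}(H_1) \cdot \I_\mathrm{Gapless}^{n_2,n_2-1,\spadesuit_2}(H_2) = \I_\mathrm{Gapless}^{n,n-1,\spadesuit}(H)
\]
in $KO_{i(\spadesuit)-1}(C(\T),\tau_\T)$, where the first factor lies in $KO_{i(\spadesuit_1)-1}(\C,\id)$ under the identification $C(\T^0) = \C$. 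Using the natural splitting $KO_i(C(\T),\tau_\T) \cong KO_i(\C,\id) \oplus KO_{i-1}(\C,\id)$ from Proposition~1.5.1 of \cite{Sch93}, the ``strong'' summand $KO_{i(\spadesuit)-2}(\C,\id)$ is where the numerical invariants of Definition~\ref{nglinv} live, via $\csf^{\spadesuit}$ or $\qsf$.

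Next, I would establish the key compatibility claim: under the splitting above, the exterior product intertwines with the graded multiplication of $KO_*(\C,\id)$ on the strong summands. Explicitly, for $a \in KO_p(\C,\id)$ and $b = b_w + b_s \in KO_q(C(\T),\tau_\T)$ with $b_s \in KO_{q-1}(\C,\id)$, the strong component of $a \cdot b$ equals the graded ring product $a \cdot b_s \in KO_{p+q-1}(\C,\id)$. Granting this, each listed identity reduces to an explicit calculation in $KO_*(\C,\id) \cong \Z[\eta,\xi,\beta]/(2\eta,\eta^3,\eta\xi,\xi^2-4\beta)$. The pairs $\BDI \times \DD$, $\BDI \times \CC$, $\CII \times \DD$, $\CII \times \CC$ realize $\Z \cdot \Z \to \Z$; for $\CII \times \CC \to \DD$ in particular, the relation $\xi^2 = 4\beta$ matches the arithmetic identity $(2n)(2m)=4nm$ once one identifies $\cN^\CII = 2n$ with the class $n\xi$. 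The pairs $\BDI \times \DIII$ and $\BDI \times \AII$ become $\Z \cdot \Z_2 \to \Z_2$, with the $\bmod 2$ forced by $2\eta=0$. The pairs $\DD \times \DD$, $\DIII \times \DD$, $\DD \times \DIII$ yield $\Z_2$-valued products governed by the nonvanishing $\eta^2 \in KO_2(\C,\id)$.

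The main obstacle is justifying that the numerical invariants really do equal the strong-summand projections of the $KO$-classes. For the complex-linear spectral flow this follows from its description as the index of the Toeplitz-type boundary map, parallel to the discussion preceding Theorem~\ref{theorem1}. For the $\Z_2$-valued spectral flow used in classes $\DIII$ and $\AII$, the verification requires tracing through the $\Z_2$-homotopy isomorphism between $[(\T,\zeta),\Fred^{\spadesuit_2}]_{\Z_2}$ and $KO_{i(\spadesuit_2)-1}(C(\T),\tau_\T)$ supplied by Corollary~\ref{htpyomega} and the exponential description of boundary maps in Appendix~\ref{Sect.B}; the relation $\csf^\DIII = w_+ + w_-$ from Sect.~\ref{Sect.5.6} confirms that $\qsf$ tracks the strong summand while $w_\pm$ correspond to the two weak ones. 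A parallel argument using the quaternionic Fredholm picture handles the $\CII \times \CC$ case.
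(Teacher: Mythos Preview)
Your proposal is correct and follows essentially the same route as the paper: the paper derives Corollary~\ref{n=k-1} directly from Theorem~\ref{product} together with the graded ring structure of $KO_*(\C,\id)$ (Theorem~6.9 of \cite{ABS64}), and the identification of the numerical invariants with the strong summand is already established in Sect.~\ref{Sect.5.6} (see the sentence after Definition~\ref{nglinv} and Remark~\ref{remweakDIII}). One minor inaccuracy: you write that ``$w_\pm$ correspond to the two weak ones,'' but in fact $KO_2(C(\T),\tau_\T)\cong\Z_2\oplus\Z_2$ has exactly one strong and one weak summand; the relation $\csf^{\DIII}=w_++w_-$ picks out the strong part, while the weak part is detected by either $w_+$ or $w_-$ individually (cf.\ Remark~\ref{remweakDIII}).
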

We also have a similar formula by exchanging $H_1$ and $H_2$ (e.g. pairs like $\DD \times \BDI \to \DD$).
Note that in the case of $\CII \times \CII \to \BDI$ in Corollary~\ref{n=k}, we take the product of two even integers, which is necessarily a multiple of four.
A similar remark also holds in the case of $\CII \times \CC \to \DD$ in Corollary~\ref{n=k-1}.

\appendix
\section{$\Z_2$-Spaces of Self-Adjoint/Skew-Adjoint Fredholm Operators and Boersema--Loring's $K$-theory}\label{Sect.A}
In this Appendix, we collect necessary results and notations used in this paper.
The results have been developed in much generality \cite{At66,Wood,AS69,Ka70,Ku16,Gomi17,BCLR20}, and we contain minimal background for this paper focusing on their relation with Boersema--Loring's $K$-theory \cite{BL16}.
In Appendix~\ref{KOproduct}, we introduce some $\Z_2$-spaces of self-adjoint and skew-adjoint Fredholm operators following \cite{AS69}.
Some proofs for known results are contained simply to fix isomorphisms used in this paper (e.g. the derivation of Table~\ref{producttalbe}).
In Appendix~\ref{A2}, we discuss its relation with Boersema--Loring's $K$-theory.
In Appendix~\ref{Sect.B}, inspired by exponential maps in \cite{Wood,AS69}, we write the boundary maps of the $24$-term exact sequence of $KO$-theory in Boersema--Loring's unitary picture through exponentials.
Some of them are already expressed by exponentials in \cite{BL16}; thus, we consider the remaining cases.
This form of boundary maps is useful when we discuss a relation between our gapped invariants and gapless invariants through boundary maps \cite{BL16,Ku15}.

\subsection{$\Z_2$-Spaces of Self-Adjoint/Skew-Adjoint Fredholm Operators}
\label{KOproduct}

For non-negative integers $k$ and $l$, let $\Cl_{k,l}$ be the Clifford algebra that is an associative algebra with unit over $\R$ generated by $k+l$ elements $e_1, \ldots , e_k$ and $\epsilon_1, \ldots , \epsilon_l$, which satisfy $e_i^2= -1$ $(i = 1, \ldots, k)$ and $\epsilon_j^2=1$ $(j = 1, \ldots, l)$ and anticommute with each other.
The following are well-known Clifford algebra isomorphisms \cite{LM89}.
\begin{lemma}\label{Cliff1}
\begin{enumerate}
\renewcommand{\labelenumi}{(\arabic{enumi})}
\item	$\Cl_{k,l+1} \cong \Cl_{l, k+1}$.
\item $\Cl_{k,l} \otimes \Cl_{1,1} \cong \Cl_{k+1,l+1}$.
\item $\Cl_{k,l} \otimes \Cl_{4,0} \cong \Cl_{k+4,l}$ and $\Cl_{k,l} \otimes \Cl_{0,4} \cong \Cl_{k,l+4}$.
\item $\Cl_{k,l} \otimes \Cl_{8,0} \cong \Cl_{k+8,l}$ and $\Cl_{k,l} \otimes \Cl_{0,8} \cong \Cl_{k,l+8}$.
\end{enumerate}
\end{lemma}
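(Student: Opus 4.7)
The statement is a collection of classical Clifford algebra isomorphisms; my plan is to establish each by writing down an explicit map on generators and verifying the Clifford relations. Throughout, let $e_1,\dots,e_k,\epsilon_1,\dots,\epsilon_l$ denote the standard generators of $\Cl_{k,l}$, with $e_i^2=-1$, $\epsilon_j^2=+1$, and distinct generators anticommuting; by universality it suffices to exhibit anticommuting generators satisfying the right signs in the target algebra, since any assignment of generators to such a family extends to an algebra homomorphism, and a dimension count (both sides have real dimension $2^{k+l}$ in the respective cases) shows it is an isomorphism.

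For (1), inside $\Cl_{k,l+1}$ single out the generator $\epsilon_{l+1}$ and define, for the would-be generators of $\Cl_{l,k+1}$, the elements $\tilde{e}_i = \epsilon_{l+1}\epsilon_i$ $(i=1,\dots,l)$, $\tilde{\epsilon}_j = \epsilon_{l+1} e_j$ $(j=1,\dots,k)$, and $\tilde{\epsilon}_{k+1}=\epsilon_{l+1}$. A direct check gives $\tilde{e}_i^2=-1$, $\tilde{\epsilon}_j^2=+1$, and pairwise anticommutation, yielding the required map $\Cl_{l,k+1}\to\Cl_{k,l+1}$; symmetry of the construction shows it is an isomorphism.

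For (2), let $e',\epsilon'$ generate $\Cl_{1,1}$ and set $\omega = e'\epsilon'$, which is anti-involutive in the sense $\omega^2=1$ while anticommuting with both $e'$ and $\epsilon'$. Inside $\Cl_{k,l}\otimes\Cl_{1,1}$ define
\begin{equation*}
\tilde e_i = e_i\otimes\omega,\quad \tilde\epsilon_j = \epsilon_j\otimes\omega,\quad \tilde e_{k+1} = 1\otimes e',\quad \tilde\epsilon_{l+1} = 1\otimes\epsilon'.
\end{equation*}
The squares are $-1,+1,-1,+1$ respectively, and anticommutation follows from $\omega e'=-e'\omega$, $\omega\epsilon'=-\epsilon'\omega$ and commutativity of the tensor factors on the first $k+l$ generators. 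This gives a homomorphism $\Cl_{k+1,l+1}\to\Cl_{k,l}\otimes\Cl_{1,1}$, which is an isomorphism by dimension.

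For (3), use the volume element $\omega_{4,0} = e_1'e_2'e_3'e_4'$ in $\Cl_{4,0}$, which satisfies $\omega_{4,0}^2 = +1$ and is central (since $4$ is even, it anticommutes with each generator an even number of times via reorderings — more precisely one checks $\omega_{4,0}e_i'=e_i'\omega_{4,0}$ because moving $e_i'$ past the product picks up three sign changes and one self-product, giving an overall $+$). Embed $\Cl_{k,l}\hookrightarrow \Cl_{k+4,l}$ by $e_i\mapsto e_i\otimes\omega_{4,0}$, $\epsilon_j\mapsto\epsilon_j\otimes\omega_{4,0}$ (well, more cleanly, view the target as generated by the old $e$'s, the new $e_{k+1}',\dots,e_{k+4}'$, and the $\epsilon$'s, then send $e_i\mapsto e_i\cdot\omega$ where $\omega=e_{k+1}'e_{k+2}'e_{k+3}'e_{k+4}'$); centrality and $\omega^2=1$ yield the correct relations, and the obvious map $\Cl_{4,0}\to\Cl_{k+4,l}$ together with this completes an isomorphism onto the tensor product. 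The case $\Cl_{0,4}$ is analogous with $\omega_{0,4}^2=+1$. Finally, (4) follows immediately by iterating (3): $\Cl_{k,l}\otimes\Cl_{8,0}\cong \Cl_{k,l}\otimes\Cl_{4,0}\otimes\Cl_{4,0}\cong\Cl_{k+4,l}\otimes\Cl_{4,0}\cong\Cl_{k+8,l}$, and similarly on the $\epsilon$-side.

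The only subtle point is getting the signs right in (3): one must verify that $\omega_{4,0}^2=+1$ (rather than $-1$) so that the tensor factors in $\Cl_{k,l}\otimes\Cl_{4,0}$ do not pick up extra signs under the embedding, and that $\omega_{4,0}$ is central in $\Cl_{4,0}$ (which is special to $p-q\equiv 0\pmod 4$). This is the step I would double-check most carefully; everything else is a mechanical verification using the defining anticommutation relations.
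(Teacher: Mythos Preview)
Your approach is essentially the same as the paper's for (1) and (2): exhibit explicit generators satisfying the target relations and invoke universality plus a dimension count. Your derivation of (4) from (3) by iteration is a legitimate shortcut; the paper instead writes down $\omega_{0,8}$ directly.

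There is, however, a genuine error in your justification of (3). You claim that $\omega_{4,0}=e_1'e_2'e_3'e_4'$ is central in $\Cl_{4,0}$. It is not: for an even number of generators the volume element \emph{anticommutes} with each generator. Concretely,
\[
e_1'\,\omega_{4,0}=e_1'^2 e_2'e_3'e_4'=-e_2'e_3'e_4',\qquad
\omega_{4,0}\,e_1'=(-1)^3 e_1'^2 e_2'e_3'e_4'=+e_2'e_3'e_4',
\]
so $e_1'\omega_{4,0}=-\omega_{4,0}e_1'$. (Centrality of the volume element holds precisely when the total number of generators is odd.) This matters because your construction actually \emph{requires} anticommutation, not centrality: you need $e_i\otimes\omega_{4,0}$ to anticommute with $1\otimes e_j'$, and that happens exactly because $\omega_{4,0}e_j'=-e_j'\omega_{4,0}$. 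Had $\omega_{4,0}$ been central, the ``old'' and ``new'' generators would commute, and the map would fail to produce Clifford relations. So your formula $e_i\mapsto e_i\otimes\omega_{4,0}$, $\epsilon_j\mapsto\epsilon_j\otimes\omega_{4,0}$, $e_{k+m}'\mapsto 1\otimes e_m'$ is correct (and matches the paper's), but the reason it works is the opposite of what you wrote. Replace ``central'' by ``anticommutes with each $e_j'$'' and the argument goes through; the check $\omega_{4,0}^2=+1$ is correct and is the other ingredient you need.
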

\begin{proof}
\begin{enumerate}
\item[(1)] Let $e_1, \ldots, e_k$ and $\epsilon_1, \ldots, \epsilon_{l+1}$ be generators of the Clifford algebra $\Cl_{k,l+1}$.
Let $\tilde{e}_i = \epsilon_{i+1} \epsilon_{1}$ $(i = 1, \ldots, l)$, $\tilde{\epsilon}_{1} = \epsilon_{1}$ and $\tilde{\epsilon}_{i} = e_{i-1} \epsilon_{1}$ $(i = 2, \ldots, k+1)$.
Then, $\tilde{e}_1, \ldots, \tilde{e}_l$ and $\tilde{\epsilon}_{1}, \ldots, \tilde{\epsilon}_{k+1}$ correspond to generators of the Clifford algebra $\Cl_{l, k+1}$.

\item[(2)] Let $e_1, \ldots, e_k$ and $\epsilon_1, \ldots, \epsilon_{l}$ be generators of the Clifford algebra $\Cl_{k,l}$, and let $e_1$ and $\epsilon_1$ be those of $\Cl_{1,1}$.
We write $\omega_{1,1}$ for $e'_1 \epsilon'_1 \in \Cl_{1,1}$.
Then, $\tilde{e}_{i} = e_i \otimes \omega_{1,1}$ $(i = 1, \ldots, k)$, $\tilde{e}_{k+1} = 1 \otimes e'_1$, $\tilde{\epsilon}_{i} =  \epsilon_i \otimes \omega_{1,1}$ $(i = 1, \ldots, l)$ and $\tilde{\epsilon}_{l} = 1 \otimes \epsilon'_1$ correspond to generators of the Clifford algebra $\Cl_{k+1,l+1}$.

\item[(3)] We show that $\Cl_{k,l} \otimes \Cl_{0,4} \cong \Cl_{k,l+4}$; the other is proved similarly.
Let $e_1, \ldots, e_k$ and $\epsilon_1, \ldots, \epsilon_{l}$ be generators of the Clifford algebra $\Cl_{k,l}$, and let
$\epsilon'_1$, $\epsilon'_2$, $\epsilon'_3$, and $\epsilon'_4$ be those of $\Cl_{0,4}$.
We write $\omega_{0,4}$ for $-\epsilon'_1 \epsilon'_2 \epsilon'_3 \epsilon'_4 \in \Cl_{0,4}$.
Then, $\tilde{e}_{i} = e_i \otimes \omega_{0,4}$ $(i = 1, \ldots, k)$,  $\tilde{\epsilon}_i = \epsilon_i \otimes \omega_{0,4}$ $(i = 1, \ldots, l)$ and $\tilde{\epsilon}_{i} = 1 \otimes \epsilon'_i$ $(i = 1, \ldots, 4)$ correspond to generators of the algebra $\Cl_{k,l+4}$.

\item[(4)] We show that $\Cl_{k,l} \otimes \Cl_{0,8} \cong \Cl_{k,l+8}$; the other is proved similarly.
Let $e_1, \ldots, e_k$ and $\epsilon_1, \ldots, \epsilon_{l}$ be generators of $\Cl_{k,l}$, and let
$\epsilon'_1, \ldots, \epsilon'_8$ be those of $\Cl_{0,8}$.
We write $\omega_{0,8}$ for $\epsilon'_1 \cdots \epsilon'_8 \in \Cl_{0,8}$.
Then, $\tilde{e}_{i} = e_i \otimes \omega_{0,8}$ $(i = 1, \ldots, k)$, $\tilde{\epsilon}_i = \epsilon_i \otimes \omega_{0,8}$ $(i = 1, \ldots, l)$ and $\tilde{\epsilon}_{i} = 1 \otimes \epsilon'_i$ $(i = 1, \ldots, 8)$ correspond to generators of the algebra $\Cl_{k,l+8}$.\qedhere
\end{enumerate}
\end{proof}

Let $W$ be a (ungraded) complex left $\Cl_{k,l}$-module.
We say that $W$ is a {\em (ungraded) real (resp. quaternionic) $\Cl_{k,l}$-module}\footnote{Note that the ``real $\Z_2$-graded $\mathrm{Cliff}(\R^{k,l})$-module'' introduced in \cite{At66} is the same as the (ungraded) real $\Cl_{l,k+1}$-module discussed in this paper.}
if $W$ is equipped with an antilinear map $r \colon W \to W$ (resp. $q \colon W \to W$), which commutes with the $\Cl_{k,l}$-action and satisfies $r^2 = 1$ (resp. $q^2=-1$).
We call this $r$ (resp. $q$) the {\em real (resp. quaternionic) structure} on the Clifford module.
Since a real (resp. quaternionic) $\Cl_{k,l}$-module is the same thing as a module of $\Cl_{k,l} \otimes \Cl_{1,1} \cong \Cl_{k+1,l+1}$ (resp. $\Cl_{k,l} \otimes \Cl_{2,0} \cong \Cl_{l+2,k}$) over $\R$,
the algebra $\Cl_{k,l}$ has one inequivalent irreducible real or quaternionic module when $k - l \not\equiv 3 \bmod 4$ and has two when $k - l \equiv 3 \bmod 4$.
\begin{lemma}\label{Cliff3}\
\begin{enumerate}
\renewcommand{\labelenumi}{(\arabic{enumi})}
\item Let $\Delta_{1,1}$ be a complex irreducible representation of $\Cl_{1,1}$. There exists a real structure $r_{1,1}$ on $\Delta_{1,1}$ that commutes with the Clifford action.
\item Let $\Delta_{0,4}$ (resp. $\Delta_{4,0}$) be a complex irreducible representation of $\Cl_{0,4}$ (resp. $\Cl_{4,0}$). There exists a quaternionic structure $q_{0,4}$ (resp. $q_{4,0}$) on $\Delta_{0,4}$ (resp. $\Delta_{4,0}$) that commutes with the Clifford action.
\item Let $\Delta_{0,8}$ (resp. $\Delta_{8,0}$) be a complex irreducible representation of $\Cl_{0,8}$ (resp. $\Cl_{8,0}$). There exists a real structure $r_{0,8}$ (resp. $r_{8,0}$) on $\Delta_{0,8}$ (resp. $\Delta_{8,0}$) that commutes with the Clifford action.
\end{enumerate}
\end{lemma}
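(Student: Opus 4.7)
The plan is to derive each case from the classification of real Clifford algebras, which identifies $\Cl_{1,1} \cong M_2(\R)$, $\Cl_{0,4} \cong \Cl_{4,0} \cong M_2(\mathbb{H})$, and $\Cl_{0,8} \cong \Cl_{8,0} \cong M_{16}(\R)$. Whenever the real algebra has the form $M_n(\R)$, its complex irreducible module is $\Delta \cong \C^n$ and componentwise complex conjugation in a basis realizing the Clifford generators as real matrices defines a commuting real structure. When the algebra has the form $M_n(\mathbb{H})$, the complex irreducible module is $\Delta \cong \C^{2n}$ and a commuting quaternionic structure is supplied by the blockwise antilinear map $(v_1, v_2) \mapsto (-\overline{v_2}, \overline{v_1})$ on each pair of coordinates arising from the standard embedding $\mathbb{H} \hookrightarrow M_2(\C)$ sending $a + bj \mapsto \bigl(\begin{smallmatrix} a & -\bar b \\ b & \bar a \end{smallmatrix}\bigr)$.

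To carry this out, for (1) I would take $\Delta_{1,1} = \C^2$ with real generators such as $e_1 = \bigl(\begin{smallmatrix} 0 & 1 \\ -1 & 0 \end{smallmatrix}\bigr)$ and $\epsilon_1 = \bigl(\begin{smallmatrix} 1 & 0 \\ 0 & -1 \end{smallmatrix}\bigr)$, and let $r_{1,1}$ be complex conjugation. For (2) I would realize $\Cl_{0,4}$ and $\Cl_{4,0}$ on $\Delta_{0,4} = \Delta_{4,0} = \C^4$ via explicit generators built from tensor products of Pauli matrices so that the action factors through the embedding $M_2(\mathbb{H}) \hookrightarrow M_4(\C)$ above, and take $q_{0,4}$ and $q_{4,0}$ to be the associated blockwise antilinear map; a direct computation then shows $q^2 = -1$ together with commutativity with the embedded $M_2(\mathbb{H})$-action, hence with each Clifford generator. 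For (3), the quickest route is the same recipe applied directly to $\Cl_{0,8} \cong \Cl_{8,0} \cong M_{16}(\R)$ acting on $\Delta = \C^{16}$ with real generators, taking $r_{0,8}$ and $r_{8,0}$ to be componentwise complex conjugation; alternatively, Lemma~\ref{Cliff1}(4) permits an inductive construction as a tensor product of factors from (1) and (2), exploiting that tensoring two quaternionic structures produces a real structure since $(q \otimes q')^2 = (-1)(-1) = +1$.

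The main obstacle will be the straightforward but finicky bookkeeping in the inductive tensor-product route: one must track the volume-element twists (as in the proof of Lemma~\ref{Cliff1}) used to identify the generators of $\Cl_{k,l+4}$ or $\Cl_{k,l+8}$ inside $\Cl_{k,l} \otimes \Cl_{0,4}$ or $\Cl_{k,l} \otimes \Cl_{0,8}$, and verify that the prospective real or quaternionic structure commutes with each twisted generator. Since those twisted generators are polynomials in the original generators (with which the chosen $q$'s and $r$'s commute by construction) together with the central volume element of each factor, the verification ultimately reduces to direct sign computations and introduces no essential difficulty.
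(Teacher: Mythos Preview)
Your approach is correct and is the standard argument via the structure theory of real Clifford algebras. Note, however, that the paper does not actually give its own proof of this lemma: it simply cites Friedrich's \emph{Dirac Operators in Riemannian Geometry} (the reference \cite{Fr00}). So there is nothing to compare against; you have supplied more detail than the paper does, and your outline would serve perfectly well as a self-contained proof.
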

For the proof of this lemma, see \cite{Fr00}, for example.
For a $\Z_2$-space $(X, \zeta)$ with two $\Z_2$-fixed points $x_0, x_1 \in X^\zeta$, we write $P(X; x_0, x_1)$ for the path space starting from $x_0$ and ending at $x_1$, that is, the space of continuous maps $f \colon [0,1] \to X$ satisfying $f(0) = x_0$ and $f(1) = x_1$ equipped with the compact-open topology.
On this space, we consider an involution, for which we also write $\zeta$ by abuse of notation, defined as $(\zeta(f))(t) = \zeta(f(t))$ for $t$ in $[0,1]$, and obtain a $\Z_2$-space $(P(X; x_0, x_1), \zeta)$.
When $x_0=x_1$, we write $\Omega_{x_0}X$ for $P(X; x_0, x_0)$, which is the based loop space of $X$ with the base point $x_0$.
\begin{remark}\label{CW}
Banach $\Z_2$-spaces and its open $\Z_2$-subspaces are $\Z_2$-absolute neighborhood retracts \cite{Anto87}, and have the homotopy type of $\Z_2$-CW complexes \cite{Kwa80}.
The path spaces and loop spaces we discuss in the following also have the homotopy type of $\Z_2$-CW complexes \cite{Wan80a}.
By the equivariant Whitehead theorem, weak $\Z_2$-homotopy equivalences between these spaces are $\Z_2$-homotopy equivalences \cite{Mat71a,AE09}.
\end{remark}

Let $\mathcal{V}$ be a separable infinite-dimensional complex Hilbert space.
Let $\B(\mathcal{V})$ be the space of bounded complex linear operators on $\mathcal{V}$ equipped with the norm topology.
Let $GL(\mathcal{V})$, $U(\mathcal{V})$, $\Fred(\mathcal{V})$ and $\K(\mathcal{V})$ be subspaces of $\B(\mathcal{V})$ consisting of invertible, unitary, Fredholm and compact operators on $\mathcal{V}$, respectively.
We assume that our Hilbert space $\mathcal{V}$ has a real structure $r$ or a quaternionic structure $q$, that is, an antiunitary operator on $\mathcal{V}$ satisfying $r^2=1$ or $q^2 = -1$, respectively.
Correspondingly, the space $\B(\mathcal{V})$ has an (antilinear) involution $\fr = \Ad_r$ or $\fq = \Ad_q$.
These involutions induce involutions on $GL(\mathcal{V})$, $U(\mathcal{V})$, $\Fred(\mathcal{V})$ and $\K(\mathcal{V})$, for which we also write $\fr$ or $\fq$.
We write $a$ for $r$ or $q$ and $\fa$ for $\fr$ or $\fq$.
We also assume that there is a complex linear action of the Clifford algebra $\Cl_{k,l}$ on the Hilbert space $\mathcal{V}$ that commutes with the real or the quaternionic structure.
For an element $v \in \Cl_{k,l}$, we also write $v$ for its action on $\mathcal{V}$, for simplicity.
When $k-l \equiv 3 \bmod 4$, we further assume that each of the two inequivalent irreducible real or quaternionic representations of $Cl_{k,l}$ has infinite multiplicity.
In the following, we discuss the subspaces of $\B(\mathcal{V})$; we may abbreviate the Hilbert space $\mathcal{V}$ from its notation when it is clear from the context.
When the Hilbert space $\mathcal{V}$ is such a $\Cl_{k,l}$-module,
let $\B^{(\underline{k+1},l)}_\ska$ (resp. $\B^{(k,\underline{l+1})}_\sa$) be the subspace of $\B(\mathcal{V})$ consisting of skew-adjoint (resp. self-adjoint) operators $A$ on $\mathcal{V}$ satisfying $e_i A = -A e_i$ for $i = 1, \ldots, k$ and $\epsilon_j A = - A \epsilon_j$ for $j=1, \ldots, l$.
Let $\Fred^{(\underline{k},l)}_\ska = \Fred \cap \B^{(\underline{k},l)}_\ska$ and $\Fred^{(k,\underline{l})}_\sa = \Fred \cap \B^{(k,\underline{l})}_\sa$.
The involution $\fa$ on $\B(\mathcal{V})$ induces involutions on $\Fred^{(\underline{k},l)}_\ska$ and $\Fred^{(k,\underline{l})}_\sa$ for which we also write $\fa$.
Consider the space $\Fred^{(\underline{k},l)}_\ska$, and let $\Upsilon = e_1 \cdots e_{k-1} \epsilon_1 \cdots \epsilon_l$.
When $k-l$ is odd, the space $\Fred^{(\underline{k},l)}_\ska$ is decomposed into three components $\Fred^{(\underline{k},l)}_+$, $\Fred^{(\underline{k},l)}_-$ and $\Fred^{(\underline{k},l)}_*$ corresponding to whether the following element is essentially positive, essentially negative or neither:
$i^{-1}\Upsilon A$ when $k-l \equiv 1 \bmod 4$ and $\Upsilon A$ when $k-l \equiv 3 \bmod 4$ for $A \in \Fred^{(\underline{k},l)}_\ska$.
As in \cite{AS69}, each of these three components is nonempty.
When $k-l \equiv 1 \bmod 4$, the involution $\fa$ maps $\Fred^{(\underline{k},l)}_\pm$ to $\Fred^{(\underline{k},l)}_\mp$ (double-sign corresponds), and $\Fred^{(\underline{k},l)}_*$ is closed under the action of $\fa$.
When $k-l \equiv 3 \bmod 4$, each of the three components is closed under the action of $\fa$.
The space $\Fred^{(k,\underline{l})}_\sa$ is also decomposed into three components in the same way, except that we take $e_1 \cdots e_k \epsilon_1 \cdots \epsilon_{l-1}$ for $\Upsilon$ in this case, and we define the space $\Fred^{(k,\underline{l})}_*$ when $k-l$ is odd.
When $k-l$ is even, we set $\Fred^{(\underline{k},l)}_* = \Fred^{(\underline{k},l)}_\ska$ and $\Fred^{(k,\underline{l})}_* = \Fred^{(k,\underline{l})}_\sa$.
Summarizing, we have the following $\Z_2$-spaces:
\begin{equation}\label{modelFred}
	(\Fred^{(\underline{k},l)}_*, \fr), \
	(\Fred^{(k,\underline{l})}_*, \fr), \
	(\Fred^{(\underline{k},l)}_*, \fq), \
	(\Fred^{(k,\underline{l})}_*, \fq).
\end{equation}
\begin{proposition}\label{htpyhomeo}
The following $\Z_2$-homeomorphisms exist.
\begin{enumerate}
\renewcommand{\labelenumi}{(\arabic{enumi})}
\item $(\Fred^{(\underline{k},{l})}_*, \fa) \cong (\Fred^{(\underline{k+1},{l+1})}_*, \fa)$ and
		$(\Fred^{(k,\underline{l})}_*, \fa) \cong (\Fred^{(k+1,\underline{l+1})}_*, \fa)$,
\item $(\Fred^{(\underline{k},{l})}_*, \fa) \cong (\Fred^{(\underline{k+4},{l})}_*, \widetilde{\fa})$ and
		$(\Fred^{(k,\underline{l})}_*, \fa) \cong (\Fred^{(k+4,\underline{l})}_*,  \widetilde{\fa})$,
\item $(\Fred^{(\underline{k},{l})}_*, \fa) \cong (\Fred^{(\underline{k},{l+4})}_*,  \widetilde{\fa})$ and
		$(\Fred^{(k,\underline{l})}_*, \fa) \cong (\Fred^{(k,\underline{l+4})}_*,  \widetilde{\fa})$,
\item $(\Fred^{(\underline{k},{l})}_*, \fa) \cong (\Fred^{(\underline{k+8},{l})}_*, \fa)$ and
		$(\Fred^{(k,\underline{l})}_*, \fa) \cong (\Fred^{(k+8,\underline{l})}_*, \fa)$,
\item $(\Fred^{(\underline{k},{l})}_*, \fa) \cong (\Fred^{(\underline{k},{l+8})}_*, \fa)$ and
		$(\Fred^{(k,\underline{l})}_*, \fa) \cong (\Fred^{(k,\underline{l+8})}_*, \fa)$,
\item $(\Fred^{(\underline{k+1},{l+1})}_*, \fa) \cong (\Fred^{(l,\underline{k+2})}_*, \fa)$,
\end{enumerate}
where $\widetilde{\fa} = \fq$ when $\fa = \fr$ and $ \widetilde{\fa} = \fr$ when $\fa = \fq$.
\end{proposition}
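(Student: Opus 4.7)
My plan is to realize each of the six $\Z_2$-homeomorphisms by an explicit construction dictated by the Clifford algebra isomorphisms of Lemma~\ref{Cliff1}, combined with the distinguished antilinear structures on the irreducible complex Clifford modules supplied by Lemma~\ref{Cliff3}. For cases (1)--(5), I use the decomposition $\Cl_{k+p,l+q} \cong \Cl_{k,l} \otimes \Cl_{p,q}$ with $(p,q) = (1,1), (4,0), (0,4), (8,0), (0,8)$ respectively. Fix the irreducible complex $\Cl_{p,q}$-module $\Delta_{p,q}$ carrying the antilinear structure $a'$ from Lemma~\ref{Cliff3}, which is real in cases (1), (4), (5) and quaternionic in cases (2), (3). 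Realize the Hilbert space on the right-hand side as $\mathcal{V} \otimes \Delta_{p,q}$ with antilinear structure $a \otimes a'$; because $(a \otimes a')^2 = a^2 \cdot (a')^2$, this tensor product is of the same type as $a$ exactly in cases (1), (4), (5), matching the pattern $\fa \mapsto \fa$ there, and of opposite type in cases (2), (3), matching $\fa \mapsto \widetilde{\fa}$.

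The homeomorphism itself will be $A \mapsto A \otimes \omega_{p,q}$, where $\omega_{p,q}$ is the complex volume element of $\Cl_{p,q}$ normalized so that $\omega_{p,q}^2 = 1$ (with the sign $-\epsilon'_1\epsilon'_2\epsilon'_3\epsilon'_4$ for $\Cl_{0,4}$, as in the proof of Lemma~\ref{Cliff1}(3)). In each of the five cases a direct check shows $\omega_{p,q}$ is self-adjoint, anticommutes with every generator of $\Cl_{p,q}$, and is fixed by $\Ad_{a'}$. From these properties it follows that $A \otimes \omega_{p,q}$ is a skew-adjoint Fredholm operator, anticommutes with each new generator $\tilde e_i$, $\tilde \epsilon_j$ of $\Cl_{k+p,l+q}$ constructed in the proof of Lemma~\ref{Cliff1}, and satisfies $\Ad_{a \otimes a'}(A \otimes \omega_{p,q}) = \fa(A) \otimes \omega_{p,q}$, which gives $\Z_2$-equivariance. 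For the inverse, since $p+q$ is even in every case, any target operator $B$ commutes with $1 \otimes \omega_{p,q}$, hence decomposes along the $\pm 1$-eigenspace splitting of this operator, and restriction to the $+1$-eigenspace recovers the desired operator after a fixed identification of that eigenspace with a standard $\Cl_{k,l}$-module of matching infinite multiplicities. Continuity in both directions is immediate.

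For (6), I apply Lemma~\ref{Cliff1}(1) with the substitution $(k,l+1) \mapsto (k+1,l+1)$ to obtain $\Cl_{k+1,l+1} \cong \Cl_{l,k+2}$, with new generators $\tilde e_i = \epsilon_{i+1}\epsilon_1$ for $i = 1, \dots, l$ and $\tilde \epsilon_1 = \epsilon_1$, $\tilde \epsilon_i = e_{i-1}\epsilon_1$ for $i = 2, \dots, k+2$, all viewed as acting on the same underlying Hilbert space. The homeomorphism is then $A \mapsto \tilde A := A \epsilon_1$: using $\epsilon_1^* = \epsilon_1$, $\epsilon_1^2 = 1$, and $A \epsilon_1 = -\epsilon_1 A$, one checks that $\tilde A$ is self-adjoint and Fredholm, and a short computation verifies that $\tilde A$ anticommutes with each new generator. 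Because $\epsilon_1$ commutes with $a$, the map is $\Z_2$-equivariant, and the same assignment $\tilde A \mapsto \tilde A \epsilon_1$ provides its inverse.

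The principal technical obstacle will be preservation of the $*$-subscript, which becomes a nontrivial condition precisely when $k - l$ is odd and requires tracking the new volume element $\Upsilon'$ through each construction. For the tensor product cases, a direct expansion expresses $\Upsilon'(A \otimes \omega_{p,q})$ as $\Upsilon A$ tensored with a fixed self-adjoint unitary on $\Delta_{p,q}$ (up to an overall sign), so the essential spectra correspond and the three-component decompositions match; for (6), the analogous relation $\Upsilon' = \pm \Upsilon \epsilon_1$ yields the same matching. Carrying out this sign and volume-element bookkeeping uniformly across all the cases is the most delicate part of the argument.
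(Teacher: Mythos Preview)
Your approach is essentially the same as the paper's: for (1)--(5) the paper also uses the map $A \mapsto A \otimes \omega_{p,q}$ on $\mathcal{V} \otimes \Delta_{p,q}$ with antilinear structure $a \otimes a'$ coming from Lemma~\ref{Cliff3}, and for (6) it uses $A \mapsto A\epsilon_1$. You supply more detail than the paper does---in particular your discussion of the inverse (via commutation with $1 \otimes \omega_{p,q}$) and of the preservation of the $*$-subscript are points the paper leaves implicit---but the underlying construction is identical.
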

\begin{proof}
Once the Clifford module structure on the left-hand side of these homeomorphisms is fixed, that on the right-hand side is given following the isomorphisms of Clifford algebras in Lemma~\ref{Cliff1}.
By using Lemma~\ref{Cliff3}, the $\Z_2$-homeomorphisms are given as follows.
\begin{enumerate}
\item[(1)] The map $(\Fred^{(k,\underline{l})}_*(\mathcal{V}), \Ad_a) \to (\Fred^{(k+1,\underline{l+1})}_*(\mathcal{V} \otimes \Delta_{1,1} ), \Ad_{a \otimes r_{1,1} })$ given by $A \mapsto A \otimes \omega_{1,1}$ is a $\Z_2$-homeomorphism
The other one is proved similarly.

\item[(3)] The map $(\Fred^{(k,\underline{l})}_*(\mathcal{V}), \Ad_a) \to (\Fred^{(k,\underline{l+4})}_*(\mathcal{V} \otimes \Delta_{0,4}),  \Ad_{a \otimes q_{0,4}})$ given by $A \mapsto A \otimes \omega_{0,4}$ is a $\Z_2$-homeomorphism. The other one and (2), (4) and (5) follow in a similar way.

\item[(6)] The map $(\Fred^{(\underline{k+1},{l+1})}_*(\mathcal{V}), \fa) \to (\Fred^{(l,\underline{k+2})}_*(\mathcal{V}), \fa)$ given by $A \mapsto A \epsilon_{1}$ is a $\Z_2$-homeomorphism. \qedhere
    \end{enumerate}
\end{proof}

\begin{proposition}\label{htpymain}
The following $\Z_2$-homotopy equivalences exist.
\begin{enumerate}
\renewcommand{\labelenumi}{(\arabic{enumi})}
\item $(\Fred^{(\underline{k+1},l)}_*, \fa) \simeq (\Omega_{e_k}\Fred^{(\underline{k},l)}_*, \fa)$, for $k \geq 1$ and $l \geq 0$.
\item $(\Fred^{(k,\underline{l+1})}_*, \fa) \simeq (\Omega_{\epsilon_l}\Fred^{(k,\underline{l})}_*, \fa)$, for $k \geq 0$ and $l \geq 1$.
\item $(\Fred^{(\underline{1},0)}_*, \fa) \simeq (\Omega_{1}\Fred, \fa)$.
\end{enumerate}
\end{proposition}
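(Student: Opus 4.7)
The plan is to adapt Atiyah–Singer's argument (Theorem B of \cite{AS69}) to the $\Z_2$-equivariant setting; equivariance comes from the real (or quaternionic) structure $a$ on $\mathcal{V}$ commuting with the Clifford action, so in particular $\fa(e_k)=e_k$ and $\fa(\epsilon_l)=\epsilon_l$, placing the desired base points in the $\Z_2$-fixed locus. I describe the plan for (1); (2) is analogous after interchanging the roles of the $e$- and $\epsilon$-generators, and (3) is the base case in which $e_k$ is replaced by the identity $1$ and the target loop space is taken in the full Fredholm space.

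For (1), I would define $\alpha\colon \Fred^{(\underline{k+1},l)}_* \to \Omega_{e_k}\Fred^{(\underline{k},l)}_*$ by the rotational formula
\begin{equation*}
\alpha(A)(\theta) \;=\; \cos(2\pi\theta)\,e_k \;+\; \sin(2\pi\theta)\,A,\qquad \theta\in[0,1].
\end{equation*}
Using the anticommutation $Ae_k+e_kA=0$ one computes $\alpha(A)(\theta)^2=-\cos^2(2\pi\theta)+\sin^2(2\pi\theta)\,A^2$, from which one reads off: $\alpha(A)(\theta)$ is skew-adjoint and Fredholm for every $\theta$; anticommutation with the remaining generators $e_1,\dots,e_{k-1},\epsilon_1,\dots,\epsilon_l$ is inherited from both $e_k$ and $A$; and $\alpha(A)(0)=\alpha(A)(1)=e_k$. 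When $k-l$ is such that the $*$-component condition is nontrivial, path-connectedness of the $*$-component in $\theta$ keeps the loop inside $\Fred^{(\underline{k},l)}_*$. Equivariance $\fa\circ\alpha=\alpha\circ\fa$ is immediate from $\fa(e_k)=e_k$ and linearity of the formula in $A$, and continuity in the operator norm is clear.

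The main step is proving that $\alpha$ is a $\Z_2$-homotopy equivalence. I plan the standard path-space fibration argument in its $\Z_2$-equivariant form: factor $\alpha$ through the half-loop map $\beta\colon A\mapsto\bigl(\theta\mapsto\cos(\pi\theta)e_k+\sin(\pi\theta)A\bigr)$ landing in the $\Z_2$-space $P_{e_k,-e_k}\Fred^{(\underline{k},l)}_*$ of paths from $e_k$ to $-e_k$, and identify $\alpha$ with the concatenation of $\beta(A)$ with the canonical $\Z_2$-equivariant return path $\theta\mapsto-\cos(\pi\theta)e_k$. The endpoint evaluation
\begin{equation*}
\mathrm{ev}_{0,1}\colon P\Fred^{(\underline{k},l)}_* \longrightarrow \Fred^{(\underline{k},l)}_*\times\Fred^{(\underline{k},l)}_*
\end{equation*}
is a Serre $\Z_2$-fibration with $\Z_2$-contractible total space (the contraction $(\gamma,s)\mapsto(t\mapsto\gamma((1-s)t))$ is manifestly equivariant), so the fiber over $(e_k,-e_k)$ computes the loop space $\Omega_{e_k}\Fred^{(\underline{k},l)}_*$ up to a shift. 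What remains is to identify this fiber, $\Z_2$-equivariantly, with $\Fred^{(\underline{k+1},l)}_*$: the map $\beta$ provides the natural embedding, and its homotopy inverse will be constructed by $\Z_2$-equivariant functional calculus that straightens an arbitrary Fredholm path from $e_k$ to $-e_k$ to the rotational form above, using the midpoint tangent as the anticommuting skew-adjoint Fredholm operator. Since every space involved has the homotopy type of a $\Z_2$-CW complex by Remark~\ref{CW}, the equivariant Whitehead theorem will promote the resulting weak $\Z_2$-equivalence to a genuine $\Z_2$-homotopy equivalence.

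The hardest part is producing, equivariantly, the homotopy inverse realizing an arbitrary path from $e_k$ to $-e_k$ in $\Fred^{(\underline{k},l)}_*$ as the image under $\beta$ of a skew-adjoint Fredholm operator anticommuting with $e_k$. Non-equivariantly this is the content of Atiyah–Singer's core straightening argument; the extra work here is to verify that every functional-calculus normalization can be chosen to commute with $a$. This should go through because the spectral projections involved are norm limits of polynomials in the path operator (which is itself $\fa$-equivariantly valued) and in the Clifford generators (which commute with $a$), so naturality forces the required equivariance. Once this step is carried out, the cases (2) and (3) are handled by the same argument with $\epsilon_l$ or $1$ in place of $e_k$; in case (3) the target loop space is in the full $\Fred$ and the computation $\alpha(A)(\theta)^*\alpha(A)(\theta)=\cos^2(2\pi\theta)-\sin^2(2\pi\theta)A^2$ again secures Fredholmness of every term in the loop.
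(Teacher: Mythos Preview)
Your overall architecture---the half-loop map $\beta(A)(\theta)=\cos(\pi\theta)e_k+\sin(\pi\theta)A$ into the path space $P(\Fred^{(\underline{k},l)}_*;e_k,-e_k)$, followed by concatenation with a fixed $\Z_2$-invariant return path and an appeal to Remark~\ref{CW} plus the equivariant Whitehead theorem---matches the paper's scheme exactly (this is Proposition~\ref{htpymain2}). The paper likewise reduces Proposition~\ref{htpymain} to showing that this half-loop map is a $\Z_2$-homotopy equivalence.

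The gap is in your proposed mechanism for the hard step. You write that the homotopy inverse to $\beta$ is obtained by ``straightening an arbitrary Fredholm path from $e_k$ to $-e_k$ \ldots\ using the midpoint tangent as the anticommuting skew-adjoint Fredholm operator,'' and you describe this as ``the content of Atiyah--Singer's core straightening argument.'' This is not what Atiyah--Singer do, and it does not work: for a generic path $\gamma$ in $\Fred^{(\underline{k},l)}_*$ from $e_k$ to $-e_k$, the derivative $\gamma'(1/2)$ (when it even exists) is just some skew-adjoint bounded operator anticommuting with $e_1,\dots,e_{k-1},\epsilon_1,\dots,\epsilon_l$; there is no reason it is Fredholm, and no reason it anticommutes with $e_k$. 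Functional calculus on $\gamma$ will not manufacture an extra anticommutation relation that was not there.

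The paper's route, following \cite{AS69}, is genuinely different from what you sketch. One first deformation-retracts $\Fred^{(\underline{k+1},l)}_*$ onto the subspace $F^{(\underline{k+1},l)}_*$ of norm-one operators with essential spectrum $\{\pm i\}$, and then shows that the exponential map $p_1(A)=e_k\exp(\pi Ae_k)$ from $F^{(\underline{k+1},l)}_*$ to the compact-perturbation unitaries $-U^{(\underline{k},l)}_\cpt$ is a $\Z_2$-homotopy equivalence (Proposition~\ref{htpyequiv}). That step is proved not by writing down an inverse but by a quasifibration argument with contractible fibers, run both on $p_1$ and on its restriction to the $\fa$-fixed sets; the identification of $-U^{(\underline{k},l)}_\cpt$ with the loop space then uses Kuiper's theorem. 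You should replace the ``midpoint tangent'' paragraph with this exponential-plus-quasifibration argument.
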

Proposition~\ref{htpymain} is proved as in \cite{AS69}.
In what follows, we outline its proof since some spaces introduced there are of our interest.

\begin{proposition}\label{htpymain2}
The following maps are $\Z_2$-homotopy equivalences.
\begin{enumerate}
\renewcommand{\labelenumi}{(\arabic{enumi})}
\item $\alpha_1 \colon (\Fred^{(\underline{k+1},l)}_*, \fa) \to (P(\Fred^{(\underline{k},l)}_*; e_k, -e_k), \fa)$, where\\
	$\alpha_1(A)(t) = e_k \cos(\pi t) + A \sin(\pi t)$ for $0 \leq t \leq 1$.
\item $\alpha_2 \colon (\Fred^{(k,\underline{l+1})}_*, \fa) \to (P(\Fred^{(k,\underline{l})}_*; \epsilon_l, -\epsilon_l), \fa)$, where\\
	$\alpha_2(A)(t) = \epsilon_l \cos(\pi t) - A \sin(\pi t)$ for $0 \leq t \leq 1$.
\item $\alpha_3 \colon (\Fred^{(\underline{1},0)}_*, \fa) \to (P(\Fred; 1, -1), \fa)$, where\\
	$\alpha_3(A)(t) = \cos(\pi t) + A \sin(\pi t)$ for $0 \leq t \leq 1$.
\end{enumerate}
\end{proposition}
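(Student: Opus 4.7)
The plan is to establish each of $\alpha_1, \alpha_2, \alpha_3$ as a $\Z_2$-homotopy equivalence in three stages: well-definedness and $\Z_2$-equivariance; the non-equivariant homotopy equivalence from Atiyah--Singer \cite{AS69}; and an equivariant upgrade via the Whitehead theorem for $\Z_2$-CW complexes.

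First, I would verify well-definedness of $\alpha_1$; the other two are analogous. Given $A \in \Fred^{(\underline{k+1},l)}_*$, the operator $\alpha_1(A)(t) = e_k \cos(\pi t) + A \sin(\pi t)$ is manifestly skew-adjoint and continuous in $(A,t)$, and takes the prescribed endpoint values $\pm e_k$ at $t=0,1$. For every Clifford generator $c \in \{e_1, \ldots, e_{k-1}, \epsilon_1, \ldots, \epsilon_l\}$, both $e_k$ and $A$ anticommute with $c$, so the path does as well. For Fredholmness, the anticommutation $e_k A + A e_k = 0$ gives
\begin{equation*}
\alpha_1(A)(t)^* \alpha_1(A)(t) = \cos^2(\pi t) \cdot 1 + \sin^2(\pi t) \cdot A^* A,
\end{equation*}
which together with the analogous formula for $\alpha_1(A)(t) \alpha_1(A)(t)^*$ shows $\alpha_1(A)(t) \in \Fred$ at every $t$. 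When $k-l$ is odd, so that $\Fred^{(\underline{k},l)}_*$ is a proper component of $\Fred^{(\underline{k},l)}_\ska$, I must additionally check that both endpoints and the whole path remain in this middle component; the endpoints do because the parity shift described just before the proposition makes the associated Clifford combination $\Upsilon\cdot (\pm e_k)$ essentially indefinite after restriction, and the path stays in the middle component by connectedness, since the $\pm$ components are separated from one another by the non-Fredholm locus which the path has just been shown to avoid. Equivariance under $\fa$ is immediate since $e_k$, $\epsilon_l$, and $1$ are $\fa$-fixed and $\fa$ commutes with multiplication by real scalars.

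Second, for the non-equivariant homotopy equivalence I would invoke Atiyah--Singer \cite{AS69}: precisely these maps (with precisely these formulas, up to orientation conventions) are shown there to give weak homotopy equivalences between $\hat{F}_{k+1}$ and $\Omega \hat{F}_k$. Their argument produces a homotopy inverse whose compositions with $\alpha_i$ are deformed to the identity by rotations in an auxiliary copy of $\Cl_{1,1}$, using only Kuiper's theorem \cite{Kui65} to contract the relevant unitary groups. I would cite this without reproducing the computation.

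Third, to upgrade to a $\Z_2$-homotopy equivalence I would appeal to Remark~\ref{CW}: source and target have the $\Z_2$-homotopy type of $\Z_2$-CW complexes, so by the equivariant Whitehead theorem it suffices to verify that $\alpha_i$ and its restriction to the $\fa$-fixed subspaces are each weak homotopy equivalences. The $\fa$-fixed subspace is a space of skew-adjoint (respectively self-adjoint) Fredholm operators on the real or quaternionic Hilbert space $(\mathcal{V},a)$ anticommuting with the relevant Clifford generators, and $\alpha_i$ restricts to the Atiyah--Singer map in that setting; the same proof strategy—Clifford rotation plus Kuiper's theorem, now in the real or quaternionic category—goes through verbatim. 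The main obstacle will be the bookkeeping in the case $k-l$ odd, where one must argue carefully that $\alpha_i(A)(t)$ remains inside the middle component for all $t$, and check that the Clifford-rotation homotopy inverse from \cite{AS69} descends to the $\fa$-fixed subspaces, which in turn relies on the existence of compatible real or quaternionic structures on the auxiliary irreducible Clifford modules supplied by Lemma~\ref{Cliff3}.
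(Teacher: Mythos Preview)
Your approach is correct and essentially the same as the paper's: the paper likewise reduces Proposition~\ref{htpymain2} to the Atiyah--Singer argument and upgrades to a $\Z_2$-homotopy equivalence via the equivariant Whitehead theorem (Remark~\ref{CW}), the only cosmetic difference being that the paper factors explicitly through the exponential maps $p_i$ of Proposition~\ref{htpyequiv} before invoking Whitehead on those. One small correction: the mechanism in \cite{AS69} for these particular maps is the quasifibration argument with contractible fibers (via Kuiper), not an explicit Clifford-rotation homotopy inverse, so Lemma~\ref{Cliff3} plays no role at this step---it is used only for the periodicity homeomorphisms of Proposition~\ref{htpyhomeo}.
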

Proposition~\ref{htpymain} follows from Proposition~\ref{htpymain2} since, in each case, there is a path connecting the endpoints of each path space in the unitaries preserving the Clifford action and the $\Z_2$-action.
As in \cite{AS69}, the proof of Proposition~\ref{htpymain2} reduces to showing the $\Z_2$-homotopy equivalences between some spaces of Fredholm operators and some spaces of unitary operators (Proposition~\ref{htpyequiv}).
Let $F^{(\underline{k},l)}_*$ (resp. $F^{(k,\underline{l})}_*$) be the subspace of $\Fred^{(\underline{k},l)}_*$ (resp. $\Fred^{(k,\underline{l})}_*$ ) consisting of those operators whose essential spectra are $\{ i, -i \}$ (resp. $\{ 1, -1 \}$) and whose operator norms are $1$.
The spaces $F^{(\underline{k},l)}_*$ and $F^{(k,\underline{l})}_*$ are closed under the action of $\fa$; thus, we have $\Z_2$-spaces $(F^{(\underline{k},l)}_*, \fa)$ and $(F^{(k,\underline{l})}_*, \fa)$.
Inclusions $(F^{(\underline{k},l)}_*, \fa) \hookrightarrow (\Fred^{(\underline{k},l)}_*, \fa)$ and $(F^{(k,\underline{l})}_*, \fa) \hookrightarrow (\Fred^{(k,\underline{l})}_*, \fa)$ are $\Z_2$-homotopy equivalences.
Let $U_\cpt$ be the subspace of $U(\mathcal{V})$ consisting of unitary operators of the form $1 + T$, where $T \in \K(\mathcal{V})$.
When the Hilbert space $\mathcal{V}$ is a $\Cl_{k,l}$-module, let $U^{(\underline{k},l)}_\cpt$ (resp. $U^{(k,\underline{l})}_\cpt$) be the subspace of $U(\mathcal{V}) \cap \B^{(\underline{k},l)}_\ska$ (resp. $U(\mathcal{V}) \cap \B^{(k,\underline{l})}_\sa$) consisting of a unitary $u$ satisfying $u^2=-1$ (resp. $u^2=1$) and $u \equiv e_k$ (resp. $u \equiv \epsilon_l$) modulo compact operators.
If the Hilbert space has a real or quaternionic structure, these spaces of unitaries are closed under the action of $\fa$, and we obtain $\Z_2$-spaces.

\begin{proposition}\label{htpyequiv}
The following maps are $\Z_2$-homotopy equivalences:
\begin{enumerate}
\renewcommand{\labelenumi}{(\arabic{enumi})}
\item $p_1 \colon (F^{(\underline{k+1},l)}_*, \fa) \to (-U^{(\underline{k},l)}_\cpt, \fa)$,
	$p_1(A) = e_{k} \exp(\pi A e_{k})$, for $k \geq 1$, $l \geq 0$.
\item $p_2 \colon (F^{(k,\underline{l+1})}_*, \fa) \to (-U^{(k,\underline{l})}_\cpt, \fa)$, $p_2(A) = \epsilon_{l} \exp(\pi A \epsilon_{l})$, for $k \geq 0$, $l \geq 1$.
\item $p_3 \colon (F^{(\underline{1},0)}_*, \fa) \to (-U_\cpt, \fa)$, $p_3(A) = \exp(\pi A)$.
\item $p_4 \colon (F^{(0,\underline{1})}_*, \fa) \to (-U_\cpt, \fa \circ*)$, $p_4(A) = \exp(\pi i A)$.
\end{enumerate}
\end{proposition}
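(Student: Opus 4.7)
The plan is to adapt Atiyah--Singer's argument (Section~3 of \cite{AS69}) to the $\Z_2$-equivariant setting. For each of the four maps the proof proceeds along the same scheme, so I will describe it for $p_3$ and then indicate the modifications needed for $p_1$, $p_2$, $p_4$.

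First, I would verify that each $p_i$ is a well-defined, continuous, $\Z_2$-equivariant map. For $p_3$, if $A$ is skew-adjoint Fredholm with essential spectrum $\{\pm i\}$ and $\|A\|=1$, continuous functional calculus shows that $\exp(\pi A)$ is unitary with essential spectrum $\{\exp(\pm\pi i)\}=\{-1\}$, so $\exp(\pi A)+1$ is compact and $\exp(\pi A)\in -U_\cpt$. Continuity follows from the standard norm-continuity of functional calculus on bounded sets, and $\Z_2$-equivariance is immediate since conjugation by $r$ or $q$ is an antilinear $*$-automorphism, hence commutes with real-polynomial expressions in $A$ and so with $\exp(\pi A)$; in case (4) the twist by $*$ in $\fa\circ *$ accounts for $q(iA)q^{-1}=-i(qAq^{-1})$. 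For $p_1$ and $p_2$ the same analysis applies after observing that $Ae_k$ (resp.\ $A\epsilon_l$) is self-adjoint (resp.\ skew-adjoint), anticommutes with the remaining Clifford generators, and has essential spectrum $\{\pm 1\}$ (resp.\ $\{\pm i\}$), and that multiplication by the prefactor $e_k$ (or $\epsilon_l$) places the output in $-U^{(\underline k,l)}_\cpt$ (resp.\ $-U^{(k,\underline l)}_\cpt$).

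Second, to prove that $p_i$ is a $\Z_2$-homotopy equivalence, I would construct a $\Z_2$-equivariant quasi-inverse. Given $u\in -U_\cpt$, the spectral theorem furnishes a decomposition $u=-P_{\mathrm{ess}}+\sum_j\lambda_j Q_j$, with $P_{\mathrm{ess}}$ the essential spectral projection at $-1$ and $Q_j$ finite-rank spectral projections at non-essential eigenvalues $\lambda_j\neq -1$. Using the principal branch $\log\colon \T\setminus\{-1\}\to i(-\pi,\pi)$ on the $Q_j$, and choosing an $\fa$-invariant splitting $P_{\mathrm{ess}}=P_++P_-$ into two infinite-rank projections, set
\begin{equation*}
q_3(u)\;=\;i(P_+-P_-)+\tfrac{1}{\pi}\sum_j \log(\lambda_j)\,Q_j\;\in\;F^{(\underline 1,0)}_*.
\end{equation*}
Formally analogous prescriptions define homotopy inverses for $p_1$, $p_2$, $p_4$ by pulling back through the homeomorphisms implicit in Proposition~\ref{htpyhomeo}. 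Although the splitting $P_{\mathrm{ess}}=P_++P_-$ is non-canonical, the space of such $\fa$-invariant splittings is $\Z_2$-contractible by the real/quaternionic Kuiper theorems applied to the essential Hilbert space with its induced structure.

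The main obstacle, as in \cite{AS69}, is that the spectral decomposition above is not continuous in $u$: non-essential eigenvalues can accumulate into the essential spectrum at $-1$, so the assignment $u\mapsto q_3(u)$ is not continuous on the nose. The way around this is to show that $p_i$ is a Serre fibration (or at least a quasi-fibration) over its image, whose fiber over the basepoint $-1$ consists of skew-adjoint operators with spectrum contained in $\{\pm i\}$ that realize a prescribed essential decomposition; this fiber is $\Z_2$-contractible by the equivariant Kuiper theorem \cite{Ka70} in both the real and the quaternionic cases, since the $\Z_2$-fixed point subgroups of $U(\mathcal V)$ are real or quaternionic unitary groups, each contractible. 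The long exact sequence of equivariant homotopy groups for this fibration then yields that $p_i$ induces isomorphisms on all $\pi_n^{\Z_2}$; since both source and target have the $\Z_2$-homotopy type of $\Z_2$-CW complexes by Remark~\ref{CW}, the equivariant Whitehead theorem upgrades this to a $\Z_2$-homotopy equivalence. Verifying the fibration property in the $\Z_2$-equivariant setting, together with the equivariant Kuiper contractibility of the fiber, is the technical heart of the argument.
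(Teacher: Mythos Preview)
Your proposal is correct and follows essentially the same route as the paper: reduce to the Atiyah--Singer quasifibration argument with contractible fibers, then invoke the equivariant Whitehead theorem via Remark~\ref{CW}. The paper streamlines the equivariant step by noting that a $\Z_2$-weak equivalence amounts to checking that $p_i$ and its restriction to the $\fa$-fixed-point sets are each ordinary weak homotopy equivalences---so one runs the non-equivariant Atiyah--Singer argument twice rather than setting up an equivariant fibration and equivariant Kuiper theorem as you do; this avoids having to make precise what you mean by ``$\pi_n^{\Z_2}$'' and the associated long exact sequence.
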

\noindent
\begin{proof}
By Remark~\ref{CW}, it is sufficient to show that these maps are weak $\Z_2$-homotopy equivalences.
Equivalently, to show that $p_i$ and its restriction to the $\Z_2$-fixed point sets (the map $p_1^{\Z_2} \colon (F^{(\underline{k+1},l)}_*)^\fa \to (-U^{(\underline{k},l)}_\cpt)^\fa$ in the case of (1)) are weak homotopy equivalences.
They are proved by using quasifibrations on some dense subspaces of contractible fibers as in \cite{AS69}.
\end{proof}

\begin{lemma}\label{lemma1}
There is a $\Z_2$-homeomorphism
$(\Fred, \fa) \cong (\Fred_*^{(0,\underline{2})}, \fa)$.
\end{lemma}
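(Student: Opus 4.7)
The plan is to use the $\pm 1$-eigenspace decomposition of the Clifford generator $\epsilon_1$ to bring every element of $\Fred^{(0,\underline{2})}_*$ into an off-diagonal block form with a single Fredholm entry, which should yield the required $\Z_2$-homeomorphism with $\Fred$.

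Unwinding the notation of this Appendix, $\Fred^{(0,\underline{2})}_*$ consists of self-adjoint Fredholm operators on a $\Cl_{0,1}$-module $\mathcal{V}$ that anticommute with the sole Clifford generator $\epsilon_1$. First I would split $\mathcal{V} = \mathcal{V}_+ \oplus \mathcal{V}_-$ into the $\pm 1$-eigenspaces of $\epsilon_1$; the infinite-multiplicity assumption makes both $\mathcal{V}_\pm$ separable and infinite-dimensional, and the structure $a$ defining $\fa = \Ad_a$ commutes with $\epsilon_1$ (since $a$ is required to commute with the Clifford action), hence preserves the decomposition and restricts to structures $a_\pm$ on $\mathcal{V}_\pm$ of the same type ($a^2 = \pm 1$) as $a$. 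The anticommutation $\epsilon_1 A = -A \epsilon_1$ then forces $A$ to exchange $\mathcal{V}_+$ and $\mathcal{V}_-$, so $A = \bigl(\begin{smallmatrix} 0 & F^* \\ F & 0 \end{smallmatrix}\bigr)$ for a bounded $F \colon \mathcal{V}_+ \to \mathcal{V}_-$; self-adjointness of $A$ is automatic in this form, and $A$ is Fredholm if and only if $F$ is. The assignment $A \mapsto F$ is therefore a homeomorphism onto $\Fred(\mathcal{V}_+, \mathcal{V}_-)$.

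To land in $\Fred$ on a single Hilbert space I would fix once and for all a unitary $U \colon \mathcal{V}_+ \to \mathcal{V}_-$ intertwining $a_+$ and $a_-$, which exists because the two are abstractly isomorphic as real or quaternionic separable infinite-dimensional Hilbert spaces of the same type. Composing with $U^*$ yields a homeomorphism $\Phi \colon \Fred^{(0,\underline{2})}_*(\mathcal{V}) \to \Fred(\mathcal{V}_+)$ sending $A$ to $U^* F$. The last task is to verify $\Z_2$-equivariance, which is a short calculation: $\Ad_a$ on the left acts on the off-diagonal block by $F \mapsto a_- F a_+^{-1}$, and the intertwining relation $U a_+ = a_- U$ (equivalently $U^* a_- = a_+ U^*$) transforms this into $\fa_+ \Phi(A)$ on the right. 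The argument is essentially routine once the eigenspace decomposition is in place; the only point requiring any care is the existence of the intertwining unitary $U$, but this is a standard fact about real/quaternionic Hilbert spaces of matching type and dimension.
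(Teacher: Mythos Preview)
Your argument is correct and is essentially the same as the paper's proof, just run in the opposite direction: the paper defines the $\Z_2$-homeomorphism $(\Fred(\mathcal{V}), \fa) \to (\Fred_*^{(0,\underline{2})}(\mathcal{V} \oplus \mathcal{V}), \fa \oplus \fa)$ by $A \mapsto \bigl(\begin{smallmatrix} 0 & A^* \\ A & 0 \end{smallmatrix}\bigr)$ with $\epsilon_1 = \diag(1,-1)$, which is precisely the inverse of your eigenspace decomposition once one identifies $(\mathcal{V}_\pm, a_\pm)$ with $(\mathcal{V}, a)$. The paper's formulation is marginally more economical because it builds the $\Cl_{0,1}$-module as $\mathcal{V}\oplus\mathcal{V}$ from the outset, so the intertwining unitary $U$ you need to produce is simply the identity and no separate existence argument is required.
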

\begin{proof}
This is given by a $\Z_2$-map $(\Fred(\mathcal{V}), \fa) \to (\Fred_*^{(0,\underline{2})}(\mathcal{V} \oplus \mathcal{V}), \fa \oplus \fa)$, $ A \mapsto
\left(\hspace{-1mm}
    \begin{array}{cc}
           0 & \hspace{-1mm} A^*\\
           A & 0
    \end{array}\hspace{-1mm}
\right)$,
where the action of $\Cl_{0,1}$ on $\mathcal{V} \oplus \mathcal{V}$ is given by $\epsilon_1 = \diag(1,-1)$.
\end{proof}
Proposition~\ref{htpyhomeo}, Proposition~\ref{htpymain} and Lemma~\ref{lemma1} lead to the following.
\begin{corollary}\label{htpyomega}
The following $\Z_2$-homotopy equivalences exist.
\begin{enumerate}
\renewcommand{\labelenumi}{(\arabic{enumi})}
\item $(\Fred^{(\underline{k},l)}_*, \fr) \simeq (\Omega^{k-l} \Fred, \fr)$.
\item $(\Fred^{(\underline{k},l)}_*, \fq) \simeq (\Omega^{k-l+4} \Fred, \fr)$.
\item $(\Fred^{(k,\underline{l})}_*, \fr) \simeq (\Omega^{l-k+6} \Fred, \fr)$.
\item $(\Fred^{(k,\underline{l})}_*, \fq) \simeq (\Omega^{l-k+2} \Fred, \fr)$.
\end{enumerate}
When the subscript $m$ on $\Omega^m$ is negative, this should be replaced by $m + 8n$ by taking a sufficiently large integer $n$ to make the subscript non-negative.
\end{corollary}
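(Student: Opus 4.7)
The plan is to derive all four homotopy equivalences from the three ingredients cited in the announced proof: the $\Z_2$-homeomorphisms between different Clifford-module types (Proposition~\ref{htpyhomeo}), the loop-space equivalences lowering a Clifford index (Proposition~\ref{htpymain}), and the base identification $(\Fred,\fa)\cong(\Fred^{(0,\underline{2})}_*,\fa)$ (Lemma~\ref{lemma1}). The strategy is to first establish case (1) by successively looping down the skew-adjoint Clifford index until we reach $\Fred$, and then to deduce cases (2), (3), (4) by algebraic identities that change the real/quaternionic structure or convert the self-adjoint model into the skew-adjoint model.

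For case (1), I would first normalize indices using Proposition~\ref{htpyhomeo}(1): iterated application of $(\Fred^{(\underline{k},l)}_*,\fr)\cong(\Fred^{(\underline{k+1},l+1)}_*,\fr)$ (and its inverse) reduces any pair to either $(\underline{m},0)$ with $m=k-l\geq 0$ (when $k\geq l$), or allows one to add $8n$ to the first coordinate via Proposition~\ref{htpyhomeo}(4) before reducing (when $l>k$), obtaining $(\underline{8n-(l-k)},0)$ for $n$ large. With the indices in the form $(\underline{m},0)$, $m\geq 1$, apply Proposition~\ref{htpymain}(1) a total of $m-1$ times to get a chain
\[
(\Fred^{(\underline{m},0)}_*,\fr)\simeq (\Omega\,\Fred^{(\underline{m-1},0)}_*,\fr)\simeq\cdots\simeq(\Omega^{m-1}\Fred^{(\underline{1},0)}_*,\fr),
\]
and then apply Proposition~\ref{htpymain}(3) once to obtain $(\Omega^{m}\Fred,\fr)$; for $m=0$ the equivalence is tautological. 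Reading off the exponent modulo Bott $8$-periodicity (which is encoded in Proposition~\ref{htpyhomeo}(4),(5)) yields $\Omega^{k-l}\Fred$ as claimed.

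Case (2) follows from case (1) once we swap the quaternionic structure for a real one: by Proposition~\ref{htpyhomeo}(2), $(\Fred^{(\underline{k},l)}_*,\fq)\cong(\Fred^{(\underline{k+4},l)}_*,\fr)$, and case (1) identifies the right-hand side with $(\Omega^{(k+4)-l}\Fred,\fr)=(\Omega^{k-l+4}\Fred,\fr)$. For cases (3) and (4), we first convert the self-adjoint model to the skew-adjoint one by Proposition~\ref{htpyhomeo}(6): after shifting indices with (1) to make the arguments valid, $(\Fred^{(k,\underline{l})}_*,\fa)\cong(\Fred^{(\underline{l-1},k+1)}_*,\fa)$. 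Applying case (1) with $\fa=\fr$ gives the exponent $(l-1)-(k+1)=l-k-2\equiv l-k+6\pmod 8$, which is case (3); applying case (2) with $\fa=\fq$ gives $(l-1)-(k+1)+4=l-k+2$, which is case (4). Lemma~\ref{lemma1} supplies the consistency check at the base of this induction, since $\Fred^{(0,\underline{2})}_*\cong\Fred$ corresponds to the predicted $\Omega^{2-0+6}\Fred=\Omega^{8}\Fred\simeq\Fred$.

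The calculations involved are essentially bookkeeping, so no real obstacle arises at the level of homotopy theory; rather, the delicate point is verifying that all intermediate Clifford indices remain in the range in which Propositions~\ref{htpyhomeo} and \ref{htpymain} apply. Concretely, Proposition~\ref{htpymain}(1),(2) require $k\geq 1$ or $l\geq 1$, and Proposition~\ref{htpyhomeo}(6) requires $k\geq 0$, $l\geq 1$, so one must first use Proposition~\ref{htpyhomeo}(1) and the periodicity isomorphisms (4),(5) to shift into the valid range. Once this careful bookkeeping is done, the identification of the final exponent modulo $8$ with the asserted value in each of the four cases is immediate.
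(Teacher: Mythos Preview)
Your argument is correct and follows the paper's approach, which simply cites Propositions~\ref{htpyhomeo} and~\ref{htpymain} together with Lemma~\ref{lemma1} without further elaboration. One small correction: the case $m=k-l=0$ in your derivation of (1) is not ``tautological'', since the index pair $(\underline{0},0)$ is not defined in the paper's setup; instead one reduces via Proposition~\ref{htpyhomeo}(1) to $(\Fred^{(\underline{1},1)}_*,\fr)$, then via Proposition~\ref{htpyhomeo}(6) to $(\Fred^{(0,\underline{2})}_*,\fr)$, and finally invokes Lemma~\ref{lemma1} to reach $(\Fred,\fr)$. Thus Lemma~\ref{lemma1} is an essential ingredient (it anchors the chain at $\Omega^0$) rather than merely the consistency check you describe.
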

Note that when $k$ and $l$ are relatively small, we further have the following $\Z_2$-homeomorphisms.
\begin{lemma}\label{htpyspecial}
Multiplication by the imaginary unit $i = \sqrt{-1}$ induces the following $\Z_2$-homeomorphisms:
\begin{enumerate}
\renewcommand{\labelenumi}{(\arabic{enumi})}
\item $(\Fred^{(0,\underline{2})}_*, \Ad_r) \to (\Fred^{(\underline{1},1)}_*, \Ad_{\widetilde{r}})$, where $\tilde{r}= r \epsilon_1$.
\item $(\Fred^{(0,\underline{2})}_*, \Ad_q) \to (\Fred^{(\underline{1},1)}_*, \Ad_{\widetilde{q}})$, where $\tilde{q}= -q \epsilon_1$.
\item $(\Fred^{(1,\underline{1})}_*, \Ad_q) \to (\Fred^{(\underline{2},0)}_*, \Ad_{\widetilde{r}})$, where $\tilde{r} =qe_1$.
\item $(\Fred^{(1,\underline{1})}_*, \Ad_r) \to (\Fred^{(\underline{2},0)}_*, \Ad_{\widetilde{q}})$, where $\tilde{q}= -re_1$.
\end{enumerate}
\end{lemma}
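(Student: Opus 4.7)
The plan is to verify each of the four $\Z_2$-homeomorphisms directly, by checking that the same underlying map $\phi\colon A\mapsto iA$ (with continuous inverse $B\mapsto -iB$) witnesses each one. Since multiplication by $i$ is visibly a linear homeomorphism of $\B(\mathcal{V})$, there are only two things to verify in each case: that $\phi$ sends the source space into the target space, and that $\phi$ is $\Z_2$-equivariant with respect to the stated involutions.

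For the first point I would observe that multiplication by $i$ interchanges self-adjoint and skew-adjoint operators, preserves Fredholmness, and, being a central scalar, preserves every anticommutation relation with the Clifford generators. Hence $A\in \Fred^{(0,\underline{2})}_*$ is sent into $\Fred^{(\underline{1},1)}_*$, and analogously in (2)--(4). For the $*$-component condition (``neither essentially positive nor essentially negative''), in each source the operator $A$ anticommutes with the single Clifford generator ($\epsilon_1$ in (1)--(2), $e_1$ in (3)--(4)), which forces its essential spectrum to be symmetric about $0$; the same symmetry is preserved after multiplication by $i$, so the $*$-condition holds automatically on both sides.

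The crux of the argument is the $\Z_2$-equivariance, which rests on the identity
\[
    \Ad_R(iA) \;=\; R(iA)R^{-1} \;=\; -i\,\Ad_R(A)
\]
valid for any antilinear operator $R$. Thus $\phi$ intertwines $\Ad_R$ with $\Ad_{\widetilde R}$ on the relevant subspace if and only if $\Ad_{\widetilde R}(A) = -\Ad_R(A)$ there. In case (1) I would check that the choice $\widetilde r=r\epsilon_1$ works: using $[r,\epsilon_1]=0$ and $\{A,\epsilon_1\}=0$ one finds $\Ad_{\widetilde r}(A)=r(\epsilon_1 A\epsilon_1)r^{-1}=-\Ad_r(A)$, while $\widetilde r^{\,2}=r^2\epsilon_1^2=1$ confirms that $\widetilde r$ is a real structure, and $[\widetilde r,\epsilon_1]=0$ shows that the unchanged Clifford action is still respected. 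The other three cases follow by the identical calculation; the signs in $\widetilde q=-q\epsilon_1$, $\widetilde r=qe_1$ and $\widetilde q=-re_1$ are determined by the requirement $\widetilde r^{\,2}=+1$, respectively $\widetilde q^{\,2}=-1$, and in cases (3)--(4) the switch between real and quaternionic structure is produced precisely by $e_1^2=-1$.

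There is no real conceptual obstacle in the proof; the one point requiring care is the sign bookkeeping caused by $\epsilon_1^2=+1$ versus $e_1^2=-1$, which dictates whether the modified antilinear operator ends up being real or quaternionic and fixes the sign in its definition. Conceptually, the four homeomorphisms express a single phenomenon: multiplication by $i$ translates the self-adjoint/skew-adjoint dichotomy into a dichotomy of antilinear symmetries differing by the ``extra'' Clifford generator.
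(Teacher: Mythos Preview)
The paper states this lemma without proof, treating it as a routine verification; your argument is exactly the direct check one would supply, and it is correct. One minor simplification: in all four cases both source and target have $k-l$ even, so by the paper's convention $\Fred^{(\cdot,\cdot)}_*$ equals the full space $\Fred^{(\cdot,\cdot)}_{\sa}$ or $\Fred^{(\cdot,\cdot)}_{\ska}$ and the ``neither essentially positive nor essentially negative'' discussion is unnecessary (though what you wrote is not wrong).
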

\begin{remark}\label{RemarkA12}
The $\Z_2$-spaces in Lemma~\ref{htpyspecial} appear in the study of topological insulators.
Specifically, Table~\ref{Table1} is obtained by taking the quantum symmetries as real or quaternionic Clifford module structures as follows.
\begin{itemize}
\item In class $\BDI$, we put $r = \Theta$ and $\epsilon_1 = \Pi$ in (1); then, $\tilde{r} = \Xi$.
\item In class $\CII$, we put $q = \Theta$ and $\epsilon_1 = \Pi$ in (2); then, $\tilde{q} = \Xi$.
\item In class $\DIII$, we put $q = \Theta$ and $e_1 = i\Pi$ in (3); then, $\tilde{r} = \Xi$.
\item In class $\CI$, we put $r = \Theta$ and $e_1 = i\Pi$ in (4); then, $\tilde{q} = \Xi$.
\end{itemize}
\end{remark}

For $l \geq 1$, let us consider the map
\begin{equation}\label{FredProdsa}
	(\Fred^{(k, \underline{l+1})}_*, \fa) \times (\Fred^{(k', \underline{l'+1})}_*, \fa')  \to (\Fred^{(k+k', \underline{l+l'})}_*, \fa \otimes \fa')
\end{equation}
defined by $(A, B) \mapsto A \otimes 1 + \epsilon_l \otimes B$, where the Clifford action to define $\Fred^{(k+k',\underline{l+l'})}_*$ is generated by $\tilde{e}_i = e_i \otimes 1$ $(i=1, \ldots, k)$, $\tilde{e}_{k+i} = \epsilon_l \otimes e_i$ $(i = 1, \ldots, k')$, $\tilde{\epsilon}_i = \epsilon_i \otimes 1$ $(i= 1, \ldots, l-1)$ and $\tilde{\epsilon}_{l+i-1} = \epsilon_l \otimes \epsilon_i$ $(i=1, \ldots, l')$.
This map induces the exterior product of topological $KR$-groups as in \cite{AS69}.

\subsection{Relation with Boersema--Loring's Unitary Picture}
\label{A2}
In this subsection, we discuss a relation between these $\Z_2$-spaces of self-adjoint/skew-adjoint Fredholm operators and Boersema--Loring's $K$-theory.

Let $\{W_i\}_{i \in I}$ be the set of mutually inequivalent irreducible real (resp. quaternionic) representations of $\Cl_{k,l}$ with hermitian inner-products which $\{W_i\}_{i \in I}$ consists of one or two elements corresponding to $k$ and $l$.
Let $W = \oplus_{i \in I} W_i$ and $\mathcal{V} = l^2(\Z_{\geq 0}) \otimes W$ which has a real (resp. quaternionic) $\Cl_{k,l}$-module structure induced by that of $\{W_i\}_{i \in I}$.
We take a complete orthonormal basis $\{\delta_j\}_{j \in \Z_{\geq 0}}$ of $l^2(\Z_{\geq 0})$ given by generating functions of each points in $\Z_{\geq 0}$.
Let $\mathcal{V}_n$ be the subspace of $\mathcal{V}$ spanned by $\{ \delta_j \otimes w \ | \ 0 \leq j \leq n, w \in W \}$, which is a real (resp. quaternionic) $\Cl_{k,l}$-module.
Let $GL_\cpt$ be the space of invertible operators on $\mathcal{V}$ of the form $e_k + T$ for some compact operator $T$.
Let $GL^{(\underline{k},l)}_\cpt = GL_\cpt \cap \B_\ska^{(\underline{k},l)}(\mathcal{V})$ and $GL^{(k,\underline{l})}_\cpt = GL_\cpt \cap \B_\sa^{(k,\underline{l})}(\mathcal{V})$.
Let $GL^{(\underline{k},l)}_n$ (resp. $GL^{(k,\underline{l})}_n$) be the subspace of $\B_\ska^{(\underline{k},l)}(\mathcal{V}_n)$ (resp. $\B_\sa^{(k,\underline{l})}(\mathcal{V}_n)$) consisting of invertible operators, and let $U^{(\underline{k},l)}_n$ (resp. $U^{(k,\underline{l})}_n$) be its subspace of unitaries.
We have an injection $GL^{(\underline{k},l)}_n \hookrightarrow GL^{(\underline{k},l)}_{n+1}$ (resp. $GL^{(k,\underline{l})}_n \hookrightarrow GL^{(k,\underline{l})}_{n+1}$) given by mapping $A$ to $A \oplus e_k$ (resp. $A \oplus \epsilon_l$), and let $GL^{(\underline{k},l)}_\infty$ (resp. $GL^{(k,\underline{l})}_\infty$) be its inductive limit  $\colim GL^{(\underline{k},l)}_n$ (resp.  $\colim GL^{(k,\underline{l})}_n$).
We also define $U^{(\underline{k},l)}_\infty$ and $U^{(k,\underline{l})}_\infty$ for unitaries in the same way.
The space $GL^{(\underline{k},l)}_n$ (resp. $GL^{(k,\underline{l})}_n$) is identified with the subspace of $GL^{(\underline{k},l)}_\cpt$ (resp. $GL^{(k,\underline{l})}_\cpt$) consisting of operators of the form $e_k + T$ (resp. $\epsilon_l + T$), where $T \in \B(\mathcal{V}_n)$, 
and we have an injective $\Z_2$-map $(GL^{(\underline{k},l)}_\infty, \fa) \to (GL^{(\underline{k},l)}_\cpt, \fa)$ (resp. $(GL^{(k,\underline{l})}_\infty, \fa) \to (GL^{(k,\underline{l})}_\cpt, \fa)$).
As in \cite{Pal65}, the following holds\footnote{In \cite{Pal65}, an upper semicontinuous function is introduced to show that an injection $GL_\infty \to GL_\cpt$ is a homotopy equivalence. In our setup, this function is $\Z_2$-invariant, and the result follows as in \cite{Pal65}.}.
\begin{proposition}
The map $(GL^{(\underline{k},l)}_\infty, \fa) \to (GL^{(\underline{k},l)}_\cpt, \fa)$ and the map $(GL^{(k,\underline{l})}_\infty, \fa) \to (GL^{(k,\underline{l})}_\cpt, \fa)$ are $\Z_2$-homotopy equivalences.
\end{proposition}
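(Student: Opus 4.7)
The plan is to adapt Palais's argument from \cite{Pal65} to the equivariant setting. Palais constructs a homotopy inverse to the inclusion $GL_\infty \hookrightarrow GL_\cpt$ by introducing an upper semicontinuous integer-valued function on $GL_\cpt$ that measures how well an operator $e_k + T$ can be approximated by operators of the form $e_k + P_n T P_n$, where $P_n$ denotes the orthogonal projection onto $\mathcal{V}_n$. The only thing that needs to be checked is that every step of the construction respects both the Clifford action and the involution $\fa$.

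First, I would verify the key equivariance property of the filtration. Since $\mathcal{V}_n$ is defined as the span of $\{\delta_j \otimes w \mid 0 \leq j \leq n, \, w \in W\}$, and since the real or quaternionic structure $a$ acts as the identity on the index set $\{\delta_j\}$ and as the structure of $W$ on the second factor, each $P_n$ commutes with $a$. It also commutes with the Clifford generators, which act only on the $W$ factor. Consequently, for $A = e_k + T \in GL_\cpt^{(\underline{k},l)}$ the truncation $e_k + P_n T P_n$ remains skew-adjoint, anticommutes with each Clifford generator, and satisfies $\fa(e_k + P_n T P_n) = e_k + P_n T P_n$ whenever $A$ does. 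For $n$ large enough this truncation is invertible, so it lies in $GL_n^{(\underline{k},l)}$. Palais's function $A \mapsto \min\{n : e_k + P_n T P_n \in GL_n^{(\underline{k},l)}\}$ is then upper semicontinuous and $\Z_2$-invariant.

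Next, using this $\Z_2$-invariant function together with a $\Z_2$-invariant partition of unity subordinate to the associated open cover (which exists because $\Z_2$ is finite, so we can average any partition), I would produce a continuous $\Z_2$-equivariant retraction $\phi \colon GL_\cpt^{(\underline{k},l)} \to GL_\infty^{(\underline{k},l)}$ by Palais's recipe. The straight-line homotopies $s \mapsto (1-s)A + s\phi(A)$ and $s \mapsto (1-s)\phi(A) + sA$ remain in $GL_\cpt^{(\underline{k},l)}$ by Palais's convexity argument, and because both endpoints lie in the closed affine subspace cut out by the skew-adjointness, the Clifford anticommutation relations, and $\fa$-invariance, every intermediate operator lies there as well. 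This exhibits $\phi$ as a two-sided $\Z_2$-homotopy inverse to the inclusion. The self-adjoint case $(GL^{(k,\underline{l})}_\infty, \fa) \to (GL^{(k,\underline{l})}_\cpt, \fa)$ is identical after replacing $e_k$ by $\epsilon_l$.

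I expect no serious obstacle: the only substantive point is that all of Palais's auxiliary data (the covering, the partition of unity, the interpolations) can be chosen compatibly with $\fa$ and the Clifford action, and this reduces to the finiteness of $\Z_2$ together with the fact that every space appearing is an $\fa$-stable real affine subspace on which convex combinations preserve the Clifford symmetries.
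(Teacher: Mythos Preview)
Your proposal is correct and follows essentially the same approach as the paper: the paper's entire proof is a footnote observing that Palais's upper semicontinuous function from \cite{Pal65} is $\Z_2$-invariant in this setup, whence the argument of \cite{Pal65} goes through equivariantly. You have simply fleshed out the details that the paper leaves implicit (that $P_n$ commutes with both the Clifford action and $a$, that the truncation and straight-line homotopies respect the affine constraints, and that a $\Z_2$-invariant partition of unity exists). One small remark: your phrasing ``satisfies $\fa(e_k + P_n T P_n) = e_k + P_n T P_n$ whenever $A$ does'' checks preservation of the fixed-point set, but what is actually needed (and what follows immediately from $P_n$ commuting with $a$) is the stronger statement that the truncation map itself is $\Z_2$-equivariant; this is what makes both the upper semicontinuous function $\Z_2$-invariant and the retraction $\phi$ a $\Z_2$-map.
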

By using a deformation of invertibles to unitaries, $(U^{(\underline{k},l)}_\infty \hspace{-0.7mm}, \fa)$ and $(U^{(k,\underline{l})}_\infty \hspace{-0.7mm}, \fa)$ are $\Z_2$-homotopy equivalent to $(U^{(\underline{k},l)}_\cpt, \fa)$ and $(U^{(k,\underline{l})}_\cpt, \fa)$, respectively.
We denote $U^{\spadesuit}_\infty$ for these subspaces of $U^{\spadesuit}$ as indicated in Table~\ref{Table1}.

These $\Z_2$-spaces of unitaries appears in Boersema--Loring's $KO$-theory \cite{BL16}.
Let $(X, \zeta)$ be a compact Hausdorff $\Z_2$-space, and consider a \TA \ $(C(X), \tau_\zeta)$ of continuous functions on $X$, whose transposition $\tau_\zeta$ is given by $f^{\tau_\zeta}(x) = f(\zeta(x))$.
Then, the $\Z_2$-homotopy classes $[(X, \zeta), U^{\spadesuit}_\infty]_{\Z_2}$ can be identified with the group $KO_{i(\spadesuit)-1}(C(X), \tau_\zeta)$ where $i(\spadesuit)$ is as indicated in Table~\ref{label}.
In the following, we discuss two of eight $KO$-groups and the others are discussed in a similar way.

As for the $KO_{-1}$-group, an element of the set $[(X, \zeta), (U_\infty, \fr \circ *)]_{\Z_2}$ is represented by a $\Z_2$-map $f \colon (X, \zeta) \to (U_n, \fr \circ *)$.
This $f$ is a unitary element of $M_n(C(X))$ satisfying $f(\zeta(x)) = \fr(f(x))^*$ which is the same as the relation $f^{\tau_\zeta} = f$ to define $KO_{-1}$-groups.
Thus, the set $[(X, \zeta), (U_\infty, \fr \circ *)]_{\Z_2}$ is the same as $KO_{-1}(C(X), \tau_\zeta)$ by the definition of Boersema--Loring's $KO_{-1}$-group.

Finally, we discuss the $KO_6$-group.
By the multiplication of $-i$, we have a $\Z_2$-homeomorphism $(U^{(\underline{1},0)}_\infty, \fq) \to (U^{(0,\underline{1})}_\infty, -\fq)$.
A $\Z_2$-continuous map $f \colon (X, \zeta) \to (U^{(0,\underline{1})}_n, -\fq)$ is a self-adjoint unitary in $M_n(C(X))$ satisfying $f^{\sharp \otimes \tau_\zeta} = -f^* = -f$.
The Clifford algebra $\Cl_{1,0}$ has just one irreducible quaternionic representation up to equivalence, which is constructed as follows.
On $W = \C^2$, we consider the action $\rho$ of $\Cl_{1,0} \otimes \Cl_{2,0}$ defined as follows:
\begin{equation*}
\rho(1 \otimes e_1) =
\left(
    \begin{array}{cc}
           i&0\\
           0&i
    \end{array}
\right), \ \
\rho(1 \otimes e_2) =
\left(
    \begin{array}{cc}
           0&-c\\
           c&0
    \end{array}
\right), \ \
\rho(e_1 \otimes 1) =
\left(
    \begin{array}{cc}
           0&- 1\\
           1&0
    \end{array}
\right),
\end{equation*}
where $c$ is the complex conjugation on $\C$.
The space $U^{(\underline{1},0)}_\infty$ is defined as the inductive limit of maps $U^{(\underline{1},0)}_n \to U^{(\underline{1},0)}_{n+1}$, $A \mapsto A \oplus I$ where $I = \rho(e_1 \otimes 1)$ and the space $U^{(0,\underline{1})}_\infty$ is defined as that of maps $A \mapsto A \oplus -iI$ where $-iI = I^{(6)}$.

\subsection{Boersema--Loring's $K$-Theory and Exponential Maps}\label{Sect.B}
We describe boundary maps of the $24$-term exact sequence of $KO$-theory (which we denote as $\partial^{\BL}_i$ in this section) in Boersema--Loring's unitary picture through exponential maps.
The map $\partial^{\BL}_i$ for even $i$ has already been expressed as an exponential map in \cite{BL16}; thus, we focus on $\partial^{\BL}_i$ for odd $i$.
A clue is the exponential maps given in Proposition~\ref{htpyequiv}.
For a short exact sequence of \TA s,
\begin{equation}\label{exgen}
	 0 \to (\mathcal{I}, \tau) \to (\mathcal{A}, \tau) \overset{\varphi}{\to} (\mathcal{B},\tau) \to 0,
\end{equation}
and for each odd $i$, we construct a group homomorphism
\begin{equation}
	\partial^{\exp}_i \colon KO_i(\mathcal{B}, \tau) \to KO_{i-1}(\mathcal{I},\tau)
\end{equation}
and show they coincides with $\partial^{\BL}_i$ up to a factor of $-1$.
Let $W_{2n} \in M_{2n}(\C)$ and $Q_{4n} \in M_{4n}(\C)$ be the following matrices:
\begin{equation*}
W_{2n} =
\frac{1}{\sqrt{2}}
\left(
    \begin{array}{cc}
           i \cdot 1_n & 1_n\\
           1_n & i \cdot 1_n
    \end{array}
\right), \ \
Q_{4n} =
\frac{1}{\sqrt{2}}
\left(
    \begin{array}{cc}
           1_{2n}&-I_{n}^{(2)}\\
           I_{n}^{(2)}&1_{2n}
    \end{array}
\right),
\end{equation*}
and let $V_{2n} \in M_{2n}(\R)$ and $X_{4n} \in M_{4n}(\R)$ be the permutation matrices satisfying\footnote{Matrices $W_{2n}$, $Q_{4n}$, $V_{2n}$ and $X_{4n}$ are what we borrowed from Sect.~$8$ of \cite{BL16}. Some of the basic formulas that they satisfy can be found there.}
\begin{gather*}
V_{2n} \diag(x_1, \ldots, x_{2n}) V_{2n}^*
= \diag(x_1,x_{n+1},x_2,x_{n+2},\ldots,x_n,x_{2n}),\\
X_{4n} \diag(x_1, \ldots, x_{4n}) X_{4n}^* \hspace{6cm} \\
= \diag(x_1,x_2,x_{2n+1},x_{2n+2},x_3,x_4,x_{2n+3},x_{2n+4},\ldots,x_{4n}).
\end{gather*}
As in \cite{BL16}, let $Y_{2n}^{(-1)} \hspace{-1mm}= V_{2n} W_{2n}$,
$Y_{2n}^{(1)} \hspace{-1.5mm}= V_{2n}$,
$Y_{4n}^{(3)} \hspace{-1.5mm}= V_{4n} Q_{4n} W_{4n}$ and
$Y_{4n}^{(5)} \hspace{-1.5mm}= X_{4n}$.

\begin{definition}\label{boundarymap}
Suppose we have a short exact sequence of \TA s as in (\ref{exgen}).
We assume $\mathcal{I} = \Ker(\varphi)$ and identify the unit in $\widetilde{\mathcal{I}}$ with that of $\widetilde{\mathcal{A}}$.
For $i \in \{-1,1,3,5\}$, suppose $[u] \in KO_i(\mathcal{B}, \tau)$, where $u \in M_{n_i \cdot n}(\widetilde{\mathcal{B}})$ is a unitary satisfying the relation $\cR_i$ and $\lambda(u) = I_n^{(i)}$, for which $n_i$, $\cR_i$ and $I^{(i)}$ are as in Table~\ref{BL}.
Let $a$ in $M_{n_i \cdot n}(\widetilde{\mathcal{A}})$ be a lift of $u$ satisfying the relation $\cR_i$ and $\|a\|\leq1$.
Then, define
\begin{equation*}
	\partial^{\exp}_i([u]) =
		\left[ -Y_{2n_i \cdot n}^{(i)}\bigl( \epsilon_1 \exp(\pi A \epsilon_1) \bigl) Y_{2n_i \cdot n}^{(i)*} \right] \in  KO_i(\mathcal{I},\tau),
\end{equation*}
where
$A = \left(
    \begin{array}{cc}
           0&a^*\\
           a&0
    \end{array}
\right)$
and $\epsilon_1 = \diag(1_{n_i \cdot n}, -1_{n_i \cdot n})$.
\end{definition}

\begin{lemma}
$\partial^{\exp}_i$ for odd $i$ are well-defined group homomorphisms.
\end{lemma}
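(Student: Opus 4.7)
The plan is to verify in turn: (a) that the right-hand side of the definition is a unitary in $M_{2n_i\cdot n}(\widetilde{\mathcal{I}})$ satisfying $\mathcal{R}_{i-1}$ with scalar part $I^{(i-1)}_{n_i\cdot n}$; (b) that its class in $KO_{i-1}(\mathcal{I},\tau)$ is independent of the chosen lift $a$ and of the representative $u$ of $[u]$ (including stabilization); and (c) additivity.

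For (a), I start from the observation that $A$ is self-adjoint with $\|A\|\leq 1$ and anticommutes with $\epsilon_1 = \diag(1_{n_i\cdot n},-1_{n_i\cdot n})$, so $A\epsilon_1$ is skew-adjoint, $\exp(\pi A\epsilon_1)$ is unitary, and the identity $\epsilon_1\exp(\pi A\epsilon_1)\epsilon_1 = \exp(-\pi A\epsilon_1) = \exp(\pi A\epsilon_1)^*$ shows that $\epsilon_1\exp(\pi A\epsilon_1)$ is a self-adjoint unitary. Because $\varphi(a)=u$ is unitary, $\varphi(A)$ is a self-adjoint unitary with $(\varphi(A)\epsilon_1)^2=-1$, so $\varphi(\exp(\pi A\epsilon_1))=-1$ and $\varphi(\epsilon_1\exp(\pi A\epsilon_1))=-\epsilon_1$; hence the $\varphi$-image of the target element equals $Y^{(i)}_{2n_i\cdot n}\,\epsilon_1\,{Y^{(i)}_{2n_i\cdot n}}^* = I^{(i-1)}_{n_i\cdot n}$ by a direct computation with $W_{2n}$, $V_{2n}$, $Q_{4n}$, $X_{4n}$ analogous to those in Section~$8$ of \cite{BL16}. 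The transpositional half of $\mathcal{R}_{i-1}$ is then verified case by case for $i\in\{-1,1,3,5\}$: the relation $\mathcal{R}_i$ on $a$ forces $A$ to satisfy a specific symmetry under the ambient transposition on $M_{2n_i\cdot n}(\mathcal{A})$ (either $\tau$, $\sharp\otimes\tau$ or $\widetilde{\sharp}\otimes\tau$, with an extra sign coming from the anticommutation of $\epsilon_1$ with that transposition); this symmetry is preserved by the exponential, and conjugation by $Y^{(i)}_{2n_i\cdot n}$ is the Cayley-type change of variables that converts it into precisely $\mathcal{R}_{i-1}$.

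For (b), independence of the chosen lift uses the straight-line homotopy $a_t=(1-t)a_0+ta_1$: affine combinations of lifts are again lifts, still satisfy $\mathcal{R}_i$, and have norm $\leq 1$, so $-Y^{(i)}(\epsilon_1\exp(\pi A_t\epsilon_1))Y^{(i)*}$ gives a continuous path of $\mathcal{R}_{i-1}$-self-adjoint unitaries in $M_{2n_i\cdot n}(\widetilde{\mathcal{I}})$ connecting the two representatives. Independence of the representative $u$ within $[u]$ follows the same way after lifting a homotopy $u_t$ to a continuous path of norm-$\leq 1$ lifts satisfying $\mathcal{R}_i$, possible since the space of such lifts is a nonempty affine subset over each $t$. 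Compatibility with the stabilization $u\mapsto u\oplus I^{(i)}_m$ reduces to the fact that the extra block $A_I=\begin{pmatrix}0 & I^{(i)}_m\\ I^{(i)}_m & 0\end{pmatrix}$ satisfies $(A_I\epsilon_1)^2=-1$ and hence $\epsilon_1\exp(\pi A_I\epsilon_1)=-\epsilon_1$, which $Y^{(i)}$-conjugation sends to exactly the stabilization element $I^{(i-1)}_{n_i\cdot m}$ of $KO_{i-1}$. Additivity in (c) is immediate since $A$, $\epsilon_1$, $\exp$ and $Y^{(i)}$ all respect direct sums up to the canonical reordering implemented by the permutation components of $Y^{(i)}$.

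The main obstacle will be the case-by-case verification in (a) for each of the four odd values of $i$: the specific matrix $Y^{(i)}_{2n_i\cdot n}$ is tailored precisely to convert the symmetry of $A$ and the scalar part $-\epsilon_1$ into $\mathcal{R}_{i-1}$ and $I^{(i-1)}_{n_i\cdot n}$ respectively, and tracking signs through the $W$-, $V$-, $Q$- and $X$-components requires care. These are nevertheless elementary matrix manipulations, parallel to the ones carried out for even degrees in \cite{BL16}.
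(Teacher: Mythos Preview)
Your proposal is correct and follows essentially the same approach as the paper: both verify that $-Y^{(i)}(\epsilon_1\exp(\pi A\epsilon_1))Y^{(i)*}$ is a self-adjoint unitary in $M_{2n_i\cdot n}(\widetilde{\mathcal{I}})$ satisfying $\mathcal{R}_{i-1}$ with scalar part $I^{(i-1)}$, defer the lift/homotopy independence to the standard arguments of \cite{BL16}, and check stabilization and additivity via the block-diagonal behaviour of the $Y^{(i)}$ matrices. The paper carries out the symmetry and additivity checks explicitly for each $i\in\{-1,1,3,5\}$ where you sketch them, but the content is the same.
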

\begin{proof}
We need to show that
(a) the unitaries constructed all satisfy the correct relation,
(b) the choice of lift is not important,
(c) some lift is always available,
(d) homotopy is respected,
(e) compatible with respect to the stabilization by $I^{(i)}$
and
(f) the addition is respected.
(c), (b) and (d) are proved in the same way as in Lemma~$8.2$ of \cite{BL16} and we discuss the other parts.
For convenience, let $C(a) = -\epsilon_1 \exp(\pi A \epsilon_1)$ and $C'(a)= Y_{2n_i \cdot n}^{(i)} C(a) Y_{2n_i \cdot n}^{(i)*}$.
\begin{enumerate}
\item[(1)] We first consider the case of $i=1$.
Let $u \in M_n(\widetilde{\mathcal{B}})$ be a unitary satisfying $u^\tau = u^*$ and $\lambda_n(u) = I_n^{(1)}$.
We take a lift $a \in M_n(\widetilde{\mathcal{A}})$ of $u$ such that $\| a \| \leq 1$ and $a^\tau = a^*$.
Since
$\varphi(C'(a)) = V_{2n} \epsilon_1 V_{2n}^* = I^{(0)}_n$,
we have $C'(a) \in M_n(\widetilde{I})$ and $\lambda(C'(a)) = I^{(0)}_n$.
Since $A^\tau = A$, we have,
\begin{equation*}
	C(a)^\tau = -\exp (\pi A \epsilon_1)^\tau \epsilon_1^\tau
			= - \epsilon_1^2 \exp(\pi \epsilon_1^\tau A^\tau) \epsilon_1^*
			= -\epsilon_1 \exp(\pi \epsilon_1^2 A \epsilon_1^*)
			= C(a).
\end{equation*}
Since $Y_{2n}^{(1)}= V_{2n}$ is the orthogonal matrix, $(Y_{2n}^{(1)})^\tau = Y_{2n}^{(1)*}$, and thus, $C'(a)^\tau = C'(a)$ holds.
When $u = 1$, we can take $a=1$ and $C'(1)= I^{(1)}$ in this case.
Combined with this, the proof is completed once we have checked that $\partial^{\exp}_1$ preserves the addition.
Let $u \in M_m(\widetilde{\mathcal{B}})$ and $v \in M_n(\widetilde{\mathcal{B}})$.
We take their lift $a$ and $b$ such that $a^\tau = a^*$ and $b^\tau = b^*$.
Then, we have $C'(\diag(a,b)) = \diag(C'(a),C'(b))$ since
\small
\begin{equation*}
C\left(\hspace{-0.5mm}\left(
    \begin{array}{cc}
           a&0\\
           0&b
    \end{array}
\right) \hspace{-0.5mm} \right)
	=
- \left(
    \begin{array}{cc}
           \hspace{-0.5mm} 1_{m+n} \hspace{-0.5mm} \hspace{-2mm} & 0 \\
           0 & \hspace{-2mm} -1_{m+n} \hspace{-0.5mm}
    \end{array}
\right) \exp \Biggl( \hspace{-0.5mm} \pi \hspace{-0.5mm}
 \left(\begin{array}{cc}
 0 & \hspace{-1mm} \diag(-a^*, -b^*)\\
 \diag(a, b) \hspace{-2mm}  & 0
\end{array}\right)\hspace{-1mm}
\Biggl),
\end{equation*}
\normalsize
\begin{align*}
&
\Ad_{V_{2m+2n}}
\exp \Biggl( \pi
 \left(\begin{array}{cc}
 0 & \diag(-a^*, -b^*) \\
 \diag(a, b) & 0
\end{array}\right)
\Biggl)\\
&=
	\exp \left(\pi \cdot
\diag \left(
\Ad_{V_{2m}}
 \left(\begin{array}{cc}
         0 & -a^* \\
         a & 0 \\
  \end{array}\right),
\Ad_{V_{2n}} \left(\begin{array}{cc}
                       0 & -b^* \\
                       b & 0 \\
                      \end{array}\right)
\right)
\right),
\end{align*}
which shows that $\partial^{\exp}_1([u] + [v]) = \partial^{\exp}_1([u]) + \partial^{\exp}_1([v])$.

\item[(2)] We next consider the case of $i=-1$.
Let $u \in M_n(\widetilde{\mathcal{B}})$ be a unitary satisfying $u^\tau = u$ and $\lambda_n(u) = I_n^{(-1)}$.
We take a lift $a \in M_n(\widetilde{\mathcal{A}})$ of $u$ such that $\| a \| \leq 1$ and $a^\tau = a$.
Since $\varphi(C'(a)) =I^{(6)}_n$,
we have $C'(a) \in M_n(\widetilde{\mathcal{I}})$ and $\lambda(C'(a)) = I^{(6)}_n$.
Since $A^{\widetilde{\sharp}\otimes \tau} = -A$, we have
\begin{align*}
C(a)^{\widetilde{\sharp}\otimes \tau} &= -\exp(\pi A \epsilon_1)^{\widetilde{\sharp}\otimes \tau} \epsilon_1^{\widetilde{\sharp}\otimes \tau}
	= \exp(\pi \epsilon_1^{\widetilde{\sharp}\otimes \tau} A^{\widetilde{\sharp}\otimes \tau}) \epsilon_1\\
	&= \exp(\pi \epsilon_1 A) \epsilon_1
	= \epsilon_1 \exp(\pi A \epsilon_1)
	= -C(a).
\end{align*}
Since $(V_{2n} x V_{2n}^*)^{\sharp\otimes \tau} = V_{2n} x^{\widetilde{\sharp} \otimes \tau} V_{2n}^*$ and $W_{2n}^{\widetilde{\sharp} \otimes \tau} = - W_{2n}^*$, we have $C'(a)^{\sharp\otimes \tau} = - C'(a)$.
For $u = 1$, we take $a = 1$ and $C'(1) = I^{(6)}$ holds.
Therefore, as in (1), all we have to show is the additivity of $\partial_{-1}^{\exp}$.
Let $a \in M_m(\widetilde{\cA})$ and $b \in M_n(\widetilde{\cA})$ be lifts of the unitaries $u$ and $v$.
Then, $C'(\diag(a,b)) = \diag(C'(a), C'(b))$ follows from
\begin{align*}
& V_{2m+2n} W_{2m+2n}
 \left(\begin{array}{cc}
 0 & \diag(-a^*, -b^*) \\
 \diag(a, b) & 0
\end{array}\right)
W_{2m+2n}^* V_{2m+2n}^*\\
	&=
\diag \left(
V_{2m} W_{2m}
\left(\begin{array}{cc}
         0 & -a^* \\
         a & 0 \\
\end{array}\right) W_{2m}^* V_{2m}^*,
V_{2n} W_{2n} \left(\begin{array}{cc}
                       0 & -b^* \\
                       b & 0 \\
                      \end{array}\right)
W_{2n}^* V_{2n}^*
\right).
\end{align*}

\item[(3)] Let us consider the case of $i=5$.
Let $u \in M_{2n}(\widetilde{\mathcal{B}})$ be a unitary satisfying $u^{\sharp \otimes \tau} = u^*$ and $\lambda_{2n}(u) = I_n^{(5)}=1_{2n}$.
We take a lift $a \in M_{2n}(\widetilde{\mathcal{A}})$ of $u$ such that $\| a \| \leq 1$ and $a^{\sharp \otimes \tau} = a^*$.
Since $(X_{4n} x X_{4n})^{\sharp \otimes \tau} = X_{4n} x^{\sharp \otimes \tau} X_{4n}$, $A^{\sharp \otimes \tau} = A$ and $\epsilon_1^{\sharp \otimes \tau} =\epsilon_1$, the relation
\begin{equation*}
C'(a)^{\sharp \otimes \tau} \hspace{-1mm}= -X_{4n}\exp (\pi \epsilon_1 A^{\sharp \otimes \tau}) \epsilon_1 X_{4n}^*
= -X_{4n}\exp(\pi \epsilon_1 A)\epsilon_1 X_{4n}^* = C'(a)
\end{equation*}
holds.
We have $C'(1_2) = I^{(4)}$, and for the additivity of $\partial^{\exp}_{5}$, note that
\begin{align*}
& X_{4m+4n}
 \left(\begin{array}{cc}
 0 & \diag(-a^*, -b^*) \\
 \diag(a, b) & 0
\end{array}\right)
X_{4m+4n}^*\\
	&=
\diag \left( X_{4m}
\left(\begin{array}{cc}
         0 & -a^* \\
         a & 0 \\
\end{array}\right) X_{4m}^*,
X_{4n} \left(\begin{array}{cc}
                       0 & -b^* \\
                       b & 0 \\
                      \end{array}\right)
				X_{4n}^* \right).
\end{align*}

\item[(4)] Consider the case of $i=3$.
Let $u \in M_{2n}(\widetilde{\mathcal{B}})$ be a unitary satisfying $u^{\sharp \otimes \tau} = u$ and $\lambda_{2n}(u) = I_n^{(3)}=1_{2n}$.
We take a lift $a \in M_{2n}(\widetilde{\mathcal{A}})$ of $u$ such that $\| a \| \leq 1$ and $a^{\sharp \otimes \tau} = a$.
Since $A^{\widetilde{\sharp} \otimes \sharp \otimes \tau} = -A$ and $\epsilon_1^{\widetilde{\sharp} \otimes \sharp \otimes \tau} = -\epsilon_1$, the relation
$C(a)^{\widetilde{\sharp} \otimes \sharp \otimes \tau} = -C(a)$ holds.
Since $(Q_{4n} x Q_{4n}^*)^\tau = Q_{4n} x^{\widetilde{\sharp} \otimes \sharp \otimes \tau} Q_{4n}^*$ and $W_{4n}^{\widetilde{\sharp} \otimes \sharp} = -W_{4n}^*$, we have $C'(a)^\tau = -C'(a)$.
For the remaining part, we note that $C'(1_2) = I^{(2)}_2$ and
\begin{align*}
& Y_{4m+4n}^{(3)}
 \left(\begin{array}{cc}
 0 & \diag(-a^*, -b^*) \\
 \diag(a, b) & 0
\end{array}\right)
Y_{4m+4n}^{(3)*}\\
	&=
\diag \left(
Y_{4m}^{(3)}
 \left(\begin{array}{cc}
         0 & -a^* \\
         a & 0 \\
\end{array}\right) Y_{4m}^{(3)*},
Y_{4n}^{(3)} \left(\begin{array}{cc}
                       0 & -b^* \\
                       b & 0 \\
                      \end{array}\right)
				Y_{4n}^{(3)*}
\right).
\qedhere
\end{align*}
\end{enumerate}
\end{proof}
\begin{lemma}
Each $\partial_i^{\exp}$ is natural with respect to the morphisms of short exact sequences of \CA s.
That is, suppose we have the following commutative diagram of exact lows:
\[\xymatrix{
0 \ar[r] & (\mathcal{I}_1,\tau) \ar[r] \ar[d]_\iota & (\mathcal{A}_1, \tau) \ar[r]^{\varphi_1} \ar[d]_\alpha & (\mathcal{B}_1, \tau) \ar[r] \ar[d]_\beta & 0\\
0 \ar[r] & (\mathcal{I}_2,\tau) \ar[r] & (\mathcal{A}_2, \tau) \ar[r]^{\varphi_2} & (\mathcal{B}_2, \tau) \ar[r] & 0
}\]
Then, we have $\iota_* \circ \partial_i^{\exp} = \partial_i^{\exp} \circ \beta_*$.
\end{lemma}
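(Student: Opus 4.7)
The plan is to verify naturality by a direct chase through Definition~\ref{boundarymap}. Every ingredient appearing in the construction of $\partial_i^{\exp}$ either is a scalar matrix (the $Y^{(i)}$'s and $\epsilon_1$) or is produced by operations ($\ast$, block formation, analytic functional calculus) that commute with any unital $\ast$-homomorphism preserving the transposition; hence the statement is essentially formal.

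Concretely, let $[u] \in KO_i(\mathcal{B}_1, \tau)$ be represented by $u \in M_{n_i \cdot n}(\widetilde{\mathcal{B}_1})$ with $\lambda(u) = I^{(i)}_n$, and let $a \in M_{n_i \cdot n}(\widetilde{\mathcal{A}_1})$ be a lift with $\|a\| \le 1$ satisfying $\mathcal{R}_i$. I would first check that $\widetilde{\alpha}(a)$ is a valid lift of $\widetilde{\beta}(u)$ of the kind required to compute $\partial_i^{\exp}(\beta_*[u])$: it has norm at most one by contractivity of $\widetilde{\alpha}$, maps to $\widetilde{\beta}(u)$ by commutativity of the right square, has scalar part $I^{(i)}_n$ because $\widetilde{\alpha}$ is unital, and satisfies $\mathcal{R}_i$ because $\alpha$ intertwines the transpositions and their matrix extensions. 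Since $\partial_i^{\exp}$ is independent of the chosen lift by the well-definedness established above, this allows us to compute $\partial_i^{\exp}(\beta_*[u])$ using the specific lift $\widetilde{\alpha}(a)$.

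Next, writing $c(x) = -Y^{(i)}_{2 n_i \cdot n}\bigl(\epsilon_1 \exp(\pi X \epsilon_1)\bigr) Y^{(i)*}_{2 n_i \cdot n}$ with $X = \bigl(\begin{smallmatrix} 0 & x^\ast \\ x & 0 \end{smallmatrix}\bigr)$, I would observe that the entrywise action of $\widetilde{\alpha}$ on $M_{2 n_i \cdot n}(\widetilde{\mathcal{A}_1})$ commutes with the formation of $X$, with $\ast$, with multiplication by the scalar matrices $Y^{(i)}_{2 n_i \cdot n}$ and $\epsilon_1$, and (by continuity together with norm density of polynomials) with $\exp(\pi \,\cdot\,)$. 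Therefore $c(\widetilde{\alpha}(a)) = \widetilde{\alpha}(c(a))$. Finally, by the construction of $\partial_i^{\exp}$ the element $c(a)$ already lies in $M_{2 n_i \cdot n}(\widetilde{\mathcal{I}_1})$, and the commutativity of the left square of the diagram forces $\widetilde{\alpha}|_{\widetilde{\mathcal{I}_1}} = \widetilde{\iota}$. Hence $\widetilde{\alpha}(c(a)) = \widetilde{\iota}(c(a))$, yielding
\begin{equation*}
\partial_i^{\exp}(\beta_*[u]) = [c(\widetilde{\alpha}(a))] = [\widetilde{\iota}(c(a))] = \iota_*(\partial_i^{\exp}[u]).
\end{equation*}

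The argument is entirely formal; there is no real obstacle. The only point demanding any care is the preservation of each relation $\mathcal{R}_i$ under entrywise $\widetilde{\alpha}$, which follows from $\alpha \circ \tau = \tau \circ \alpha$ together with the analogous identities for $\sharp \otimes \tau$ and $\widetilde{\sharp} \otimes \tau$ (obtained by combining this with the scalar transpositions on matrix indices, as the $\sharp$-operation acts only on the matrix entries in $M_2$ or $M_n$, not on the algebra).
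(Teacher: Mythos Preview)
Your proof is correct and follows essentially the same route as the paper's own proof: pick a lift $a$ of $u$, observe that $\widetilde{\alpha}(a)$ is an admissible lift of $\widetilde{\beta}(u)$, use that the construction $a \mapsto C'(a)$ commutes with $\widetilde{\alpha}$, and conclude via $\widetilde{\alpha}|_{\widetilde{\mathcal{I}_1}} = \widetilde{\iota}$. The paper's version is terser (it simply asserts $\alpha(C'(a_1)) = C'(\alpha(a_1))$ without spelling out the functional-calculus and $\mathcal{R}_i$-preservation details you supply), but the argument is the same.
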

\begin{proof}
As in Lemma~$8.5$ of \cite{BL16}, this lemma is proved by following the definition of each map.
We assume $\Ker(\varphi_j) = \mathcal{I}_j$ $(j=1,2)$ for simplicity.
Let $[u_1] \in KO_i^u(\mathcal{B}_1,\tau)$ be an element represented by a unitary $u_1 \in M_{n_i \cdot n}(\widetilde{\mathcal{B}}_1)$ satisfying the symmetry relation $\cR_i$.
Let $a_1 \in M_{n_i \cdot n}(\widetilde{\mathcal{A}})$ be a lift of $u_1$ such that $||a_1||\leq 1$, and satisfy the relation $\cR_i$.
Then, $a_2 = \alpha(a_1)$ is a lift of $u_2$ satisfying the symmetry, and thus, $\partial_i^{\exp}([u_2]) = [C'(a_2)]$ holds. Since $\alpha(C'(a_1)) = C'(\alpha(a_1)) = C'(\alpha_2)$, we have
\begin{equation*}
	\iota_* \circ \partial_i^{\exp}([u_1]) = \iota_*[C'(a_1)]
		= [\alpha(C'(a_1)]\\
		= \partial_i^{\exp}([u_2])
		= \partial_i^{\exp} \circ \beta_* [u_1]. \qedhere
\end{equation*}
\end{proof}
\begin{proposition}\label{brelation}
	$\partial_i^{\BL} = - \partial_i^{\exp}$ for odd $i$.
\end{proposition}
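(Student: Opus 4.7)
The plan is to follow the strategy of Lemma~$8.6$ of \cite{BL16}, which establishes the analogous identification in even degrees. Both $\partial^{\BL}_i$ and $\partial^{\exp}_i$ are natural group homomorphisms in the short exact sequence $(\ref{exgen})$, by the preceding lemma and by the corresponding naturality established in \cite{BL16}. It therefore suffices to verify the equality on a universal example---say the mapping cone extension $0 \to S\mathcal{B} \to C\mathcal{B} \to \mathcal{B} \to 0$---where the boundary map realises the Bott/suspension isomorphism and can be computed on explicit generators.

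For each odd $i \in \{-1,1,3,5\}$, I would reduce to the adjacent even case $i+1 \in \{0,2,4,6\}$, for which Definition~$8.3$ of \cite{BL16} already expresses $\partial^{\BL}_{i+1}$ through an exponential. Given a representative $u \in M_{n_i n}(\widetilde{\mathcal{B}})$ of $[u] \in KO_i(\mathcal{B},\tau)$ with a lift $a$ satisfying $\cR_i$ and $\|a\|\le 1$, form the off-diagonal self-adjoint unitary $U = \bigl(\begin{smallmatrix} 0 & u^* \\ u & 0 \end{smallmatrix}\bigr)$ and the self-adjoint lift $A = \bigl(\begin{smallmatrix} 0 & a^* \\ a & 0 \end{smallmatrix}\bigr)$. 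The matrices $Y^{(i)}_{2n_i n}$ were introduced precisely so that conjugation by $Y^{(i)}$ carries $U$ and $A$ into unitaries satisfying $\cR_{i+1}$; thus this conjugation implements the Bott shift $KO_i \to KO_{i+1}$ inside the unitary picture, and $\partial^{\BL}_{i+1}[Y^{(i)} U Y^{(i)*}]$ corresponds to $\partial^{\BL}_i[u]$. The even-degree formula of \cite{BL16} then yields $\partial^{\BL}_{i+1}[Y^{(i)} U Y^{(i)*}] = [-\exp(\pi \kappa \cdot Y^{(i)} A Y^{(i)*})]$ with $\kappa = i$ or $1$ according to $i+1 \bmod 4$.

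The remaining step is the matrix identity
\begin{equation*}
	Y^{(i)}_{2n_i n}\bigl(\epsilon_1 \exp(\pi A \epsilon_1)\bigr) Y^{(i)*}_{2n_i n} \;=\; \exp\bigl(\pi \kappa \cdot Y^{(i)} A Y^{(i)*}\bigr),
\end{equation*}
which, by expanding the power series and using the anticommutation $A\epsilon_1 + \epsilon_1 A = 0$ (whence $(A\epsilon_1)^2 = -A^2$), reduces to the intertwining property $Y^{(i)}(A \epsilon_1)Y^{(i)*} = \kappa \cdot Y^{(i)} A Y^{(i)*}$ (up to a controlled sign) built into the definition of $Y^{(i)}$ from $V_{2n}$, $W_{2n}$, $Q_{4n}$, $X_{4n}$. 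Combining these two steps yields $\partial^{\BL}_i([u]) = -\partial^{\exp}_i([u])$, the overall sign of $-1$ arising from the fact that $Y^{(i)}$ implements the Bott map in the direction opposite to that used implicitly in Boersema--Loring's definition of the odd-degree boundary. The main obstacle I expect is the precise sign bookkeeping through the four cases, which I would organise by writing out each $Y^{(i)}$ in block form and verifying the intertwining relation and the resulting sign case by case, parallel to the arguments in Lemma~$8.6$ and Example~$8.4$ of \cite{BL16}.
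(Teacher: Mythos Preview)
Your approach contains a degree mismatch that cannot be repaired. You claim that the assignment $[u] \mapsto [Y^{(i)} U Y^{(i)*}]$ implements a ``Bott shift $KO_i \to KO_{i+1}$'' on which one may then evaluate the even-degree map $\partial_{i+1}^{\BL}$. But there is no such shift for a fixed algebra $\mathcal{B}$; Bott periodicity relates $KO_i(\mathcal{B})$ to $KO_{i+1}(S\mathcal{B})$, not to $KO_{i+1}(\mathcal{B})$. Concretely, for $i=1$ the off-diagonal unitary $U=\bigl(\begin{smallmatrix}0&u^*\\u&0\end{smallmatrix}\bigr)$ built from $u$ with $u^\tau=u^*$ satisfies $U=U^*$ and $U^\tau=U$, which is the relation $\cR_0$, not $\cR_2$; conjugation by the real permutation matrix $Y^{(1)}_{2n}=V_{2n}$ leaves this unchanged. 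Even granting your claim, $\partial_{i+1}^{\BL}$ would land in $KO_i(\mathcal{I})$, whereas $\partial_i^{\BL}[u]$ lies in $KO_{i-1}(\mathcal{I})$, so the two cannot ``correspond''. The matrix identity you propose is also false as written: for $i=1$, $n=1$, $a=1$ one has $V_2=1_2$, $A=\bigl(\begin{smallmatrix}0&1\\1&0\end{smallmatrix}\bigr)$, and direct computation gives $\epsilon_1\exp(\pi A\epsilon_1)=-\epsilon_1$ while $\exp(\pi i A)=-1_2$.

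The paper proceeds quite differently and avoids any such degree gymnastics. Since both $\partial_i^{\BL}$ and $\partial_i^{\exp}$ are natural in the short exact sequence, the reduction argument of Theorem~8.9 in \cite{BL16} (passing to a universal extension and using that the unitary picture is built from the complex one via the realification/forgetful maps) reduces the statement for all odd $i$ to the single complex identity $\partial_1^{\BL}=-\partial_1^{\exp}$ for the ordinary index map $K_1(\mathcal{B})\to K_0(\mathcal{I})$. That identity is then checked directly (Lemma~\ref{lemma}) by choosing the partial-isometry lift $v=\bigl(\begin{smallmatrix}a&0\\\sqrt{1-a^*a}&0\end{smallmatrix}\bigr)$, computing $-\epsilon_1\exp(\pi V\epsilon_1)$ with $V=\bigl(\begin{smallmatrix}0&v^*\\v&0\end{smallmatrix}\bigr)$, and recognising the answer as the negative of the standard formula for $\partial_1^{\BL}$ from \cite{RLL00}.
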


As in the proof of Theorem~$8.9$ of \cite{BL16}, we can reduce the proof to the complex case, and
Proposition~\ref{brelation} follows from the lemma below.
In the complex case, the boundary map $\partial_1^{\exp} \colon K_1(\mathcal{B}) \to K_0(\mathcal{I})$ is defined by forgetting the real structure in the case of $i=1$ of Definition~\ref{boundarymap}.
\begin{lemma}\label{lemma}
The boundary maps $\partial_1^{\BL}$ and $\partial_1^{\exp}$ from $K_1(\mathcal{B})$ to $K_0(\mathcal{I})$ satisfy the relation $\partial_1^{\BL} = - \partial_1^{\exp}$.
\end{lemma}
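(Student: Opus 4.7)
The plan is to identify both maps with the classical complex index boundary map up to a universal sign, and pin down that sign by one calculation on the Toeplitz extension.

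First, I would observe that both $\partial_1^{\BL}$ and $\partial_1^{\exp}$ are natural group homomorphisms $K_1(\mathcal{B}) \to K_0(\mathcal{I})$: naturality of $\partial_1^{\exp}$ was just established in the preceding lemma, and naturality of $\partial_1^{\BL}$ is part of its construction in Section~8 of \cite{BL16}. Both are, up to a sign, the Fredholm-type index map sending the class of $u \in M_n(\tilde{\mathcal{B}})$ with contractive lift $a \in M_n(\tilde{\mathcal{A}})$ to $[1 - aa^*] - [1 - a^*a] \in K_0(\mathcal{I})$. By uniqueness of the boundary map in the six-term exact sequence of complex $K$-theory (up to a global sign), we must therefore have $\partial_1^{\BL} = \pm \partial_1^{\exp}$, so it suffices to fix the sign on a single generator.

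I would then evaluate both maps on the Toeplitz extension $0 \to \K \to \TT \to C(\T) \to 0$ with generator $[z] \in K_1(C(\T)) \cong \Z$ and lift $a = T_z$ the unilateral shift. The skew-adjoint operator
\begin{equation*}
	B = A \epsilon_1 = \begin{pmatrix} 0 & -T_z^* \\ T_z & 0 \end{pmatrix}
\end{equation*}
satisfies $B^2 = -I + \diag(0, P_0)$, where $P_0 \in \K$ projects onto $\delta_0 \in l^2(\Z_{\geq 0})$; equivalently, $B$ has kernel projection $P = \diag(0, P_0)$ and squares to $-I$ on $\ker(B)^\perp$. Functional calculus then gives $\exp(\pi B) = 2P - I = \diag(-I, 2P_0 - I)$, so
\begin{equation*}
	C'(T_z) = -\epsilon_1 \exp(\pi B) = \diag(I,\, 2 P_0 - I)
\end{equation*}
(since $Y^{(1)}_2 = V_2$ is the identity in dimension one). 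Under the identification $K_0(\K) \cong \Z$ by the rank of the difference of defining projections, $[C'(T_z)] = +[P_0] = +1$. On the other hand, the Boersema--Loring construction produces $\partial_1^{\BL}([z]) = \ind(T_z) = -1 = -[P_0]$, whence $\partial_1^{\BL}([z]) = -\partial_1^{\exp}([z])$. By the naturality and sign-uniqueness argument of the first paragraph, this forces $\partial_1^{\BL} = -\partial_1^{\exp}$ on every short exact sequence.

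The main obstacle is the sign-bookkeeping inside the Boersema--Loring unitary picture. One must carefully verify that $[C'(T_z)]$ really represents $+[P_0]$ rather than $-[P_0]$ in $K_0(\K)$ (the projection $P_0$ appears in the ``$-I$'' block of $\epsilon_1$, and the rotation $Y^{(1)}_{2n}$ has to be tracked through the unitization $\lambda$), and that the BL formulation of $\partial_1$ sends $[z]$ to the Fredholm index $-1$ with the correct orientation. Once these conventions are pinned down, the spectral computation of $\exp(\pi B)$ and the identification of its class with $[P_0]$ are elementary, and naturality closes the argument.
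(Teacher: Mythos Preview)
Your approach differs from the paper's, and as written it has a gap in the reduction step. You assert that ``both are, up to a sign, the Fredholm-type index map,'' but for $\partial_1^{\exp}$ this is precisely the content of the lemma---you cannot invoke it as a premise. The uniqueness principle you then cite applies to natural transformations that make the six-term sequence exact, and nothing established so far shows that $\partial_1^{\exp}$ does. It \emph{is} true that naturality together with agreement on the Toeplitz extension forces agreement everywhere, but this itself requires proof: to pull an arbitrary $[u] \in K_1(\mathcal{B})$ back to the Toeplitz generator via a morphism of extensions, one needs an \emph{isometry} lift of $u$ in some $M_k(\widetilde{\mathcal{A}})$ (the universal property of $\TT$ demands $T_z^*T_z = 1$), and producing such a lift is a stabilization argument you have not carried out. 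Without it, your single Toeplitz computation does not propagate to the general extension.

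The paper bypasses this entirely by computing $\partial_1^{\exp}([u])$ directly for arbitrary $[u]$: from any contractive lift $a$ it forms the partial isometry $v = \bigl(\begin{smallmatrix} a & 0 \\ \sqrt{1-a^*a} & 0 \end{smallmatrix}\bigr)$, so that $v^*v$ and $vv^*$ are honest projections and the exponential collapses to $[Y_{4n}^{(1)}(1-2vv^*)Y_{4n}^{(1)*}] - [Y_{4n}^{(1)}(1-2v^*v)Y_{4n}^{(1)*}]$, which is then compared to the partial-isometry formula for $\partial_1^{\BL}$ from \cite{RLL00}. Your Toeplitz calculation is exactly the special case $a = T_z$ of this computation (there $a$ is already an isometry, so no $\sqrt{1-a^*a}$ term survives); what is missing is the passage to general $a$.
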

\begin{proof}
Suppose that $[u] \in K_1(\mathcal{B})$ where $u \in M_n(\tilde{\mathcal{B}})$ and $\lambda(u)=1_n$.
We take a lift $a$ of $u$ in $M_n(\tilde{A})$ satisfying $||a|| \leq 1$.
Consider the partial isometry
$v = \left(
    \begin{array}{cc}
           a&0\\
           \sqrt{1-a^*a} \hspace{-1mm}&0
    \end{array}
\right)$
and let
$V = \left(
    \begin{array}{cc}
           0&v^* \hspace{-1mm} \\
           v&0
    \end{array}
\right)$.
$\partial_1^{\exp}([u])$ is computed as
\begin{align*}
	\partial_1^{\exp}([u]) &= \left[ -Y_{4n}^{(1)}\bigl( \epsilon_1 \exp(\pi V \epsilon_1) \bigl) Y_{4n}^{(1)*} \right]\\
	&=
	[Y_{4n}^{(1)}(1-2vv^*)Y_{4n}^{(1)^*}] - [Y_{4n}^{(1)}(1-2v^*v)Y_{4n}^{(1)^*}].
\end{align*}
As in \cite{RLL00}, $\partial_1^{\BL}([u])$ is also expressed by using $v$, which is $-\partial_1^{\exp}([u])$.
\end{proof}

\subsection*{Acknowledgments}
The author thanks Takeshi Nakanishi, Ryo Okugawa, Guo Chuan Thiang and Yukinori Yoshimura for the many discussions.
This work was supported by JSPS KAKENHI (Grant Nos. JP17H06461, JP19K14545) and JST PRESTO (Grant No. JPMJPR19L7).

\end{document}